\newtheorem{theorem}{Theorem}[section]
\newtheorem{lemma}[theorem]{Lemma}
\newtheorem{prop}[theorem]{Proposition}
\newtheorem{cor}[theorem]{Corollary}
\theoremstyle{definition}
\newtheorem{remark}{Remark}[section]
\newtheorem{example}{Example}[section]
\newcommand{\R}{\mathbb{R}}
\newcommand{\C}{\mathbb{C}}
\newcommand{\N}{\mathbb{N}}
\newcommand{\M}{\mathcal{M}}
\newcommand{\X}{\mathcal{X}}
\newcommand{\Y}{\mathcal{Y}}
\newcommand{\PP}{\mathcal{P}}
\newcommand{\PPP}{\tilde{\PP}}
\newcommand{\scri}{\mathscr{I}}
\newcommand{\g}{\mathbf{g}}
\newcommand{\h}{\mathbf{h}}
\newcommand{\U}{\mathcal{U}}
\newcommand{\V}{\mathcal{V}}
\newcommand{\Vb}{\V_{\mathrm{b}}}
\newcommand{\tm}{{}^{\mathrm{de,sc}}T\M}
\newcommand{\ttm}{{}^{\mathrm{de,sc}}T^*\M}
\newcommand{\sym}[1]{{S}^{\mathsf{#1}}(\PP)}
\newcommand{\symm}[1]{{S}^{\mathsf{#1}}(M)}
\newcommand{\symmcl}[1]{{S}^{\mathsf{#1}}_{\mathrm{cl}}(M)}
\newcommand{\soe}{S^{\mathsf{0}}_{\epsilon}}
\newcommand{\db}{\mathrm{Diff}_{\mathrm{b}}}
\newcommand{\ps}[1][]{\Psi_{\mathrm{de,sc}}^{\mathsf{#1}}(\M)}
\newcommand{\Hp}{\tilde{H}_p}
\newcommand{\wf}[1]{\mathrm{WF}_{\mathrm{de,sc}}^{\mathsf{#1}}}
\newcommand{\wfs}{\mathrm{WF}_{\mathrm{de,sc}}'}
\newcommand{\LL}{L^2_{\mathrm{de,sc}}(\M)}
\newcommand{\sob}[1]{H^{\mathsf{#1}}_{\mathrm{de,sc}}(\M)}
\newcommand{\sch}{\mathcal{S}}
\newcommand{\os}{\otimes_{\mathrm{sym}}}
\begin{document}

\title{The Feynman propagator for massive Klein-Gordon fields on radiative asymptotically flat spacetimes}
\author{
    {Mikhail Molodyk\thanks{Stanford University, Department of Physics and Leinweber Institute for Theoretical Physics at Stanford, Stanford, CA 94305, USA. E-mail: mam765@stanford.edu} }
    and 
    {Andr\'as Vasy\thanks{Stanford University, Department of Mathematics, Stanford, CA 94305, USA. E-mail: andras@math.stanford.edu}}    
    }
\date{}
\maketitle

\begin{abstract}
On a large class of asymptotically flat spacetimes which includes radiative perturbations of Minkowski space, we define a distinguished global Feynman propagator for massive Klein-Gordon fields by means of the microlocal approach to non-elliptic Fredholm theory, working in the de,sc-pseudodifferential algebra due to Sussman. We extend the limiting absorption principle (the "$i\varepsilon$ prescription" for the Feynman propagator) to this setting. Motivated by the complicated Hamilton flow structure arising in this problem, we also prove a new localized radial point estimate in the spirit of Haber--Vasy which, under appropriate nondegeneracy assumptions, allows one to propagate microlocal regularity into a single radial point belonging to a larger radial set which can be a source, sink, or saddle for the Hamilton flow.
\end{abstract}

\section{Introduction}

In this paper, we analyze the Klein-Gordon operator $P=\Box_{\g}+m^2$, where $\g$ is the Lorentzian (signature $-,+,\ldots,+$) metric of a spacetime $\M^{\circ}$, $\Box_{\g}=d^*d$ its wave operator (adjoints always being with respect to $L^2(\M^{\circ},\g)$), and $m>0$, on a large class of asymptotically flat spacetimes which includes radiative perturbations of Minkowski space. Our main result is a canonical definition of distinguished retarded, advanced, Feynman, and anti-Feynman propagators as inverses of $P$ acting between weighted Sobolev spaces whose orders encode the direction of propagation of positive- and negative-frequency singularities. While retarded and advanced propagators are uniquely defined on any globally hyperbolic spacetime by their support properties, the problem of defining distinguished Feynman and anti-Feynman propagators is more subtle and related to the problem of defining distinguished states on the algebra of free quantum fields in a curved spacetime; see Section~\ref{sec:propagators} for a review of work on this subject. A Lorentzian version of the limiting absorption principle, which states that the Feynman propagator can be understood as a limit $\lim_{\varepsilon\to 0^+} (P-i\varepsilon)^{-1}$ and which was shown for less general asymptotically Minkowski spacetimes by Vasy \cite{Vasy-SA} and Taira \cite{Taira}, remains valid in this setting. We use the framework of double-edge--scattering (de,sc-) pseudodifferential operators developed by Sussman \cite{Sussman}, which we extend to be applicable to a more general class of metrics similar to those considered by Hintz and Vasy \cite{HV-eb} in their related analysis of the massless wave equation. This is necessary to include physical examples with gravitational radiation escaping through null infinity, as arising from solving Einstein's equation with perturbed Minkowski initial data \cite{HV-stability}.

The simplest version of our results, in the Feynman case specifically, can be stated slightly informally as follows.
\begin{theorem}
Let $(\R^{3+1},\g)$ be a small perturbation of Minkowski space in the sense of solutions to the vacuum Einstein initial value problem, with symbolic initial data given by small-mass asymptotically Schwarzschild data plus small terms decaying slightly faster than Schwarzschild-like data as in Theorem~1.1 of \cite{HV-stability}. Then for any $f\in\sch(\R^{3+1})$, there exists a unique solution $u\in\sch'(\R^{3+1})$ to $Pu=f$ such that the de,sc-wavefront set of $u$ is contained in the union of the positive mass shell over future timelike infinity and the negative mass shell over past timelike infinity. It satisfies $u=\lim_{\varepsilon\to 0^+} (P-i\varepsilon)^{-1}f$ in $\sch'(\R^{3+1})$.
\label{thm:intro}
\end{theorem}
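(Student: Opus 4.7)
The plan is to set up $P=\Box_{\g}+m^2$ as a Fredholm operator between appropriately chosen anisotropic (variable-order) weighted de,sc-Sobolev spaces, and then pass to the $\varepsilon\to 0^+$ limit. First, I would work out the Hamilton flow of the de,sc-principal symbol $p$ of $P$ over the compactified spacetime. The characteristic set $\Sigma=\{p=0\}$ is expected to contain the future and past mass shells as radial sets over future and past timelike infinity $i^\pm$, together with the usual null-geodesic dynamics reaching spatial and null infinity, where the extension of Sussman's calculus to the Hintz--Vasy class of radiative metrics is needed to accommodate the singular behavior of $\g$ at $\mathscr{I}$. The Feynman choice is encoded by picking variable differential-order and decay-order functions $\mathsf{s},\mathsf{r}$ on $\ttm\setminus o$ that are \emph{below} threshold at the positive mass shell over $i^+$ and \emph{above} threshold at the negative mass shell over $i^-$ (with the reverse inequalities for $P^*$), and that are monotone along the Hamilton flow so that standard propagation of singularities applies.

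Second, I would assemble the microlocal estimates: elliptic regularity off $\Sigma$, real principal-type propagation, the standard Haber--Vasy radial point estimates where the full radial component is a source or sink, and crucially the new localized radial point estimate of the paper at any individual radial point where the larger radial set behaves as a saddle (this is what the hyperbolic dynamics near $\mathscr{I}$ forces). Together with the analogous estimates for $P^*$ on the dual orders $-\mathsf{s},-\mathsf{r}$, these give the a priori bound
\[
\|u\|_{\sob{s,r}} \le C\bigl(\|Pu\|_{\sob{s-\mathsf{m},r-\mathsf{l}}}+\|u\|_{\sob{N,N}}\bigr)
\]
and a dual version, yielding that $P:\mathcal{X}\to\mathcal{Y}$ is Fredholm with $\mathcal{X}=\{u\in \sob{s,r}:Pu\in\sob{s-\mathsf{m},r-\mathsf{l}+1}\}$.

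Third, I would establish invertibility. For $\varepsilon>0$, the operator $P-i\varepsilon$ is invertible on these spaces: non-triviality of kernel or cokernel would produce a Schwartz function, and pairing $(P-i\varepsilon)u=0$ with $u$ and taking imaginary parts (a standard energy/commutator argument, exploiting that Schwartz functions sit in all de,sc-Sobolev spaces) forces $u=0$; ellipticity of $P-i\varepsilon$ away from $\Sigma$ combined with propagation handles regularity. A uniform-in-$\varepsilon$ version of the radial and propagation estimates, showing that the $i\varepsilon$ regularization respects the threshold conditions at $i^\pm$ (one gains regularity at the sink, loses it at the source, consistent with the Feynman orders), yields a uniform bound on $(P-i\varepsilon)^{-1}$ and convergence $(P-i\varepsilon)^{-1}f\to u$ in $\mathcal{S}'$ for $f\in\mathcal{S}$, so $P$ itself is invertible and its inverse is the limiting absorption resolvent. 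Uniqueness of $u$ with the prescribed de,sc-wavefront set then follows because any two such solutions lie in $\mathcal{X}$ and differ by an element of $\ker P=0$.

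The main obstacle I expect is the radial point analysis at the new saddle-type radial points in the phase space over $\mathscr{I}$ and at the corners where timelike infinity meets null infinity: the flow there is genuinely degenerate compared to the asymptotically Minkowski setting of Vasy--Taira, and standard Haber--Vasy estimates do not apply directly. This is precisely the problem the paper's localized radial point estimate is designed to solve, and verifying its nondegeneracy hypotheses on the perturbed metrics coming from \cite{HV-stability}, while simultaneously choosing $\mathsf{s},\mathsf{r}$ so that monotonicity along the Hamilton flow is preserved across all these radial components, is the delicate part that drives the paper.
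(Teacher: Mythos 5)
Your first two steps — setting up the Fredholm theory on variable-order weighted de,sc-Sobolev spaces, computing the Hamilton flow and its radial sets over null and timelike infinity, and combining elliptic regularity, real principal-type propagation, and the localized radial point estimate at the points where the radial submanifold $\mathcal{N}$ connects to itself through a chain of bicharacteristics — are essentially the paper's strategy and correctly identify the role of the new localized estimate.

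The gap is in your third step, where you claim to obtain invertibility of the Feynman realization of $P$ from a uniform bound on $(P-i\varepsilon)^{-1}$ as $\varepsilon\to 0^+$. The logic is circular. The uniform-in-$\varepsilon$ propagation and radial point estimates yield only a two-sided Fredholm-type estimate $\|u\|_{\sob{s}}\leqslant C\big(\|(P-i\varepsilon)u\|_{\Y^{\mathsf{s}}}+\|u\|_{\sob{r}}\big)$ with a compact error term. Upgrading this to a \emph{uniform bound on the inverses} requires a contradiction/compactness argument: take a sequence with $\varepsilon_n\to\varepsilon_*$, normalize, pass to a limit $v\neq 0$ in $\ker(P-i\varepsilon_*)$ on the Feynman space, and contradict triviality of that kernel. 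When $\varepsilon_*=0$ this needs exactly the fact that $\ker P=0$ on the Feynman space — the thing you were trying to conclude. The ``pair with $u$ and take imaginary parts'' argument you invoke works only for $\varepsilon>0$; at $\varepsilon=0$ there is no imaginary part to exploit. The paper explicitly flags this: the proof of the limiting absorption principle \emph{uses} the invertibility of the $\varepsilon=0$ operator as a hypothesis, so the LAP is not an independent construction of the Feynman inverse.

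What is missing is a direct proof that $\ker P$ on the Feynman spaces is trivial, and this is where the paper does genuinely new work. First, one must show that any Feynman-threshold kernel element is Schwartz. This is not a routine consequence of the radial point theory, since the Feynman orders are \emph{below} threshold at the global sink $\mathcal{R}^+_+\sqcup\mathcal{R}^-_-$, and below-threshold estimates saturate at decay order $s_{I^{\pm}}<-\tfrac12$; the paper's Proposition~\ref{thm:Feynman-kernel} overcomes this with an Isozaki-style commutant, Eq.~(\ref{eq:Isozaki-commutant}), whose key feature is that its \emph{derivative}, not its value, is controlled, so that the argument is available only when $Pu=0$ identically. Second, one must kill Schwartz kernel elements. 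The paper's Theorem~\ref{thm:invertibility} does this by an energy estimate relative to a time function $t$ with uniformly timelike gradient near $I^{\pm}\cup\scri^{\pm}$, whose existence on Minkowski-like spacetimes is established in Appendix~\ref{sec:time-fcn}; this is where global hyperbolicity and a causal-structure assumption enter. Only after these two ingredients establish invertibility does the limiting absorption argument of Theorem~\ref{thm:lim-abs}, via Lemma~\ref{thm:bound-uniform} and Proposition~\ref{thm:invertibility-uniform}, go through. Without them your proposal does not close.
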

Here $\sch(\R^{3+1})$ and $\sch'(\R^{3+1})$ are the spaces of Schwartz functions and tempered distributions on spacetime, respectively. In the last statement, $(P-i\varepsilon)^{-1}$ is understood as the inverse of $(P-i\varepsilon):\sch(\R^{3+1})\to\sch(\R^{3+1})$, whose invertibility for metrics of this class is shown in Theorem~\ref{thm:invertibility-complex}, cf.\ \cite{JMS}. The result remains valid for much more general spacetimes which, in a precise sense, have the same asymptotic structure and are non-trapping. Moreover, as already mentioned, the existence-and-uniqueness statement is a rough version of a more precise characterization of $P$ as an invertible operator acting between certain Hilbert spaces based on \textit{weighted de,sc-Sobolev spaces}; see Theorems~\ref{thm:invertibility} and \ref{thm:intro-precise}. The characterization as a limit in the topology of $\sch'(\R^{3+1})$ is a weaker version of Theorem~\ref{thm:lim-abs}, which shows convergence in a weighted Sobolev space. The wavefront set property gives a concrete meaning to the notion of solutions to the Klein-Gordon equation which are asymptotically positive- or negative-frequency in the far future or far past.

Sobolev spaces and microlocalization in geometric microlocal analysis need not be mysterious; they correspond to a certain scaling of the momentum variables as one approaches infinity, i.e.\ the boundary when one compactifies the underlying physical space, as we do. Concretely, the sc-Sobolev spaces are just standard weighted Sobolev spaces on $\R^{3+1}$, based on regularity with respect to the translation invariant vector fields $\partial_{z_j}$, $z=(t,x)$. Away from null infinity these are the spaces used, with microlocalization with respect to the dual momenta $\zeta_j$, so for instance for bounded $\zeta$, this means (in the region $|x|<C|t|$ for some $C>0$, for simplicity) localization near a certain $\zeta$ and $x/t$ for large $|t|$, i.e. asymptotically conically in position space and in the standard sense in momentum space. (Thus, the roles of position and momentum are reversed relative to standard microlocal analysis in this region.) On the other hand, at null infinity a different, the double edge (de), scaling is used because this desingularizes the metric and its Hamilton flow. The compactification of the position space as a manifold with corners combines all asymptotic regimes, in a way described in Section~\ref{sec:geom}, and is done in a way which assures that the metric is conormal, or symbolic, i.e.\ regular with respect to vector fields tangent to all boundary faces. We remark here that it might be tempting to use a different microlocalization at null infinity, based on $3$-body scattering \cite{Vasy:Propagation-2}, which in a sense would be the `obvious' choice as it desingularizes the metric much as in $N$-body scattering and which would {\em not} change the basic Sobolev spaces; however this choice still leaves a degenerate Hamilton flow: it is the Lorentzian (rather than Riemannian) nature of the metric that allows making a different, double edge, choice, that also makes the flow well-behaved.

Our general approach is an instance of the microlocal approach to Fredholm theory for non-elliptic operators, based on combining (microlocal) elliptic estimates, propagation-of-singularities estimates, and radial point estimates to prove that the operator in question is Fredholm when acting between suitable function spaces. This method, originating in \cite{Vasy-AH-KdS}, has since been successfully applied to wave/Klein-Gordon operators on various classes of spacetimes \cite{HV-semilinear,GR-H-V,Vasy-positivity,B-D-GR-causal,B-D-GR-Feynman} as well as Helmholtz \cite{Grenoble-notes,GR-H-Sh-Zh} and time-dependent Schr\"{o}dinger \cite{GR-G-H} operators. Going a step further to prove invertibility of the operator between those spaces usually requires additional input, for wave equations typically (as in our case) in the form of some additional causal structure assumptions which allow one to use energy estimates.

In particular, Gell-Redman, Haber, and Vasy \cite{GR-H-V} used this method to define the Feynman propagator for the massless wave equation on a more restrictive class of spacetimes well-behaved on a compactification modeled on the radial compactification of Minkowski space, in which null infinity is a codimension-one submanifold of the boundary and which is poorly equipped to describe settings with radiation. In the same class of spacetimes, the case of the massive Klein-Gordon equation, which is analytically simpler, is briefly discussed in ~\cite{Grenoble-notes}. Our work is an extension of the construction for massive fields to more general spacetimes, which are well-behaved on a compactification with separate boundary hypersurfaces for each of timelike, spacelike, and null infinity, thus combining features of the radial compactification and the Penrose compactification.

Versions of this setting have been used by Baskin, Vasy, and Wunsch \cite{BVW,BVW-long-range} to study asymptotics of solutions to the wave equation on a more restrictive class of spacetimes and by Hintz and Vasy \cite{HV-stability} to study stability of Minkowski space. We remark here that the latter has a long history in mathematics, starting with the groundbreaking works of Christodoulou and Klainerman \cite{Christodoulou-Klainerman:Global}, later simplified by Lindblad and Rodnianski \cite{Lindblad-Rodnianski:Global-stability}, and with the asymptotic structure we need obtained in the aforementioned \cite{HV-stability}. In the physics literature, where one usually considers the Penrose diagram, rather than the radial compactification, as the starting point, similar frameworks for considering all of these asymptotic regimes simultaneously have been developed by Comp\`{e}re, Gralla, and Wei \cite{C-G-W} and Borthwick, Chantreau, and Herfray \cite{BCH}. Sussman \cite{Sussman} and Hintz and Vasy \cite{HV-eb} introduced the tools appropriate for microlocal analysis on this compactification of massive and massless waves respectively, based on the \textit{double-edge--scattering} (de,sc-) and \textit{edge-b} (e,b-) pseudodifferential calculi.

The Klein-Gordon operator with $m>0$ has nondegenerate principal symbol in the de,sc-calculus, which is fully symbolic. This means that the Fredholm setup only requires understanding propagation of singularities {\em along the Hamilton (or bicharacteristic) flow} associated to the symbol (including at radial points) -- there are no ``normal operator" considerations, which arise in the massless case. Similarly to the radial-compactification setting, we use variable-order spaces for the Feynman and anti-Feynman propagators since singularities are propagated in opposite directions (from a physical-space perspective) in the positive- and negative-frequency components of the characteristic set. However, since we track the decay at all five of past/future timelike infinity, past/future null infinity, and spacelike infinity separately, the variable orders can be taken constant on each connected component of the characteristic set, so for propagation of singularities their variability is irrelevant.

The Hamilton flow structure on the compactification with a blown-up null infinity is significantly more complicated than on the radial-type compactification; this was already observed in \cite{HV-eb} and \cite{Sussman}, with the latter, massive, case possessing additional complexity in that {\em there is a radial submanifold which is connected to itself by a sequence of bicharacteristics limiting both backward and forward to various components of the radial set.} To deal with this systematically, we prove a general-purpose \textit{localized} radial point estimate (Theorem~\ref{thm:localized-rp-main}) similar to that due to Haber and Vasy \cite{Haber-Vasy}, which under some nondegeneracy assumptions allows one to propagate regularity completely microlocally into a radial set, i.e.\ into a single radial point belonging to a larger radial set. Our result is applicable to situations where the radial set is a source, sink, or saddle for the flow in normal directions and is valid under mild decay requirements on non-smooth error terms in the operator's symbol. The localized nature of the result deals with the above mentioned complexity as the parts of the relevant radial set connected via a chain of bicharacteristics, as described above, lie in different boundary components (spacelike vs.\ timelike boundary) of the same radial submanifold. As indicated already, a further difficulty, making the localization arguments within the radial set more challenging, is that in order to accommodate radiative spacetimes we {\em only assume that the metric and thus the wave operator have conormal, or symbolic, regularity} (rather than being smooth on the compactification); however, we point out that in the actual setting of radiative asymptotically Minkowski spacetimes some constructions simplify somewhat, see the remarks after the statement of Theorem~\ref{thm:localized-rp-main}. We hope this technical contribution of our paper is of independent interest.

\subsection{Propagators in curved spacetime and related work}
\label{sec:propagators}
Consider a time-oriented globally hyperbolic spacetime $(M,\g)$ and the Klein-Gordon operator $P=\Box_{\g}+m^2$ for some $m\in\R$. The retarded propagator $P_R^{-1}$ and advanced propagator $P_A^{-1}$ are the unique operators $P_R^{-1},P_A^{-1}:C_c^{\infty}(M)\to C^{\infty}(M)$ such that $P\circ P_R^{-1} = P\circ P_A^{-1}=I$ and $\mathrm{supp}(P_R^{-1}f)$ is contained in the causal future of $\mathrm{supp}(f)$ and $\mathrm{supp}(P_A^{-1}f)$ is contained in the causal past of $\mathrm{supp}(f)$ for all $f\in C_c^{\infty}(M)$. The existence and uniqueness of these propagators is closely related to the well-posedness of the Cauchy problem for $P$ \cite{B-G-P}.

In $(d+1)$-dimensional Minkowski spacetime, the Feynman propagator $P_+^{-1}$ and the anti-Feynman propagator $P_-^{-1}$ are defined by
\begin{equation}
P_+^{-1}f = P_R^{-1}(f_+) + P_A^{-1}(f_-),
\hspace{50pt}
P_-^{-1}f = P_R^{-1}(f_-) + P_A^{-1}(f_+),
\label{eq:Feynman-pos-freq}
\end{equation}
where $(f)_{\pm} = \mathcal{F}_t^{-1}(\theta_{\pm}\cdot \mathcal{F}_t(f))$, $\theta_{\pm}(\omega)=\mathbf{1}_{\{\mp\omega>0\}}$, $\mathcal{F}_t$ the Fourier transform in time, is the operation of taking the \textit{positive-/negative-frequency part} of $f$.\footnote{Sign conventions vary. Our conventions are chosen so that $u_+(t)=e^{-i\omega t}$ for $\omega>0$ is considered a positive-frequency oscillation and $u_-(t)=e^{+i\omega t}$ a negative-frequency one, as usual in QFT, but the Fourier transform, including in time, is defined using the usual PDE convention $\hat{u}(\omega) = \int u(t) e^{-i\omega t}\ dt$. Thus $\hat{u}_{\pm}$ is supported at $\mp \omega$ (hence the definition of $\theta_{\pm}$), and therefore the wavefront set corresponding to a purely \textit{positive}-frequency oscillation consists of momenta with a \textit{negative} coefficient in front of $dt$ and vice versa. Then with our conventions for $\Box$ and the associated Hamilton vector field, the latter flows forward in time in the component of the characteristic set where positive-frequency oscillations have wavefront sets, i.e. in usual coordinates in Minkowski space the temporal component is $-2\omega\frac{\partial}{\partial t}$. \label{note:signs}} 
One can check that $P_{\pm}^{-1}$ are indeed solution operators for $P$. This definition is motivated by quantum field theory, where the integral kernel of the Feynman propagator is the time-ordered two-point function of the free scalar field's vacuum state and appears in perturbative expressions for scattering amplitudes of interacting theories. For $m>0$, an equivalent definition is
\begin{equation}
P_+^{-1}f
 = 
 \lim_{\varepsilon\to 0^+} \mathcal{F}^{-1}\left( \frac{\mathcal{F}f(\omega,k)}{-\omega^2+k^2+m^2-i\varepsilon} \right)
 =
\lim_{\varepsilon\to 0^+} (P-i\varepsilon)^{-1} f,
\label{eq:Feynman-i-epsilon}
\end{equation}
where $\mathcal{F}$ is the full $(d+1)$-dimensional Fourier transform, $(\omega,k)$ are the dual variables to $(t,x)$, and in the second formula $(P-i\varepsilon)^{-1}$ can be understood in the context of the functional calculus for self-adjoint operators, since $P$ is essentially self-adjoint on the domain $C_c^{\infty}(\R^{d+1})$ with respect to $L^2(\R^{d+1})$. The other three propagators can be characterized in a similar manner as limits of Fourier multipliers which tend to $\frac{1}{-\omega^2+k^2+m^2}$ while avoiding the two poles at $\omega=\pm\sqrt{k^2+m^2}$ in different ways.

Since the global notion of positive/negative frequency does not meaningfully generalize to curved spacetimes without any symmetries, the Feynman propagator may not have a natural generalization to all spacetimes either. From the physical point of view, this is related to the fact that there is no distinguished vacuum state for QFT in a general curved spacetime, whereas in Minkowski space the vacuum is distinguished by invariance under the action of the global Poincar\'{e} group.

A broad generalization of the notion of a Feynman propagator to any globally hyperbolic spacetime was discovered by Duistermaat and H\"{o}rmander \cite{D-H}  (in fact under weaker conditions than global hyperbolicity). The characteristic set of the Klein-Gordon operator consists of the dual lightcones over every point of spacetime, which splits into a future-directed and a past-directed connected component if the spacetime is time-oriented and connected. By microlocal elliptic regularity, any solution operator for $P$, and in fact any parametrix (i.e. solution operator modulo smoothing operators), when acting on a distribution can only create new singularities (as measured by the wavefront set) within the characteristic set, whose positive- and negative-frequency components \textit{are} well-defined (identified with the past- and future-directed components by a choice of sign conventions in the definition of wavefront sets). In the characteristic set, singularities propagate along the Hamilton flow of the dual metric function, that is the lifted geodesic flow on the cotangent bundle. Duistermaat and H\"{o}rmander showed that a class of parametrices for $P$ modulo smoothing operators is uniquely specified by postulating, \textit{independently in each component} of the characteristic set, that the parametrix creates new singularities either \textit{only} downstream or \textit{only} upstream along the flow from where they were originally present. The direction of propagation for the retarded/advanced propagators follows from their support properties, with singularities in both components of the characteristic set propagating in the same direction with respect to time, or equivalently in opposite directions with respect to the Hamilton flow. The Feynman propagator in $\R^{d+1}$, meanwhile, propagates positive-frequency singularities to the future and negative-frequency singularities to the past, or equivalently all singularities forward along the Hamilton flow (given appropriate sign conventions); the anti-Feynman propagator has the opposite properties. Thus, in any globally hyperbolic spacetime, the four equivalence classes are called the retarded, advanced, Feynman, and anti-Feynman distinguished parametrices for $P$.\footnote{We warn the reader that the term ``propagation" is used in this context in at least three different ways, which leads to different ``directions of propagation" being associated to the same propagators. When one says that the retarded propagator propagates all singularities forward in time, or the Feynman propagator forward along the Hamilton flow, one means that when acting on a distribution, the operator creates new singularities only in the specified region relative to where the original distribution was singular. On the other hand, H\"{o}rmander's propagation of singularities theorem refers to propagation of \textit{information about} singularities: e.g. if $u$ is a solution to $Pu=f$, then in a region where $f$ is smooth, if $u$ is singular at a point of the characteristic set we know that it is also singular at all points on the same flow line within the region; if, for instance, $u$ is the retarded solution, i.e. known to be trivial in the far past, then this implies that $f$ has singularities somewhere \textit{in the past} along this flow line. In addition, in proving propagation theorems one actually tracks information about \textit{regularity} of solutions, which propagates in the opposite direction relative to information about singularities.} See \cite{I-S} for analogous constructions for vector and spinor fields.

This result was connected to quantum field theory by Radzikowski \cite{Radzikowski}, who gave a microlocal formulation of the Hadamard condition for states of QFT in curved spacetime, which specifies a class of states for which the expected value of the stress-energy tensor can be defined by analogy with the Minkowski vacuum. Radzikowski showed that the condition amounts to demanding that the two-point function only have singularities along the forward-pointing lightcone, thus generalizing the fact that the two-point function of the Minkowski vacuum is the positive-frequency part of the commutator function. For any Hadamard state, the time-ordered two-point function is a Feynman parametrix, in fact an exact solution operator acting on $C_c^{\infty}(M)$, which can thus be considered the \textit{Feynman propagator associated to the state}. This characterization led to many rigorous constructions in QFT in curved spacetime, including perturbative interacting theories starting with the work of Brunetti and Fredenhagen \cite{B-F}, by direct analogy with QFT in Minkowski space, with the vacuum state replaced by an arbitrarily chosen Hadamard state.

Since any globally hyperbolic spacetime admits many Hadamard states \cite{F-N-W,G-W-pseudo,Lewandowski}, Feynman propagators in this sense are not unique, which is to be expected because the preceding discussion was concerned almost completely with local/short-distance/UV considerations. The main use of the Feynman propagator in conventional QFT, however, is as a building block for scattering amplitudes, which describe the evolution of particle states in the far past into other particle states in the far future. To generalize this picture to curved spacetime, one presupposes the existence of an ``in-vacuum" state $\Omega_{\mathrm{in}}$ and an ``out-vacuum" state $\Omega_{\mathrm{out}}$, which are interpreted as having no particles as experienced by observers in the far past and far future respectively. Scattering phenomena are then described entirely in terms of these two states, and the role of the Feynman propagator is played not by any time-ordered expected value but rather by a time-ordered matrix element of the heuristic form $\frac{\langle \Omega_{\mathrm{in}}|T\phi(x)\phi(y)|\Omega_{\mathrm{out}}\rangle}{\langle\Omega_{\mathrm{in}} | \Omega_{\mathrm{out}}\rangle }$. One can make sense of this expression if there exist distinguished in-/out-vacua giving rise to particle interpretations related by a well-defined S-matrix, and it is expected that, among other settings, these constructions are possible for spacetimes which are ``sufficiently asymptotically flat" (see \cite{Wald-S-matrix} for the analysis when the metric is exactly Minkowski outside a compact set, and \cite[Sections 4.3-4.4]{Wald-book} for the general framework). If this is the case, then this matrix element is another object one could reasonably call \textit{the} Feynman propagator, one that in fact would be unique.

This motivates the search for solution operators to $P$ which are not only Feynman parametrices in the sense of Duistermaat-H\"{o}rmander but are canonically determined by the asymptotic structure of the spacetime. Such an operator should be distinguished by the fact that, when it acts e.g. on a compactly supported smooth function $f$, it outputs a solution $u$ to $Pu=f$ which is ``asymptotically positive-frequency" in the far future and ``asymptotically negative-frequency" in the far past.

One setting in which there are distinguished vacuum states is that of static spacetimes, which have a distinguished time-slicing. A rigorous construction of Feynman propagators in this case was given by Derezi\'{n}ski and Siemssen \cite{D-S-static}, who also considered generalizations to several other settings with a well-behaved time-slicing \cite{D-S-KG,D-S-QFT}.

More in the spirit of Duistermaat-H\"{o}rmander, in settings without an explicit time-slicing progress has been enabled by the development of microlocal tools for analysis on manifolds with boundary arising as compactifications of geometrically non-compact spaces, especially starting with the work of Melrose \cite{Melrose-APS,Melrose-AES}. The asymptotic structure of spacetime can be encoded in a choice of compactification with respect to which the metric, and therefore the wave operator, degenerates at infinity in a controlled way. The wave operator can then be considered as an element of a pseudodifferential calculus tailored to the compactification, with a generalized notion of wavefront set tracking frequencies at which functions fail to \textit{decay} at any given point of spacetime infinity in addition to the information tracked by the usual wavefront set, namely the (co-)directions (parametrized by the sphere at infinity in frequency space) in which functions fail to be smooth at any given point in the spacetime interior. Microlocal elliptic and propagation estimates extend to these settings; however, due to the compactness of the phase space, the Hamilton flow of an operator usually has critical, or \textit{radial}, points, to propagate regularity information into which one proves additional \textit{radial point estimates}. This enables one to prove results on smoothness \textit{and} decay of solutions $u$ to $Pu=f$ given such information about $f$, which, informally speaking, provides the tools to impose boundary conditions ensuring existence and uniqueness of solutions, thus defining inverses of $P$. Returning to wave equations, this allows one to specify solutions which, independently in the positive- and negative-frequency components of the characteristic set, have no wavefront set over either the asymptotic future or asymptotic past boundary, giving an invariant meaning to the notion of solutions which are asymptotically positive- or negative-frequency in those regions. 

The first constructions of this sort used the \textit{radial compactification} of Minkowski space, that is the compactification of $\R^{d+1}$ into a ball by adding a sphere at infinity corresponding to $r=\sqrt{x^2+t^2}=\infty$ in spacetime spherical coordinates, since this setting is amenable to tools developed earlier for Riemannian-signature problems, such as Melrose's \textit{boundary} (b-) and \textit{scattering} (sc-) pseudodifferential calculi. Thus, Gell-Redman, Haber, and Vasy \cite{GR-H-V} carried out this program to show that for small perturbations of Minkowski space in the sense of smooth \textit{scattering metrics} on the radial compactification, the massless wave operator is invertible between spaces encoding the Feynman conditions based on \textit{weighted b-Sobolev spaces}, identifying the inverse as the Feynman propagator. A similar analysis, but using sc-Sobolev spaces, applies to the Klein-Gordon operator with $m>0$. Besides being analytically simpler, at first glance the sc-framework is more natural than the b-framework for this purpose regardless of the mass since the wave operator is a sc-operator, but when $m=0$ it has a degenerate principal symbol, so it is convenient to analyze a rescaled version of the operator which is nondegenerate as a b-operator. (One can also work with a combined sc-b ps.d.o.\ algebra introduced in \cite{Vasy:Zero-energy,Vasy:Limiting-absorption-lag} for a related but different purpose; in a sense this is the systematic desingularization approach to the massless problem.) G\'{e}rard and Wrochna \cite{GW-FP1,GW-FP2} also constructed Feynman propagators for the Klein-Gordon equation with $m>0$ on perturbations of Minkowski space with error terms which are symbolic on the radial compactification which admit a well-behaved time-slicing; their method, related to work by B\"{a}r and Strohmaier on the Dirac equation \cite{BS}, uses a description of time-evolution with respect to the time-slicing and scattering data rather than propagation arguments.

The radial compactification, however, is poorly adapted to model many situations of physical interest in asymptotically flat spacetimes. Ultimately this is because one is interested in metrics which solve Einstein's equation, and their asymptotic behavior is indicated by solutions of the linearized equations, which even for very well-behaved (Schwartz) forcing produce singularities at certain parts of the radial compactification, corresponding to null infinity, i.e. the points where $\frac{r}{t}=\pm 1$ on the boundary of the radial compactification. Thus, in this paper, we consider spacetimes which may include radiation, which leads to more complicated behavior at null infinity; see Section~\ref{sec:geom} for details. On the other hand, Baskin, Doll, and Gell-Redman \cite{B-D-GR-causal,B-D-GR-Feynman} recently considered the massive Klein-Gordon equation on Minkowski space with asymptotically static spatially decaying potentials and/or first-order perturbations, which instead leads to more complicated behavior at the points where $\frac{r}{t}=0$ on the boundary of the radial compactification. They also use the general non-elliptic Fredholm framework (in a different pseudodifferential calculus) to define all four propagators. Since the regions in which these two cases require a resolution relative to the radial compactification are disjoint and the global results are proven by joining microlocal ones together in a modular fashion, it should be possible to combine the two approaches to treat problems which involve both of these kinds of behavior, as already pointed out in \cite{JMS}.

These methods for defining Feynman propagators can be roughly thought of as generalizing Eq.~(\ref{eq:Feynman-pos-freq}) since they directly prescribe direction of propagation for positive- and negative-frequency singularities. The other characterization, Eq.~(\ref{eq:Feynman-i-epsilon}), on the other hand, is generalized by the limiting absorption principle, which says that $\lim_{\varepsilon\to 0^+}(P-i\varepsilon)^{-1}$ exists in an appropriate sense and is in fact the Feynman propagator. In the radial-compactification setting, this follows from the fact that a similar Fredholm framework applies to the operators $P-i\varepsilon$ (though propagation of singularities is only allowed in one direction relative to the Hamilton flow) and the corresponding estimates are uniform in $\varepsilon>0$ near zero, which yields existence and uniform boundedness of the inverses $(P-i\varepsilon)^{-1}$, and the limit follows \cite{Vasy-SA}. This argument is very robust: once the Fredholm framework is established in our more general setting, the limiting absorption principle follows in an essentially identical manner. We note that the proof uses the fact that the Feynman (i.e. $\varepsilon=0$) inverse exists, so at least in this approach this is not an independent definition of the Feynman propagator. We also note that while $P-i\varepsilon$ for any $\varepsilon>0$ is elliptic at finite frequency, hence its invertibility can be shown using the simpler propagation of singularities at infinite frequency only (as in \cite{JMS}), establishing uniform boundedness of the inverses does require propagation in the full characteristic set of $P$ because the family $P-i\varepsilon$ is not \textit{uniformly} elliptic there as $\varepsilon\to 0^+$.

The limiting absorption principle is often discussed in the context of the question of essential self-adjointness of the wave operator with respect to the $L^2$ space of the metric, and the Feynman propagator is then interpreted as the boundary value of the resolvent \cite{D-S-static,Vasy-SA,NT-RPT,Taira,NT-static-compact,NT-KG}. For radiative perturbations of Minkowski space, the wave operator was shown to be essentially self-adjoint in \cite{JMS} using the same tools we use in this paper.

See also \cite{DG} for a review and comparison of different global notions of Feynman propagators relevant for various classes of spacetimes of physical interest.

\subsection{Plan of the paper}
The remainder of the paper is organized as follows.
\begin{itemize}
\item In Section~\ref{sec:geom}, we describe our assumptions on the spacetime and the examples motivating them. While our most general assumptions are rather abstract and refer explicitly to the bicharacteristic flow, the reader can keep in mind the most physically interesting case, general small perturbations of Minkowski space as a solution to the vacuum Einstein initial value problem, whose relationship to other classes we consider is discussed in Example~\ref{ex:perturbations}.

\item In Section~\ref{sec:de-sc}, we review Sussman's double-edge--scattering calculus (in its natural geometric generalization and including variable-order spaces) and prove a version of the propagation of singularities theorem and a localized radial point estimate which are applicable to wave operators of metrics of our class.

\item In Section~\ref{sec:Hamiltonian}, we describe in detail the Hamilton flow associated to the Klein-Gordon operator $P$ in its characteristic set, largely mirroring \cite[Section 4]{Sussman}.

\item In Section~\ref{sec:global}, we combine the microlocal results into a result on the global regularity of solutions to $Pu=f$, define the retarded, advanced, Feynman, and anti-Feynman Fredholm realizations of the operator, and show that under an additional causal structure assumption they are in fact invertible, thus defining propagators. Along the way, we attempt to provide a clearer explanation of the non-standard positive-commutator argument proving that solutions to $Pu=0$ satisfying the Feynman/anti-Feynman conditions must be Schwartz, based on a construction of Isozaki \cite{Isozaki}; this makes them amenable to energy estimates in the proof of invertibility, which is automatically the case for the retarded/advanced solutions because they are by definition Schwartz near one of the timelike infinities.

\item In Section~\ref{sec:LAP}, we prove the limiting absorption principle and discuss implications for the uniqueness of Feynman propagators.

\item Finally, in Appendix~\ref{sec:Minkowski-like}, we show that for metrics with a particular structure on the whole boundary of the compactification, generalizing perturbations of Minkowski space and similar to that considered in \cite{BVW,BVW-long-range,HV-eb} near null infinity, some of our assumptions on the metric are satisfied automatically, and in Appendix~\ref{sec:time-fcn} we show that the extra condition required for invertibility is always satisfied for these metrics.
\end{itemize}

\subsection{Acknowledgments}
The authors gratefully acknowledge support from the National Science Foundation under grant
numbers DMS-1953987, DMS-2247004 (AV) and PHY-2014215, PHY-2310429 (MM). AV is also grateful for support from the Simons Foundation via a Simons Fellowship. The authors are grateful to Jan Derezi\'{n}ski and Robert Wald for very helpful discussions on notions of Feynman propagators in curved spacetime, to Andrew Hassell, Qiuye Jia, Eva Silverstein, and Micha{\l} Wrochna for comments on a draft of the paper, and especially to Ethan Sussman for extensive discussions and comments.

\section{Geometric setting}
\label{sec:geom}
	Our analysis applies to asymptotically flat spacetimes whose metric is well-behaved on a compactification introduced by Sussman \cite{Sussman} and Hintz and Vasy \cite{HV-eb}, which contains boundary faces for each of past and future timelike infinity, past and future null infinity, and spacelike infinity. This compactification combines features of the radial compactification, which resolves points at timelike and spacelike but not null infinity, and the Penrose compactification (which provides the usual paradigm for studying asymptotically flat spacetimes with physical-space methods), which has the opposite features. The advantage of the radial compactification is that it is well-adapted to the use of microlocal tools; however, to treat radiative spacetimes, a resolution of null infinity is necessary because the metrics are only well-behaved after this resolution.

	In this section, we review this geometric setting. We do not restrict the topology and therefore follow the more coordinate-invariant description in \cite{HV-eb} but describe the metric in de,sc-terms rather than e,b-terms.

	\subsection{Topology and smooth structure}
	\label{sec:topology}
	
	For a smooth manifold with boundary or corners $M$, we always distinguish the space $C^{\infty}(M)$ of functions which are smooth up to and including the boundary and the space $C^{\infty}(M^{\circ})$ of functions which need only be smooth in the interior. A \textit{defining function} of a compact embedded {\em product-type (p-) or neat} hypersurface $N\subset M$ (i.e.\ one for which in local coordinates the boundary hypersurfaces and $N$ are simultaneously given by the vanishing of a subset of the coordinates) with orientable normal bundle is a function $\rho\in C^{\infty}(M)$ such that $N=\{\rho=0\}$ and $d\rho$ does not vanish on $N$. Any such hypersurface has a defining function, any two of which differ only by a nowhere-vanishing smooth factor. If $N$ is a boundary hypersurface of $M$, then we also demand that $\rho\geqslant 0$.
	
		Let $\tilde{\M}$ be an orientable compact smooth manifold with boundary; the spacetime is identified with the interior of $\tilde{\M}$, and we set $\dim\tilde{\M}=d+1$. As in \cite{BVW,BVW-long-range}, we assume that there is a specified embedded submanifold $Y\subset \partial\tilde{\M}$ of codimension 1 with orientable normal bundle. We fix a defining function $\rho$ of $\partial\tilde{\M}$ as well as $v\in C^{\infty}(\tilde{\M})$ which extends a defining function of $Y$ within $\partial\tilde{\M}$. We assume the following properties:
	\begin{itemize}
	\item $Y=Y^+\sqcup Y^-$, where $Y^+,Y^-$ are two disjoint embedded submanifolds of $\partial\tilde{\M}$ (which we identify as future and past null infinity).
	\item The set $\{\rho=0,\ v > 0\}\subset\partial \tilde{\M}$ is of the form $\tilde{I}^+ \sqcup \tilde{I}^-$ (which we identify as future and past timelike infinity), where $\partial \tilde{I}^{\pm}=Y^{\pm}$.
	\end{itemize}
	We denote the set $\{\rho=0,\ v < 0\}\subset\partial \tilde{\M}$ (which we identify as spacelike infinity) by $\tilde{I}^0$.
		
		Define the blowup $\M=[\tilde{\M};Y;\frac{1}{2}]$; this means we first perform the polar blowup of $Y\subset \tilde{\M}$ and then modify the smooth structure by adjoining the square root of a defining function of the front face (see \cite{Sussman,HV-eb} for motivation and details). We denote the closures of the lifts of $\tilde{I}^{\pm},\tilde{I}^0$ to $\M$ by $I^{\pm},I^0$ respectively and write $I^T=I^+\sqcup I^-$. We denote the lifts of $Y^{\pm}$ to $\M$ by $\scri^{\pm}$ and write $\scri=\scri^+\sqcup\scri^-$. $\M$ is a smooth manifold with corners; we denote $\mathcal{G}(\M)=\{I^-,\scri^-,I^0,\scri^+,I^+\}$ its set of boundary hypersurfaces. For the corners, we use the notations
		$$
		\scri^-_-=\scri^-\cap I^-,
		\hspace{30pt}
		\scri^-_+=\scri^-\cap I^0,
		\hspace{30pt}
		\scri^+_-=\scri^+\cap I^0,
		\hspace{30pt}
		\scri^+_+=\scri^+\cap I^+.
		$$

	\subsection{Neighborhoods of the corners}
	\label{sec:neighborhoods}
	
			There exist product neighborhoods of $Y^{\pm}$ in $\tilde{\M}$ of the form $U_{\scri^{\pm}}\simeq [0,\varepsilon)_{\rho}\times (-\varepsilon,\varepsilon)_v\times Y^{\pm}$. Let $\U_{\scri^{\pm}}$ denote the lifts of $U_{\scri^{\pm}}$ to $\M$. In the interior of $\U_{\scri^{\pm}}$, we can define new coordinates $u=\frac{v}{\rho}$ and $r=\frac{1}{\rho}$; note that $r\to + \infty$ at $\partial\M$, while $u$ extends smoothly to the interior of $\scri$ but goes to $+\infty$ at $I^T$ and to $-\infty$ at $I^0$.
		
		We define neighborhoods of the corners by
		$$
		\U_{\scri^+_+}=\{1/\varepsilon<r \leqslant +\infty,\ u_1<u\leqslant +\infty\} \cap \U_{\scri^+},
		\hspace{30pt}
		\U_{\scri^+_-}=\{1/\varepsilon<r \leqslant +\infty,\ -\infty\leqslant u<u_2\} \cap \U_{\scri^+},
		$$
		$$
		\U_{\scri^-_-}=\{1/\varepsilon<r \leqslant +\infty,\ u_1<u\leqslant +\infty\} \cap \U_{\scri^-},
		\hspace{30pt}
		\U_{\scri^-_+}=\{1/\varepsilon<r \leqslant +\infty,\ -\infty\leqslant u<u_2\} \cap \U_{\scri^-}
		$$
		for some $u_1,u_2\in\R$ with $u_1<u_2$. When convenient, we will combine these into
		$$
		\U_0=\U_{\scri^+_-}\sqcup\U_{\scri^-_+},
		\hspace{30pt}
		\U_T=\U_{\scri^+_+}\sqcup\U_{\scri^-_-}.
		$$
		Together, these neighborhoods cover a neighborhood of $\scri$ -- see Figure~\ref{fig:compactifications}.
		
		\begin{figure}
		\begin{center}
		\includegraphics{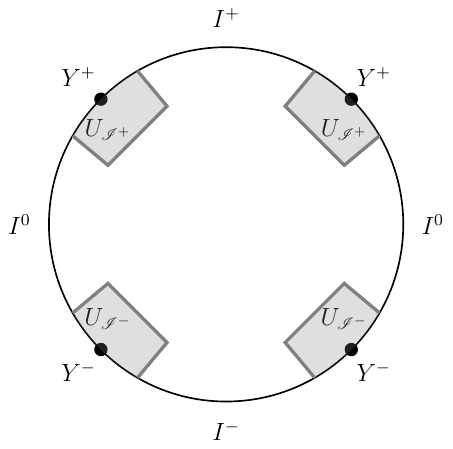}
		\hspace{30pt}
		\includegraphics{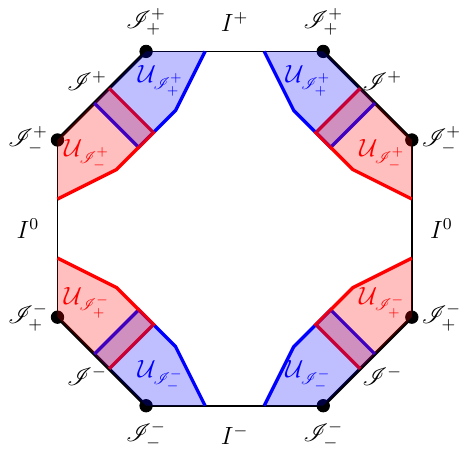}
		\end{center}
		\caption{$\tilde{\M}$ (left) and $\M$ (right) in the case of $d=1$ and trivial topology, with boundary hypersurfaces and corners and the neighborhoods defined above labeled. To get the $d=2$ picture, rotate about the central vertical axis.}
		\label{fig:compactifications}
		\end{figure}

	\subsection{Vector bundles}
	The compactified spacetime $\M$ can be equipped with several useful versions of the tangent and cotangent bundles. Below, let $\rho_0,x_0$ be any choice of local defining functions of $I^0,\scri$ respectively within $\U_0$ and $\rho_T,x_T$ any choice of local defining functions of $I^T,\scri$ within $\U_T$.
	
	A \textit{b-vector field} on any manifold with boundary or corners $M$ is a smooth vector field which is tangent to $\partial M$.  In local coordinates $(\rho_1,\ldots,\rho_k,y_1,\ldots,y_l)$, where $\rho_1,\ldots,\rho_k$ are defining functions of some boundary hypersurfaces and $y_1,\ldots,y_l$ extend a coordinate chart on their intersection, a b-vector field is a $C^{\infty}(M)$-linear combination of $\rho_1\frac{\partial}{\partial\rho_1},\ldots,\rho_k\frac{\partial}{\partial\rho_k},\frac{\partial}{\partial y_1},\frac{\partial}{\partial y_l}$. Such vector fields can be characterized as smooth sections of a vector bundle called the \textit{b-tangent bundle}, denoted ${}^{\mathrm{b}}TM$.
	
	A \textit{scattering vector field} on $M$ is a b-vector field which, as a smooth section of ${}^{\mathrm{b}}TM$, vanishes at $\partial M$. Thus, in coordinates as above, such a vector field is a $C^{\infty}(M)$-linear combination of $(\rho_1\ldots\rho_k)\rho_1\frac{\partial}{\partial\rho_1}$, ..., $(\rho_1\ldots\rho_k)\rho_k\frac{\partial}{\partial\rho_k}$, $(\rho_1\ldots\rho_k)\frac{\partial}{\partial y_1}$, $(\rho_1\ldots\rho_k)\frac{\partial}{\partial y_l}$. Such vector fields can be characterized as smooth sections of the \textit{scattering tangent bundle}, denoted ${}^{\mathrm{sc}}TM$.
		
In our case, $\M$ has two boundary hypersurfaces $\scri^{\pm}$ which come equipped with a natural fibration structure: points of $\scri^{\pm}$ belong to the same fiber if they blow down to the same point of $Y^{\pm}$, i.e.\ the fibration is the blow-down map restricted to the front face. An \textit{edge-b--vector field} on $\M$ is a b-vector field which is in addition tangent to these fibers at $\scri^{\pm}$. Thus, these are b-vector fields whose restrictions to $\U_0$ are $C^{\infty}(\M)$-linear combinations of $\rho_0\frac{\partial}{\partial\rho_0},x_0\frac{\partial}{\partial x_0},x_0\V(Y)$ and whose restrictions to $\U_T$ are $C^{\infty}(\M)$-linear combinations of $\rho_T\frac{\partial}{\partial\rho_T},x_T\frac{\partial}{\partial x_T},x_T\V(Y)$. Such vector fields can be characterized as smooth sections of the \textit{edge-b--tangent bundle}, denoted ${}^{\mathrm{e,b}}T\M$.

Finally, a \textit{double-edge--scattering vector field} on $\M$ is an e,b-vector field which, as a smooth section of ${}^{\mathrm{e,b}}T\M$, vanishes at $\partial \M$. Thus, these are sc-vector fields whose restrictions to $\U_0$ are $C^{\infty}(\M)$-linear combinations of $\rho_0^2x_0\frac{\partial}{\partial\rho_0},\rho_0 x_0^2\frac{\partial}{\partial x_0},\rho_0 x_0^2\V(Y)$ and whose restrictions to $\U_T$ are $C^{\infty}(\M)$-linear combinations of $\rho_T^2 x_T\frac{\partial}{\partial\rho_T},\rho_T x_T^2\frac{\partial}{\partial x_T},\rho_T x_T^2\V(Y)$. Such vector fields can be characterized as smooth sections of the \textit{double-edge--scattering tangent bundle}, denoted ${}^{\mathrm{de,sc}}T\M$.

The spaces of smooth vector fields in these categories are denoted $\Vb(M)$, $\V_{\mathrm{sc}}(M)$, $\V_{\mathrm{e,b}}(\M)$, $\V_{\mathrm{de,sc}}(\M)$ respectively, and the spaces of differential operators they generate $\mathrm{Diff}_{\mathrm{b}}(M)$, $\mathrm{Diff}_{\mathrm{sc}}(M)$, $\mathrm{Diff}_{\mathrm{e,b}}(\M)$, $\mathrm{Diff}_{\mathrm{de,sc}}(\M)$. It is important that the commutator of two vector fields of any one of these classes is again a vector field of the same class, and in the case of $\V_{\mathrm{sc}}(M)$ and $\V_{\mathrm{de,sc}}(\M)$ the commutator moreover vanishes everywhere on the boundary as a vector field of the same class.

There are also the dual cotangent bundles; relevant to our analysis are ${}^{\mathrm{sc}}T^*\tilde{\M}$ and $\ttm$, the dual bundles to ${}^{\mathrm{sc}}T\tilde{\M}$ and $\tm$ respectively. Smooth sections of ${}^{\mathrm{sc}}T^*\tilde{\M}$ are 1-forms which are smooth in $\tilde{\M}^{\circ}$, while near $\partial\tilde{\M}$ they are $C^{\infty}(\tilde{\M})$-linear combinations of $\frac{d\rho}{\rho^2}$ and $\frac{\Omega^1(\partial\tilde{\M})}{\rho}$ for any boundary-defining function $\rho$. Smooth sections of $\ttm$ are smooth as sections of ${}^{\mathrm{sc}}T^*\tilde{\M}$ away from null infinity, while their restrictions to $\U_0$ are $C^{\infty}(\M)$-linear combinations of $\frac{d\rho_0}{\rho_0^2x_0}$, $\frac{dx_0}{\rho_0x_0^2}$, and $\frac{\Omega^1(Y)}{\rho_0x_0^2}$, and their restrictions to $\U_T$ are $C^{\infty}(\M)$-linear combinations of $\frac{d\rho_T}{\rho_T^2 x_T}$, $\frac{dx_T}{\rho_T x_T^2}$, and $\frac{\Omega^1(Y)}{\rho_T x_T^2}$.

Note that a smooth de,sc--tensor field on $\M$ is smooth as a sc--tensor field on $\tilde{\M}$ except at $Y$. In this sense, de,sc-analysis on $\M$ is a refinement of sc-analysis on $\tilde{\M}$, allowing for more singular behavior at null infinity.

	\subsection{Spaces of functions with symbolic regularity}
	\label{sec:fcn-spaces-base}
	
	In this section, $M$ is any compact smooth manifold with corners and $\mathcal{G}(M)$ is the set of its boundary hypersurfaces. Besides $\M$, we will also be interested in symbols on the de,sc-phase space $\PP$ defined in Section~\ref{sec:de-sc}, where all the properties discussed below remain true.

The space of order-$\mathsf{m}$ symbols on $M$ conormal to the boundary is defined by
\begin{equation}
\symm{m} = \{ a\in C^{\infty}(M^{\circ})\ |\ \forall Q\in \db(M)\ Qa\in\rho^{\mathsf{-m}}L^{\infty}(M)\},
\label{eq:def-conormal-base}
\end{equation}
where $\rho$ is a collection of defining functions of the boundary hypersurfaces and we use multi-index notation: for $M=\M$, $\rho=(\rho_{I^-},\rho_{\scri^-},\rho_{I^0},\rho_{\scri^+},\rho_{I^+})$ and $\mathsf{m}=(m_{I^-},m_{\scri^-},m_{I^0},m_{\scri^+},m_{I^+})$ prescribes the growth orders of $a$ at every face. When we write a numerical constant as a multi-index, we mean that every entry is equal to that number, e.g. $\mathsf{1}=(1,1,1,1,1)$.

We have $\symm{m}=\rho^{\mathsf{-m}}\symm{0}$. One can check using the chain rule that if $\phi:\R\to\R$ is smooth on the range of $a\in\symm{0}$, then $\phi\circ a \in\symm{0}$.

A useful subspace is that of classical symbols, $\symmcl{m}=\rho^{\mathsf{-m}}C^{\infty}(M)$. One can also consider spaces of symbols which are classical at some boundary hypersurfaces but only conormal to others; thus, for $G\subset \mathcal{G}(M)$, we define $S^{\mathsf{m}}(M;G) = \rho^{\mathsf{-m}}S^{\mathsf{0}}(M;G)$, where
\begin{equation}
S^{\mathsf{0}}(M;G)
=
\{a\in C^{\infty}(M^{\circ})\ |\ \forall Q\in \mathrm{Diff}_{\mathrm{b}}(M;G)\ Qa\in L^{\infty}(M)\},
\end{equation}
and by $\mathrm{Diff}_{\mathrm{b}}(M;G)$ we denote the algebra of differential operators generated by smooth vector fields on $M$ which are tangent to all the boundary hypersurfaces in $G$. Note that with this definition,
$$
\mathrm{Diff}(M)=\mathrm{Diff}_{\mathrm{b}}(M;\varnothing),
\hspace{30pt}
\mathrm{Diff}_{\mathrm{b}}(M)=\mathrm{Diff}_{\mathrm{b}}(M;\mathcal{G}(M)),
$$
so
$$
\symmcl{m} = S^{\mathsf{m}}(M;\varnothing),
\hspace{30pt}
\symm{m} = S^{\mathsf{m}}(M;\mathcal{G}(M)).
$$

Finally, one can also define spaces of symbols with classical leading behavior at every face: for $\epsilon>0$,
\begin{equation}
S^{\mathsf{m}}_{\epsilon}(M)
=
\sum_{G\subset\mathcal{G}(M)}
\rho_G^{\epsilon} S^{\mathsf{m}}(M;G)
=
\rho^{\mathsf{-m}}\sum_{G\subset\mathcal{G}(M)}
\rho_G^{\epsilon} S^{\mathsf{0}}(M;G)
,
\end{equation}
where $\rho_G$ is a product of defining functions of all boundary hypersurfaces in $G$, with $\rho_{\varnothing}=1$. (The subspace $S^{\mathsf{m}}(M;G)$ corresponding to $G=\varnothing$ is $\symmcl{m}$, and the subspace corresponding to $G=\mathcal{G}(M)$ is $\symm{m-\epsilon}$). 

Note that symbols in $\soe(M)$ are continuous, thus their values are well-defined pointwise, up to and including the boundary. Moreover, if we allow ourselves to consider large $\epsilon$, we have $\soe(M)\subset C^k(M)$ for any natural number $k<\epsilon$. As one application, returning to small $\epsilon>0$, if one defines a new smooth structure on $M$ by fixing $N>\frac{k}{\epsilon}$ and adjoining the $N$-th roots of the defining functions of every boundary hypersurface (so these become the new defining functions), then symbols in $\soe(M)$ with respect to the original smooth structure become $C^k$ functions with respect to the new one.

For a nonempty corner $K=\bigcap_{\Gamma\in G} \Gamma$ for some $G\subset\mathcal{G}(M)$, it is convenient to also introduce the subspace of $\soe(M)$ consisting of the ``error terms" at $K$:
\begin{equation}
\mathcal{I}_K^{\epsilon}(M) = \{a \in \soe(M)\ |\ a|_K=0\} = \sum_{ \substack{G'\subset\mathcal{G}(M),\\ G'\cap G\neq\varnothing}} \rho_{G'}^{\epsilon} S^{\mathsf{0}}(M;G').
\end{equation}

These classes are all independent of the choice of defining functions. Symbols in any of these classes stay in the class under application of b-differential operators. Moreover, under multiplication
$$
\symm{m}\cdot\symm{n} \subset \symm{m+n},
\hspace{30pt}
\symmcl{m}\cdot\symmcl{n} \subset \symmcl{m+n},
$$
$$
S^{\mathsf{m}}(M;G)\cdot S^{\mathsf{n}}(M;G') \subset S^{\mathsf{m+n}}(M;G\cup G'),
$$
from which we get $
S^{\mathsf{m}}_{\epsilon}(M) \cdot S^{\mathsf{n}}_{\epsilon}(M)
\subset
S^{\mathsf{m+n}}_{\epsilon}(M)$. Additionally, the space of error terms at any corner is an ideal of $\soe(M)$, i.e. $\mathcal{I}_K^{\epsilon}(M) \cdot \soe(M) \subset \mathcal{I}_K^{\epsilon}(M)$.

\begin{lemma}
If $a\in\soe(M)$ does not vanish, then $a^{-1},\sqrt{|a|}\in\soe(M)$ as well. If $A$ is a square matrix with entries in $\soe(M)$ and nonvanishing determinant, then $A^{-1}$ has entries in $\soe(M)$ as well. If $A$ is a real symmetric positive definite matrix (with respect to the standard Euclidean inner product) with entries in $\symm{0}$ and eigenvalues bounded from below by a positive number, then its real symmetric positive definite square root $\sqrt{A}$ has entries in $\symm{0}$ as well.
\label{thm:symbol-prop}
\end{lemma}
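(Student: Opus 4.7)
The plan is to reduce all three assertions to algebraic manipulations combined with the chain-rule remark already established for $\symm{0}$. For the first assertion in~(1), note that since $a\in\soe(M)\subset C(M)$ is non-vanishing on the compact $M$, we have $|a|\geqslant c>0$ for some $c$, so the range of $a$ is a compact subset of $\R\setminus\{0\}$. The functions $\phi_1(x)=1/x$ and $\phi_2(x)=\sqrt{|x|}$ are smooth on a neighborhood of this range, so by the existing chain-rule remark, $1/a=\phi_1\circ a$ and $\sqrt{|a|}=\phi_2\circ a$ lie in $\symm{0}$. The remaining, and main, task in~(1) is to upgrade these to membership in $\soe(M)$.

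I would do this upgrade locally near each corner $K$ of $M$. Choose a neighborhood $U_K$ so that only the boundary hypersurfaces through $K$, say $\Gamma_1,\ldots,\Gamma_k$, meet $U_K$, and decompose
\[
a = \sum_{G\subset\{\Gamma_1,\ldots,\Gamma_k\}} \rho_G^{\epsilon}\, a_G,\qquad a_G\in S^{\mathsf{0}}(U_K;G),
\]
with $a_\varnothing\in C^\infty(U_K)$. Since $a|_K = a_\varnothing|_K$ (all other terms in the sum vanish at $K$) and $a$ does not vanish, $a_\varnothing$ is non-vanishing on $K$ and hence on all of $U_K$ after shrinking. Make the ansatz $1/a=\sum_G \rho_G^{\epsilon}\, b_G$ and match terms in $a\cdot(1/a)=1$: the equations form a triangular system indexed by $|G|$, with leading equation $a_\varnothing b_\varnothing = 1$ giving $b_\varnothing = 1/a_\varnothing\in C^\infty(U_K)$, and each subsequent equation expressing $b_G$ for $G\neq\varnothing$ as $-a_\varnothing^{-1}$ times a polynomial in $\{a_{G'}\}_{G'\subset G}$ and $\{b_{G'}\}_{G'\subsetneq G}$, which by induction on $|G|$ lies in $S^{\mathsf{0}}(U_K;G)$. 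The same strategy with ansatz $\sqrt{|a|} = \sum_G \rho_G^{\epsilon}\, d_G$ yields $\sqrt{|a|}\in\soe(U_K)$, the leading equation $d_\varnothing^2 = |a|_\varnothing$ being soluble (after choosing the positive square root) because $|a|_\varnothing>0$. A partition of unity subordinate to a covering of $M$ by such corner neighborhoods glues the local decompositions into global elements of $\soe(M)$.

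For~(2), use Cramer's rule: $A^{-1} = (\det A)^{-1}\,\mathrm{adj}(A)$. Since $\soe(M)$ is an algebra containing $C^\infty(M)$, both $\det A$ and the entries of $\mathrm{adj}(A)$, each polynomial in the entries of $A$, lie in $\soe(M)$. The non-vanishing of $\det A$ combined with~(1) gives $(\det A)^{-1}\in\soe(M)$, so multiplication yields entries of $A^{-1}$ in $\soe(M)$.

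For~(3), the principal positive-definite square root is a smooth map from the open cone $\mathcal{S}^+\subset\R^{n\times n}$ of real symmetric positive-definite matrices to itself. One sees this from the Dunford/contour-integral formula $\sqrt{B} = (2\pi i)^{-1}\oint_\gamma \sqrt{\lambda}\,(\lambda I - B)^{-1}\,d\lambda$ with $\gamma$ encircling the spectrum of $B$ in the open right half-plane (possible uniformly in $x$ given the eigenvalue lower bound on $A(x)$), or alternatively by applying the implicit function theorem to $B^2 - A = 0$. The range of $A$ is a relatively compact subset of $\mathcal{S}^+$, so the entries of $\sqrt{A}$ are smooth functions of the entries of $A$. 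Applying the chain-rule remark in its multivariate form (proved identically via the Fa\`{a} di Bruno formula) entrywise yields $\sqrt{A}$ with entries in $\symm{0}$. The main obstacle is the triangular-system construction for~(1) in the second paragraph, which requires careful tracking of products of error terms to verify each $b_G$ (and $d_G$) lies in the correct $S^{\mathsf{0}}(U_K;G)$; once~(1) is in hand, parts~(2) and~(3) follow by the algebraic steps above.
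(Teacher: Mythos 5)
Parts (2) and (3) of your proposal are fine and essentially parallel the paper's argument: Cramer's rule (equivalently, cofactors) for the matrix inverse, and either the Dunford contour integral directly (the paper's route) or a chain-rule argument applied to the smooth map $B\mapsto\sqrt{B}$ on the open cone of positive-definite matrices for the matrix square root. Your chain-rule version of (3) is legitimate since the eigenvalue bound guarantees the range of $A$ is relatively compact in the cone.

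Part (1), however, has a genuine gap, which you yourself flag without resolving. Your plan is to decompose $1/a=\sum_G\rho_G^\epsilon b_G$ and solve a ``triangular system indexed by $|G|$,'' with each $b_G$ given as $-a_\varnothing^{-1}$ times a polynomial in $\{a_{G'}\}_{G'\subset G}$ and $\{b_{G'}\}_{G'\subsetneq G}$. This does not work as stated. Expanding $a\cdot(1/a)$ produces cross terms $\rho_G^\epsilon\rho_{G'}^\epsilon a_G b_{G'}=\rho_{G\cup G'}^\epsilon\bigl(\rho_{G\cap G'}^\epsilon a_G b_{G'}\bigr)$. Collecting these at level $\hat G$, the coefficient of $b_{\hat G}$ is not $a_\varnothing$ but the \emph{partial sum} $\sum_{G'\subset\hat G}\rho_{G'}^\epsilon a_{G'}$, and additional $\rho_{G\cap G'}^\epsilon$ factors (not invertible in the symbol algebra) appear in the ``lower-order'' terms. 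Already in the single-hypersurface case $a=a_\varnothing+\rho^\epsilon a_\Gamma$, the equation is $a\, b_\Gamma=-a_\Gamma b_\varnothing$, so $b_\Gamma$ is defined in terms of $a^{-1}$ --- the very thing you are trying to construct --- not by a finite recursion in $a_\varnothing^{-1}$. If instead you try to match powers of $\rho^\epsilon$ strictly, you obtain the overdetermined (and generically inconsistent) system $a_\varnothing b_\varnothing=1$, $a_\varnothing b_\Gamma+a_\Gamma b_\varnothing=0$, $a_\Gamma b_\Gamma=0$. Either way, there is no clean finite triangular system, and you would have to invert partial sums like $\sum_{G'\subset\hat G}\rho_{G'}^\epsilon a_{G'}$ inside the intermediate classes $S^{\mathsf{0}}(M;\hat G)$, which is essentially a repackaging of the original problem for a smaller set of faces, so at best you are implicitly setting up an induction on the number of faces that you have not made precise. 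The paper sidesteps all of this by first inverting $1+\rho_G^\epsilon a_G$ for a single $G$ via the one-line identities
\[
\frac{1}{1+\rho_G^{\epsilon} a_G}=1-\rho_G^{\epsilon} \frac{a_G}{1+\rho_G^{\epsilon} a_G},
\qquad
\sqrt{1+\rho_G^{\epsilon} a_G}=1+\rho_G^{\epsilon} \frac{a_G}{1+\sqrt{1+\rho_G^{\epsilon} a_G}},
\]
valid when $\rho_G^\epsilon|a_G|<1$, and then establishing a multiplicative factorization $a=a_\varnothing\prod_G(1+\rho_G^\epsilon a_G)$ near the corner, after which inversion and square-rooting are termwise. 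You should either replace your additive-matching argument with this factorization scheme, or supply the missing inversion lemma for partial sums in $S^{\mathsf{0}}(M;G)$ together with a correct treatment of the $\rho_{G\cap G'}^\epsilon$ factors.
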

We note that a symbol in $\soe(M)$ is automatically elliptic (see Section \ref{sec:de-sc}) at any boundary point where it does not vanish. 
\begin{proof}
Membership in $\soe(M)$ is a local property, and away from the boundary the result follows from properties of smooth functions, so we only need to consider an arbitrarily small neighborhood of any point $x\in \partial M$. Let $\mathcal{G}(x) =\{\Gamma_1,\ldots,\Gamma_k\}$ be the set of boundary hypersurfaces of $M$ to which $x$ belongs, let $\rho_1,\ldots,\rho_k$ be a set of defining functions of the corresponding boundary hypersurfaces, and for any $G\subset\mathcal{G}(x)$ let $\rho_G=\prod_{\Gamma_i\in G} \rho_i$. Then near $x$, we can write
$$
a = a_{\varnothing} + \sum_{\substack{G\in\mathcal{G}(x),\\ G \neq \varnothing}} \rho_G^{\epsilon} a_G,
\hspace{30pt}
a_G \in S^{\mathsf{0}}(M;G),\ 
a_{\varnothing}\in C^{\infty}(M).
$$
On a small enough neighborhood of $x$, $a\neq 0$ implies $a_{\varnothing}\neq 0$ and $|a_{\varnothing}|> \rho_G^{\epsilon} |a_G|$ for every $G$. Then we can write
\begin{equation}
a = a_{\varnothing} \left( 1 + \sum_{\substack{G\in\mathcal{G}(x),\\ G \neq \varnothing}} \rho_G^{\epsilon} a'_G
\right),
\hspace{30pt}
a'_G = \frac{a_G}{a_{\varnothing}} \in S^{\mathsf{0}}(M;G),
\hspace{30pt}
\rho_G^{\epsilon} |a_G'| <1.
\label{eq:soe-renormalized}
\end{equation}

Directly from the definition, we can verify using the chain rule that if $a_G\in S^{\mathsf{0}}(M;G)$ has a strictly positive lower bound, then $a_G^{-1},\sqrt{a_G}\in S^{\mathsf{0}}(M;G)$ as well. For any $a_G\in S^{\mathsf{0}}(M;G)$, we have $1+\rho_G^{\epsilon}a_G \in S^{\mathsf{0}}(M;G)$ as well and therefore
$$
\frac{1}{1+\rho_G^{\epsilon} a_G}
=
1
-
\rho_G^{\epsilon} \frac{a_G}{1+\rho_G^{\epsilon} a_G},
\hspace{30pt}
\sqrt{1+\rho_G^{\epsilon} a_G} 
=
1
+
\rho_G^{\epsilon} \frac{ a_G}{1 + \sqrt{1+\rho_G^{\epsilon} a_G}}
\hspace{30pt}
\in 
1 + \rho_G^{\epsilon} S^{\mathsf{0}}(M;G)
$$
as long as $\rho_G^{\epsilon}|a_G|<1$. Then the lemma's statements about $a^{-1}$ and $\sqrt{|a|}$ follow from the fact that any $a\in\soe(M)$ can be decomposed near $x$ as
$$
a = a_{\varnothing} \prod_{\substack{G\in\mathcal{G}(x),\\ G\neq\varnothing}} (1 + \rho_G^{\epsilon} a_G), \hspace{30pt} a_G\in S^{\mathsf{0}}(M;G),\ a_{\varnothing} \in C^{\infty}(M),
\hspace{30pt}
\rho_G^{\epsilon} |a_G| <1.
$$
To see that this decomposition holds, note that for any $n=1,\ldots,k-1$ and $a_G\in S^{\mathsf{0}}(M;G)$ with $\rho_G^{\epsilon}|a_G|<1$ we have
$$
1+ \sum_{\substack{G\in\mathcal{G}(x),\\ |G|\geqslant n}} \rho_G^{\epsilon} a_G
=
\prod_{\substack{G\in\mathcal{G}(x),\\ |G|= n}} (1+\rho_G^{\epsilon} a_G)
+ \sum_{\substack{G\in\mathcal{G}(x),\\ |G|> n}} \rho_G^{\epsilon} b_G
=
$$
$$
=
\prod_{\substack{G\in\mathcal{G}(x),\\ |G|= n}} (1+\rho_G^{\epsilon} a_G)
\left(
1
+
\prod_{\substack{G\in\mathcal{G}(x),\\ |G|= n}} (1+\rho_G^{\epsilon} c_G)
\sum_{\substack{G\in\mathcal{G}(x),\\ |G|> n}} \rho_G^{\epsilon} b_G
\right)
=
\prod_{\substack{G\in\mathcal{G}(x),\\ |G|= n}} (1+\rho_G^{\epsilon} a_G)
\left(
1 + \sum_{\substack{G\in\mathcal{G}(x),\\ |G|> n}} \rho_G^{\epsilon} d_G
\right),
$$
where $b_G,c_G,d_G\in S^{\mathsf{0}}(M;G)$ and we used the preceding result that $(1+\rho_G^{\epsilon} a_G)^{-1} \in 1+\rho_G^{\epsilon} S^{\mathsf{0}}(M;G)$. Possibly on a smaller neighborhood of $x$, we can ensure $\rho_G^{\epsilon}|d_G|<1$. Then in view of Eq.~(\ref{eq:soe-renormalized}), the decomposition follows by induction starting with the base case $n=k$, which is a special case of the preceding result since the sum then has only one term.

For matrices, if $A\in\soe(M)$, then its determinant and all cofactors are in $\soe(M)$, so if the determinant does not vanish then $A^{-1}\in\soe(M)$ as well. 

If $A\in\symm{0}$ is real symmetric positive definite everywhere on $M$, then if the eigenvalues are bounded from below by a positive number, there exists a compact subset $K$ of the positive real numbers containing all eigenvalues of $A$ at all points of $M$. (Membership in $\symm{0}$ implies boundedness from above). Then its real symmetric positive definite square root can be expressed as
$$
\sqrt{A}=\frac{1}{2\pi i}\oint_{\gamma} \sqrt{z}(zI-A)^{-1}\ dz,
$$
where $\gamma$ is a closed contour in the $\mathrm{Re}(z)>0$ half of the complex plane enclosing $K$ and $\sqrt{z}$ is defined to be holomorphic away from the negative real axis. Since $\gamma$ stays away from the spectrum, $\sup_{z\in\gamma,\ x\in M} |(zI-A(x))^{-1}|<\infty$. Then to estimate b-derivatives of $\sqrt{A}$, we can differentiate under the integral sign, getting integrals of $\sqrt{z}$ times sums of products of b-derivatives of $A$ and powers of $(zI-A)^{-1}$, all of which are bounded uniformly in $z\in\gamma$ and $x\in M$. Since the contour is fixed, the integrals are therefore also bounded uniformly in $x$, so $\sqrt{A}\in \symm{0}$.
\end{proof}

	\subsection{Metric assumptions}
	\label{sec:metric}
	We assume that the spacetime is equipped with a metric $\g$ satisfying the following properties:
		\begin{itemize}
		\item \textbf{Regularity:} $\g$ is a Lorentzian-signature ($-,+,\ldots,+$) section of $\mathrm{Sym}^2(\ttm)$, nondegenerate up to and including the boundary, with coefficients in $\soe(\M)$ for some $\epsilon>0$.
		
		Then the dual metric defines a smoothly varying quadratic form on the fibers of $T^*\M^{\circ}$, which gives rise to the associated Hamilton vector field on the phase space $T^*\M^{\circ}$ with its canonical symplectic structure. In Section~\ref{sec:de-sc}, we review how, after the vector field is rescaled, its flow extends to the \textit{compactified de,sc-phase space} $\PP$ constructed by radially compactifying the fibers of $\ttm$. We refer to integral curves of this extended flow within the characteristic set of $P=\Box_{\g}+m^2$ (which is the part in $\partial\PP$ of the closure within $\PP$ of the union of the \textit{mass shells} $\{\g_x^{-1}(\xi,\xi)=-m^2\}$ for $\xi\in{}^{\mathrm{de,sc}}T^*_x\M$) as mass-$m$ bicharacteristics or simply bicharacteristics.
		
		\item \textbf{Causal structure/non-trapping assumptions:} 
		\begin{itemize}
		\item $\tm$ equipped with $\g$ is time-orientable (up to and including the boundary).
		\item All null geodesics of $\g$ in $\M^{\circ}$ tend to $\scri^-$ in one direction and to $\scri^+$ in the other.
		\item All mass-$m$ bicharacteristics over $I^0\backslash \scri$ tend to $\scri^-$ in one direction and to $\scri^+$ in the other.
		\item All bicharacteristics at \textit{infinite} de,sc-frequency over $I^{\pm}\backslash \scri^{\pm}$ tend to $\scri^{\pm}$ in both directions.
		\item There exists a critical set $\mathcal{R}$ of the Hamilton flow in the mass-$m$ shell over $I^T$, located at finite de,sc-frequency, nondegenerate in the sense that the relevant hypotheses of Theorem~\ref{thm:localized-rp-main}, listed in paragraphs 2-4, are satisfied at each point of $\mathcal{R}$, and such that every mass-$m$ bicharacteristic at \textit{finite} de,sc-frequency over $I^T$ (except points of $\mathcal{R}$) tends to $\mathcal{R}$ in one direction and to infinite frequency in the other.
		\end{itemize}

		\item \textbf{Asymptotic flatness at $\scri$:} There exist defining functions $\rho_0$ of $I^0$, $x_0$ of $\scri$ within $\U_0$ and $\rho_T$ of $I^T$, $x_T$ of $\scri$ within $\U_T$ such that 
		\begin{itemize}
		\item For some $\tilde{u}_1,\tilde{u}_2,\varepsilon>0$ the region $\{\rho_0< \tilde{u}_2,\ x_0<\varepsilon\}$ of $\U_0$ is diffeomorphic to $[0,\tilde{u}_2)\times [0,\varepsilon) \times Y$ and the region $\{\rho_T< \tilde{u}_1,\ x_T<\varepsilon\}$ of $\U_T$ is diffeomorphic to $[0,\tilde{u}_1)\times [0,\varepsilon) \times Y$, the diffeomorphisms being implemented by the respective defining functions on the first two factors and restricting on $\scri$ to the blowdown map on the third factor, and $\scri$ is fully covered by these two regions;
		
		\item With respect to these product decompositions, $\g$ as a de,sc-metric satisfies
		\begin{equation}
\label{eq:metric-assumptions}
		\g|_{\scri\cap \U_0} = 2\frac{d\rho_0^2}{\rho_0^4x_0^2} + 4\frac{d\rho_0\os dx_0}{\rho_0^3x_0^3} + \frac{\h}{\rho_0^2x_0^4},
		\hspace{30pt}
		\g|_{\scri\cap \U_T} = -2\frac{d\rho_T^2}{\rho_T^4x_T^2} - 4\frac{d\rho_T\os dx_T}{\rho_T^3x_T^3} + \frac{\h}{\rho_T^2x_T^4},
		\end{equation}
		where $\h$ is a smooth Riemannian metric on the $Y$ factor.
		\end{itemize}
		
		\end{itemize}

We note that the last assumption on the form of the metric at $\scri$ is, for example, always satisfied for metrics such as those studied in \cite{BVW,HV-semilinear} on $\tilde{\M}$, which, near $Y$, have the special form
$$
\g = -v\frac{d\rho^2}{\rho^4} + \frac{d\rho\os dv}{\rho^3} + \frac{\h}{\rho^2}
$$
for $\h\in C^{\infty}(\tilde{\M};\mathrm{Sym}^2 (T^*\tilde{\M}))$ which induces a Riemannian metric on $Y$, and it remains satisfied if, as considered in \cite{HV-eb}, one allows error terms in $\mathcal{I}_{\scri^+}^{\epsilon}(\M)\cap\mathcal{I}_{\scri^-}^{\epsilon}(\M)$ (though the non-trapping assumptions will impose additional restrictions at other boundary hypersurfaces). See the next section for details.
		
		When working in coordinates on the product neighborhoods of $\scri$ in $\U_0$, $\U_T$ (which by an abuse of notation we will usually just call $\U_0$, $\U_T$ themselves), we will often drop the subscripts on $\rho_0,x_0$ and $\rho_T,x_T$.

\subsection{Examples of spacetimes}
\label{sec:examples}
The preceding abstract definitions are motivated by the following basic examples.

\begin{example}[Minkowski metric]
\label{ex:Minkowski}
The Minkowski metric on $\R_t\times\R^d_x$ is, in Cartesian, resp.\ spherical coordinates on the $\R^d_x$ factor,
$$\g_0
=
-dt^2+\sum_{i=1}^d dx_i^2
=
-dt^2+dr^2+r^2\h_{S^{d-1}}
,$$
where $\h_{S^{d-1}}$ is the usual metric on the sphere. Take $\tilde{\M}$ to be the radial compactification of $\R^{d+1}$ and $Y$ to be the part of $\partial \tilde{\M}$ in the closure of $|t|=r$. Before passing to the blowup, we note that a global boundary-defining function of $\tilde{\M}$ can be taken equal to $\rho=\frac{1}{\sqrt{t^2+r^2}}$ in a neighborhood of the boundary, and a global defining function of $Y$ within $\partial\tilde{\M}$ is $v= \frac{t^2-r^2}{t^2+r^2}$. Then a calculation shows that near the boundary,
\begin{equation}
\g_0 = -v\frac{d\rho^2}{\rho^4} + \frac{d\rho\os dv}{\rho^3} +\frac{v}{4(1-v^2)} \frac{dv^2}{\rho^2} + \frac{1-v}{2}\frac{\h_{S^{d-1}}}{\rho^2}.
\label{eq:Minkowski-sc}
\end{equation}
The expressions for the last two terms become singular near $r=0$; however, transforming into coordinates $y_i=\frac{x_i}{t}$ which are well-behaved there, one can check that $\frac{v}{4(1-v^2)} \frac{dv^2}{\rho^2} + \frac{1-v}{2}\frac{\h_{S^{d-1}}}{\rho^2} \in C^{\infty}(\tilde{\M};\mathrm{Sym}^2({}^{\mathrm{sc}}T^*\tilde{\M}))$. This calculation shows that $\g$ is a smooth sc-metric on $\tilde{\M}$.

Turning our attention to the blowup, near $Y$ we can take $\rho=\frac{1}{r}$ and $v=\frac{|t|-r}{r}$, so our $r$ agrees with $r$ in the notation of Section~\ref{sec:neighborhoods}, and in the same notation $u=|t|-r$, matching standard notation near $\scri^+$. Then in $\U_{\scri^+_-}$, a choice of defining functions of $I^0$ and $\scri^+$ respectively is $\rho_0 = \frac{1}{r-t+u_2}$ and $x_0 = \sqrt{\frac{r-t+u_2}{r}}$ for some $u_2\in\R$. Similarly, in $\U_{\scri^+_+}$, a choice of defining functions of $I^+$ and $\scri^+$ respectively is $\rho_T =\frac{1}{t-r-u_1}$ and $x_T = \sqrt{\frac{t-r-u_1}{r}}$ for some $u_1\in\R$. This yields
$$
\g_0|_{\U_{\scri^+_-}} = (2-x_0^2)\frac{d\rho_0^2}{\rho_0^4x_0^2} + 4\frac{d\rho_0\os dx_0}{\rho_0^3 x_0^3} +\frac{\h_{S^{d-1}}}{\rho_0^2x_0^4}
,
\hspace{30pt}
\g_0|_{\U_{\scri^+_+}} = -(2+x_T^2)\frac{d\rho_T^2}{\rho_T^4x_T^2} - 4\frac{d\rho_T\os dx_T}{\rho_T^3x_T^3}+ \frac{\h_{S^{d-1}}}{\rho_T^2 x_T^4},$$
so the metric satisfies Eq.~(\ref{eq:metric-assumptions}). One gets similar expressions near past null infinity. The non-trapping assumptions are also satisfied; for the analysis at spacelike and timelike infinity, see Appendix~\ref{sec:Minkowski-like}.
\end{example}

\begin{example}[Exterior Schwarzschild metric]
\label{ex:Schwarzschild}
Let $\M'$ be the radial compactification of $\R^{3+1}$, and consider a metric on $\R_{t'}\times\R^3$ which is smooth as a sc-metric on $\M'$, matches $\g_0$ on all of $\partial\M'$, and in some region of the form $\{r>R,\ r>C|t'|\}$ for some $R>2m>0$, $C>1$ matches the exterior Schwarzschild metric
\begin{equation}
\g_m 
= 
-\left(1-\frac{2m}{r}\right)dt'^2 + \left(1-\frac{2m}{r}\right)^{-1}dr^2 + r^2\h_{S^2}.
\label{eq:Schwarzschild}
\end{equation}
We do not call this compactification $\tilde{\M}$ because modifications will be necessary before blowup into $\M$. Let $Y$ again be the part of $\partial\M'$ in the closure of $|t'|=r$, and consider again $\rho=\frac{1}{\sqrt{t'^2+r^2}}$ and $v=\frac{t'^2-r^2}{t'^2+r^2}$ near $\partial\M'$. This yields
\begin{equation}
\label{eq:Schwarzschild-sc}
\g_m = - \left(v - \sqrt{\frac{2}{1-v}}2m\rho + \mathcal{O}(\rho^2)\right) \frac{d\rho^2}{\rho^4} 
+ \Big( 1 + \mathcal{O}(\rho)\Big) \frac{d\rho\os dv}{\rho^3} 
+ \frac{v+\mathcal{O}(\rho)}{4(1-v^2)}\frac{dv^2}{\rho^2} + \frac{1-v}{2}\frac{\h_{S^2}}{\rho^2}.
\end{equation}
This calculation confirms that $\g_m$ is indeed a smooth sc-metric on $\M'$ and decays to the Minkowski metric at $\partial\M'$ away from $\{\frac{r}{t'}=0\}$.\footnote{As discussed in \cite[Example 4.1]{JMS}, the assumption that the metric decays to $\g_0$ even at the points of $\partial\M'$ where $\frac{r'}{t}=0$ can be interpreted as a finite lifetime assumption for any massive bodies responsible for the long-range gravitational field, since such a body would have to eventually decompose by dispersing all of its matter along timelike trajectories in order for the metric to decay to flatness.}

Turning to preparations for the blowup, one defines $\tilde{\M}$ to again be the radial compactification of $\R^{3+1}$ but with a different identification of the spacetime with the interior, which accounts for the logarithmic divergence of the radial null geodesics in Schwarzschild spacetime from those in Minkowski spacetime. Namely, we can define a diffeomorphism $\varphi:\R^{3+1}\to\R^{3+1}$, understood as $\tilde{\M}^{\circ}\to\M'^{\circ}$, by $(r,t,\omega)\mapsto (r,t',\omega)$ in spherical coordinates, where $t'(t,r) = t + 2m \chi_1(r)\chi_2\left(\frac{t}{r}\right) \ln\left(\frac{r}{2m}-1\right)$ for $\chi_1,\chi_2\in C^{\infty}(\R)$ chosen so that
$$
\chi_1(r) =
\begin{cases}
0,\ r\leqslant r_0,\\
\text{monotone increasing},\ r_0< r < r_1,\\
1,\ r \geqslant r_1,
\end{cases}
\hspace{30pt}
\chi_2(s) =
\begin{cases}
0,\ ||s|-1|>\varepsilon,\\
-1,\ -1-\frac{1}{2}\varepsilon < s < -1+\frac{1}{2}\varepsilon,\\
1,\ 1-\frac{1}{2}\varepsilon < s < 1+\frac{1}{2}\varepsilon
\end{cases}
$$
for some $r_1>r_0>2m$ and small $\varepsilon>0$ and such that $2m \frac{\ln\left(\frac{r}{2m}-1\right)}{r}|\chi_2'(s)|<1$ for $r>r_0$, ensuring that $t'$ is monotone in $t$ for any fixed $r$ and the map is thus indeed bijective. Then we consider $\g_m$ as a metric on the interior of $\tilde{\M}$ by pulling it back by $\varphi$, so formally it is given by the same expression Eq.~(\ref{eq:Schwarzschild}) with $t'$ understood as a function of $(t,r)$.

In a neighborhood of $Y^{\pm}$ where $\chi(t/r)=\pm 1$, $Y^{\pm}$ is defined within $\partial\tilde{\M}$ by $v=\frac{|t|-r}{r} = \frac{|t'|-r^*}{r}$, where $r^*=r+2m\ln\left(\frac{r}{2m}-1\right)$ is the conventional \textit{tortoise coordinate}. We define $\M$ by the blowup procedure from $\tilde{\M}$. To check that $(\M,\varphi_*\g_m)$ satisfies our assumptions, we need to check its form as a de,sc-metric at null infinity and as a sc-metric on the support of $\chi_2(r/t)$ at timelike and spacelike infinity. For the latter, one can calculate directly from the definition of $t'(t,r)$ that $dt'=dt \mod S^{\mathsf{-1+\epsilon}}(\tilde{\M};{}^{\mathrm{sc}}T^*\tilde{\M})$ for any $\epsilon>0$, so from Eq.~(\ref{eq:Schwarzschild}) we get $\g_m=\g_0 \mod S^{\mathsf{-1+\epsilon}}(\tilde{\M};\mathrm{Sym}^2({}^{\mathrm{sc}}T^*\tilde{\M}))$.

Turning to null infinity, working within the neighborhood of $\scri^+$ where $v=\frac{t-r^*}{r}$, in $\U_{\scri^+_-}$ a choice of defining functions of $I^0$ and $\scri^+$ respectively is $\rho_0=\frac{1}{r^*-t+u_2}$, $x_0=\sqrt{\frac{r^*-t+u_2}{r}}$. Similarly, in $\U_{\scri^+_+}$ a choice of defining functions of $I^+$ and $\scri^+$ respectively is $\rho_T = \frac{1}{t-r^*-u_1}$, $x_T=\sqrt{\frac{t-r^*-u_1}{r}}$. This yields
$$
\g_m|_{\U_{\scri^+_-}} = \g_0|_{\U_{\scri^+_-}} + 2m\rho_0 x_0^4 \frac{d\rho_0^2}{\rho_0^2 x_0^4},
\hspace{50pt}
\g_m|_{\U_{\scri^+_+}} = \g_0|_{\U_{\scri^+_+}} + 2m\rho_T x_T^4 \frac{d\rho_T^2}{\rho_T^2 x_T^4}.
$$
One gets similar expressions near past null infinity. Thus, the metric satisfies Eq.~(\ref{eq:metric-assumptions}) at $\scri$ and in fact decays to the same form also at $I^T$ and $I^0$ near the corners. Thus we conclude that $\g_m$ satisfies all our assumptions and in fact matches the Minkowski metric exactly on $\partial\M$.

See the next examples and \cite{BVW-long-range,HV-stability,JMS} for more details on correctly radially compactifying long-range asymptotically flat spacetimes so as to get a meaningful null infinity after blowup. The approach of \cite{BVW-long-range,HV-stability} is formally different: instead of giving up the identification of Schwarzschild spacetime with Minkowski spacetime prescribed by Schwarzschild coordinates $(t,r,\omega)$ (natural e.g. from the perspective of the initial value problem), one modifies the smooth structure of $\M'$ near null infinity before the blowup in a way dependent on the mass. For us, the approach of \cite{JMS} described above has the conceptual advantage that such spacetimes decay to the exact Minkowski metric at every boundary hypersurface of the \textit{same} compactification $\M$, which makes it clear that the Hamilton flow at spacetime infinity is exactly the same.

\end{example}

\begin{example}[Lorentzian scattering spaces]
Baskin, Vasy, and Wunsch \cite{BVW,BVW-long-range} define the class of \textit{(long-range) Lorentzian scattering spaces}, which generalize Eqs.~(\ref{eq:Minkowski-sc}-\ref{eq:Schwarzschild-sc}) by analogy with the notion of scattering metrics in Riemannian signature due to Melrose \cite{Melrose-AES}. The topological assumptions on the compactified spacetime $(\M',Y)$ are the same as for $(\tilde{\M},Y)$ in Section~\ref{sec:topology}; we will write $v'$ instead of $v$ for the defining function of $Y$. $\g$ is then required to be a smooth scattering metric on $\M'$ which in a product neighborhood $U_{\scri}\simeq [0,\varepsilon)_{\rho}\times (-\varepsilon,\varepsilon)_{v'}\times Y$ of $Y$ has the form
\begin{equation}
\label{eq:Lorentzian-scattering-metrics}
\g = -\Big(v'-m\rho\Big)\frac{d\rho^2}{\rho^4} + \frac{d\rho\os \theta'}{\rho^3} + \frac{\h'}{\rho^2},
\end{equation}
where $m\in\R$ is constant, $\theta'\in  C^{\infty}(\M';T^*\M')$ with $\theta'|_Y=dv'$, and $\h'\in C^{\infty}(\M';\mathrm{Sym}^2(T^*\M'))$ restricts to a Riemannian metric on the $Y$ factors (which may vary from factor to factor). For the analysis of \cite{BVW,BVW-long-range}, it is important that the form of the $\mathcal{O}(\rho)\frac{d\rho^2}{\rho^4}$ term near $Y$ is constrained. This class includes asymptotically flat spacetimes without any radiation escaping through null infinity, which are characterized by a mass parameter $m$ -- this can be understood as either an ADM mass or a Bondi mass which in these examples remains constant along null infinity. In particular, the Kerr metric is of this form near null infinity, as checked in \cite[Appendix A]{BVW-long-range}.

Similarly to the previous example, let $\tilde{\M}$ be another copy of $\M'$ in which we denote the defining function of $Y$ by $v$; we can define a diffeomorphism $\varphi:\tilde{\M}^{\circ}\to \M'^{\circ}$ which is the identity outside $U_{\scri}$ while within $U_{\scri}$ it is given by $(\rho,v,\omega)\mapsto (\rho,v',\omega)$ with respect to the product decomposition, where $v'(\rho,v)=v-m\chi_1(\rho)\chi_2(v)\rho\ln\rho$ for $\chi_1,\chi_2\in C^{\infty}(\R)$ chosen so that
$$
\chi_1(\rho) =
\begin{cases}
0,\ \rho\geqslant \varepsilon,\\
\text{monotone increasing},\ \frac{1}{2}\varepsilon < \rho <\varepsilon,\\
1,\ \rho \leqslant \frac{1}{2}\varepsilon,
\end{cases}
\hspace{30pt}
\chi_2(v) =
\begin{cases}
0,\ |v|>\varepsilon,\\
1,\ |v|<\frac{1}{2}\varepsilon.
\end{cases}
$$
Note that $v'=v \mod S^{-1+\epsilon}(\tilde{\M})$ for any $\epsilon>0$, so for any function $f\in C^{\infty}(\M')$ we have $f\circ \varphi = f \mod S^{-1+\epsilon}(\tilde{\M})$. We then consider $(\tilde{\M},\varphi_*\g)$. Direct calculation shows that in the neighborhood of $Y$ where $v'=v-m\rho\ln\rho$, we have
$$
(-v+a)\frac{d\rho^2}{\rho^4}
+(1+b)\frac{d\rho \os dv}{\rho^3}
+\sum_{i=1}^{d-1}c_i\frac{d\rho \os dy_i}{\rho^3}
+h_{vv}\frac{dv^2}{\rho^2}
+\sum_{i=1}^{d-1} h_{vi} \frac{dv\os dy_i}{\rho^2}
+\sum_{i,j=1}^{d-1} h_{ij} \frac{dy_i\ dy_j}{\rho^2},
$$
where $h_{vv},h_{vi},h_{ij}\in C^{\infty}(\tilde{\M})$, $a\in v S^{-1+\epsilon}(\tilde{\M}) + S^{-2+\epsilon}(\tilde{\M})$, and $b,c_i\in v C^{\infty}(\tilde{\M}) + S^{-1+\epsilon}(\tilde{\M})$; note again the strong restriction on lower-order terms in the coefficient of $\frac{d\rho^2}{\rho^4}$. We define $\M$ by the blowup procedure from $\tilde{\M}$.

In $\U_T$, a choice of defining functions of $I^T$ and $\scri$ respectively is $\rho_T=\frac{2\rho}{v-u_1\rho}$, $x_T=\sqrt{\frac{1}{2}(v-u_1\rho)}$. Similarly, in $\U_0$, a choice of defining functions of $I^0$ and $\scri$ respectively is $\rho_0=\frac{2\rho}{-v+u_2\rho}$, $x_0=\sqrt{\frac{1}{2}(-v+u_2\rho)}$. Then in $\U_T$ we have $\rho=\rho_T x_T^2$, $v\in x_T^2 C^{\infty}(\M)$, $\frac{d\rho}{\rho^2}\in x_T^{-1} C^{\infty}(\M;\ttm)$, $\frac{dv}{\rho} \in x_T C^{\infty}(\M;\ttm)$, and $\frac{dy_i}{\rho}\in C^{\infty}(\M;\ttm)$, and similarly in $\U_0$. A calculation then shows that as a de,sc-metric, $\varphi^*\g \in \soe(\M;\ttm)$ and satisfies Eq.~(\ref{eq:metric-assumptions}) at null infinity. 

The non-trapping assumptions in general need to be verified separately; in particular, since the form of $\theta'$ on $\partial\M'$ away from $Y$ is not constrained, one would not expect the structure assumptions on timelike and spacelike infinity to be satisfied in the absence of additional assumptions. See \cite[Section 7]{BVW-long-range} for more details about the compactification (using the change-of-smooth-structure approach).
\end{example}

\begin{example}[Perturbations of Minkowski space and Minkowski-like and asymptotically Minkowski metrics]
\label{ex:perturbations}
Metrics on $\R^{3+1}$ which are small perturbations of Minkowski spacetime in the sense of solutions to the vacuum Einstein initial value problem generally \textit{do} include radiation escaping through null infinity and therefore are \textit{not} of Lorentzian scattering form because they are more singular at $Y$ from the perspective of the radial compactification. From the blown-up perspective, this is because the metric can have different behavior at different retarded times near null infinity. However, for initial data given by small mass asymptotically Schwarzschild data plus small terms decaying slightly faster than Schwarzschild-like data, in $S^{-1-\delta}$ for the first fundamental form, $S^{-2-\delta}$ for the second fundamental form, $\delta>0$, Hintz and Vasy \cite{HV-stability} showed that the corresponding solutions of Einstein's equation are \textit{symbolic} on the blowup $\M$ if it is constructed as in the Schwarzschild example, with the mass parameter set by the ADM mass of the initial data, and decay at every boundary hypersurface to the Minkowski metric, with error terms in $\bigcap_{\epsilon>0}S^{(-1,-2,-1,-2,-1)+\mathsf{\epsilon}}(\M;\ttm)$ (the stronger decay at $\scri$ being due to the square-root blowup in the definition of $\M$). Note that the results of \cite{HV-stability} are stated using $L^2$-based, rather than $L^\infty$-based symbol spaces, explicitly denoted by $H_{\mathrm{b}}$ Sobolev spaces in \cite{HV-stability}, but for all $\epsilon>0$, up to $\epsilon$ shift in the weight, these two types of spaces are contained in each other. (Note also that in \cite{HV-stability} the Sobolev spaces are measured relative to a scattering density, such as the Euclidean metric, so decay order $r$ corresponds there corresponds to decay order $r+3/2$ for the $L^\infty$-based spaces, i.e.\ symbolic order $-(r+3/2)$, up to $\epsilon$ shifts.) This means that the Hamilton flow structure over the boundary is identical to the Minkowski case, so the non-trapping assumptions there are also satisfied. For small enough perturbations, the non-trapping property of null geodesics is preserved as well. Therefore, our analysis applies to such perturbations.
 
Generalizing this, we will call a metric $\g$ on $\M$ satisfying the properties in Section~\ref{sec:topology} \textit{Minkowski-like} if
\begin{equation}
\label{eq:Minkowski-like}
\g = -v\frac{d\rho^2}{\rho^4} + \frac{d\rho \os dv}{\rho^3} + \frac{\h}{\rho^2}
\mod S^{\mathsf{-\epsilon}}(\M;\mathrm{Sym}^2(\ttm)) 
\end{equation}
for some $\epsilon>0$, where $\h\in C^{\infty}(\tilde{\M};\mathrm{Sym}^2(T^*\tilde{\M}))$ induces a smooth Riemannian metric on $Y$ (cf. Eq.~(\ref{eq:Minkowski-sc})). This generalizes Lorentzian scattering metrics at null infinity similarly to the class of ``admissible metrics" in \cite{HV-eb} but is more restrictive since it requires decay to this form at all boundary hypersurfaces, not only at null infinity. As we show in Appendix~\ref{sec:Minkowski-like}, this form ensures that the structure assumption on finite frequencies over timelike infinity is satisfied; at spacelike infinity and at infinite frequencies over timelike infinity, it ensures the absence of critical points of the Hamilton flow, but the absence of extended trapped bicharacteristics there needs to be checked separately.

We will further say that a Minkowski-like metric on $\R^{d+1}$ is \textit{asymptotically Minkowski} if it decays to $\g_0$ at every boundary hypersurface of $\M$ as defined in Example~\ref{ex:Minkowski}. As discussed, perturbations of Minkowski space in $3+1$ dimensions are indeed asymptotically Minkowski and in fact satisfy Eq.~(\ref{eq:Minkowski-like}) for any $\epsilon<1$. Asymptotically Minkowski metrics automatically satisfy all our assumptions except the non-trapping assumption on null geodesics.
\end{example}

For a relatively concrete non-vacuum example which allows for radiation escaping through null infinity, see \cite{JMS}. We are not aware of connected spacetimes satisfying our assumptions which are not topologically $\R^n$ or of a proof that such examples do not exist; however, there are topologically nontrivial examples with trapping which is nevertheless mild enough that a Fredholm theory can be set up, as shown by Amar \cite{Amar:Normally-Hyperbolic} for a warped product of Minkowski space with a sphere or torus. The asymptotically Minkowski setting is the main case in which our results are of physical interest, and in this case the Hamilton flow at spacetime infinity is the same as computed by Sussman \cite{Sussman}, who considered asymptotically Minkowski metrics with error terms in $S^{-2}_{\mathrm{cl}}(\M)$ (though the proofs of propagation and radial point estimates require more care for the more general error terms we consider).\footnote{The main goal of Sussman's paper was to establish asymptotics for solutions to the Klein-Gordon equation, which requires more structure than the Fredholm setup for propagators, hence the more stringent conditions on the metric.}

\begin{remark}
The assumption on the structure of the Hamilton flow at timelike infinity is a generalization, natural from the point of view of propagation-of-singularities arguments, of the structure arising from identifying points of timelike infinity with families of timelike geodesics which have the same asymptotic momentum, as is common in the physics literature (see e.g. \cite{CL,C-G-W}). Indeed, a point in the (de,)sc-cotangent space over the interior of $I^{\pm}$ represents an asymptotic (de,)sc-momentum, which in Minkowski spacetime corresponds exactly to the usual notion of momentum, and it being a fixed point of the mass-$m$ bicharacteristic flow corresponds to the trajectory of a mass-$m$ particle going off to infinity (or coming in from infinity) with that asymptotic momentum.
\end{remark}

\subsection{Wave operator}
\label{sec:wave-op}
With our sign convention, the wave operator is given in any local coordinates $(x_1,\ldots,x_{d+1})$ on $\M^{\circ}$ by 
\begin{equation}
\Box_{\g}
=
-\operatorname{div} \circ \nabla 
=
-
\frac{1}{\sqrt{|\det\g|}}\sum_{\mu,\nu=1}^{d+1} \partial_{\mu} (\sqrt{|\det\g|}g^{\mu\nu}\partial_{\nu}),
\label{eq:wave-op}
\end{equation}
so $\Box_{\g}
=
-\sum_{\mu,\nu=1}^{d+1} g^{\mu\nu}\partial_{\mu}\partial_{\nu}
\mod \mathrm{Diff}^1(\M^{\circ})$,
where $g^{\mu\nu}$ are the matrix elements of the dual metric to $\g$ with respect to the chosen coordinates. 

Consider a neighborhood of a point of $\partial\M\backslash \scri$ on which we have coordinates $(\rho,y_1,\ldots,y_d)$, where $\rho$ is a boundary-defining function and $y_1,\ldots,y_d$ extend coordinates on the boundary. Then with respect to these coordinates, $\g$ has the form
$$
\g = 
\begin{pmatrix}
\rho^{-4} \tilde{g}_{00} & \rho^{-3} \tilde{g}_{01} & \hdots & \rho^{-3}\tilde{g}_{0d} \\
\rho^{-3} \tilde{g}_{01} & \rho^{-2} \tilde{g}_{11} & \hdots & \rho^{-2}\tilde{g}_{1d} \\
\vdots & \vdots & \ddots & \vdots \\
\rho^{-3} \tilde{g}_{0d} & \rho^{-2} \tilde{g}_{1d} & \ldots & \rho^{-2}\tilde{g}_{dd} 
\end{pmatrix},
\hspace{30pt}
\tilde{g}_{\mu\nu} \in \soe(\M),
\hspace{30pt}
\det\g \in \rho^{-2d-4}\soe(\M).
$$
Using cofactors, we calculate that the dual metric in these coordinates has the form
$$
\g^{-1} = 
\begin{pmatrix}
\rho^4 \tilde{g}^{00} & \rho^3 \tilde{g}^{01} & \hdots & \rho^3\tilde{g}^{0d} \\
\rho^3 \tilde{g}^{01} & \rho^2 \tilde{g}^{11} & \hdots & \rho^2\tilde{g}^{1d} \\
\vdots & \vdots & \ddots & \vdots \\
\rho^3 \tilde{g}^{0d} & \rho^2 \tilde{g}^{1d} & \ldots & \rho^2\tilde{g}^{dd} 
\end{pmatrix},
\hspace{30pt}
\tilde{g}^{\mu\nu} \in \soe(\M).
$$
This calculation shows that away from null infinity, the dual metric defines a nondegenerate bilinear form on the fibers of the (de,)sc-tangent bundle with coefficients in $\soe(\M)$.

At null infinity, similarly consider local coordinates $(\rho_0,x_0,y_1,\ldots,y_{d-1})$ on the product neighborhood $\U_0$, where $\rho_0$, $x_0$ are the defining functions of $I^0$, $\scri$ respectively which yield Eq.~(\ref{eq:metric-assumptions}), and $y_1,\ldots,y_d$ are local coordinates on the $Y$ factors. With respect to these, $\g$ has the form
$$
\g =
\begin{pmatrix}
\rho_0^{-4} x_0^{-2} \tilde{g}_{\rho\rho} & \rho_0^{-3} x_0^{-3} \tilde{g}_{\rho x} & \rho_0^{-3} x_0^{-3} \tilde{g}_{\rho,1} & \hdots & \rho_0^{-3} x_0^{-3} \tilde{g}_{\rho,d-1} \\
\rho_0^{-3} x_0^{-3} \tilde{g}_{\rho x} & \rho_0^{-2} x_0^{-4} \tilde{g}_{xx} & \rho_0^{-2} x_0^{-4} \tilde{g}_{x,1} & \hdots & \rho_0^{-2} x_0^{-4} \tilde{g}_{x,d-1} \\
\rho_0^{-3} x_0^{-3} \tilde{g}_{\rho, 1} & \rho_0^{-2} x_0^{-4} \tilde{g}_{x,1} & \rho_0^{-2} x_0^{-4} \tilde{g}_{11} & \hdots & \rho_0^{-2} x_0^{-4} \tilde{g}_{1,d-1} \\
\vdots & \vdots & \vdots & \ddots & \vdots \\
\rho_0^{-3} x_0^{-3} \tilde{g}_{\rho, d-1} & \rho_0^{-2} x_0^{-4} \tilde{g}_{x,d-1} & \rho_0^{-2} x_0^{-4} \tilde{g}_{1,d-1} & \hdots & \rho_0^{-2} x_0^{-4} \tilde{g}_{d-1,d-1}
\end{pmatrix},
$$
$$
\tilde{g}_{\rho\rho} = \tilde{g}_{\rho x} = 2,
\hspace{30pt}
\tilde{g}_{xx} = \tilde{g}_{\rho,i} = \tilde{g}_{x,i} = 0,
\hspace{30pt}
\tilde{g}_{ij} = h_{ij},
\hspace{30pt}
 \text{ all }\mod \mathcal{I}_{\scri^+}^{\epsilon}(\M) \cap \mathcal{I}_{\scri^-}^{\epsilon}(\M)
$$
for $i,j=1,\ldots,d-1$, where $h_{ij}(y)$ are the matrix elements of $\h$ with respect to $(y_1,\ldots,y_{d-1})$, and $\rho_0^{2d+4} x_0^{4d+2} \det \g = -4\det\h \mod \mathcal{I}_{\scri^+}^{\epsilon}(\M) \cap \mathcal{I}_{\scri^-}^{\epsilon}(\M)$. Then again we calculate using cofactors and the properties of symbols described in Section~\ref{sec:fcn-spaces-base} that the dual metric has the form
$$
\g^{-1} = 
\begin{pmatrix}
\rho_0^4 x_0^2 \tilde{g}^{\rho\rho} & \rho_0^3 x_0^3 \tilde{g}^{\rho x} & \rho_0^3 x_0^3 \tilde{g}^{\rho,1} & \hdots & \rho_0^3 x_0^3 \tilde{g}^{\rho,d-1} \\
\rho_0^3 x_0^3 \tilde{g}^{\rho x} & \rho_0^2 x_0^4 \tilde{g}^{xx} & 
\rho_0^2 x_0^4 \tilde{g}^{x,1} & \hdots & \rho_0^2 x_0^4 \tilde{g}^{x,d-1} \\
\rho_0^3 x_0^3 \tilde{g}^{\rho, 1} & \rho_0^2 x_0^4 \tilde{g}^{x,1} & 
\rho_0^2 x_0^4 \tilde{g}^{11} & \hdots & \rho_0^2 x_0^4 \tilde{g}^{1,d-1} \\
\vdots & \vdots & \vdots & \ddots & \vdots \\
\rho_0^3 x_0^3 \tilde{g}^{\rho, d-1} & \rho_0^2 x_0^4 \tilde{g}^{x,d-1} & \rho_0^2 x_0^4 \tilde{g}^{1,d-1} & \hdots & \rho_0^2 x_0^4 \tilde{g}^{d-1,d-1}
\end{pmatrix},
$$
$$
\tilde{g}^{\rho x} = -\tilde{g}^{x x} = \frac{1}{2},
\hspace{30pt}
\tilde{g}^{\rho \rho} = \tilde{g}^{\rho,i} = \tilde{g}^{x,i} = 0,
\hspace{30pt}
\tilde{g}^{ij} = h^{ij},
\hspace{30pt}
 \text{ all }\mod \mathcal{I}_{\scri^+}^{\epsilon}(\M) \cap \mathcal{I}_{\scri^-}^{\epsilon}(\M)
$$
for $i,j=1,\ldots,d-1$, where $h^{ij}$ are the matrix elements of the dual metric to $\h$ with respect to $(y_1,\ldots,y_{d-1})$. The analogous calculation in coordinates $(\rho_T,x_T,y_1,\ldots,y_{d-1})$ on $\U_T$ yields the same form with $\rho_0, x_0$ replaced by $\rho_T, x_T$ and
$$
-\tilde{g}^{\rho x} = \tilde{g}^{x x} = \frac{1}{2},
\hspace{30pt}
\tilde{g}^{\rho \rho} = \tilde{g}^{\rho,i} = \tilde{g}^{x,i} = 0,
\hspace{30pt}
\tilde{g}^{ij} = h^{ij},
\hspace{30pt}
 \text{ all }\mod \mathcal{I}_{\scri^+}^{\epsilon}(\M) \cap \mathcal{I}_{\scri^-}^{\epsilon}(\M).
$$
This shows that the dual metric defines a nondegenerate bilinear form on the fibers of the de,sc-tangent bundle of $\M$ globally, with coefficients in $\soe(\M)$.

Considering Eq.~(\ref{eq:wave-op}) in any of these special coordinate systems and using the fact that de,sc-vector fields vanish on all of $\partial\M$ as b-vector fields, one can check that for $\g\in\soe(\M;\mathrm{Sym}^2(\ttm))$ the coefficients of all first-order terms in $\Box_{\g}$ must vanish on $\partial\M$ to first order, so $\Box_{\g}=-\sum_{\mu,\nu=1}^{d+1} g^{\mu\nu}\partial_{\mu}\partial_{\nu} \mod S^{\mathsf{-1}}\mathrm{Diff}_{\mathrm{de,sc}}^1(\M)$, where $-\sum_{\mu,\nu=1}^{d+1} g^{\mu\nu}\partial_{\mu}\partial_{\nu} \in \soe\mathrm{Diff}_{\mathrm{de,sc}}^2(\M)$; this will mean that the dual metric function is the \textit{de,sc-principal symbol} of $\Box_{\g}$. In coordinates on $\U_0$ and $\U_T$ as above,
$$
\Box_{\g}
=
-(\rho_0^2 x_0 \partial_{\rho_0})(\rho_0 x_0^2 \partial_{x_0})
+\frac{1}{2}(\rho_0 x_0^2 \partial_{x_0})^2
-\sum_{i,j=1}^{d-1} h^{ij} (\rho_0 x_0^2 \partial_{y_i}) (\rho_0 x_0^2 \partial_{y_j})
+ P_0 \mod \rho_0 x_0 \mathrm{Diff}_{\mathrm{de,sc}}(\U_0),
$$
$$
\Box_{\g}
=
(\rho_T^2 x_T \partial_{\rho_T})(\rho_T x_T^2 \partial_{x_T})
-\frac{1}{2}(\rho_T x_T^2 \partial_{x_T})^2
-\sum_{i,j=1}^{d-1} h^{ij} (\rho_T x_T^2 \partial_{y_i}) (\rho_T x_T^2 \partial_{y_j})
+ P_T \mod \rho_T x_T \mathrm{Diff}_{\mathrm{de,sc}}(\U_T),
$$
where $P_0,P_T\in\mathrm{Diff}_{\mathrm{de,sc}}(\M)$ are strictly second-order with coefficients in $\mathcal{I}_{\scri^+}^{\epsilon}(\M) \cap \mathcal{I}_{\scri^-}^{\epsilon}(\M)$.

\section{Double-edge--scattering calculus}
\label{sec:de-sc}
	The wave operator $\Box_{\g}$ and the Klein-Gordon operator $P=\Box_{\g}+m^2$ are de,sc-differential operators. Their microlocal analysis is therefore based on the algebra of de,sc-pseudodifferential operators, developed on the compactification of $\R^{d+1}$ described in Example~\ref{ex:Minkowski} for this purpose by Sussman \cite{Sussman}. In this section, we review the construction and properties of this algebra in our general topological setting and including weighted Sobolev spaces with microlocally varying orders; we only sketch the relevant steps since the constructions are direct analogues of those used to define other fully symbolic pseudodifferential calculi (such as the sc-calculus) on manifolds with boundary, and moreover the de,sc-calculus fits into the very general framework of pseudodifferential operators on manifolds with scaled bounded geometry recently proposed by Hintz -- see \cite[Section 1.2.4]{Hintz-SBG}. 
	
	We also discuss propagation of singularities for more general operators than considered in \cite{Sussman} in order to deal with the more general class of metrics, and we prove a localized version of Sussman's radial point estimates in the spirit of Haber-Vasy \cite{Haber-Vasy}, which allows for a finer and more unified treatment of propagation of singularities.
		
	\subsection{Development of the calculus}
\subsubsection{Coordinate charts, Schwartz functions, and tempered distributions}	
\label{sec:charts}

De,sc-operators have different behavior at null infinity than at the other boundary hypersurfaces of $\M$. Correspondingly, we fix a finite cover of $\M$ by smooth local coordinate charts $\phi_i: U_i\to V_i\subset \R^{d-1}_y\times [0,+\infty)_{\rho_I}\times [0,+\infty)_{\rho_{\scri}}$ such that within each $U_i$, $\rho_I\circ\phi_i$ is a defining function of $I^T\sqcup I^0$ and $\rho_{\scri}\circ\phi_i$ is a defining function of $\scri$. For simplicity, we can ensure that each $V_i$ is a copy of one of $(1,2)^{d+1}$, $(1,2)^d\times [0,1)$, $(1,2)^{d-1}\times [0,1)\times (1,2)$, and $(1,2)^{d-1}\times [0,1)^2$. We also fix a partition of unity on $\M$ by functions $\chi_i\in C_c^{\infty}(U_i)$, and another set of functions $\psi_i\in C_c^{\infty}(U_i)$ such that $\psi_i|_{\mathrm{supp}(\chi_i)}=1$.

For any compact manifold with boundary or corners $M$, we denote by $\dot{C}^{\infty}(M)$ the space of smooth functions on $M$ which vanish at the boundary to infinite order, which is a generalization of the Schwartz space. For our spacetime, we write $\sch=\dot{C}^{\infty}(\M)$ for short. A Fr\'{e}chet-space topology on $\sch$ is defined by the family of seminorms
\begin{equation}
\|u\|_{\sch,N} = \max_{|\alpha|,|\mathsf{m}|\leqslant N} \sum_i \sup_{x\in U_i} \rho^{-\mathsf{m}} |\partial^{\alpha}(\chi_i u\circ \phi_i^{-1})(\phi_i(x))|
\label{eq:Schwartz-seminorms}
\end{equation}
		for $N\in\N$. 
		
		We denote its dual space, which is a generalization of the space of tempered distributions, by $\sch'$. On the compactification of Minkowski space defined in Example~\ref{ex:Minkowski}, $\sch$ and $\sch'$ thus defined indeed correspond to the usual spaces of Schwartz functions and tempered distributions on $\R^{d+1}$. We consider $\sch'$ to be equipped with the weak-* topology. Sufficiently regular functions on $\M$ are identified with elements of $\sch'$ via the $L^2(\M,\g)$ inner product, which we take to be linear in the second variable and antilinear in the first. When we are only considering topological vector space properties (that is whenever we are not taking inner products or adjoints), we write $\LL=L^2(\M,\g)$ since the metric is inessential: the positive density defined by any nonvanishing section of the top exterior power of $\ttm$ which is continuous up to and including the boundary gives rise to the same $L^2$ space.

	\subsubsection{Phase space and symbols}
	Symbols of de,sc-pseudodifferential operators on $\M$ are conormal symbols on the compactified phase space $\mathcal{P}$, defined as the fiber-radial compactification of $\ttm$. $\mathcal{P}$ is a manifold with corners which has a boundary hypersurface $\Gamma_f$ at fiber infinity in addition to the boundary faces inherited from $\M$, for which by an abuse of notation we will use the same symbols. We denote by $\mathcal{G}(\PP)=\{\Gamma_f,I^-,\scri^-,I^0,\scri^+,I^+\}$ the set of boundary hypersurfaces. By a \textit{corner} we mean any nonempty intersection of boundary hypersurfaces.
	
	$\mathcal{P}$ has a different smooth structure than the fiber-radial compactification of $T^*\M$ due to a rescaling of momenta which is singular at $\partial\M$.
	\begin{itemize}
	\item Away from null infinity, $\PP$ is identical to the fiber-radial compactification of ${}^{\mathrm{sc}}T^*\tilde{\M}$. Thus, let $(\rho,y_1,\ldots,y_d)$ be local coordinates near a point of the interior of $I^{T}$ or $I^0$, where $\rho$ is a defining function of the boundary and the $y_i$ extend local coordinates on the boundary; let $(\tilde{\xi}, \tilde{\eta}_1,\ldots,\tilde{\eta}_d)$ be the canonical dual variables in $T^*\M$. Then smooth coordinates on $\ttm$ over this coordinate neighborhood (including the boundary) are given by $(\rho,y,\xi,\eta)$, where $\tilde{\xi}\ d\rho + \sum_{i=1}^d \tilde{\eta}_i\ dy_i = \xi \frac{d\rho}{\rho^2} + \sum_{i=1}^d \eta_i \frac{dy_i}{\rho}$, so
	\begin{equation}
\xi = \rho^2\tilde{\xi},
\hspace{30pt}
\eta_i = \rho\tilde{\eta}_i.
\label{eq:freq-transform-sc}
	\end{equation}
We consider $\xi,\eta_i$ the ``sc-dual" variables to $\rho,y_i$. By definition of the radial compactification, a defining function of fiber infinity in $\PP$ over this neighborhood is $\rho_f = \frac{1}{\sqrt{1+\xi^2+|\eta|^2}}$.

\item Turning to null infinity, let $(y_1,\ldots,y_{d-1})$ be local coordinates on $Y$, so $(\rho_0,x_0,y_1,\ldots,y_{d-1})$ is a local coordinate chart on a subset of $\U_0$. Let $(\tilde{\zeta}_0,\tilde{\xi}_0,\tilde{\eta}_1,\ldots,\tilde{\eta}_{d-1})$ be the canonical dual variables in $T^*\M$. Then smooth coordinates on $\ttm$ over this coordinate neighborhood (including the boundary) are given by $(\rho_0,x_0,y,\zeta_0,\xi_0,\eta)$, where $\tilde{\zeta}_0\ d\rho_0 + \tilde{\xi}_0\ dx_0 + \sum_{i=1}^{d-1} \tilde{\eta}_i\ dy_i =  \zeta_0 \frac{d\rho_0}{\rho_0^2x_0} + \xi_0 \frac{dx_0}{\rho_0x_0^2} + \sum_{i=1}^{d-1} \eta_i \frac{dy_i}{\rho_0 x_0^2}$, so
	\begin{equation}
\zeta_0 = \rho_0^2x_0\tilde{\zeta}_0,
\hspace{30pt}
\xi_0 = \rho_0x_0^2\tilde{\xi}_0,
\hspace{30pt}
\eta_i = \rho_0x_0^2\tilde{\eta}_i.
	\label{eq:freq-transform}
	\end{equation}
We consider $\zeta_0,\xi_0,\eta_i$ the ``de,sc-dual" variables to $\rho_0,x_0,y_i$. A defining function of fiber infinity in $\mathcal{P}$ over this neighborhood is $\rho_f=\frac{1}{\sqrt{1+\zeta_0^2+\xi_0^2+|\eta|^2}}$. Analogous considerations apply to $\U_T$, where we denote the de,sc-dual variables to $\rho_T,x_T,y_i$ by $\zeta_T,\xi_T,\eta_i$.
\end{itemize}

	As in Eq.~(\ref{eq:def-conormal-base}), we define phase-space conormal symbols
\begin{equation}
\sym{m} = \{ a\in C^{\infty}(\PP^{\circ})\ |\ \forall Q\in \db(\PP)\ Qa\in\rho^{\mathsf{-m}}L^{\infty}(\PP)\},
\label{eq:def-conormal-phase}
\end{equation}
where we also keep track of decay in the fiber variables: $\rho=(\rho_f,\rho_{I^-},\rho_{\scri^-},\rho_{I^0},\rho_{\scri^+},\rho_{I^+})$ is a collection of some globally defined defining functions of the boundary hypersurfaces of $\PP$ and $\mathsf{m}=(m_f,m_{I^-},m_{\scri^-},m_{I^0},m_{\scri^+},m_{I^+})$. When it is convenient to consider the order at fiber infinity separately from the rest, we use the splitting $\rho=(\rho_f,\rho_{base})$ and $\mathsf{m}=(m,\mathsf{m}_{base})$. We write $\sym{\infty}=\bigcup_{\mathsf{m}\in\R^6} \sym{m}$ and $\sym{-\infty}=\bigcap_{\mathsf{m}\in\R^6} \sym{m}$. 

A Fr\'{e}chet-space topology on $\sym{m}$ is defined by the family of seminorms
$$
\|a\|_{S^{\mathsf{m}},N} = \max_{k+l+|\alpha|\leqslant N} \sum_i \sup_{x\in U_i} \rho^{\mathsf{m}} |(\rho_I \partial_{\rho_I})^k (\rho_{\scri}\partial_{\rho_{\scri}})^l \partial_y^{\alpha} (\chi_i a\circ \phi_i^{-1})(\phi_i(x))|
$$
for $N\in\N$. An important fact is that for any orders $\mathsf{m}<\mathsf{m'}$, the residual space $\sym{-\infty}$ is dense in $\sym{m}$ in the topology of $\sym{m'}$, though not in that of $\sym{m}$. Multiplication of symbols and application of operators in $\mathrm{Diff}_{\mathrm{b}}(\PP)$ are continuous operations between symbol spaces of appropriate orders.

The other symbol classes defined in Section~\ref{sec:fcn-spaces-base} are also well-defined for $\PP$. Symbols in any of these classes stay in the class under application of operators in $\mathrm{Diff}_{\mathrm{b}}(\PP)$. For b-, e,b-, etc. vector fields with symbolic (rather than smooth) coefficients, we use the notation $S^{\mathsf{m}}\Vb(\PP)$, etc.

	\subsubsection{Pseudodifferential operators}
	\label{sec:psiDO}
The algebra $\ps$ of de,sc-pseudodifferential operators is defined by reduction via coordinate charts to the model $\Psi_{\mathrm{de,sc}}([0,+\infty)_{\rho_I}\times [0,+\infty)_{\rho_{\scri}} \times \R^{d-1}_y)$. The latter can be defined as the quantization of the Lie algebra of vector fields generated by $\rho_I^2\rho_{\scri}\frac{\partial}{\partial\rho_I}$, $\rho_I\rho_{\scri}^2\frac{\partial}{\partial\rho_{\scri}}$, $\rho_I\rho_{\scri}^2\frac{\partial}{\partial y_i}$. The de,sc-phase space over $[0,+\infty)^2\times \R^{d-1}$ is defined analogously to $\PP$, but is noncompact at the $\rho_I\to\infty$, $\rho_{\scri}\to\infty$ ends. One fixes a quantization map $\mathrm{Op}_{\mathrm{model}}$ from conormal symbols on this phase space which are supported over compact sets in the base space to operators on $\dot{C}^{\infty}([0,+\infty)^2\times \R^{d-1})$, with polynomials in the de,sc-dual momenta to $\rho_I$, $\rho_{\scri}$, $y_i$ with fiberwise-constant coefficients mapping to linear combinations of the corresponding compositions of $-i\rho_I^2\rho_{\scri}\frac{\partial}{\partial\rho_I}$, $-i\rho_I\rho_{\scri}^2\frac{\partial}{\partial\rho_{\scri}}$, $-i\rho_I\rho_{\scri}^2\frac{\partial}{\partial y_i}$ respectively with the same coefficients. See \cite[Section 2.3]{Sussman} for details; $(\varrho_{\mathrm{nf}},\varrho_{\mathrm{Of}},\theta)$ in the reference correspond to our $(\rho_{\scri},\rho_I,y)$.

For $\M$, $\ps[m]$ can be defined as the set of continuous linear operators $A:\sch\to\sch$ of the form $\mathrm{Op}(a)+A_{-\infty}$, where $a\in\sym{m}$, $A_{-\infty}\in\mathcal{L}(\sch';\sch)$, and $\mathrm{Op}(a):\sch\to\sch$ is given by
\begin{equation}
\mathrm{Op}(a)(u)(x) =\sum_{ i\ |\ x\in U_i} \Big( \mathrm{Op}_{\mathrm{model}}(a_i) (u_i)\Big)  (\phi_i(x)),
\label{eq:Op}
\end{equation}
where $u_i$ is the function on the model base space defined by $(\psi_i u)\circ\phi_i^{-1}$ on $V_i$ and zero outside and $a_i$ is the symbol on the model phase space defined by $\chi_i a\circ \phi_i^*$ over $V_i$ and zero outside (where $\phi_i^*$ is the pullback of one-forms, taking points of the cotangent space over the model space to points of $T^*\M$). We write $\ps=\bigcup_{\mathsf{m}\in\R^6}\ps[m]$ and $\ps[-\infty]=\bigcap_{\mathsf{m}\in\R^6}\ps[m]$. It can be shown that any operator $A\in\ps$ continuously maps $\sch'\to \sch'$, and $\ps[-\infty]=\mathcal{L}(\sch';\sch)$.

There is no canonical bijective quantization map $\sym{\infty}\to\ps$ since $\mathrm{Op}$ depends on the choice of coordinate charts and partition of unity, but the following are independent of choices:
\begin{itemize}
\item surjective linear \textit{principal symbol} maps $\sigma_{\mathsf{m}}:\ps[m]\to\sym{m}/\sym{m-1}$ defined by $\sigma_{\mathsf{m}}(\mathrm{Op}(a)+A_{-\infty})=[a]$ and satisfying $\sigma_{\mathsf{m}}(A)=[0]$ if and only if $A\in\ps[m-1]$;
\item a notion of \textit{essential support} $\wfs(A)\subset\partial\PP$ such that $\wfs(\mathrm{Op}(a) + A_{-\infty})$ is the set of points of $\partial\PP$ which \textit{do not} have a neighborhood in which $|a|\leqslant C_{\mathsf{N}}\rho^{\mathsf{N}}$ for every $\mathsf{N}$ (which implies such a bound for any b-derivatives of $a$ as well), and $\wfs(A)=\varnothing$ if and only if $A\in\ps[-\infty]$.
\end{itemize}
We often define an operator in $\ps[m]$ by specifying its symbol without specifying the exact quantization map; it should be understood that really this determines the operator's principal symbol (and therefore its class in $\ps[m]/\ps[m-1]$) as well as its essential support, but not the exact operator. For de,sc-differential operators, which are local, the symbol over any coordinate chart is obtained by the usual prescription of replacing partial derivatives with $i$ times the corresponding de,sc-dual variables.

$\ps$ satisfies the algebraic properties expected of a fully symbolic pseudodifferential operator algebra, summarized below. Here $A^*$ is defined a priori as the Fr\'{e}chet-space adjoint to $A:\sch\to \sch'$; $\{a,b\}$ is the Poisson bracket of symbols $a$, $b$ considered as functions on $T^*\M^{\circ}$ with its canonical symplectic structure.
\begin{table}[H]
\centering
\renewcommand{\arraystretch}{1.5}
\begin{tabular}{|c|c|c|c|}
\hline
If ... & then ... & with principal symbol & and essential support \\
\hline
$A,B\in\ps[m]$ & $A+B\in\ps[m]$ & $\sigma_{\mathsf{m}}(A) +\sigma_{\mathsf{m}}(B)$ & $\subset\wfs(A)\cup\wfs(B)$ \\
\hline
$A\in\ps[m_1]$, & $AB\in\ps[m_1+m_2]$ & $\sigma_{\mathsf{m_1}}(A)\sigma_{\mathsf{m_2}}(B)$ & \multirow{2}{*}{$\subset \wfs(A)\cap\wfs(B)$} \\
\cline{2-3}
$B\in\ps[m_2]$  & $[A,B]\in\ps[m_1+m_2-1]$ & $-i\{\sigma_{\mathsf{m_1}}(A),\sigma_{\mathsf{m_2}}(B)\}$ &  \\
\hline
$A\in\ps[m]$ & $A^*\in\ps[m]$ & $\overline{\sigma_{\mathsf{m}}(A)}$ & $ = \wfs(A)$ \\
\hline
\end{tabular}
\caption{
Algebraic properties of pseudodifferential operators.}
\label{tab:psiDO}
\end{table}

The topology on $\sym{m}$ induces a Fr\'{e}chet-space topology on $\ps[m]$, with seminorms given by $\|A\|_{\Psi^{\mathsf{m}},N}=\inf \Big( \|a\|_{S^{\mathsf{m}},N} + \|K_{A_{-\infty}}\|_{\dot{C}^{\infty},N}\Big)$ taken over $a\in\sym{m}$, $A_{-\infty}\in\ps[-\infty]$ such that $A=\mathrm{Op}(a)+A_{-\infty}$. Here $K_{A_{-\infty}}\in \dot{C}^{\infty}(\M\times\M)$ is the Schwartz kernel of $A_{-\infty}$ and the seminorms $\|\bullet\|_{\dot{C}^{\infty},N}$ on $\dot{C}^{\infty}(\M\times\M)$ are defined similarly to Eq.~(\ref{eq:Schwartz-seminorms}) (using any finite atlas of $\M\times\M$, since the properties of the particular charts we chose in Section~\ref{sec:charts} were only important for issues of quantization). The global $A_{-\infty}$ term is irrelevant when one is working microlocally, defining operators by quantizing symbols which are supported in arbitrarily small regions of $\PP$. Just like for symbols, for any orders $\mathsf{m}<\mathsf{m}'$, the residual space $\ps[-\infty]$ is dense in $\ps[m]$ in the topology of $\ps[m']$, though not in that of $\ps[m]$. The basic operations in Table~\ref{tab:psiDO} are all continuous with respect to these topologies.

Beyond principal symbols, it is important that for any $a\in\sym{m}$, $b\in\sym{n}$ there exist sequences of symbols $c_i\in\sym{m+n-i}$, $d_i\in\sym{m-i}$ for $i\in\N_0$ such that for any $N$ there are $R_N\in\ps[m+n-N]$, $R_N'\in\ps[m-N]$ such that
$$
\mathrm{Op}(a)\mathrm{Op}(b)
=
\sum_{i=0}^{N-1} \mathrm{Op}(c_i) + R_N,
\hspace{30pt}
\mathrm{Op}(a)^*
=
\sum_{i=0}^{N-1} \mathrm{Op}(d_i) + R_N'.
$$
Moreover, $c_i$ and $R_N$ depend continuously, in their respective spaces, on $(a,b)$, and $d_i$ and $R_N'$ similarly depend continuously on $a$. Consistently with Table~\ref{tab:psiDO}, we can take $c_0(a,b)=ab$, $c_1(a,b)-c_1(b,a)=-i\{a,b\}$, and $d_0(a)=\bar{a}$. The continuous dependence of the lower-order terms ensures that symbolic identities like those used to prove propagation theorems give rise to operator identities modulo error terms which are not only one order lower (at every face) but also bounded in those lower-order spaces in terms of the symbols involved. These expansions are also used to prove many of the basic properties of pseudodifferential operators which we now review.

	\subsubsection{Ellipticity, Sobolev spaces, and microlocalization}
	
An operator $A\in\ps[m]$ with symbol $a$ is called elliptic of order $\mathsf{m}$ at a point $\alpha\in\partial\PP$ if there is a neighborhood of $\alpha$ in which $|a|\geqslant C\rho^{\mathsf{-m}}$ for some $C>0$. This property only depends on the principal symbol $[a]\in\sym{m}/\sym{m-1}$, so it does not require specification of a quantization map. Any point at which $A$ is elliptic belongs to $\wfs(A)$. If $a\in S^{\mathsf{m}}_{\epsilon}(\PP)$, then $\tilde{a}=\rho^{\mathsf{m}}a$ is continuous on $\PP$ up to and including the boundary, and ellipticity at $\alpha$ is equivalent to $\tilde{a}(\alpha)\neq 0$.

One often uses that for any disjoint compact subsets $K_1,K_2\subset\partial\PP$, there exists $Q\in\ps[0]$ such that $\wfs(Q)\cap K_1 = \wfs(I-Q)\cap K_2 = \varnothing$, and in particular such $Q$ is elliptic on $K_2$.

The set of points in $\partial\PP$ where $A$ is \textit{not} elliptic is called the characteristic set of $A$, denoted $\Sigma^{\mathsf{m}}(A)$. $A$ is called globally elliptic if it is elliptic on all of $\partial\PP$. If $A\in\ps[m]$ is elliptic on a compact set $K\subset\partial\PP$, one can construct a \textit{microlocal elliptic parametrix for $A$ on $K$}, that is an operator $B\in\ps[-m]$ such that $K\cap\wfs(I-AB)=K\cap\wfs(I-BA)=\varnothing$.

To study mapping properties of operators in $\ps$, weighted de,sc-Sobolev spaces are defined for any set of orders $\mathsf{s}\in\R^6$ by
\begin{equation}
\sob{s} = \{u\in \sch'\ |\ \forall A\in\ps[s]\ Au\in \LL\}.
\label{eq:Sobolev-def}
\end{equation}
They are Hilbert spaces with the norm $\|u\|_{\mathsf{s}}=\|\Lambda_{\mathsf{s}}u\|_{\LL}$, where $\Lambda_{\mathsf{s}}$ can be taken to be any globally elliptic operator in $\ps[s]$ which is invertible on $\sch'$ (which always exist), since one can show that the resulting norms are all equivalent. Going forward, we drop the subscript on $L^2$ norms.

The dense embedding $\sob{s}\subset\sob{s'}$ for $\mathsf{s}\geqslant\mathsf{s'}$ is compact if and only if $\mathsf{s}>\mathsf{s'}$. We have
$$
\bigcap_{\mathsf{s}\in\R^6} \sob{s} = \sch,
\hspace{30pt}
\bigcup_{\mathsf{s}\in\R^6} \sob{s} = \sch',
$$
and the family of Sobolev norms generates the standard topology on $\sch$, which is dense in any finite intersection of Sobolev spaces with the topology of simultaneous convergence in all of them. The $L^2(\M,\g)$ pairing on $\sch\times\sch$, which extends to a continuous sesquilinear pairing on $\sch\times\sch'$ or $\sch'\times\sch$ (the dual pairing modified by conjugation), also defines a continuous sesquilinear pairing on $\sob{s}\times\sob{-s}$ for any $\mathsf{s}$, thereby identifying the dual space to $\sob{s}$ with $\sob{-s}$. An operator $A\in\ps[m]$ continuously maps $\sob{s}\to \sob{s-m}$ for any $\mathsf{s}$; moreover, the map $(A,u)\mapsto Au$ is jointly continuous $\ps[m]\times\sob{s}\to\sob{s-m}$.

For any $u\in \sch'$, the order-$\mathsf{s}$ wavefront set of $u$ is defined as
$$
\wf{s}(u) = \{\alpha\in\partial\PP\ |\ \nexists Q\in\ps[0]\text{ elliptic at }\alpha\text{ such that }Qu\in \sob{s}\}.
$$
The wavefront set satisfies the expected properties:
\begin{itemize}
\item $\wf{s}(u) = \varnothing$ if and only if $u\in \sob{s}$;
\item Microlocal elliptic regularity: $\wf{s}(u) \subset \wf{s-m}(Au) \cup \Sigma^{\mathsf{m}}(A)$.
\item Microlocality: $\wf{s-m}(Au)\subset \wfs(A)\cap\wf{s}(u)$ for $A\in\ps[m]$.
\end{itemize}

\subsubsection{Variable-order spaces}
One can also define Sobolev spaces $\sob{s}$ with microlocally varying orders, i.e. $\mathsf{s}\in C^{\infty}(\PP;\R^6)$. First, one defines the variable-order symbol spaces
$$
S^{\mathsf{s}}_{\mathrm{var}}(\PP)=\{a\in C^{\infty}(\PP^{\circ}) \cap \rho^{\mathsf{-s}} L^{\infty}(\PP)\ |\ \forall\epsilon>0,\ \forall Q\in\mathrm{Diff}_{\mathrm{b}}(\PP)\ Qa \in \rho^{\mathsf{-s-\epsilon}}L^{\infty}(\PP)\},
$$
where allowing for slightly worse decay after differentiation ensures that $\rho^{\mathsf{-s}}\subset S^{\mathsf{s}}_{\mathrm{var}}(\PP)$ despite the fact that when $\mathsf{s}$ is variable, derivatives of $\rho^{\mathsf{-s}}$ may contain factors which grow logarithmically at $\partial\PP$ (see e.g. \cite[Section 2]{HJSV}). We have $S^{\mathsf{s}}_{\mathrm{var}}(\PP) \subset \sym{s'}$ for any constant order $\mathsf{s'}>\sup \mathsf{s}$. The space $\ps[s]\subset\ps$ is then defined as for constant orders using Eq.~(\ref{eq:Op}), and variable-order Sobolev spaces by Eq.~(\ref{eq:Sobolev-def}).

For variable $\mathsf{s}$, a principal symbol map independent of the particular choice of quantization can be defined as a map $\sigma_{\mathsf{s}}:\ps[s] \to S^{\mathsf{s}}_{\mathrm{var}}(\PP) / \bigcap_{\epsilon>0} S^{\mathsf{s-1+\epsilon}}_{\mathrm{var}}(\PP)$. Ellipticity of an operator $A\in\ps[s]$ at a point of $\partial\PP$ is well-defined, and the microlocal elliptic parametrix construction goes through, analogously to the constant-order notion.

$\sob{s}$ is a Hilbert space with the squared norm $\|u\|_{\mathsf{s}}^2  = \|\Lambda_{\mathsf{s}}u\|^2 + \|u\|_{\mathsf{s''}}^2$ for any globally elliptic $\Lambda_{\mathsf{s}}\in\ps[s]$ and any constant order $\mathsf{s''}<\inf\mathsf{s}$, the resulting norms all being equivalent. The analogues of the properties discussed in the previous section regarding dense and compact embeddings, duality, Sobolev boundedness of pseudodifferential operators, as well as the definition and properties of wavefront sets, hold for variable orders. For compact embedding of Sobolev spaces the strict inequality on orders only needs to hold in a pointwise sense.

While we will use variable-order spaces to define propagators, we will only need the orders to vary in the region where the Klein-Gordon operator is elliptic. Therefore, when we discuss propagation of singularities in the characteristic set, we will state all results in the constant-order setting. In the end, to combine these results into a global statement in terms of variable-order spaces, we will use that if $\mathsf{s}$ equals a constant $\mathsf{s_0}$ in an open neighborhood of a compact set $K\subset\PP$, then for any $u\in\sch'$, $\wf{s}(u)\cap K = \wf{s_0}(u)\cap K$.

\subsubsection{Some useful lemmas}
\label{sec:lemmas}

We now review a few more detailed results which are required for proofs of propagation of singularities and radial point estimates. Below all orders are assumed to be constant unless stated otherwise.

One often needs to employ arguments approximating an operator by better-behaved ones. The topology on $\ps[m]$ allows us to consider bounded families of operators $A_t\in \ps[m]$ for $t\in (0,1)$. For such a family, the \textit{joint essential support} is defined as
$$\wfs(\{A_t\}) = \{\alpha\in\partial\PP\ |\ \nexists Q\in\ps[0] \text{ elliptic at }\alpha:\ \forall N\in\N\ \sup_{t\in (0,1)} \|QA_t\|_{\mathcal{L}(H^{\mathsf{-N}}_{\mathrm{de,sc}};H^{\mathsf{N}}_{\mathrm{de,sc}})}<\infty\}.$$
If the family is defined by $A_t=\mathrm{Op}(a_t)+A_{-\infty}$ with $A_{-\infty}\in\ps[-\infty]$, the existence of such $Q$ is ensured (independently of coordinate choices) if in some neighborhood of $\alpha$, $\sup_{t\in(0,1)} |a_t|\leqslant C_{\mathsf{N}}\rho^{\mathsf{N}}$ for every $\mathsf{N}$; however, adding a general $t$-dependent error in $\ps[-\infty]$ is not allowed.
 
Note that $\bigcup_{t\in(0,1)}\wfs(A_t)\subset\wfs(\{A_t\})$; in general the latter can be much larger than the former. If the family converges to $A\in\ps[m]$ as $t\to 0^+$ in the topology of $\ps[m']$ for some $\mathsf{m'}\geqslant\mathsf{m}$, then necessarily $\wfs(A)\subset\wfs(\{A_t\})$. Analogues of the properties in the last column of Table~\ref{tab:psiDO} hold for the joint essential support of the families $A_t+B_t$, $A_tB_t$, $[A_t,B_t]$, $A_t^*$.

For any set of orders $\mathsf{s}$ (which may be variable) and $t$ running through some index set $I$, we also define the \textit{joint order-$\mathsf{s}$ wavefront set} of a family $u_t\in\sch'$ bounded in $\sob{-N}$ for some $N$ as
$$
\wf{s}(\{u_t\}) = \{\alpha\in\partial\PP\ |\ \nexists Q\in\ps[0] \text{ elliptic at } \alpha \text{ such that } \sup_{t\in I} \|Qu_t\|_{\mathsf{s}}<\infty\}.
$$
$\wf{s}(\{u_t\})=\varnothing$ if and only if the family $u_t$ is bounded in $\sob{s}$.

\begin{lemma}
\label{thm:uniform-lemma}
Let $A_t$ for $t\in (0,1)$ be a bounded family in $\ps[m]$. Then for any $Q\in \ps[n]$ elliptic on $\wfs(\{A_t\})$ and any set of orders $\mathsf{N}$, there exists $C>0$ such that any $u\in\sob{-N}$ with $Qu\in\sob{s-n}$ satisfies
$$
\|A_t u\|_{\mathsf{s-m}}\leqslant C \Big( \|Qu\|_{\mathsf{s-n}} + \|u\|_{\mathsf{-N}}\Big).
$$
In particular, for any $u\in\sch'$ with $\wf{s}(u)\cap\wfs(\{A_t\})=\varnothing$, the family $A_tu$ is bounded in $\sob{s-m}$. If in addition $\lim_{t\to 0^+}A_t= A\in\ps[m]$ in $\ps[m']$ for some $\mathsf{m'}\geqslant\mathsf{m}$, then $\lim_{t\to 0^+}A_tu= Au$ in $\sob{s-m}$.
\end{lemma}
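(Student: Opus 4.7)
The structure of the argument is the familiar microlocal-parametrix reduction, adapted to accommodate the family $\{A_t\}$. Since $Q \in \ps[n]$ is elliptic on the compact set $K = \wfs(\{A_t\}) \subset \partial\PP$, the microlocal elliptic parametrix construction recalled in Section~\ref{sec:psiDO} yields $B \in \ps[-n]$ with $BQ = I + R$, where $R \in \ps$ has $\wfs(R) \cap K = \varnothing$. (Really one gets $R \in \ps[-\infty]$ microlocally near $K$, but only $R\in\ps[0]$, say, globally; the key property is $\wfs(R)\cap K=\varnothing$.) Decompose
\[
A_t u = A_t B Q u - A_t R u.
\]

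For the first term, $A_t B$ is a bounded family in $\ps[m-n]$ by continuity of composition in Table~\ref{tab:psiDO}, so by the joint continuity of the Sobolev action (also noted in Section~\ref{sec:psiDO}), $\|A_t B(Q u)\|_{\mathsf{s-m}} \leqslant C\|Qu\|_{\mathsf{s-n}}$ uniformly in $t$. For the second term, one checks from the properties of the joint essential support in Section~\ref{sec:lemmas} that
\[
\wfs(\{A_t R\}) \subset \wfs(\{A_t\}) \cap \wfs(R) = K\cap\wfs(R) = \varnothing,
\]
which, by the very definition of the joint essential support, means that $A_t R \colon \sob{-N} \to \sob{s-m}$ uniformly in $t$. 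Hence $\|A_t R u\|_{\mathsf{s-m}} \leqslant C\|u\|_{\mathsf{-N}}$, and the two bounds combine to give the stated estimate.

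For the boundedness claim when $\wf{s}(u)\cap\wfs(\{A_t\})=\varnothing$: using a standard covering/partition-of-unity argument, for each $\alpha\in K$ pick $Q_\alpha\in\ps[0]$ elliptic at $\alpha$ with $Q_\alpha u\in\sob{s}$, extract a finite subcover, and set $Q=\sum Q_\alpha^* Q_\alpha\in\ps[0]$ (elliptic on $K$, $n=0$). Then $Qu\in\sob{s}$, and the first part yields a uniform bound on $A_t u$ in $\sob{s-m}$. For the convergence assertion, apply the first part to the bounded family $A_t - A\in\ps[m]$ (whose joint essential support is contained in $\wfs(\{A_t\})\cup\wfs(A)\subset K$), using the same $Q$, to see $(A_t - A)u$ is bounded in $\sob{s-m}$; since $A_t-A\to 0$ in $\ps[m']$ and pseudodifferential action is jointly continuous, $(A_t - A)u \to 0$ in $\sob{s-m'}$. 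Combining uniform $\sob{s-m}$-boundedness with $\sob{s-m'}$-convergence yields the stated convergence of $A_t u$ to $Au$ (in the weak sense in $\sob{s-m}$, and by the interpolation inequality between $\sob{s-m}$ and $\sob{s-m'}$ in the norm topology of any intermediate Sobolev space).

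The main technical point, and the place that deserves the most care, is the control of the family $\{A_t R\}$: one must check directly from the definition of the joint essential support that emptiness really does yield uniform smoothing bounds $\sob{-N}\to\sob{M}$ for all orders, rather than merely for each fixed $t$. This is the analogue for families of the fact that $\wfs(A)=\varnothing$ implies $A\in\ps[-\infty]$, and it follows from a standard compactness argument on $\partial\PP$ using the elliptic $Q$'s produced at each point and a subordinate partition of unity.
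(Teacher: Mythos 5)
Your boundedness estimate and the partition-of-unity construction of $Q$ are correct and match the paper's parametrix decomposition ($BQ = I + R$ with $\wfs(R)$ disjoint from $\wfs(\{A_t\})$ is the same as the paper's $I - Q'Q$). However, the convergence claim at the end has a genuine gap. You apply the uniform estimate to the bounded family $A_t - A$ to see $(A_t - A)u$ is bounded in $\sob{s-m}$, get norm convergence to $0$ in a strictly weaker Sobolev space (note also that joint continuity only gives convergence in $\sob{-N-m'}$, not $\sob{s-m'}$, since the a priori regularity of $u$ is only $\sob{-N}$), and then invoke interpolation. But interpolation between boundedness in $\sob{s-m}$ and convergence in some $\sob{s'}$ with $\mathsf{s'} < \mathsf{s-m}$ only yields norm convergence in spaces $\sob{s''}$ with $\mathsf{s''} < \mathsf{s-m}$ strictly, plus weak convergence in $\sob{s-m}$. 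It does \emph{not} give convergence in the $\sob{s-m}$ norm, which is what the lemma asserts; a bounded family with no a priori compactness cannot be upgraded to norm convergence this way.

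The paper instead uses an $\varepsilon/3$ density argument. Since $\sob{s+(m'-m)}$ is dense in $\sob{s}$, approximate $Qu \in \sob{s}$ by $v \in \sob{s+(m'-m)}$. For such $v$, $A_tQ' v \to AQ' v$ in $\sob{s-m}$ holds directly because $A_t Q' \to AQ'$ in $\ps[m'-n]$. The errors $\|A_tQ'(Qu - v)\|_{\mathsf{s-m}}$ and $\|AQ'(Qu - v)\|_{\mathsf{s-m}}$ are controlled uniformly in $t$ by $\|Qu - v\|_{\mathsf{s-n}}$ using the boundedness of $A_tQ'$ in $\ps[m-n]$ and $AQ' \in \ps[m-n]$; the same argument handles the residual term. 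Replacing your interpolation step with this density step closes the proof.
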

\begin{proof}
Take $Q'\in\ps[-n]$ which is a microlocal parametrix for $Q$ on $\wfs(\{A_t\})$. Then we can write $A_tu = A_tQ'(Qu) + A_t(I-Q'Q)u$, where
\begin{itemize}
\item $Qu\in\sob{s-n}$ and the family $A_tQ'$ is bounded in $\ps[m-n]$, so $\|A_tQ'(Qu)\|_{\mathsf{s-m}} \leqslant C \|Qu\|_{\mathsf{s-n}}$;
\item $\wfs(\{A_t(I-Q'Q)\})=\varnothing$, so $A_t(I-Q'Q)$ is uniformly bounded in $\ps[-(s-m)-N]$ for every $\mathsf{N}$ and thus $\|A_t(I-Q'Q)u\|_{\mathsf{s-m}}\leqslant C\|u\|_{\mathsf{-N}}$.
\end{itemize}
The boundedness statement then follows from the fact that for $u\in\sch'$ as specified, one can always find $Q\in\ps[0]$ with $\wfs(Q)\cap\wf{s}(u)=\wfs(I-Q)\cap\wfs(\{A_t\})=\varnothing$.

If in addition $A_t\to A\in\ps[m]$ in $\ps[m']$, then $A_tv\to Av$ in $\sob{s-m}$ for any $v\in\sob{s+(m'-m)}$. Since we only know $Qu\in\sob{s}$, this does not directly imply that $A_t(Qu)\to A(Qu)$ in $\sob{s-m}$; but because $\sob{s+(m'-m)}$ is dense in $\sob{s}$, the family $A_t$ is bounded in $\ps[m]$, and the limit is in $\ps[m]$, this convergence follows from an ``$\varepsilon/3$ argument" -- see the proof of \cite[Lemma 4.39]{Hintz-notes}. We also have $A_t(I-Q)\to A(I-Q)$ in $\ps[m']$, and since necessarily $\wfs(I-Q)\cap\wfs(A)=\varnothing$, the limit is in $\ps[-\infty]$. Then by the same argument, $(A_t(I-Q))u\to (A(I-Q))u$ in $\sch$. Adding the terms together, we get $A_tu\to Au$ in $\sob{s-m}$.
\end{proof}

An important approximation construction is the following. Fix $\varphi\in C^{\infty}(\R)$ such that $\varphi(t)=0$ for $t<1$ and $\varphi(t)=1$ for $t>2$. Then if we take $J_t=\mathrm{Op}(q_t)$ for $q_t=\prod_{\Gamma_i\in\mathcal{G}(\PP)} \varphi(\rho_i/t)$, where $\rho_i$ is a defining function of $\Gamma_i$, then $J_t\in\ps[-\infty]$ for every $t$, the family $J_t$ for $t\in (0,1)$ is bounded in $\ps[0]$, and $\lim_{t\to 0^+}J_t=I$ in $\ps[\epsilon]$ for any $\epsilon>0$. We refer to a family with these properties as an \textit{approximation of the identity} below.

Recall that the $L^2(\M,\g)$ pairing $\langle u,v\rangle=\overline{\langle v,u\rangle}$ is well-defined for $(u,v)$ in $\sob{-s}\times\sob{s}$ for any $\mathsf{s}$. The integration-by-parts formula $\langle Au,v\rangle=\langle u,A^*v\rangle$ holds for any $A\in\ps$ when $(u,v)$ is in $\sch\times\sch'$ or $\sch'\times\sch$. In addition, if holds for any $(u,v)\in\sch'\times\sch'$ when $A\in\ps[-\infty]$, since in that case $A$ has Schwartz kernel in $\dot{C}^{\infty}(\M\times \M)$. This generalizes as follows.

\begin{lemma}[Microlocal conditions for pairing and integration by parts]\ 
\begin{enumerate}
\item Consider $u,v\in \sch'$ such that $u\in\sob{-s}$ and $\wf{}(u)\cap\wf{s}(v)=\varnothing$ for some set of orders $\mathsf{s}$. Then for any such $\mathsf{s}$ and any choice of $Q\in\ps[0]$ such that $\wfs(Q)\cap\wf{}(u) = \wfs(I-Q)\cap\wf{s}(v)=\varnothing$, the expression $\langle Qu,v\rangle + \langle u,(I-Q^*)v\rangle$ gives the same result, which one takes as the definition of $\langle u,v\rangle$ and $\overline{\langle v,u\rangle}$.
\item Let $A_t$ for $t\in (0,1)$ be a bounded family in $\ps[m]$. Then for any $Q_1,Q_2\in\ps[0]$ elliptic on $\wfs(\{A_t\})$ and any set of orders $\mathsf{N}$, there exists $C>0$ such that any $u,v\in\sob{-N}$ with $Q_1u\in\sob{m-s}$ and $Q_2v\in\sob{s}$ satisfy
$$
|\langle A_tu,v\rangle| \leqslant C\Big( \|Q_1u\|_{\mathsf{m-s}} + \|u\|_{\mathsf{-N}} \Big)  \Big(\|Q_2v\|_{\mathsf{s}} + \|v\|_{\mathsf{-N}}\Big).
$$
In particular, for any $u,v\in\sch'$ with $\wfs(\{A_t\})\cap\Big(\wf{m-s}(u)\cup\wf{s}(v)\Big)=\varnothing$, the family $\langle A_tu,v\rangle$ is bounded in $\C$. If in addition $A_t\to A\in \ps[m]$ in $\ps[m']$ for some $\mathsf{m'}\geqslant\mathsf{m}$, then $\langle A_tu,v\rangle\to \langle Au,v\rangle$ in $\C$.
\item If $A\in\ps[m]$ and $\wfs(A)\cap \Big(\wf{m-s}(u)\cup\wf{s}(v)\Big)$, then $\langle Au,v\rangle = \langle u, A^*v\rangle$.
\end{enumerate}
\label{thm:int-by-parts}
\end{lemma}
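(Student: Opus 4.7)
The plan is to prove the three parts in sequence, using the regularization by an approximation of the identity $J_t\in\ps[-\infty]$ from before the preceding lemma to reduce each claim to the standard $\sch\times\sch'$ integration-by-parts identity after passage to the limit. Throughout, the main technical obstacle is the bookkeeping required to ensure that the various realizations of pairings $\langle u, w\rangle$ (direct Sobolev duality, Part 1 definition, and limits of regularized Schwartz pairings) are mutually consistent; this is uniformly handled by the fact that each realization coincides with $\lim_{t\to 0^+}\langle J_t u, w\rangle$.

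For Part 1, with $Q$ as in the statement, $Qu\in\sch$ (since $\wfs(Q)\cap\wf{}(u)=\varnothing$) and $(I-Q^*)v\in\sob{s}$ (since $\wfs(I-Q^*)=\wfs(I-Q)$ is disjoint from $\wf{s}(v)$), so each term is individually well-defined in the appropriate duality pairing. To establish $Q$-independence of the sum, start from the trivial identity
\begin{equation*}
\langle Q(J_t u), v\rangle + \langle J_t u, (I-Q^*)v\rangle = \langle J_t u, v\rangle,
\end{equation*}
valid in $\sch\times\sch'$ since $J_t u\in\sch$; the RHS is manifestly independent of $Q$. Passing to $t\to 0^+$, the first term converges to $\langle Qu, v\rangle$ by the preceding lemma applied to the family $QJ_t$ (bounded in $\ps[0]$, converging to $Q$ in $\ps[\epsilon]$, with joint essential support contained in $\wfs(Q)$ and hence disjoint from $\wf{}(u)$), giving $QJ_t u\to Qu$ in $\sch$. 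For the second term, $J_t u$ is uniformly bounded in $\sob{-s}$ (since $J_t$ is bounded in $\ps[0]$) and converges strongly to $u$ in $\sob{-s-\epsilon}$, so a standard Banach--Alaoglu subsequence argument yields $J_t u\rightharpoonup u$ weakly in $\sob{-s}$, and pairing with the fixed $(I-Q^*)v\in\sob{s}$ gives $\langle J_t u,(I-Q^*)v\rangle\to\langle u,(I-Q^*)v\rangle$.

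For Part 2, choose microlocal parametrices $Q_1', Q_2'\in\ps[0]$ for $Q_1, Q_2$ on $\wfs(\{A_t\})$. The key decomposition is
\begin{equation*}
\langle A_t u, v\rangle = \langle A_t u,\, Q_2'Q_2 v\rangle + \langle (I-Q_2'Q_2)^*A_t u,\, v\rangle,
\end{equation*}
where the second term arises from the first-term complement via integration by parts that is legal because $(I-Q_2'Q_2)^*A_t$ is uniformly in $\ps[-\infty]$ (its joint essential support being contained in $\wfs(I-Q_2'Q_2)\cap\wfs(\{A_t\})=\varnothing$). The preceding lemma applied to $A_t$ with cutoff $Q_1$ yields $\|A_t u\|_{\mathsf{-s}}\leqslant C(\|Q_1 u\|_{\mathsf{m-s}}+\|u\|_{\mathsf{-N}})$, standard Sobolev boundedness gives $\|Q_2'Q_2 v\|_{\mathsf{s}}\leqslant C(\|Q_2 v\|_{\mathsf{s}}+\|v\|_{\mathsf{-N}})$, and the uniform smoothing property gives $\|(I-Q_2'Q_2)^*A_t u\|_{\mathsf{N'}}\leqslant C\|u\|_{\mathsf{-N}}$ for arbitrary $\mathsf{N'}$; combining these and estimating each term in its appropriate Sobolev duality yields the claimed bound. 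Boundedness under the wavefront hypothesis follows by choosing $Q_1, Q_2$ also with $\wfs(I-Q_i)$ disjoint from $\wf{m-s}(u)$, $\wf{s}(v)$ respectively, so $Q_iu$ and $Q_iv$ have the requisite regularity. Convergence $\langle A_t u, v\rangle\to\langle Au, v\rangle$ when $A_t\to A$ in $\ps[m']$ follows termwise, using the convergence clause of the preceding lemma for the first term and noting that $(I-Q_2'Q_2)^*A_t\to (I-Q_2'Q_2)^*A$ in $\ps[-\infty]$ for the second.

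For Part 3, apply regularization to the $u$-slot: since $J_t u\in\sch$ and $v\in\sch'$, the standard Schwartz--tempered identity $\langle A(J_t u), v\rangle = \langle J_t u, A^*v\rangle$ holds. The LHS converges to $\langle Au, v\rangle$ by Part 2 applied to $AJ_t$, which is bounded in $\ps[m]$, converges to $A$ in $\ps[m+\epsilon]$, and has $\wfs(\{AJ_t\})\subset\wfs(A)$ disjoint from $\wf{m-s}(u)\cup\wf{s}(v)$ by hypothesis. For the RHS, microlocality gives $\wf{s-m}(A^*v)\subset\wfs(A)\cap\wf{s}(v)=\varnothing$, so $A^*v\in\sob{s-m}$ globally; choosing a constant set of orders $\mathsf{s_0}$ pointwise bounded by both $\mathsf{N}$ and $\mathsf{s-m}$ so that $u\in\sob{-s_0}$ and $A^*v\in\sob{s_0}$ globally, the weak convergence $J_t u\rightharpoonup u$ in $\sob{-s_0}$ from Part 1 yields $\langle J_t u, A^*v\rangle\to\langle u, A^*v\rangle$ in the $\sob{-s_0}\times\sob{s_0}$ pairing, and this agrees with the Part 1 definition by the same regularization check used in Part 1's well-definedness. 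Equating the two limits proves the identity.
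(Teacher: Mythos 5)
Your overall strategy is sound, and the argument is essentially correct. Parts 1 and 3 take genuinely different routes from the paper's, while Part 2 follows the same structural idea with a couple of small bookkeeping issues.

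In Part 1 you show that every admissible $Q$ yields $\langle Qu,v\rangle+\langle u,(I-Q^*)v\rangle = \lim_{t\to 0^+}\langle J_tu,v\rangle$, a quantity manifestly independent of $Q$; the paper instead compares two choices $Q_1,Q_2$ directly, regularizing the $v$-slot rather than the $u$-slot, and shows the difference is $\lim_{t\to 0^+}\langle (I-J_t^*)(Q_1-Q_2)u,v\rangle=0$. Your identification of the common value as a regularized $\sch\times\sch'$ pairing is arguably cleaner and makes the consistency of the various realizations of $\langle u,v\rangle$ more transparent. In Part 3 you regularize the argument (replacing $u$ by $J_tu$) whereas the paper regularizes the operator (replacing $A$ by $J_tA$) and invokes Part 2 in one line; both work, with the paper's version being shorter.

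In Part 2 your decomposition $\langle A_tu,v\rangle = \langle A_tu,Q_2'Q_2v\rangle + \langle (I-Q_2'Q_2)^*A_tu,v\rangle$ is an instance of the Part-1 decomposition with cutoff $Q''=(I-Q_2'Q_2)^*$; the paper instead introduces a fresh cutoff $Q$ with $\wfs(Q)\cap\wfs(\{A_t\})=\varnothing$ and $\wfs(I-Q)\cap\Sigma^{\mathsf{0}}(Q_2)=\varnothing$. For your cutoff to qualify under Part 1 you need $\wfs(I-Q'')=\wfs(Q_2'Q_2)$ disjoint from $\wf{s}(v)\subset\Sigma^{\mathsf{0}}(Q_2)$, which forces the parametrix $Q_2'$ to be chosen with $\wfs(Q_2')$ inside the elliptic set of $Q_2$; this is always possible but should be made explicit, and the paper's fresh $Q$ sidesteps it entirely. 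There is also a slip in the ``in particular'' clause: to guarantee $Q_1u\in\sob{m-s}$ and $Q_2v\in\sob{s}$ you should choose $Q_i$ with $\wfs(Q_1)\cap\wf{m-s}(u)=\varnothing$ and $\wfs(Q_2)\cap\wf{s}(v)=\varnothing$ (together with $\wfs(I-Q_i)\cap\wfs(\{A_t\})=\varnothing$ so that $Q_i$ stays elliptic on $\wfs(\{A_t\})$), not ``$\wfs(I-Q_i)$ disjoint from $\wf{m-s}(u)$, $\wf{s}(v)$'' as written --- that condition would make $Q_i$ act like the identity precisely where $u$ (resp.\ $v$) is singular, which is the opposite of what is needed. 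These are corrections in bookkeeping rather than in the architecture of the argument.
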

\begin{proof}\ 
\begin{enumerate}
\item 
Note that for $Q$ as required, $Qu\in\sch$ and $(I-Q^*)v\in\sob{s}$, so the pairings in both terms are well-defined. Consider two different pairs $(\mathsf{s_1},Q_1)$ and $(\mathsf{s}_2,Q_2)$ as required. 

Fix an approximation of the identity $J_t$. Then by Lemma~\ref{thm:uniform-lemma}, $(I-Q_1^*)J_t v\to (I-Q_1^*)v$ in $\sob{s_1}$ and $(I-Q_2^*)J_t v\to (I-Q_2^*)v$ in $\sob{s_2}$. Then using continuity of the pairing between dual Sobolev spaces, we can write
$$
\Big( \langle Q_1u,v\rangle + \langle u,(I-Q_1^*)v\rangle \Big)
-
\Big( \langle Q_2u,v\rangle + \langle u,(I-Q_2^*)v\rangle \Big)
=
$$
$$
=
\langle (Q_1-Q_2)u,v \rangle
+
\lim_{t\to 0^+} \Big(
\langle u,(I-Q_1^*) J_t v\rangle 
-
\langle u,(I-Q_2^*) J_t v\rangle 
\Big)
=
$$
$$
=
\langle (Q_1-Q_2)u,v\rangle -
\lim_{t\to 0^+} \langle u,(Q_1^*-Q_2^*) J_t v\rangle
=
\lim_{t\to 0^+} \langle (I-J_t^*)(Q_1-Q_2)u,v\rangle,
$$
where we could integrate by parts freely because $(Q_1^*-Q_2^*)J_t\in\ps[-\infty]$. Now since $(Q_1-Q_2)u\in\sch$, we have $(I-J_t^*)(Q_1-Q_2)u\to 0$ in $\sch$, so the whole expression is zero.

\item Fix any $Q\in\ps[0]$ such that $\wfs(Q)\cap \wfs(\{A_t\})=\varnothing$ and $\wfs(I-Q)\cap\Sigma^{\mathsf{0}}(Q_2)=\varnothing$. Then $\langle A_tu,v\rangle = \langle QA_tu,v\rangle + \langle A_tu, (I-Q^*)v\rangle$, where
\begin{itemize}
\item $Q A_t$ is uniformly bounded in $\ps[-2N]$ for every $\mathsf{N}$, so $|\langle QA_tu,v\rangle|\leqslant C\|u\|_{\mathsf{-N}}\|v\|_{\mathsf{-N}}$;
\item By Lemma~\ref{thm:uniform-lemma}, $\|A_tu\|_{\mathsf{-s}} \leqslant C\Big(\|Q_1u\|_{\mathsf{m-s}} + \|u\|_{\mathsf{-N}}\Big)$ and $\|(I-Q^*)v\|_{\mathsf{s}}\leqslant C\Big(\|Q_2v\|_{\mathsf{s}} + \|v\|_{\mathsf{-N}}\Big)$.
\end{itemize}
The boundedness statement follows analogously to Lemma~\ref{thm:uniform-lemma}.

If in addition $A_t\to A$ as assumed, then necessarily $\wfs(Q)\cap\wfs(A)=\varnothing$, so we also have $\langle Au,v\rangle = \langle QAu,v\rangle + \langle Au,(I-Q^*)v\rangle$. By Lemma~\ref{thm:uniform-lemma}, $A_tu\to Au$ in $\sob{-s}$ and $QA_tu\to QAu$ in $\sch$, so we have $\langle QA_tu,v\rangle \to \langle QAu,v\rangle$ and $\langle A_tu,(I-Q^*)v\rangle \to \langle Au,(I-Q^*)v\rangle$.

\item Fix an approximation of the identity $J_t$. Then by the previous part, since $J_tA\in\ps[-\infty]$,
$$
\langle Au,v\rangle
=
\lim_{t\to 0^+} \langle (J_tA)u,v\rangle
=
\lim_{t\to 0^+} \langle u, (J_tA)^*v\rangle
=
\langle u, A^*v\rangle.
$$

\end{enumerate}
\end{proof}

The positive-commutator estimate used in the proof of radial point estimates relies on the following additional integration-by-parts result. The $\tau=0$ case is the one relevant for the main real principal type estimate, but for the limiting absorption principle we are also interested in adding a constant imaginary term $-i\tau$, where we are primarily interested in $\tau$ small.

\begin{lemma}[Commutator lemma, cf.\ Lemma~3.4 of \cite{Haber-Vasy}]
\label{thm:comm-lemma}
Consider $P\in\ps[m]$ with real-valued principal symbol, $A=\check{A}^*\check{A}$ for some $\check{A}\in\ps[s-\frac{m-1}{2}]$, and $u\in \sch'$ such that
$$
\wfs(\check{A})\cap\Big(\wf{s}(u)\cup\wf{s-m+1}((P-i\tau)u)\Big)
=
\varnothing
$$
for some $\tau\in\R$. Then in the $\tau=0$ case, $
-i\Big\langle \big([P,A]-(P-P^*)A \big)u,u\Big\rangle
=
2\operatorname{Im}\langle Au,Pu\rangle$,
whereas if $\tau\neq 0$, then $\check{A}u\in \LL$ and
\begin{equation}
-i\Big\langle \big([P,A]-(P-P^*)A \big)u,u\Big\rangle
=
2\operatorname{Im}\langle Au,(P-i\tau)u\rangle
+2\tau\|\check{A}u\|^2.
\end{equation}

\end{lemma}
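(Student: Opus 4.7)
The plan is by regularization: establish the identity first for a regularizing approximation of $A$, where all pairings are trivially well-defined and the identity follows by bare integration by parts, and then pass to the limit using the microlocal pairing machinery of Lemmas~\ref{thm:uniform-lemma} and \ref{thm:int-by-parts}.

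Concretely, fix an approximation of the identity $J_t$ as constructed after Lemma~\ref{thm:uniform-lemma}, which may be taken self-adjoint by symmetrization, and set $A_t = J_t A J_t \in \ps[-\infty]$. Since $A = \check{A}^*\check{A}$ is self-adjoint, so is $A_t$; the family is bounded in $\ps[2s-m+1]$ with $A_t \to A$ in $\ps[2s-m+1+\epsilon]$ for every $\epsilon>0$, and $\langle A_t u, u\rangle = \langle A J_t u, J_t u\rangle = \|\check{A} J_t u\|^2$. For each fixed $t>0$ all operators involved are regularizing, so a bare integration by parts using $A_t = A_t^*$ and the algebraic identity $[P, A_t] - (P-P^*)A_t = P^* A_t - A_t P$ gives
\begin{equation}
\label{eq:comm-reg}
-i\bigl\langle ([P, A_t] - (P-P^*)A_t)\, u, u\bigr\rangle = 2\operatorname{Im}\langle A_t u, (P-i\tau)u\rangle + 2\tau \|\check{A} J_t u\|^2.
\end{equation}

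Next, I pass to the limit $t\to 0^+$. The families $[P, A_t] - (P-P^*)A_t$ and $A_t$ are bounded in $\ps[2s]$ and $\ps[2s-m+1]$ respectively, converge to their natural limits in slightly higher orders, and have joint essential support contained in $\wfs(\check{A})$. Under the wavefront-set hypotheses on $u$ and $(P-i\tau)u$, Lemma~\ref{thm:int-by-parts}(2) gives convergence of the left-hand side of \eqref{eq:comm-reg} and of $\langle A_t u, (P-i\tau)u\rangle$ to their analogues with $A$ in place of $A_t$; this immediately settles the $\tau=0$ case. For $\tau\neq 0$, rearrangement of \eqref{eq:comm-reg} shows that $\|\check{A} J_t u\|^2 \to L$ for some $L\in\R$, so $\{\check{A} J_t u\}$ is bounded in $\LL$. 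Meanwhile, Lemma~\ref{thm:uniform-lemma} applied to $\check{A} J_t$ (bounded in $\ps[s-(m-1)/2]$ with essential support in $\wfs(\check{A})$ disjoint from $\wf{s}(u)$) gives $\check{A} J_t u \to \check{A} u$ in $\sob{(m-1)/2}$, a fortiori in $\sch'$. Combined with $\LL$-boundedness and weak compactness, this forces the whole family $\check{A} J_t u$ to converge weakly in $\LL$ to $\check{A} u$, so $\check{A} u \in \LL$.

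Finally, I would identify $L=\|\check{A} u\|^2$ by computing $\langle A u, u\rangle$ in two independent ways. On the one hand, Lemma~\ref{thm:int-by-parts}(2) gives $\langle A_t u, u\rangle = \|\check{A} J_t u\|^2 \to \langle A u, u\rangle$, whence $L=\langle A u, u\rangle$. On the other hand, $\langle A J_t u, u\rangle = \langle \check{A} J_t u, \check{A} u\rangle$ for each $t$ by free integration by parts (valid since $A J_t \in \ps[-\infty]$ and $\check{A} u \in \LL$): the left side again converges to $\langle A u, u\rangle$ by Lemma~\ref{thm:int-by-parts}(2), while the right side converges to $\|\check{A} u\|^2$ by weak $\LL$-convergence of $\check{A} J_t u$ against the fixed vector $\check{A} u$. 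Thus $L=\|\check{A} u\|^2$, and passing to the limit in \eqref{eq:comm-reg} completes the $\tau\neq 0$ case. The main obstacle is precisely this last identification of the limiting norm—lower semicontinuity under weak convergence only delivers the inequality $\|\check{A} u\|^2\leq L$—which is resolved by computing the same pairing $\langle A u, u\rangle$ via two distinct regularizations.
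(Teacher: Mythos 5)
Your overall strategy — regularize $A$, establish the identity for the regularization by bare integration by parts, then pass to the limit — matches the paper's, and your $\tau=0$ argument and the deduction $\check{A}u\in\LL$ for $\tau\neq 0$ are both correct. The gap is in the final identification $L:=\lim\|\check{A}J_tu\|^2 = \|\check{A}u\|^2$, and it is rooted in the choice of regularizer.

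You take $A_t=J_tAJ_t=(\check{A}J_t)^*(\check{A}J_t)$, so the regularized norm is $\|\check{A}J_tu\|^2$, with the cutoff \emph{to the right} of $\check{A}$. The paper instead regularizes $A$ by $\check{A}^*J_t\check{A}=(\check{J}_t\check{A})^*(\check{J}_t\check{A})$ with $J_t=\check{J}_t^*\check{J}_t$, placing the cutoff \emph{to the left}, so the norm is $\|\check{J}_t\check{A}u\|^2$. This difference is decisive: once $\check{A}u\in\LL$ is known, $\check{J}_t\check{A}u\to\check{A}u$ \emph{strongly} in $\LL$, because $\check{J}_t$ is an approximation of the identity acting on a fixed $\LL$-element (uniform $\mathcal{L}(\LL)$-bound plus convergence on the dense subspace $\sch$), so the norm convergence is immediate. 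By contrast, Lemma~\ref{thm:uniform-lemma} only gives $\check{A}J_tu\to\check{A}u$ in $\sob{\frac{m-1}{2}}$ (since $u$ is controlled only in $\sob{s}$ near $\wfs(\check{A})$), which for $\mathsf{m}<\mathsf{1}$ is strictly weaker than $\LL$-convergence and does not give norm convergence.

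Your two-regularization scheme runs into the same obstruction. Both of the passages to the limit you invoke, $\langle A_tu,u\rangle\to\langle Au,u\rangle$ and $\langle AJ_tu,u\rangle\to\langle Au,u\rangle$, apply Lemma~\ref{thm:int-by-parts}(2) to a family bounded in $\ps[2s-m+1]$ with joint essential support in $\wfs(\check{A})$. That requires an order $\mathsf{\sigma}$ with both $\wf{(2s-m+1)-\sigma}(u)$ and $\wf{\sigma}(u)$ disjoint from $\wfs(\check{A})$; since the lemma's hypotheses only give $\wf{s}(u)\cap\wfs(\check{A})=\varnothing$, this forces $\mathsf{s-m+1}\leq\mathsf{\sigma}\leq\mathsf{s}$, which is nonempty only when $\mathsf{m}\geq\mathsf{1}$. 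For $P=\Box_\g+m^2$ one has $\mathsf{m}=(2,0,0,0,0,0)$, so the base-face components fail this, and the pairing $\langle Au,u\rangle$ is in fact not well-defined at those faces under the hypotheses: even knowing $\check{A}u\in\LL$, one only has $Au=\check{A}^*(\check{A}u)\in\sob{\frac{m-1}{2}-s}$ microlocally on $\wfs(\check{A})$, which $u\in\sob{s}$ cannot absorb there. This is precisely why the paper's proof never touches the pairing $\langle Au,u\rangle$.

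The repair is simply to swap your regularizer for the paper's $\check{A}^*J_t\check{A}$; everything else in your argument then goes through unchanged.
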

\begin{proof}
Note that since $P$ has real principal symbol, hence the same principal symbol as $P^*$, we have $[P,A]-(P-P^*)A\in \ps[2s]$, so the pairings are all well-defined since this operator's essential support is contained in $\wfs(\check{A})$. For regular enough $u$, the result follows from the calculation
$$
-i\Big\langle \big([P,A]-(P-P^*)A \big)u,u\Big\rangle
=
-i\langle (P^*A-AP)u,u\rangle
=
-i\Big(\langle Au,Pu\rangle - \langle Pu,Au\rangle\Big)
=
2\operatorname{Im}\langle Au,Pu\rangle
=
$$
$$
=
2\operatorname{Im}\langle Au,(P-i\tau)u\rangle
+2\tau\|\check{A}u\|^2.
$$
However, the second and fourth equalities above cannot be justified (and some of the pairings need not be well-defined) under only the assumptions made. Instead, fix an approximation of the identity $\check{J}_t$. Then $J_t=\check{J}_t^*\check{J}_t$ is also an approximation of the identity. By Lemma~\ref{thm:int-by-parts}, part 2,
$$
\lim_{t\to 0^+} \langle (P^*\check{A}^*J_t\check{A}-\check{A}^*J_t\check{A}P)u,u\rangle 
=
$$
$$
=
\lim_{t\to 0^+}\Big( \langle J_t(P^*A-AP)u,u\rangle + \langle [I-J_t,P^*\check{A}^*]\check{A} u,u\rangle + \langle [\check{A}^*, I-J_t]\check{A}P u,u\rangle \Big)
=
$$
$$
=
\langle (P^*A-AP)u,u\rangle
=
\Big\langle \big([P,A]-(P-P^*)A \big)u,u\Big\rangle.
$$
On the other hand,
$$
2\operatorname{Im}\langle Au,(P-i\tau)u\rangle
=
-i\Big(\langle Au,(P-i\tau)u\rangle - \langle (P-i\tau)u,Au\rangle\Big)
=
$$
$$
=
-i \lim_{t\to 0^+} \Big(
\langle\check{A}^*J_t\check{A}u,(P-i\tau)u\rangle
- \langle (P-i\tau)u,\check{A}^*J_t\check{A}u\rangle\Big)
=
$$
$$
=
 \lim_{t\to 0^+} \Big( 
-i \langle (P^*\check{A}^*J_t\check{A}-\check{A}^*J_t\check{A}P)u,u\rangle 
+2\tau \|\check{J}_t\check{A} u\|^2 \Big), 
$$
which establishes the $\tau=0$ result. For $\tau\neq 0$, note that the existence of both limits above means that the family $\check{J}_t\check{A}u$ is bounded in $\LL$ for small enough $t$. Then it contains a sequence that converges weakly to some limit in $\LL$; since $\lim_{t\to 0^+}\check{J}_t\check{A}u=\check{A} u$ in the weaker $\sch'$ topology, the limit must be $\check{A}u$. Thus, $\check{A}u\in \LL$, and the $\tau\neq 0$ result follows.
\end{proof}

The extra regularity result we get in the $\tau\neq 0$ case, which is helpful in proving a radial point estimate for $P-i\tau$ which is uniform in $\tau\geqslant 0$ near zero and eventually establishing the limiting absorption principle, can be restated separately as follows.

\begin{cor}
\label{thm:complex}
Consider $P\in\ps[m]$ with real-valued principal symbol. For any $Q,Q',Q''\in\ps[0]$ with $Q',Q''$ elliptic on $\wfs(Q)$ and any set of orders $\mathsf{N}$, there exists $C>0$ such that all $\tau\in\R\backslash\{0\}$ and $u\in\sob{-N}$ such that $Q'(P-i\tau)u\in\sob{s-m+1}$, $Q''u\in\sob{s}$ satisfy
\begin{equation}
\|Qu\|_{\mathsf{s-\frac{m-1}{2}}}^2
\leqslant
\frac{C}{|\tau|} \Big(
\|Q'(P-i\tau)u\|_{\mathsf{s-m+1}}^2
+\|Q''u\|_{\mathsf{s}}^2
+(1+|\tau|)\|u\|_{\mathsf{-N}}^2
\Big).
\label{eq:complex-reg}
\end{equation}
In particular, $\alpha\notin \Big(\wf{s}(u) \cup \wf{s-m+1}((P-i\tau)u)\Big)$ for some $\tau\in\R\backslash\{0\}$ implies $\alpha\notin \wf{s-\frac{m-1}{2}}(u)$.
\end{cor}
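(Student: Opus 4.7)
The plan is to invoke the commutator lemma, Lemma~\ref{thm:comm-lemma}, choosing the commutant so that the resulting inequality gives exactly the desired $\|Qu\|^2_{\mathsf{s-\frac{m-1}{2}}}$ on the left-hand side. Concretely, I would pick $\check{A}$ to be, up to lower-order corrections, $\Lambda_{\mathsf{s-\frac{m-1}{2}}}Q$, where $\Lambda_{\mathsf{s-\frac{m-1}{2}}}\in\ps[s-\frac{m-1}{2}]$ is any globally elliptic invertible reference operator; then $\check{A}\in\ps[s-\frac{m-1}{2}]$ with $\wfs(\check{A})=\wfs(Q)$, and $A=\check{A}^*\check{A}\in\ps[2s-m+1]$. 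Since $Q'$ and $Q''$ are elliptic on $\wfs(Q)$, microlocal elliptic regularity applied to $Q''u\in\sob{s}$ and $Q'(P-i\tau)u\in\sob{s-m+1}$ yields $\wfs(\check{A})\cap\bigl(\wf{s}(u)\cup\wf{s-m+1}((P-i\tau)u)\bigr)=\varnothing$, so the hypotheses of Lemma~\ref{thm:comm-lemma} are satisfied. The $\tau\neq 0$ case of the lemma then simultaneously certifies $\check{A}u\in\LL$ (so the left-hand side of the target inequality is finite) and produces
\begin{equation*}
2\tau\|\check{A}u\|^2=-i\bigl\langle\bigl([P,A]-(P-P^*)A\bigr)u,u\bigr\rangle-2\operatorname{Im}\langle Au,(P-i\tau)u\rangle.
\end{equation*}

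Next I would estimate both pairings on the right-hand side microlocally via Lemma~\ref{thm:int-by-parts}(2). The operator $B=[P,A]-(P-P^*)A$ lies in $\ps[2s]$ with $\wfs(B)\subset\wfs(Q)$, so $|\langle Bu,u\rangle|$ is bounded by a constant times $\|Q''u\|_{\mathsf{s}}^2+\|u\|_{\mathsf{-N}}^2$. For $|\langle Au,(P-i\tau)u\rangle|$, applying Lemma~\ref{thm:int-by-parts}(2) with $A\in\ps[2s-m+1]$, $Q_1=Q''$, $Q_2=Q'$, and using the bound $\|(P-i\tau)u\|_{\mathsf{-N}}\leq C(1+|\tau|)\|u\|_{\mathsf{-N}}$ (after enlarging $N$ to absorb the fixed-order action of $P$), produces a product estimate; combining these via AM-GM yields
\begin{equation*}
2|\tau|\|\check{A}u\|^2\leq C\bigl(\|Q''u\|^2_{\mathsf{s}}+\|Q'(P-i\tau)u\|^2_{\mathsf{s-m+1}}+(1+|\tau|)^2\|u\|^2_{\mathsf{-N}}\bigr).
\end{equation*}
Dividing by $2|\tau|$ and invoking $\|Qu\|^2_{\mathsf{s-\frac{m-1}{2}}}\leq C(\|\check{A}u\|^2+\|u\|^2_{\mathsf{-N}})$ then gives the target inequality, up to the minor discrepancy between my $(1+|\tau|)^2$ and the stated $(1+|\tau|)$; this is inconsequential for the applications, as both yield the same $|\tau|^{-1}$ blow-up as $\tau\to 0^+$ that drives the later uses of the estimate.

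The wavefront set consequence is then immediate: given $\alpha\notin\wf{s}(u)\cup\wf{s-m+1}((P-i\tau)u)$ for some $\tau\neq 0$, one selects $Q',Q''\in\ps[0]$ elliptic at $\alpha$ with $Q''u\in\sob{s}$ and $Q'(P-i\tau)u\in\sob{s-m+1}$, then $Q\in\ps[0]$ elliptic at $\alpha$ with $\wfs(Q)$ lying in the region where both $Q'$ and $Q''$ are elliptic; the inequality shows $\|Qu\|_{\mathsf{s-\frac{m-1}{2}}}<\infty$, whence $\alpha\notin\wf{s-\frac{m-1}{2}}(u)$. The substantive analytic work for this corollary lives already in Lemma~\ref{thm:comm-lemma}, whose $\tau\neq 0$ version uses an approximation-of-the-identity regularization both to justify the integration by parts producing the commutator identity and to extract $\check{A}u\in\LL$ as a byproduct. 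Given that lemma, the corollary reduces to Cauchy--Schwarz and AM-GM bookkeeping, and the only delicate point is tracking the $|\tau|$-dependence of the error terms, which enters solely through $\|(P-i\tau)u\|_{\mathsf{-N}}$ since $P$ itself is $\tau$-independent.
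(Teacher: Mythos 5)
Your proof follows the same route as the paper's: choose $\check{A}\in\ps[s-\frac{m-1}{2}]$ microsupported where $Q$ is, apply the $\tau\neq 0$ case of Lemma~\ref{thm:comm-lemma} to get $2\tau\|\check{A}u\|^2 = -i\langle([P,A]-(P-P^*)A)u,u\rangle - 2\operatorname{Im}\langle Au,(P-i\tau)u\rangle$, bound both pairings via Lemma~\ref{thm:int-by-parts}(2), and absorb cross terms by AM--GM. The $(1+|\tau|)^2$ versus $(1+|\tau|)$ discrepancy you flag is real but, as you correctly note, irrelevant to the applications (bounded $\tau$); the paper likewise does not track this factor consistently (compare the $(1+|\tau|^2)$ appearing in Theorem~\ref{thm:pos} and Proposition~\ref{thm:localized-rp-estimates}).
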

\begin{proof}
We can take $\check{A}\in\ps[s-\frac{m-1}{2}]$ elliptic on $\wfs(Q)$ and such that $Q',Q''$ are elliptic on $\wfs(\check{A})$. The finiteness of the norms on the right-hand side of Eq.~(\ref{eq:complex-reg}) implies that $(\check{A},u,\tau)$ satisfy the assumption of Lemma~\ref{thm:comm-lemma}. Then using the preceding lemmas, we can estimate
$$
\|Qu\|_{\mathsf{s-\frac{m-1}{2}}}^2
\leqslant
C\Big(
\|\check{A}u\|^2 + \|u\|_{\mathsf{-N}}^2
\Big)
\leqslant
$$
$$
\leqslant
C\Big(\frac{1}{|\tau|}
\left| \left\langle ([P,A]-(P-P^*)A)u,u\right\rangle \right|
+ \frac{1}{|\tau|} |\langle \check{A}^*\check{A} u,(P-i\tau)u\rangle |
+ \|u\|_{\mathsf{-N}}^2
\Big)
\leqslant
$$
$$
\leqslant
\frac{C}{|\tau|}\Big(
\| Q''u \|_{\mathsf{s}}^2
+ \|Q'(P-i\tau)u\|_{\mathsf{s-m+1}}^2
+ (1+|\tau|)\|u\|_{\mathsf{-N}}^2
\Big).
$$
The wavefront set conclusion follows from the fact that we can take $Q$ elliptic at $\alpha$ with $\wfs(Q)\cap\Big(\wf{s}(u) \cup \wf{s-m+1}((P-i\tau)u)\Big)=\varnothing$.
\end{proof}

	\subsection{Propagation of singularities}
	\label{sec:PoS}
	We are interested in determining regularity of solutions $u$ to an equation $Pu=f$ given information about regularity of $f$. Wherever $P$ is elliptic, such information is provided by elliptic regularity. Within the characteristic set, on the other hand, information about microlocal regularity of $u$ is propagated along the Hamilton vector field associated to the principal symbol of $P$ (if the symbol is real-valued).
	
	Given a symbol $p\in\sym{m}$, the associated Hamilton vector field $H_p$ on $T^*\M^{\circ}$ is defined by $\omega(H_p,V)=dp(V)$ for all vector fields $V$, where $\omega$ is the canonical symplectic form on $T^*\M^{\circ}$, i.e.
	\begin{equation}
H_p(x,\tilde{\xi})=	\sum_{i=1}^{d+1} \left( \frac{\partial p}{\partial \tilde{\xi}_i}\frac{\partial}{\partial x_i} - \frac{\partial p}{\partial x_i}\frac{\partial}{\partial \tilde{\xi}_i} \right)
	\label{eq:Hp-canonical}
	\end{equation}
for any local coordinates $(x_1,\ldots,x_{d+1})$ on $\M^{\circ}$, where $(\tilde{\xi}_1,\ldots,\tilde{\xi}_{d+1})$ are the canonical dual variables. 

In local coordinates $(\rho,y,\xi,\eta)$ on $\PP$ away from null infinity as defined in Eq.~(\ref{eq:freq-transform-sc}), this becomes
\begin{equation}
H_p = \rho \Bigg[
\rho \frac{\partial p}{\partial\xi} \frac{\partial}{\partial\rho}
+ \sum_{i=1}^d \frac{\partial p}{\partial \eta_i} \frac{\partial}{\partial y_i}
- \left( \rho\frac{\partial p}{\partial\rho} + \sum_{i=1}^d \eta_i\frac{\partial p}{\partial \eta_i} \right) \frac{\partial}{\partial \xi}
+ \sum_{i=1}^d \left(\eta_i \frac{\partial p}{\partial\xi} - \frac{\partial p}{\partial y_i}\right) \frac{\partial}{\partial \eta_i}
\Bigg]
\label{eq:Hp-coordinates-general-sc}
\end{equation}
Since $\frac{\partial}{\partial\xi},\frac{\partial}{\partial\eta_i}\in\rho_f\Vb(\PP)$, $\rho\frac{\partial}{\partial\rho},\frac{\partial}{\partial y}\in\Vb(\PP)$ in such a neighborhood, and the order of a symbol does not change under application of b-vector fields, by considering each term above we conclude that $H_p$ is a section of ${}^{\mathrm{b}}T\PP$ with coefficients in $\sym{m-1}$ away from null infinity.

Turning to null infinity, in the coordinates $(\rho_0,x_0,y,\zeta_0,\xi_0,\eta)$ on $\mathcal{P}$ over $\U_0$ defined in Eq.~(\ref{eq:freq-transform}),
\begin{equation}
\begin{split}
H_p
			=
			\rho_0x_0\Bigg[
			\left(
			(\xi	_0-\zeta_0)\frac{\partial p}{\partial\zeta_0}
			-x_0\frac{\partial p}{\partial x_0}
			-2 \sum_{i=1}^{d-1} \eta_i\frac{\partial p}{\partial\eta_i}
			\right)
			\frac{\partial}{\partial\xi_0}
			+
			\left(
			(\zeta_0-\xi_0)\frac{\partial p}{\partial\xi_0}
			-\rho_0\frac{\partial p}{\partial\rho_0}
			-\sum_{i=1}^{d-1} \eta_i\frac{\partial p}{\partial\eta_i}
			\right)
			\frac{\partial}{\partial\zeta_0}
			+
			\\
			+
			x_0\frac{\partial p}{\partial\xi_0}
			\frac{\partial}{\partial x_0}
			+
			\rho_0\frac{\partial p}{\partial\zeta_0}
			\frac{\partial}{\partial\rho_0}
			+
			x_0\sum_{i=1}^{d-1} \frac{\partial p}{\partial\eta_i}\frac{\partial}{\partial y_i}
			+
			\sum_{i=1}^{d-1}
			\left(
			2\eta_i\frac{\partial p}{\partial\xi_0}
			+\eta_i\frac{\partial p}{\partial\zeta_0}
			-x_0\frac{\partial p}{\partial y_i}
			\right)
			\frac{\partial}{\partial\eta_i}
			\Bigg]
			.
\end{split}
\label{eq:Hp-coordinates-general}
\end{equation}
The calculation over $\U_T$ is completely identical, with $\rho_0$ and $\zeta_0$ replaced by $\rho_T$ and $\zeta_T$. Since $\frac{\partial}{\partial\zeta_0},\frac{\partial}{\partial\xi_0},\frac{\partial}{\partial\eta_i}\in\rho_f\V_{\mathrm{e,b}}(\mathcal{P})$, $\rho_0\frac{\partial}{\partial \rho_0},x_0\frac{\partial}{\partial x_0},x_0\frac{\partial}{\partial y_i}\in\V_{\mathrm{e,b}}(\mathcal{P})$ (and similarly for the coordinate vector fields over $\U_T$)\footnote{$\PP$ inherits the fibration of $\scri^{\pm}$ defined by the blowdown map $\M\to\tilde{\M}$, so $\V_{\mathrm{e,b}}(\mathcal{P})$ is defined analogously to $\V_{\mathrm{e,b}}(\M)$ as the space of b-vector fields on $\PP$ which are in addition tangent to the fibers of $\scri^{\pm}$.}, we conclude that $H_p$ is globally a section of ${}^{\mathrm{e,b}}T\mathcal{P}$ with coefficients in $\sym{m-1}$. Therefore, fixing any collection $\rho$ of defining functions of the boundary hypersurfaces, we can write $H_p = \rho^{\mathsf{-m+1}}\Hp$, where we call $\Hp\in S^{\mathsf{0}}\V_{\mathrm{e,b}}(\mathcal{P})$ the rescaled Hamilton vector field. In the general considerations that follow, we will only explicitly use that $H_p$ is a b- (as opposed to e,b-) vector field on $\PP$, but the additional vanishing of the $\frac{\partial}{\partial y_i}$ components at null infinity is reflected in the dynamics.

	\underline{Now we restrict attention to $p\in S^{\mathsf{m}}_{\epsilon}(\PP)$.} We call $\tilde{p}=\rho^{\mathsf{m}}p\in \soe(\PP)$ the rescaled principal symbol. $\tilde{p}$ is continuous on $\PP$ up to and including the boundary, and its restriction to the interior of any corner is smooth. We denote 
$$
\Sigma=\Sigma^{\mathsf{m}}(P) \text{ and }\tilde{\Sigma}=\{\alpha\in\PP\ |\ \tilde{p}(\alpha)=0\}, \text{ so }\Sigma = \tilde{\Sigma}\cap\partial\PP.
$$
Since $\tilde{p}$ is not necessarily smooth at the boundary, $\tilde{\Sigma}$ is not necessarily a product-type (p-) or neat smooth submanifold of $\PP$. However, if $\Gamma$ is the intersection of all boundary hypersurfaces containing some $\alpha\in\Sigma$ (so automatically $\alpha\in\Gamma^{\circ}$), then $\tilde{p}|_{\Gamma}$ is smooth nearby, so if $d(\tilde{p}|_{\Gamma})(\alpha)\neq 0$, then $\Sigma\cap\Gamma$ is a smooth submanifold of $\Gamma$ near $\alpha$.
	
	Note that Eqs.~(\ref{eq:Hp-coordinates-general-sc})-(\ref{eq:Hp-coordinates-general}) show that in coordinates, the coefficients of $H_p$ as a b-vector field are the results of acting on $p$ with some vector fields in $S^{\mathsf{-1}}\Vb(\PP)$. Therefore, if a symbol is classical at some boundary hypersurface, the coefficients of its Hamilton vector field are classical at that face as well. Since the correspondence $p\mapsto H_p$ is linear, we conclude that $\Hp\in \soe(\PP)\V_{\mathrm{e,b}}(\PP)$ for $p\in S^{\mathsf{m}}_{\epsilon}(\PP)$. In particular, it is also continuous on $\PP$ up to and including the boundary, with smooth restriction to the interior of any corner, and in fact $\Hp|_{\Gamma}\in \soe\Vb(\Gamma)$ for any corner $\Gamma$. (This is a slight abuse of notation, since the restriction is really a section of $T\PP|_{\Gamma}$, but since it is tangent to all boundary hypersurfaces, we can consider it as a section of $T\Gamma$). Then the integral curves (some of which may be single critical points) of $\Hp$ are well-defined, and any curve with a point in $\Gamma^{\circ}$ lies entirely in $\Gamma^{\circ}$ (though it may limit to a point on the boundary of $\Gamma$ as the natural parameter goes to $\pm\infty$)\footnote{The flow cannot reach the boundary of $\Gamma$ in finite parameter time because $\Hp$ is a b-vector field, so its component transverse to any boundary face is bounded by a constant times its defining function and the flow can only approach the boundary exponentially slowly.}. This allows one to consider the dynamics of the Hamilton flow one corner at a time.
	
	Eq.~(\ref{eq:Hp-canonical}) implies that $H_pp=0$, so $\Hp\tilde{p} = \left(\sum_{i=1}^{|\mathcal{G}(\PP)|} m_i\rho_i^{-1}(\Hp\rho_i)\right)\tilde{p}$. Since the prefactor is in $\soe(\PP)$, we conclude that $\Hp \tilde{p}|_{\tilde{\Sigma}}=0$, so any integral curve of $\Hp$ with a point in $\tilde{\Sigma}$ lies entirely in $\tilde{\Sigma}$, and similarly for $\Sigma$. We call the integral curves of $\Hp$ within $\Sigma$ bicharacteristics; the curves themselves are independent of the choice of boundary-defining functions used to rescale $H_p$ into $\Hp$, though the curves' natural parametrization depends on this choice. 
	
	If $d(\tilde{p}|_{\Gamma})(\alpha)=0$ at some point $\alpha\in\Sigma$, then $V\tilde{p}(\alpha)=0$ for any $V\in\Vb(\PP)$, and hence $\rho^{\mathsf{-m}}V p(\alpha) = \rho^{\mathsf{-m}}V(\rho^{\mathsf{m}})\tilde{p}(\alpha) + V\tilde{p}(\alpha)=0$; this then implies that $\Hp(\alpha)=0$. Thus any bicharacteristic consisting of more than one point is automatically contained in a region where $d(\tilde{p}|_{\Gamma})$ does not vanish and $\Sigma\cap\Gamma$ is locally a smooth submanifold of $\Gamma$.
	
	H\"{o}rmander's propagation of singularities theorem \cite{Hormander,D-H} for operators of real principal type, as well as the version with a sign-definite imaginary part, carries over to the de,sc-setting. The following version is sufficient for our purposes.
	  
	\begin{theorem}[Propagation of singularities]
	\label{thm:pos}
	Let $P\in\ps[m]$ with real-valued principal symbol $p\in \soe(\PP)$. Consider distinct $\alpha,\beta\in\Sigma$ lying on the same bicharacteristic $\gamma$ in a corner $\Gamma$ of $\PP$, with $\beta$ downstream from $\alpha$ with respect to $\Hp$; denote the segment of $\gamma$ between them $\gamma_{\alpha\beta}$.
	\begin{enumerate}
	
	\item For $\tau$ in a bounded subset of $\R$, either all non-negative or all non-positive, consider a family $u_{\tau}$ bounded in $\sob{-N}$ such that $\wf{s-m+1}(\{(P-i\tau)u_{\tau}\})\cap \gamma_{\alpha\beta}=\varnothing$ for some sets of orders $\mathsf{s}$, $\mathsf{N}$. If $\tau \geqslant 0$, then $\alpha\notin\wf{s}(\{u_{\tau}\})$ implies $\beta\notin\wf{s}(\{u_{\tau}\})$; if $\tau \leqslant 0$, then $\beta\notin\wf{s}(\{u_{\tau}\})$ implies $\alpha\notin\wf{s}(\{u_{\tau}\})$.
	
	\item For any neighborhoods $U$ of $\gamma_{\alpha\beta}$ and $V$ of $\alpha$ within $\partial\PP$, there exists a neighborhood $W$ of $\gamma_{\alpha\beta}$ within $\partial\PP$ such that for any $Q,Q',Q''\in\ps[0]$ with $Q'$ elliptic on $U$, $Q''\in\ps[0]$ elliptic on $V$, and $\wfs(Q)\subset W$ and any set of orders $\mathsf{N}$, there exists $C>0$ such that all $\tau\geqslant 0$ and $u\in\sob{-N}$ with $Q'(P-i\tau)u\in\sob{s-m+1}$, $Q''u\in\sob{s}$ satisfy
\begin{equation}
\label{eq:pos-estimate}
\|Qu\|_{\mathsf{s}} \leqslant C\Big( \|Q'(P-i\tau) u_{\tau}\|_{\mathsf{s-m+1}} + \|Q''u\|_{\mathsf{s}} + (1+|\tau|)\|u\|_{\mathsf{-N}} \Big).
\end{equation}
The analogous statement holds for $\tau\leqslant 0$ if $\beta$ is upstream from $\alpha$.
	\end{enumerate}
	\end{theorem}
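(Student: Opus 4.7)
The plan is a positive commutator estimate in the de,sc-calculus, generalizing H\"ormander's argument for propagation of singularities. Part 2 gives the quantitative estimate (\ref{eq:pos-estimate}); Part 1 follows by applying Part 2 uniformly to the family $\{u_\tau\}$ and using the definition of the joint wavefront set. I describe the case $\tau\ge 0$ with $\beta$ downstream of $\alpha$; the $\tau\le 0$ case with $\beta$ upstream is handled identically after reversing the sign of the weight function defining the commutant.

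Since $\gamma_{\alpha\beta}$ is a nontrivial bicharacteristic, $\Hp$ is nonzero and smooth on $\Gamma^\circ$ near $\gamma_{\alpha\beta}$, where $\Gamma$ is the corner of $\PP$ containing the curve. I would first choose coordinates $\phi,\psi$ on $\Gamma^\circ$ near $\gamma_{\alpha\beta}$ with $\Hp\phi=-1$ and $\Hp\psi_j=0$ (so $\phi(\alpha)>\phi(\beta)$), and extend them into a $\PP$-neighborhood using the defining functions of the boundary hypersurfaces transverse to $\Gamma$ as auxiliary transverse coordinates. The commutant will be $A=B^*B\in\ps[2s-m+1]$, with $B\in\ps[s-(m-1)/2]$ of principal symbol
\[
b=\rho^{\mathsf{-s+(m-1)/2}}\,e^{\mu\phi}\,\chi_+(\phi)\,\chi_0(|\psi|^2),
\]
where $\chi_+$ is $1$ for $\phi\le\phi(\alpha)$ and supported in $\phi\le\phi(\alpha)+\delta$, $\chi_0$ cuts off to a small transverse neighborhood of $\gamma$, and $\mu>0$ is a large parameter. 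A direct computation at the principal-symbol level gives $H_p(b^2)=-e^2+f^2+r$, where $e$ is elliptic at $\beta$ and on the whole bulk $\{\chi_+\equiv 1\}$ (coming from the $-2\mu b^2$ term), $f$ is essentially supported in an arbitrarily small preassigned neighborhood of $\alpha$ (coming from $\chi_+'$ and localized by the choice of $\delta$), and $r$ is essentially supported off $\gamma_{\alpha\beta}$ (from differentiating $\chi_0$). Taking $\mu$ large enough absorbs the principal-symbol contributions of $(P-P^*)A\in\ps[2s-1]$ and of $\Hp$ hitting the order-weight $\rho^{\mathsf{-s+(m-1)/2}}$, both of which are of the form $(\text{bounded})\cdot b^2$, into the $-2\mu b^2$ term.

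Regularizing $A$ by $A_t=J_t^*AJ_t$ for an approximation of the identity $J_t$ (to legalize all pairings for $u\in\sob{-N}$), Lemma \ref{thm:comm-lemma} yields
\[
-i\Big\langle\big([P,A_t]-(P-P^*)A_t\big)u,u\Big\rangle=2\operatorname{Im}\langle A_tu,(P-i\tau)u\rangle+2\tau\|\check A_tu\|^2,
\]
and Lemmas \ref{thm:uniform-lemma}, \ref{thm:int-by-parts} permit passage to the $t\to 0^+$ limit. Combining this with the principal-symbol identity above yields, modulo lower-order errors,
\[
\|Eu\|^2\le\|Fu\|^2+|\langle Ru,u\rangle|+2\big|\operatorname{Im}\langle Au,(P-i\tau)u\rangle\big|-2\tau\|\check Au\|^2+(\text{l.o.t.}).
\]
For $\tau\ge 0$, the $-2\tau\|\check Au\|^2$ term is non-positive and is discarded. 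A Peter--Paul split on the cross-term absorbs a small multiple of $\|Eu\|^2$ into the left-hand side, and the remainder is controlled by $\|Q'(P-i\tau)u\|_{\mathsf{s-m+1}}^2$ for any $Q'$ elliptic on $U\supset\wfs(A)$. The terms $\|Fu\|^2$ and $|\langle Ru,u\rangle|$ are bounded by $\|Q''u\|_{\mathsf{s}}^2+\|u\|_{\mathsf{-N}}^2$ via Lemma \ref{thm:uniform-lemma}, with $Q''$ elliptic on any preassigned neighborhood $V$ of $\alpha$ after shrinking $\delta$ and the transverse cutoff so that $\wfs(F),\wfs(R)\subset V$. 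The $(1+|\tau|)\|u\|_{\mathsf{-N}}^2$ factor arises from bounding residual tails of $(P-i\tau)u$ via $\|(P-i\tau)u\|_{\mathsf{-N}}\le\|Pu\|_{\mathsf{-N}}+|\tau|\|u\|_{\mathsf{-N}}$. This gives (\ref{eq:pos-estimate}) with $\wfs(Q)\subset W:=\wfs(A)$, and Part 1 follows by applying the estimate with fixed $Q,Q',Q''$ to each $u_\tau$: if $\alpha\notin\wf{s}(\{u_\tau\})$ we may choose $Q''$ elliptic at $\alpha$ with $\sup_\tau\|Q''u_\tau\|_{\mathsf{s}}<\infty$, making the right-hand side uniformly bounded.

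The main obstacle is the weight-function construction in the present setting. Because $p\in S^{\mathsf{m}}_\epsilon(\PP)$ is only conormal and $\Hp\in\soe\V_{\mathrm{e,b}}(\PP)$ is only continuous up to $\partial\PP$, with $\gamma_{\alpha\beta}$ sitting in the interior of a corner whose other boundary hypersurfaces of $\PP$ must be treated as transverse, the coordinates $\phi,\psi$ must be built inside $\Gamma^\circ$ and extended to $\PP$ in a way that keeps $b$ in $S^{\mathsf{s-(m-1)/2}}_\epsilon(\PP)$, and the symbolic identity $H_p(b^2)=-e^2+f^2+r$ must hold modulo $\soe$-tame errors compatible with the sign-definiteness. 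Lemma \ref{thm:symbol-prop} (for square roots and inverses within $\soe$) and the error-ideal-at-corners structure from Section \ref{sec:fcn-spaces-base} are the key inputs for this bookkeeping. Once the commutant is in place, the remaining absorption, the uniformity in $\tau$, and the passage to the wavefront-set statement of Part 1 proceed as in the classical H\"ormander/Melrose/Vasy treatment.
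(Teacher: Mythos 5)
Your overall framework — positive commutator with commutant $A=\check A^*\check A$, straightened flow coordinates, absorption of skew-adjoint and weight terms into a large exponential parameter, regularization and iteration, passage to Part 1 by uniform estimates — is the same as the paper's. The genuine gap is that you explicitly flag the central difficulty ("the weight-function construction in the present setting... because $p\in S^{\mathsf m}_\epsilon(\PP)$ is only conormal... must hold modulo $\soe$-tame errors compatible with the sign-definiteness") but then do not resolve it; you only gesture at Lemma~\ref{thm:symbol-prop} and the error-ideal structure. Resolving it is the actual content. The two devices the paper introduces and which are absent from your construction are: (a) building the transverse localizer out of $z^2=\sum\tilde\rho_i^2+\sum y_i^2$ with $\tilde\rho_i=\rho_i^{1/N}$ for $N>2/\epsilon$, so that the cutoff width $\varepsilon$ in $\phi(\hat z)$ dominates the $\mathcal O(\rho^{\mathsf\epsilon})$ errors $h_{\rho_i},h_{y_i}\in\mathcal I_\Gamma^\epsilon$ arising from $\Hp-\tilde H_{p_0}$ (the key bound $\sup_{\tilde U_\varepsilon}|h_{\rho_i}|,|h_{y_i}|=o(\varepsilon)$ relies on $2/N<\epsilon$); and (b) sloping the transverse cutoff, $\hat z=z^2+\tfrac{\varepsilon}{2s}q$, so that the level sets of the transverse localizer narrow along the flow and all nearby bicharacteristics — which can drift at rate $o(\varepsilon)$ in the transverse variables — still cross them in the fixed direction. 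Your $\chi_0(|\psi|^2)$ is a flat, unsloped cutoff; with only conormal regularity of $\Hp$, $H_p(\chi_0(|\psi|^2))$ contributes terms of indeterminate sign of the same order as the main term, and the argument breaks.

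Two smaller issues. First, your commutant $e^{\mu\phi}\chi_+(\phi)\chi_0(|\psi|^2)$ has no cutoff on the downstream ($\beta$) end: $\chi_+\equiv 1$ for $\phi\le\phi(\alpha)$, and $e^{\mu\phi}$ decays but never vanishes, so $b$ is not compactly supported in $\phi$ and is not a legitimate de,sc-symbol. The paper's $\chi(q)=e^{\digamma/(q-s-\varepsilon)}$, which is identically zero for $q\ge s+\varepsilon$, gives compact support and the large-parameter absorption simultaneously. Second, your regularization $A_t=J_t^*AJ_t$ by conjugation with an approximation of the identity is not the same as the paper's $\chi_t=\rho^{\mathsf K}/\prod(t+\rho_i)^{K_i}$; the latter is chosen so that the term $H_p(\chi_t)$ it injects into $b_t^2$ is of the form $\frac{2K_it}{t+\rho_i}h_{\rho_i}$, uniformly bounded and sign-absorbable by taking $\digamma$ large, whereas $H_p$ applied to a generic approximation-of-identity symbol has no controlled sign. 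If you intend $J_t$ only to legalize pairings, note that Lemma~\ref{thm:comm-lemma} already handles this once a wavefront-set hypothesis is available, and what is actually needed is the iteration from $\mathsf r=\mathsf s-\mathsf n/2$ up to $\mathsf s$ with the $\chi_t$-type regularizer at each step.
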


	The proof is based on the standard positive-commutator argument, which is based on constructing a symbol $a$ (the ``commutant") supported near $\gamma_{\alpha\beta}$ and monotone non-increasing/non-decreasing along $H_p$ wherever $u$ is not assumed to be microlocally $H^{\mathsf{s}}$-regular (i.e. away from a neighborhood of either $\alpha$ or $\beta$, where we assume regularity). Using the identity $H_pa=\{p,a\}=\sigma(i[P,A])$, one then shows that, assuming regularity of $Pu$ on $\gamma_{\alpha\beta}$, regularity of $u$ is propagated forward/backward along $\Hp$ in the region where $H_pa$ is definite of the chosen sign, allowing one to conclude regularity in all regions downstream/upstream from the region where it is assumed. This is described in detail for classical sc-operators in \cite[Section 4]{Grenoble-notes}, which for de,sc-operators with \textit{classical} symbols applies with minimal modifications. Since the de,sc-calculus is symbolic at every boundary face, the presence of more faces and higher-order corners does not introduce any difficulties.
	
	For $p$ which is classical only to leading order, the Hamilton flow on $\PP$ cannot necessarily be straightened by a choice of smooth local coordinates, which is the first step in the approach of \cite{Grenoble-notes}. However, since $\Hp$ deviates from a smooth b-vector field only by terms which vanish at $\Gamma$, one can work in coordinates in which that vector field is straightened, and then the commutant construction of \cite{Grenoble-notes} only requires a slight modification to work. Namely, for a fully straightened vector field one can fully separate variables when constructing $a$, so the monotonicity along $H_p$ is built into a choice of factor dependent only on the variable along the flow, while in the transverse variables one localizes using any bump function; in our case, the vector field may have nonzero (but bounded) components in the transverse directions, and correspondingly the factors localizing in those variables need to be modified to slope strongly enough towards $\gamma_{\alpha\beta}$ so that all nearby bicharacteristics still cross the level sets of $a$ in the same direction. A proof of propagation of singularities where one similarly needs to leave ``extra room for errors" by making the propagation region narrower as one moves along the flow appeared already in the work of Melrose and Sj\"{o}strand~\cite{MS}.
	
	Because we only require $p$ to be classical at leading order, and correspondingly the components of $\Hp$ can decay to their boundary values quite slowly and irregularly (with an $\mathcal{O}(\rho^{\mathsf{\epsilon}})$ envelope and possibly oscillations), some care needs to be taken to ensure that uncontrolled lower-order terms in $H_p$ do not give rise to leading-order terms in $H_p a$. To do this, we localize using small powers of boundary-defining functions (which are not smooth but symbolic on $\PP$), the scale being set by $\epsilon$, making the leading-order terms large enough to dominate over the large errors. We note that if the non-classical error terms in $\tilde{p}$ are assumed to decay at \textit{all} of the boundary faces (which in our context is the case for e.g. asymptotically Minkowski metrics), then an alternative approach is to use a notion of principal symbol which tracks decay order in increments smaller than $\epsilon$. Our approach, however, only requires each error term in $\tilde{p}$ to decay at \textit{some} face containing $\Gamma$.
	
	Further steps necessary to regularize the argument as well as treat operators with nonzero skew-adjoint part and/or a sign-definite imaginary part are standard, though we note that adding a constant imaginary part $i\tau$ (which we are interested in because of the limiting absorption principle) is somewhat different from the more common scenario where one adds a microlocalized operator of the same order as $P$ to act as a complex absorption (see \cite[Section 4.5]{Grenoble-notes}). We warn the reader that we reserve $\epsilon$ to stand for the order of error terms of symbols in $\soe(\PP)$ but also use $\varepsilon$ for various other small parameters.
	
	\begin{proof}[Proof of Theorem~\ref{thm:pos}]
	
	We focus on the case when $\tau\geqslant 0$ and $\beta$ is downstream from $\alpha$; the other case is analogous, requiring only some sign changes. Fix neighborhoods $U,V\subset\partial\PP$ as described. We aim to show the existence of $W\subset\partial\PP$ as required.
	
	 Fix any defining functions $\rho_1,\ldots,\rho_k$ of the boundary hypersurfaces forming $\Gamma$. We denote by $p_1$ the principal symbol of $-i(P-P^*)\in\ps[m-1]$. We use multi-index notation with $\rho=(\rho_1,\ldots,\rho_k)$ and write $\tilde{p}=\rho^{\mathsf{m}}p$, $\tilde{p}_1=\rho^{\mathsf{m-1}}p_1$, and $\Hp=\rho^{\mathsf{m-1}}H_p$.
	
\begin{itemize}

\item \textbf{Choice of coordinates.}
	 
	  Take any smooth extension $\tilde{p}_0$ of $\tilde{p}|_{\Gamma}$ near $\gamma_{\alpha\beta}$ and set $p_0=\rho^{\mathsf{-m}}\tilde{p}_0$. Then we can choose coordinates which straighten $\tilde{H}_{p_0}$ in a neighborhood of $\gamma_{\alpha\beta}$. Thus, consider a neighborhood $\tilde{U}$ of $\gamma_{\alpha\beta}$ in $\PP$ such that $\tilde{U}\cap\partial\PP \subset U$ and neither $\Hp$ nor $\tilde{H}_{p_0}$ vanishes on $\tilde{U}$ and on which there exist coordinates $(\rho_1,\ldots,\rho_k,q,y_1,\ldots,y_m)$ such that $(\rho_1,\ldots,\rho_k)$ are defining functions of the boundary hypersurfaces containing $\gamma_{\alpha\beta}$, $\tilde{H}_{p_0}=\frac{\partial}{\partial q}$ on all of $\tilde{U}$, $y|_{\gamma_{\alpha\beta}}=0$, and $q(\alpha)=0$, $q(\beta)=s$ for some $s>0$. Then, since $\tilde{p}=\tilde{p}_0 \mod\mathcal{I}^{\epsilon}_{\Gamma}(\PP)$, we have
	$$
\Hp = h_q\frac{\partial}{\partial q} + \sum_{i=1}^k h_{\rho_i}\rho_i\frac{\partial}{\partial \rho_i} + \sum_{i=1}^m h_{y_i}\frac{\partial}{\partial y_i},
	$$
where $h_q=1\mod \mathcal{I}_{\Gamma}^{\epsilon}\Vb(\PP)$ and $h_{\rho_i},h_{y_i}\in\mathcal{I}_{\Gamma}^{\epsilon}\Vb(\PP)$.

\item \textbf{Defining the commutant.} Fix $N>\frac{2}{\epsilon}$, and let $\tilde{\rho}_i= \sqrt[N]{\rho_i}$ for $i=1,\ldots,k$. Let $z^2=\sum_{i=1}^k\tilde{\rho}_i^2 + \sum_{i=1}^m y_i^2$. For $\varepsilon>0$ small enough, we denote
$$
\tilde{U}_{\varepsilon} = \{ z^2<2\varepsilon,\ -2\varepsilon<q<s+2\varepsilon\},
\hspace{30pt}
\tilde{V}_{\varepsilon} = \{z^2<2\varepsilon,\ |q|<2\varepsilon\}
$$
and $U_{\varepsilon}=\tilde{U}_{\varepsilon}\cap\partial\PP$, $V_{\varepsilon}=\tilde{V}_{\varepsilon}\cap\partial\PP$. Consider $\varepsilon>0$ to be fixed later, small enough that $\tilde{U}_{\varepsilon}\subset\tilde{U}$ and $V_{\varepsilon}\subset V$. Choose $\phi,\psi,\chi\in C^{\infty}(\R)$ such that:
\begin{itemize}
\item $\phi(t)=0$ for $|t|>\varepsilon$ and $\phi(t)=1$ for $|t|<\frac{1}{2}\varepsilon$, strictly increasing on $[-\varepsilon,-\frac{1}{2}\varepsilon]$ and strictly decreasing on $[\frac{1}{2}\varepsilon,\varepsilon]$, and such that $\sqrt{|\phi'\phi|}$ is smooth;
\item $\psi(t)=0$ for for $t<-\varepsilon$ and $\psi(t)=1$ for $t>\varepsilon$, strictly increasing on $[-\varepsilon,\varepsilon]$, and such that $\sqrt{|\psi'\psi|}$ is smooth;
\item $\chi(t)=e^{\frac{\digamma}{t-(s+\varepsilon)}}$ for $t<s+\varepsilon$ and $\chi(t)=0$ for $t\geqslant s+\varepsilon$, for some $\digamma>0$ to be fixed later.
\end{itemize}

Let $\hat{z}=z^2+\frac{\varepsilon}{2s} q$; note that $z^2$ and therefore $\hat{z}$ are in $\sym{0}$ (unlike $z=\sqrt{z^2}$). For small enough $\varepsilon$ we have $\mathrm{supp}(\phi(\hat{z})) \cap \{-2\varepsilon<q<s+2\varepsilon\} \subset 
\{|z|^2<2\varepsilon\}$. Then for any set of orders $\mathsf{r}$, we can define $\check{a}\in \sym{r-\frac{m-1}{2}}$ by
$$
\check{a}=\rho^{\mathsf{-r+\frac{m-1}{2}}} \phi(\hat{z})\psi(q)\chi(q)
$$ 
on $\tilde{U}_{\varepsilon}$ and zero outside. We set $a=\check{a}^2\in\sym{2r-m+1}$. We denote
$$
\tilde{W}_{\varepsilon} = \left\{ -\varepsilon<q<s+\varepsilon,\ z^2<\varepsilon \left(1-\frac{q}{2s}\right) \right\} \subset \tilde{U}_{\varepsilon},
\hspace{30pt}
W_{\varepsilon} = \tilde{W}_{\varepsilon}\cap\partial\PP.
$$
We have $\mathrm{supp}(a)=\overline{\tilde{W}_{\varepsilon}}$.

\item \textbf{Action of vector field on commutant.} Fix any $\delta>0$. Then
$$
H_p a + p_1a + \delta\rho^{\mathsf{-m+1}}a
=
\rho^{\mathsf{-m+1}} (\Hp a + \tilde{p}_1a + \delta a)
=
\rho^{\mathsf{-2r}} \phi(\hat{z}) \psi(q)\chi(q) \cdot 
$$
$$
\cdot
\Bigg(
\left[
2h_q
\Big(\psi'(q)\chi(q) + \psi(q)\chi'(q)\Big)
+
\left(\sum_{i=1}^k h_{\rho_i} (-2r_i+m_i-1) + \tilde{p}_1 + \delta \right)
\psi(q)\chi(q)
\right]
\cdot \phi(\hat{z})
+
$$
$$
+
\left(\frac{\varepsilon h_q}{s} +\frac{4}{N}\sum_{i=1}^k h_{\rho_i} \tilde{\rho}_i^2
+4\sum_{i=1}^m h_{y_i} y_i\right) \cdot \phi'(\hat{z})
\psi(q) \chi(q) 
\Bigg).
$$
We consider each of the resulting terms.
\begin{itemize}
\item The term
$$
e = \rho^{\mathsf{-2r}} \cdot 2h_q \psi'(q)\psi(q)\chi(q)^2 \phi(\hat{z})^2 \in \sym{2r}
$$
containing $\psi'(q)$ is supported in $\tilde{V}_{\varepsilon}$, so its contribution to the estimate will be controlled by the regularity assumption on $u$ near $\alpha$.

\item Taking $\varepsilon$ small enough and $\digamma$ large enough, the remaining terms on the second line can be written as $-b^2$, where
$$
b = \rho^{\mathsf{-r}}\phi(\hat{z}) \psi(q)
\sqrt{\left|
2h_q\cdot \chi'(q)\chi(q)
+\left(\sum_{i=1}^k h_{\rho_i} (-2r_i+m_i-1) + \tilde{p}_1 + \delta \right) \chi(q)^2
\right|
}
\in\sym{r}.
$$
To see this, note that for $\varepsilon$ small enough, $h_q>\frac{1}{2}$ on $\tilde{U}_{\varepsilon}$. On the other hand, $\sum_{i=1}^k h_{\rho_i} (-2r_i+m_i-1) + \tilde{p}_1 + \delta $ is bounded on $\tilde{U}_{\varepsilon}$. Then since $\chi(q)=-\frac{(s+\varepsilon-q)^2}{\digamma}\chi'(q)$, by taking $\digamma$ large enough we can ensure that the expression under the absolute value is a strictly negative smooth multiple of $\chi'(q)^2$ on the support of the localizing prefactors, so the square root is smooth.

\item Considering the last line, for $\varepsilon$ small enough we have $\frac{\varepsilon h_q}{s}>\frac{\varepsilon}{2s}$ on $\tilde{U}_{\varepsilon}$. Since $|h_{\rho_i}|,|h_{y_i}|=\mathcal{O}(\rho^{\mathsf{\epsilon}})$ as $\rho\to 0^+$ and $\rho_i^{\frac{2}{N}}< 2\varepsilon$ on $\tilde{U}_{\varepsilon}$ with $\frac{2}{N}<\epsilon$, we have $\sup_{\tilde{U}_{\varepsilon}} |h_{\rho_i}|,\sup_{\tilde{U}_{\varepsilon}} |h_{y_i}| =o(\varepsilon)$ as $\varepsilon\to 0^+$. Thus, for $\varepsilon$ small enough the expression in parentheses in the last line is strictly positive on the support of the localizing factors, so the last line can be written as $-f^2$, where
$$
f = \rho^{\mathsf{-r}} \psi(q) \chi(q) \sqrt{\left| \left( \frac{\varepsilon h_q}{s} + \frac{4}{N}\sum_{i=1}^k h_{\rho_i}\tilde{\rho}_i^2 + \frac{4}{N}\sum_{i=1}^m h_{y_i}y_i \right) \cdot \phi'(\hat{z})\phi(\hat{z}) \right|} \in\sym{r}.
$$
This term will come with the right sign to be discarded for the estimate.
\end{itemize}

Thus, we conclude that
\begin{equation}
H_pa + p_1a + \delta\rho^{\mathsf{-m+1}}a
=
-b^2 + e -f^2.
\end{equation}

\item \textbf{Regularization.}
For $t\in (0,1)$, we define a regularizer family
\begin{equation}
\label{eq:pos-regularizer}
\chi_t=\frac{1}{\prod_{i=1}^k\left( 1 + t\rho_i^{-1}\right)^{K_i}} = \frac{\rho^{\mathsf{K}}}{\prod_{i=1}^k (t+\rho_i)^{K_i}}
\end{equation}
for some multi-index $\mathsf{K}=(K_1,\ldots,K_k)>\mathsf{0}$ to be fixed later. The family $\chi_t$ is bounded in $\sym{0}$ and converges to the constant $1$ as $t\to 0^+$ in any positive-order symbol space.

We set $\check{a}_t=\chi_t\check{a}$ and $a_t=\check{a}_t^2$. Then
\begin{equation}
\label{eq:pos-commutator-symbolic-reg}
(H_p + p_1 +\delta\rho^{\mathsf{-m+1}})a_t
=
- b_t^2 + e_t - f_t^2,
\end{equation}
where $e_t=\chi_t^2 e\in \sym{2r-2K}$, $f_t=\chi_t f\in \sym{r-K}$, and
$$
b_t = \rho^{\mathsf{-r}}\phi(\hat{z}) \psi(q)
\sqrt{\left|
2h_q\cdot \chi'(q)\chi(q)
+\left(\sum_{i=1}^k h_{\rho_i} \left(-2r_i+m_i-1+\frac{2K_it}{t+\rho_i}\right) + \tilde{p} + \delta \right) \chi(q)^2
\right|
}
$$
is in $\sym{r-K}$, as long as inclusion of the $\frac{2K_it}{t+\rho_i}$ terms does not change the sign of the expression under the absolute value. Since $\frac{2K_it}{t+\rho_i}\leqslant 2K_i$, for any fixed $\mathsf{K}$ this will indeed be true if we take $\varepsilon$ small enough and $\digamma$ large enough so that $|\chi'(q)|$ is sufficiently larger than $\chi(q)$.

\item \textbf{Quantization and estimates.}
We define operators $\check{A}_t=\mathrm{Op}(\check{a}_t)$, $A_t=\check{A}_t^*\check{A}_t$, $B_t=\mathrm{Op}(b_t)$, $E_t=\mathrm{Op}(e_t)$, $F_t=\mathrm{Op}(f_t)$, $\Lambda=\mathrm{Op}(\rho^{\mathsf{-\frac{m-1}{2}}})$. The symbolic relation Eq.~(\ref{eq:pos-commutator-symbolic-reg}) means that
\begin{equation}
i[P,A_t] 
-i(P-P^*)A_t
+ \delta (\Lambda\check{A}_t)^*(\Lambda\check{A}_t)
= - B_t^*B_t + E_t - F_t^* F_t + R_t
\end{equation}
for some $R_t\in \ps[2r-2K-1]$, where $\wfs(\{R_t\})\subset \overline{W_{\varepsilon}}$ and the explicit quantization map ensures that the family $R_t$ is bounded in $\ps[2r-1]$ (see the end of Section~\ref{sec:psiDO} and the proof of \cite[Lemma 8.16]{Hintz-notes}). 

For any $u\in\sch'$ and $\tau \geqslant 0$, if we fix any $\mathsf{r}$ and any $\mathsf{K}$ large enough, we can calculate
$$
\|B_tu\|^2
=
- \left\langle \Big(i[P,A_t]-i(P-P^*)\Big)u, u\right\rangle
-\delta \|\Lambda\check{A}_t u\|^2
+\langle E_tu,u\rangle
-\|F_t u\|^2
+\langle R_tu,u\rangle
\leqslant
$$
$$
\leqslant
 2 |\langle \check{A}_tu, \check{A}_t (P-i\tau)u\rangle|
- \delta \|\Lambda\check{A}_t u\|^2
+ \langle E_tu,u\rangle
+\langle R_tu,u\rangle.
$$
The pairings and integrations by parts are justified by Lemma~\ref{thm:int-by-parts} due to the choice of large enough $\mathsf{K}$, and we used Lemma~\ref{thm:comm-lemma} for the commutator term. There exist $C,C',C''>0$ such that
$$
2 |\langle \check{A}_tu, \check{A}_t (P-i\tau)u\rangle|
- \delta \|\Lambda\check{A}_t u\|^2
\leqslant
2C\|\check{A}_tu\|_{\mathsf{\frac{m-1}{2}}} 
\|\check{A}_t(P-i\tau)u\|_{\mathsf{-\frac{m-1}{2}}}
- \delta\|\Lambda\check{A}_t u\|^2
\leqslant
$$
$$
\leqslant
C'\delta\|\check{A}_tu\|^2_{\mathsf{\frac{m-1}{2}}} 
+
\frac{C^2}{C'\delta}\|\check{A}_t(P-i\tau)u\|^2_{\mathsf{-\frac{m-1}{2}}}
- \delta\|\Lambda\check{A}_t u\|^2
\leqslant
 \frac{C^2}{C'\delta} \|\check{A}_t(P-i\tau)u\|_{\mathsf{-\frac{m-1}{2}}}^2
 + C''\delta \|u\|_{\mathsf{-N}}^2,
$$
where the first step used the continuity of the pairing between dual Sobolev spaces, the second step is justified for any $C'>0$ by the AM-GM inequality, and the third step uses the fact that $\Lambda$ is globally elliptic, so there exists $C'>0$ such that $C'\|\check{A}_tu\|^2_{\mathsf{\frac{m-1}{2}}}\leqslant \|\Lambda \check{A}_tu\|^2 + \|\check{A}_t u\|_{\mathsf{-N'}}^2$ for any $u$. Thus,
$$
\|B_t u\|^2
\leqslant
 C \Big(
  \|\check{A}_t(P-i\tau)u\|_{\mathsf{-\frac{m-1}{2}}}^2
+ \|u\|_{\mathsf{-N}}^2
\Big)
+ |\langle E_tu,u\rangle|
+|\langle R_tu,u\rangle|.
$$
We bound the remaining terms on the right-hand side uniformly in $t$ using Lemmas~\ref{thm:uniform-lemma} and~\ref{thm:int-by-parts}, fixing $\mathsf{r}\leqslant\mathsf{s}$ and any $Q',Q'',Q'''_{\mathsf{r}}\in\ps[0]$ with $Q'$ elliptic on $U$, $Q''$ elliptic on $V$, and $Q'''_{\mathsf{r}}$ elliptic on $\overline{W_{\varepsilon}}$:
\begin{itemize}
\item The family $R_t$ is bounded in $\ps[2r-1]$ and $\wfs(\{R_t\})\subset \overline{W_{\varepsilon}}$, so
$$
|\langle R_tu,u\rangle| \leqslant C \Big( \|Q'''_{\mathsf{r}} u\|_{\mathsf{r-\frac{1}{2}}}^2 + \|u\|_{\mathsf{-N}}^2 \Big).
$$

\item The family $E_t$ is bounded in $\ps[2r]$, hence also in $\ps[2s]$, and $\wfs(\{E_t\})\subset V$, so
$$
|\langle E_t u,u\rangle| \leqslant C \Big( \|Q''u\|_{\mathsf{s}}^2 + \|u\|_{\mathsf{-N}}^2 \Big).
$$

\item The family $\check{A}_t$ is bounded in $\ps[r-\frac{m-1}{2}]$, hence also in $\ps[s-\frac{m-1}{2}]$, and $\wfs(\{\check{A}_t\})\subset U$, so
$$
\|\check{A}_t(P- i\tau)u\|_{\mathsf{-\frac{m-1}{2}}}^2 
\leqslant 
C'\Big( \|Q'(P- i\tau) u\|_{\mathsf{s-m+1}}^2 + \|(P- i\tau)u\|_{\mathsf{-N'}}^2 \Big)
\leqslant 
$$
$$
\leqslant
C\Big( \|Q'(P- i\tau) u\|_{\mathsf{s-m+1}}^2 + (1+|\tau|^2)\|u\|_{\mathsf{-N}}^2 \Big).
$$
\end{itemize}

\item \textbf{Regularity conclusions and iteration.}
From the preceding calculations we conclude that, if $Q'((P-i\tau)u)\in\sob{s-m+1}$, $Q''u\in\sob{s}$, and $u\in\sob{-N}$ as assumed and if in addition $Q'''_{\mathsf{r}} u\in\sob{r-\frac{1}{2}}$, then the family $B_tu$ for $t\in (0,1)$ is bounded in $\LL$. Since $\lim_{t\to 0^+}B_t= B$ in $\ps[r']$ for any $\mathsf{r'}>\mathsf{r}$, we have $B_t u\to Bu$ in $\sch'$ and therefore $Bu\in \LL$ by the Banach-Alaoglu theorem, with the estimate
$$
\|Bu\|^2 \leqslant C\Big( \|Q'(P - i\tau)u\|^2_{\mathsf{s-m+1}} + \|Q''u\|^2_{\mathsf{s}} + \|Q'''_{\mathsf{r}} u\|^2_{\mathsf{r-\frac{1}{2}}} + (1+|\tau|^2)\|u\|^2_{\mathsf{-N}} \Big),
$$
where $C>0$ is independent of $\tau\geqslant 0$ and $u$. 

By construction, $B$ is elliptic of order $\mathsf{r}$ on $W_{\varepsilon}$.
Then any $Q_{\mathsf{r}}\in\ps[0]$ with $\wfs(Q)\subset W_{\varepsilon}$ satisfies
\begin{equation}
\|Q_{\mathsf{r}}u\|_{\mathsf{r}}^2 \leqslant C\Big( \|Q'(P - i\tau)u\|^2_{\mathsf{s-m+1}} + \|Q''u\|^2_{\mathsf{s}} + \|Q'''_{\mathsf{r}} u\|^2_{\mathsf{r-\frac{1}{2}}} + (1+|\tau|^2)\|u\|^2_{\mathsf{-N}} \Big).
\label{eq:pos-estimate-iteration}
\end{equation}
To get the estimate Eq.~(\ref{eq:pos-estimate}), we iterate the above argument starting with the $\sob{-N}$ global regularity assumption. Indeed, we can take any $\varepsilon'\in (0,\varepsilon)$ and fix any $\varepsilon_0,\ldots,\varepsilon_n$ so that $\varepsilon'=\varepsilon_n<\ldots<\varepsilon_0=\varepsilon$.
Then we start by taking $\mathsf{r}=\mathsf{s-\frac{n}{2}}$ for $n$ large enough that $\mathsf{s-\frac{n+1}{2}}\leqslant\mathsf{-N}$ and $Q_{\mathsf{r}}'''=I$, so the $Q_{\mathsf{r}}'''u$ term is bounded by $\|u\|_{\mathsf{-N}}^2$ and thus the right-hand side of Eq.~(\ref{eq:pos-estimate-iteration}) matches Eq.~(\ref{eq:pos-estimate}) and is finite. At each step, after establishing Eq.~(\ref{eq:pos-estimate-iteration}) with $\varepsilon=\varepsilon_i$, we choose a particular $Q_{\mathsf{r}}$ which is elliptic on $\overline{W_{\varepsilon_{i+1}}}\subset W_{\varepsilon_i}$ and repeat the argument with $(\mathsf{r+\frac{1}{2}}, \varepsilon_{i+1}, Q_{\mathsf{r}})$ in place of $(\mathsf{r}, \varepsilon_i, Q_{\mathsf{r}}''')$. Since $\mathsf{s},\mathsf{N}$ are fixed, the number of steps until we reach $\mathsf{r}=\mathsf{s}$ and the resulting constant are independent of $u$ and $\tau$, establishing the uniform estimate Eq.~(\ref{eq:pos-estimate}) for $W=W_{\varepsilon'}$.

The qualitative version (part 1) of the theorem follows from the estimate because under the regularity assumptions, there exist $Q',Q''\in\ps[0]$ with $Q'$ elliptic on $\gamma_{\alpha\beta}$ and $Q''$ elliptic at $\alpha$ such that the family $Q'(P-i\tau)u_{\tau}$ is bounded in $\sob{s-m+1}$ and the family $Q''u_{\tau}$ is bounded in $\sob{s}$, so the right-hand side of Eq.~(\ref{eq:pos-estimate}) with $u=u_{\tau}$ is uniformly bounded for $\tau$ in a bounded set. Then taking $U$ and $V$ to be the elliptic sets of $Q',Q''$ respectively and any $Q\in\ps[0]$ elliptic on the corresponding $W$, which contains $\beta$, we conclude that $Qu_{\tau}$ is bounded in $\sob{s}$ and therefore $\beta\notin\wf{s}(\{u_{\tau}\})$.

\end{itemize}

	\end{proof}
	
See Figure~\ref{fig:pos} for an illustration of the various regions defined and used in the proof.
	
\begin{figure}
\begin{center}
\includegraphics{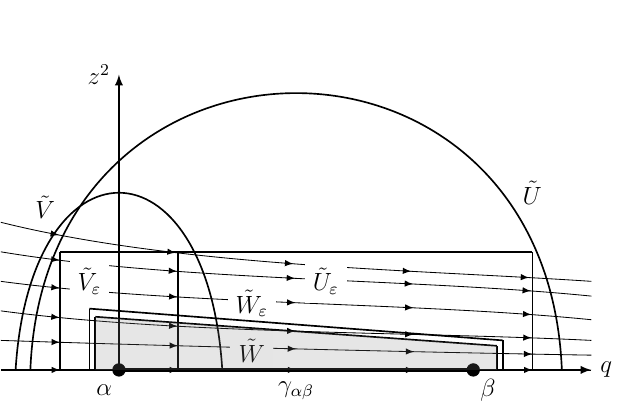}
\end{center}
\caption{Schematic illustration of propagation setup in proof of Theorem~\ref{thm:pos}. The slope of the upper boundary of $\tilde{W}$ ensures that the Hamilton flow only crosses it from the interior of $\tilde{W}$, and it similarly crosses all nearby level sets of the cutoff $\phi(\hat{z})$ in the same direction.}
\label{fig:pos}
\end{figure}

	\subsection{Localized radial point estimate}
	 The rescaled Hamilton vector field of the symbol of $\Box_{\g}+m^2$ has several sets of critical points in the characteristic set, or radial sets (see Section~\ref{sec:Hamiltonian}), where the propagation of singularities theorem is not helpful. One additionally needs to prove radial point estimates which allow one to conclude microlocal regularity at these points and thereby link the propagation results in different corners together into a global regularity theory. The first radial point estimate is due to Guillemin and Schaeffer~\cite{GS} for isolated radial points (allowing for various types of isolated radial points); extended sets of radial points were first systematically treated by Melrose~\cite{Melrose-AES} and localized radial point estimates were first systematically studied in \cite{Haber-Vasy}; in both cases these radial sets were source or sink manifolds. In this section, we prove a general localized version of these estimates which includes Sussman's estimates \cite[Propositions 5.6-5.14, part of Theorems 5-6]{Sussman}. As Sussman points out in \cite[Example 1.6 and Figure 8]{Sussman}, Klein-Gordon Green's functions in Minkowski spacetime provide a concrete example where a sharp description of a distribution's singularities is given by a fully microlocalized description of propagation.
	 
	Like in the previous section, consider $P\in\ps[m]$ with principal symbol $p\in S^{\mathsf{m}}_{\epsilon}(\PP)$. Let $\alpha\in\partial\PP$ be a critical point of $\Hp$; since $\Hp$ is continuous, this means simply $\Hp(\alpha)=0$. Let $\Gamma_1,\ldots,\Gamma_k$ be the boundary hypersurfaces containing $\alpha$. We denote $\Gamma=\bigcap_{i=1}^k\Gamma_i$, so $\alpha\in\Gamma^{\circ}$. As discussed above, $\Hp|_{\Gamma}$ is smooth on $\Gamma^{\circ}$ as a section of ${}^{\mathrm{b}}T\PP|_{\Gamma}$.
	 
Under certain nondegeneracy conditions on the Hamilton flow, the propagation of singularities of solutions to $Pu=f$ into $\alpha$ is governed by the linearization of $\Hp$ at $\alpha$, appropriately understood, as we now discuss.

\subsubsection{Linearization}
\label{sec:linearization-def}
	
	By the linearization of a smooth vector field $V$ on a smooth manifold $M$ at a critical point $\alpha$ we mean the following. Consider the bundle section $V:M\to TM$. The derivative of $V$ at $\alpha$ maps $DV(\alpha):T_{\alpha}M\to T_{(\alpha,0)}TM$. The zero section $\mathcal{Z}$ and the fiber $T_{\alpha}M$ are invariantly defined submanifolds of $TM$ which intersect transversely at $(\alpha,0)$, so there is a canonical splitting
  $$T_{(\alpha,0)}TM=T_{(\alpha,0)}\mathcal{Z}\oplus T_{(\alpha,0)}T_{\alpha}M.$$
  Denote $\pi_{fib}$ the projection onto the second subspace along the first, and let $\varphi: T_{(\alpha,0)}T_{\alpha}M\to T_{\alpha}M$ be the usual identification of a vector space with its tangent space at zero. Then the linearization of $V$ at $\alpha$ is the map
  $$LV(\alpha)=\varphi\circ \pi_{fib}\circ DV(\alpha):T_{\alpha}M\to T_{\alpha}M.$$
We think of this as a linear vector field on $T_{\alpha}M$. 

For any local smooth coordinate chart $(q_1,\ldots,q_m)$ on $M$ near $\alpha$, let $V=\sum_{i=1}^m v_i\frac{\partial}{\partial q_i}$ for some smooth functions $v_i$. Then the linearization is given by
  \begin{equation}
  LV(\alpha)\left(\sum_{i=1}^m a_i\frac{\partial}{\partial q_i}\right)=\sum_{i,j=1}^m a_j\frac{\partial v_i}{\partial q_j}(\alpha) \frac{\partial}{\partial q_i},
  \label{eq:linearization-coord}
  \end{equation}
or, abusing notation to interpret $LV(\alpha)$ as a section of $T\R^m_q$ via the map $a_i\mapsto q_i$,
   \begin{equation}
  LV(\alpha)(q)=\sum_{i,j=1}^m \frac{\partial v_i}{\partial q_j}(\alpha) q_j\frac{\partial}{\partial q_i}.
  \end{equation}

\begin{remark}
There is also a dual perspective on the linearization which we now recall. Let $L'V(\alpha):T_{\alpha}^*M\to T_{\alpha}^*M$ denote the adjoint of $LV(\alpha)$. We can make use of the identification $T^*_{\alpha}M\simeq \mathcal{I}_{\alpha}/\mathcal{I}^2_{\alpha}$, where $\mathcal{I}_{\alpha}=\{f\in C^{\infty}(M)\ |\ f(\alpha)=0\}$, via $[f]\mapsto df$. Then in coordinates as above, we have
$$
L'V(\alpha)([f])\left(\sum_{j=1}^m a_j\frac{\partial}{\partial q_j}\right)
=
\sum_{i=1}^m \frac{\partial f}{\partial q_i}(\alpha)\ dq_i \left(\sum_{j,k=1}^m a_k\frac{\partial v_j}{\partial q_k}(\alpha)\frac{\partial}{\partial q_j}\right)
=
\sum_{i,k=1}^m a_k\frac{\partial f}{\partial q_i}(\alpha)\frac{\partial v_i}{\partial q_k}(\alpha)
=
$$
$$
=
\left(\sum_{i,k=1}^m \frac{\partial f}{\partial q_i}(\alpha)\frac{\partial v_i}{\partial q_k}(\alpha) dq_k\right)\left(\sum_{j=1}^m a_j\frac{\partial}{\partial q_j}\right)
=
\left(\sum_{k=1}^m 
\frac{\partial (Vf)}{\partial q_k}(\alpha)\right)
\left(\sum_{j=1}^m a_j\frac{\partial}{\partial q_j}\right)
=
[Vf]\left(\sum_{j=1}^m a_j\frac{\partial}{\partial q_j}\right),
$$
where we used the fact that $v_i(\alpha)=0$. Thus, $L'V(\alpha)[f]=[Vf]$.

Note that, for coordinates as above, the vector $\frac{\partial}{\partial q_i}$ is an eigenvector of $LV(\alpha)$ with eigenvalue $\lambda$ if and only if $\frac{\partial v_j}{\partial q_i}(\alpha)=\lambda\delta_{ij}$ for all $j$. On the dual side, the one-form $dq_i$ is an eigenvector of $L'V(\alpha)$ with eigenvalue $\lambda$ if and only if $\frac{\partial v_i}{\partial q_j}(\alpha)=\lambda\delta_{ij}$ for all $j$. Then we see that the set $\frac{\partial}{\partial q_1},\ldots,\frac{\partial}{\partial q_m}$ is an eigenbasis for $LV(\alpha)$ if and only if the set $dq_1,\ldots,dq_m$ is an eigenbasis for $L'V(\alpha)$, and the eigenvalues match. The following results can be stated and proven using either of these pictures; we will use the tangent-space version.
\end{remark}

\begin{remark}
The fact that we are dealing with non-classical symbols and correspondingly $\Hp$ is not smooth on $\PP$ forces us to think about its linearization in a somewhat awkward way, treating the dynamics within any corner $\Gamma$ (where $\Hp|_{\Gamma}$ \textit{is} smooth) and transverse to $\Gamma$ differently: we only ever formally define the linearization of $\Hp|_{\Gamma}$ rather than $\Hp$. One way to sidestep this would be to modify the smooth structure on $\PP$ by adjoining $\sqrt[N]{\rho_i}$ for every defining function $\rho_i$ of the faces forming $\Gamma$ and large enough $N$, which would ensure that symbols in $\soe(\PP)$ have some finite number of continuous derivatives. Since the definitions of linearization above make sense as long as the vector field is $C^1$, this would allow us to talk about the linearization of $\Hp$ without restricting to $\Gamma$, though it would only act on the tangent space at $\alpha$ defined using the modified smooth structure. 

While approaches using finite differentiability can be convenient, essentially because they allow one to use Taylor's theorem, it is difficult to integrate them seamlessly into symbolic constructions, where it is \textit{conormal} regularity which is relevant and which therefore needs to be checked for any terms arising from Taylor expansions. See the proof of Proposition~\ref{thm:Hp-coord}, where we use the change of smooth structure described above.

Another approach is to consider the ``smooth part" of $\Hp$ near $\Gamma$, that is any smooth extension of $\Hp|_{\Gamma}$ to $\PP$, similarly to the proof of Theorem~\ref{thm:pos}; one can check that the linearization of any such extension will have the same eigenvalues, but the eigenvectors transverse to $\Gamma$ will depend on the choice of extension and carry no particular meaning, since the bicharacteristics limiting to $\Gamma$ need not have well-defined tangent directions as they approach the corner. Below, we instead define the ``eigenvalues" for directions transverse to $\Gamma$ as the corresponding coefficients of $\Hp(\alpha)$ as a b-vector, which is an invariant description of the same objects.
\end{remark}

Let $N\subset M$ be a smooth submanifold containing $\alpha$ to which $V$ is tangent. We can always choose coordinates $q_1,\ldots,q_m$ in a neighborhood of $\alpha$ such that in this neighborhood $N$ is defined by $q_1=\ldots=q_k=0$. Then in these coordinates,
$$V=\sum_{i,j=1}^k a_{ij}q_j\frac{\partial}{\partial q_i}+\sum_{i=k+1}^m b_i \frac{\partial}{\partial q_i}$$
for some smooth $a_{ij},b_i$, while $T_{\alpha}N$ is the span of $\frac{\partial}{\partial q_{k+1}},\ldots,\frac{\partial}{\partial q_m}$. Then direct computation using Eq.~(\ref{eq:linearization-coord}) shows that $LV(\alpha)(T_{\alpha}N)\subset T_{\alpha}N$. Thus, the tangent space to an invariant manifold of $V$ containing $\alpha$ is itself invariant under the action of the linearization. In particular, this allows us to consider the restriction of $L(\Hp|_{\Gamma})(\alpha)$ to $T_{\alpha}(\Sigma\cap\Gamma)$.

\subsubsection{Statement of theorem and idea of proof}
\label{sec:rp-statement}
Recall that whether a point $\alpha\in\partial\PP$ is a critical point of $\Hp$ is independent of the choice of boundary-defining functions for rescaling of $H_p$ into $\Hp$.

\begin{theorem}
\label{thm:localized-rp-main}
Let $P\in\ps[m]$ with real-valued principal symbol $p\in S^{\mathsf{m}}_{\epsilon}(\PP)$. We denote by $p_1$ the principal symbol of $-i(P-P^*)\in\ps[m-1]$.

Let $\alpha\in\Sigma$ be a critical point of $\Hp$. Let $\Gamma_1,\ldots,\Gamma_k$ be the boundary hypersurfaces of $\mathcal{P}$ containing $\alpha$; we denote $\Gamma=\bigcap_{i=1}^k\Gamma_i$. Let $\rho_1,\ldots,\rho_k$ be any defining functions of $\Gamma_1,\ldots,\Gamma_k$. We use multi-index notation with $\rho=(\rho_1,\ldots,\rho_k)$ and write $\tilde{p}=\rho^{\mathsf{m}}p$, $\tilde{p}_1=\rho^{\mathsf{m-1}}p_1$, and $\Hp=\rho^{\mathsf{m-1}}H_p$. Let
 $$
\hat{p}_1 = \limsup_{\alpha'\to\alpha} \tilde{p}_1(\alpha'),
\hspace{30pt}
\check{p}_1 = \liminf_{\alpha'\to\alpha} \tilde{p}_1(\alpha').
 $$

For $i=1,\ldots,k$, let $\lambda_i=(\rho_i^{-1}\ d\rho_i)(\Hp(\alpha))$ (understood as a b-1-form and a b-vector respectively). We order the faces so that there exists some $k_{\parallel}\leqslant k$ such that $\lambda_i=0$ if and only if $i>k_{\parallel}$.

 We assume that
\begin{itemize}
	\item There exist symbols $\check{q}_1,\ldots,\check{q}_l\in \soe(\PP)$ with $d(\tilde{p}|_{\Gamma})(\alpha),d(\check{q}_1|_{\Gamma})(\alpha),\ldots,d(\check{q}_l)|_{\Gamma})(\alpha)$ linearly independent such that in some neighborhood of $\alpha$ in $\Sigma$, the critical set of $\Hp$ is 
	$$R=\{\tilde{p}=0,\ \rho_1=\ldots=\rho_{k_{\parallel}}=0,\ \check{q}_1=\ldots=\check{q}_l=0\}.$$
	\item The restriction of $L(\Hp|_{\Gamma})(\alpha)$ to $T_{\alpha}(\Sigma\cap\Gamma)$ has an eigenbasis with real eigenvalues. The eigenspace of zero is $T_{\alpha}(R\cap\Gamma)$.
	\end{itemize}	
For $\tau$ in a bounded subset of $\R$, either all non-negative or all non-positive, consider a family $u_{\tau}$ bounded in $\sob{-N}$ such that $\alpha\notin \wf{s-m+1}(\{(P-i\tau)u_{\tau}\})$ for some sets of orders $\mathsf{s}$, $\mathsf{N}$ and $\alpha\notin \wf{s'}(\{u_{\tau}\})$ for some $\mathsf{s'}\leqslant\mathsf{s}$.
	\begin{enumerate}
	\item Assume $\tau\geqslant 0$, $\sum_{i=1}^k (-2s_i+m_i-1)\lambda_i + \hat{p}_1 < 0$, and $\sum_{i=1}^k (-2s'_i+m_i-1)\lambda_i + \hat{p}_1 < 0$. Then if $\alpha$ has a punctured neighborhood in which $\wf{s}(\{u_{\tau}\})$ is disjoint from every bicharacteristic limiting to $\alpha$ in the forward direction, then $\alpha\notin \wf{s}(\{u_{\tau}\})$.
	\item Assume $\tau\leqslant 0$, $\sum_{i=1}^k (-2s_i+m_i-1)\lambda_i  + \check{p}_1 > 0$ and $\sum_{i=1}^k (-2s'_i+m_i-1)\lambda_i  + \check{p}_1 > 0$. Then if $\alpha$ has a punctured neighborhood in which $\wf{s}(\{u_{\tau}\})$ is disjoint from every bicharacteristic limiting to $\alpha$ in the backward direction, then $\alpha\notin \wf{s}(\{u_{\tau}\})$.
	\end{enumerate}
\end{theorem}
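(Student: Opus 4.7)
The plan is a positive-commutator argument of Haber-Vasy type, mirroring the regularization, quantization, and iteration scheme used in the proof of Theorem~\ref{thm:pos}, but with a commutant adapted to the radial-point structure. We focus on case (1); case (2) is analogous with signs reversed.

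First, using the linearization hypothesis and the explicit description of $R$, fix coordinates near $\alpha$: smooth coordinates $(y_1,\ldots,y_n)$ on $\Sigma\cap\Gamma$ transverse to $R\cap\Gamma$ diagonalizing $L(\Hp|_\Gamma)(\alpha)$ with nonzero real eigenvalues $\mu_j$, and complementary coordinates $(w_1,\ldots,w_l)$ along $R\cap\Gamma$ for which the $\check{q}_j$ are the dual conormal covectors. In these coordinates $\Hp|_\Gamma = \sum_j \mu_j y_j\partial_{y_j} + O(|y|^2+|w|)$, and in the normal directions $\Hp\rho_i = \lambda_i\rho_i$ modulo $\mathcal{I}_\Gamma^\epsilon\Vb(\PP)$. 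Following the template of the commutant in Theorem~\ref{thm:pos}, set $a=\check{a}^2$ with
\begin{equation*}
\check{a}
=
\rho^{\mathsf{-s+(m-1)/2}}\,\phi_R\!\Bigl(\tilde{p}^2+{\textstyle\sum_j}\check{q}_j^2\Bigr)\,
\phi_\Gamma\!\Bigl({\textstyle\sum_j}\,\mathrm{sgn}(\mu_j)\,y_j^2\Bigr)\,
\phi_N\!\Bigl({\textstyle\sum_{i\leqslant k_{\parallel}}}\,\mathrm{sgn}(\lambda_i)\,\tilde{\rho}_i^2\Bigr),
\end{equation*}
where $\tilde\rho_i=\rho_i^{1/N}$ with $N>2/\epsilon$ (to suppress non-classical errors exactly as in Theorem~\ref{thm:pos}), and $\phi_R,\phi_\Gamma,\phi_N$ are non-negative bumps arranged so that each argument is non-decreasing along $H_p$ near $\alpha$ by construction.

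A direct computation then gives
\begin{equation*}
\frac{H_p a + p_1 a}{a}
=
\rho^{\mathsf{-m+1}}\Bigl(
{\textstyle\sum_i}(-2s_i+m_i-1)\lambda_i + \tilde{p}_1 + [\text{cutoff derivative terms}]
\Bigr),
\end{equation*}
whose leading part is strictly negative in a small enough neighborhood of $\alpha$ by the threshold hypothesis combined with $\tilde{p}_1\leqslant \hat{p}_1+\varepsilon$ there. The $\phi_R$-derivative vanishes to leading order at $\alpha$ because $\Hp(\alpha)=0$ and $H_p\tilde{p}=0$ on $\Sigma$, so it is dominated by the main term once the support is small. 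The $\phi_\Gamma$ and $\phi_N$ derivatives split into an ``outgoing'' piece (eigen-/escape directions along which $H_p$ pushes outward from $\alpha$) that contributes with the same sign as the threshold, and an ``incoming'' piece supported in the punctured neighborhood where $u_\tau$ is a priori $H^{\mathsf{s}}$-regular by the forward-limiting bicharacteristic hypothesis. This yields the operator identity $i[P,A_t]-i(P-P^*)A_t + \delta(\Lambda\check{A}_t)^*(\Lambda\check{A}_t) = -B_t^*B_t + E_t - F_t^*F_t + R_t$, with $B_t$ elliptic at $\alpha$ of order $\mathsf{s}$, $\wfs(\{E_t\})$ in the a priori regular region, and $R_t$ of one order lower and uniformly bounded in the appropriate sense. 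Quantizing with the regularizer family $\chi_t=\rho^{\mathsf{K}}/\prod(t+\rho_i)^{K_i}$ as in Theorem~\ref{thm:pos} and applying Lemma~\ref{thm:comm-lemma} gives, uniformly in $\tau\geqslant 0$ and $t\in(0,1)$,
\begin{equation*}
\|B_t u_\tau\|^2 + 2\tau\|\check{J}_t\check{A}_t u_\tau\|^2
\leqslant
C\bigl(
\|\check{A}_t(P-i\tau)u_\tau\|_{\mathsf{-(m-1)/2}}^2 + \|Q'u_\tau\|_{\mathsf{s}}^2 + (1+|\tau|)\|u_\tau\|_{\mathsf{-N}}^2
\bigr),
\end{equation*}
with $Q'$ elliptic on the a priori regular region; the $\tau$-term on the left is non-negative and is discarded. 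Passing $t\to 0^+$ via Banach-Alaoglu and Lemmas~\ref{thm:uniform-lemma}--\ref{thm:int-by-parts} yields $Bu_\tau\in\LL$ with the claimed uniform norm, hence $\alpha\notin\wf{s}(\{u_\tau\})$. To bridge the initial regularity from $\mathsf{s}'$ up to $\mathsf{s}$, iterate in half-integer steps as in the final paragraph of the proof of Theorem~\ref{thm:pos}; the threshold being affine in $\mathsf{s}$, its validity at both $\mathsf{s}$ and $\mathsf{s}'$ automatically forces it at every intermediate order.

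The main obstacle will be the sign analysis of $\phi_\Gamma$ and $\phi_N$ derivatives in the saddle regime where $\mu_j$ and $\lambda_i$ carry mixed signs: along ``unstable'' directions ($\mu_j>0$ or $\lambda_i>0$) the cutoff derivative is supported on an outgoing cone not immediately covered by the assumed-regular region, and one must verify, using the explicit defining functions $\check{q}_j$ of $R$ together with the forward-limiting bicharacteristic hypothesis, that either the eigenvalue's sign makes the corresponding term combine favorably with the threshold, or that every bicharacteristic reaching the support of the derivative has already traversed the regular region before approaching $\alpha$. A secondary, more routine concern, as in Theorem~\ref{thm:pos}, is preventing the $\soe$-error terms in $\Hp$ and $\tilde{p}_1$ from flipping the sign of the dominant threshold contribution; this is ensured by the small-scale localization $\rho^{1/N}$ with $N>2/\epsilon$.
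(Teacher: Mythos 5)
Your proposal correctly identifies the overall architecture (adapted coordinates via the linearization, a product commutant, the $\tilde\rho_i=\rho_i^{1/N}$ device to suppress $\soe$-errors, the regularizer family $\chi_t$, Lemma~\ref{thm:comm-lemma}, and the half-integer iteration at the end), and your commutator identity and uniform estimate have the right shape. However, the commutant itself has a genuine gap that undermines the \emph{localized} character of the result, which is the main technical point of the theorem.

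Your $\check{a}$ has no factor that localizes in the $w$-variables along $R$: $\phi_R(\tilde p^2+\sum_j\check q_j^2)$ vanishes identically on a full neighborhood of $R$ (since $\tilde p$ and $\check q$ all vanish there), and $\phi_\Gamma$, $\phi_N$ are built out of the transverse coordinates $y_j$, $\tilde\rho_i$, which also vanish on $R$. So the essential support of your commutant is a neighborhood of the entire radial set, not of the single point $\alpha$, and the resulting estimate would require a priori regularity over all of $R$, giving essentially a Melrose-type global radial estimate rather than the localized Haber--Vasy-type statement claimed. The paper's construction addresses this through the extra factor $\phi(\hat w)$ with $\hat w=|w|^2+\sum_iC_i(w)\tilde q_i+\sum_iD_iz_i^2$; the $\mathcal O(|z|)$ correction $C_i(w)\tilde q_i$ makes nearby bicharacteristics exactly tangent to the level sets of $\hat w$ at $R$, and the $\mathcal O(|z|^2)$ correction $D_iz_i^2$ bends the level sets so that all nearby bicharacteristics cross them in one direction. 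Without something equivalent, the derivative of a naive bump in $w$ would produce a term of indefinite sign supported right at $R$, which cannot be absorbed by anything available.

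A secondary issue: writing the transverse localizers as bumps of the indefinite forms $\sum_j\mathrm{sgn}(\mu_j)y_j^2$ and $\sum_{i\leqslant k_\parallel}\mathrm{sgn}(\lambda_i)\tilde\rho_i^2$ does not actually cut off a bounded neighborhood in the saddle case, since the sublevel set of an indefinite quadratic form is a thickened cone, which is unbounded. The paper instead uses the two positive-definite cutoffs $\phi(z_+^2)$ and $\phi(z_-^2)$ (one in the attracting and one in the repelling directions), and the two derivative terms $e_-^2$ and $e_+^2$ then separate cleanly: one is supported where a priori regularity is assumed (on the stable or unstable side depending on propagation direction) and the other has the favorable sign. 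Your $\mathrm{sgn}$ combination merges these and also interacts badly with the missing $w$-localization. So while the positive-commutator skeleton is right, the heart of the proof — constructing a commutant that genuinely localizes at a single radial point $\alpha$ in a set of radial points along which the flow is neutral — is absent, and your acknowledged "main obstacle" (the saddle sign analysis) is not actually the main one.
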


\begin{remark}
If $\lambda_1,\ldots,\lambda_{k_{\parallel}}$ are not all of the same sign (i.e. $\alpha$ is a saddle point for the flow transverse to $\Gamma$), the assumption that $\alpha\notin\wf{s'}(\{u_{\tau}\})$ for some $\mathsf{s'}$ satisfying the inequality in 1 or 2 is automatically satisfied for any family $u_{\tau}$ bounded in some Sobolev space, since there are solutions of either inequality with all orders arbitrarily low. If $\lambda_1,\ldots,\lambda_{k_{\parallel}}$ are all of the same sign (i.e. $\alpha$ is a source or sink for the flow transverse to $\Gamma$), this is true for one inequality (bounding $\sum_{i=1}^k |\lambda_i| s_i$ from above) but not the other (bounding it from below) -- these are the below- and above-threshold estimates respectively. On the other hand, in the above-threshold case the assumption on bicharacteristics limiting to $\alpha$ is often vacuously satisfied since the signs of the $\lambda_i$ rule out the existence of bicharacteristics limiting to $\alpha$ in the relevant direction which are not contained in $\Gamma$.
\end{remark}

\begin{remark}
The condition that the linearization have an eigenbasis with real eigenvalues is satisfied for radial points in many cases of interest (including all radial points which will appear in this paper). It is also satisfied in the general setting previously considered by Haber and Vasy \cite{Haber-Vasy}, who proved a localized radial point estimate at source/sink Lagrangian submanifolds of radial points.
\end{remark}

\begin{remark}
In asymptotically Minkowski spacetimes, including small perturbations of Minkowski spacetime as solutions to the Einstein equations, every radial set is in fact a product-type or neat smooth submanifold of some corner of $\PP$, so the symbols $\check{q}_i$ can be taken to be smooth, which simplifies the proofs. The rescaled symbol $\tilde{p}$, however, is generally not smooth, though the non-smooth error terms in it vanish at all base infinity faces.
\end{remark}

The proof is again based on a positive-commutator argument, where the commutant $a$ now needs to be supported near the point $\alpha$ and $H_pa$ needs to have a definite sign in its neighborhood. If $\alpha$ is a source or sink within $\Sigma$, this is accomplished simply by taking $a$ to be a bump function centered at $\alpha$ in all variables, multiplied by a product of powers of boundary-defining functions. The required growth order of $a$ is fixed by the amount of regularity we wish to conclude, and the inequalities on the orders in radial point estimates arise from requiring compatibility of this order with the monotonicity requirement. Specifically, if we wish to conclude that $u$ has any level of regularity at $\alpha$ above a certain threshold, we can infer it directly from knowledge of the corresponding regularity of $Pu$ as long as we know a priori that $u$ possesses regularity at least slightly above the threshold; if $Pu$ only possesses below-threshold regularity, however, to infer the corresponding regularity at $\alpha$ for $u$ we need to assume regularity in a punctured neighborhood of $\alpha$.
		
	This construction can be generalized in two directions. First, if $\alpha$ is instead a nondegenerate saddle point of the flow, then the same construction works provided we assume regularity of $u$ at the stable/unstable manifold of $\alpha$, depending on the direction of regularity propagation. Second, if $\alpha$ belongs to a larger $\Hp$-invariant submanifold $R\subset\Sigma$ which, as a whole, is a source/sink for the flow near $R$ but may support a nontrivial flow within itself, we can modify the construction to conclude regularity at all of $R$ simultaneously, essentially by separating variables tangent and transverse to $R$ and using a bump function in the transverse variables only. For the below-threshold case, a regularity assumption on all bicharacteristics limiting to $R$ is necessary. Finally, if the flow in the directions transverse to $R$ is instead of hyperbolic type with $R$ having well-defined stable and unstable manifolds (locally in the transverse variables, but globally in $R$), the argument also goes through if we add an assumption on regularity of $u$ on the stable/unstable manifold, depending on the direction of propagation.
	
	When $\alpha$ belongs to an invariant manifold $R$, in general we cannot expect to prove a localized result on regularity at $\alpha$ assuming only regularity \textit{near $\alpha$ but away from $R$}, because singularities can propagate into $\alpha$ along $R$. An important exception is the case where $R$ is known to be a critical set, so the directions along $R$ are neutral for the flow. Microlocal regularity at a point is an open condition, proven using estimates in a finite neighborhood of $\alpha$; since bicharacteristics arbitrarily close to a critical set can flow in any direction, in general this prevents us from ruling out propagation of singularities in neutral directions. However, additional nondegeneracy assumptions on $\Hp$ near $\alpha$ restrict the relative speeds of the flow in the tangential and transverse directions, allowing one to prove regularity at $\alpha$ without any assumptions on regularity in $R$. In \cite{Haber-Vasy}, this argument is carried out for $R$ which is a source/sink Lagrangian manifold of radial points. Here we formulate assumptions more directly in terms of the linearized dynamics near $\alpha$ and allow for saddle-type flow in the transverse directions.
	
	Compared to the estimates global in $R$, we need to modify $a$ by a factor localizing in the neutral variables. The technical challenge is to construct this factor so that all nearby bicharacteristics cross its level sets in the same direction. If the flow was truly neutral in those variables, a simple bump function in them would suffice; instead, under our assumptions, bicharacteristics which hit $R$ any finite distance away from $\alpha$ may do so at a finite (but bounded) angle relative to the ``normal" (in adapted coordinates) and furthermore curve at a finite (but bounded) rate as soon as they leave $R$. We therefore construct a localizer by modifying a bump function of the neutral coordinates by adding dependence on the transverse coordinates $z$: first an $\mathcal{O}(|z|)$ correction to ensure that at $R$ all bicharacteristics are exactly tangent to the level sets, then an $\mathcal{O}(|z|^2)$ correction to ensure that the level sets curve strongly enough away from $R$ to keep all nearby bicharacteristics crossing them in the same direction. See Figure~\ref{fig:localization}. As in the proof of propagation of singularities, we use small powers of boundary-defining functions in the localizers in order to control the large errors which arise unless the error terms in $\tilde{p}$ decay at \textit{every} face containing $\alpha$.
	
	\begin{figure}
	\begin{center}
	\includegraphics[width=0.47\textwidth]{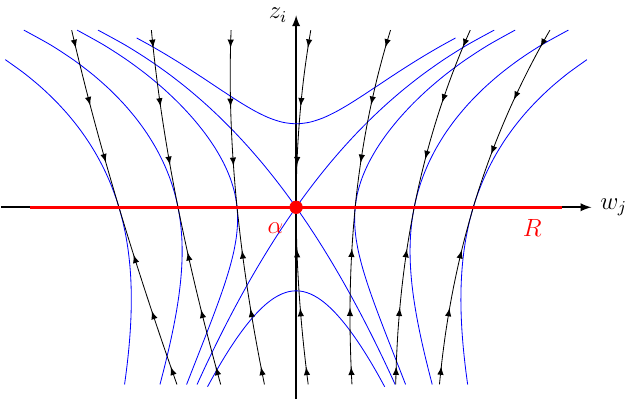}
	\hspace{10pt}
	\includegraphics[width=0.47\textwidth]{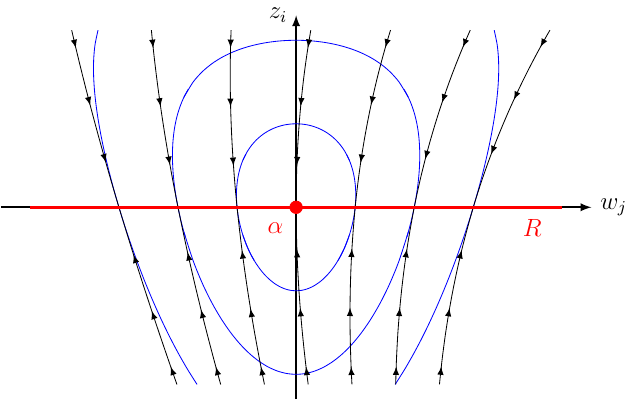}
	\caption{Schematic illustration of Hamilton flow and commutant construction near a radial point $\alpha$ belonging to an extended radial set $R$. In this two-dimensional slice, $z_i$ represents a defining function of $R$ (in this case corresponding to a negative eigenvalue) and $w_j$ a coordinate along $R$. The commutant is localized to a neighborhood of $\alpha$ using a cutoff along $R$ which has level sets like the blue curves in the figure, so all bicharacteristics near $\alpha$ cross them in the same direction, the most relevant region being a neighborhood of an annular subset of $R$ around $\alpha$ (the ``cutoff region"). In the figure on the left, the cutoff is constructed so that bicharacteristics do not enter the cutoff region from the sides; in the figure on the right, it is constructed so they do not exit through the sides. See Figure~\ref{fig:rad-pts-prop} for more details.
	\label{fig:localization}}
	\end{center}
	\end{figure}

	Finally, we note that only the flow within $\Sigma$ is relevant. To emphasize this, we do not make any assumptions on $L\Hp$ away from $T\Sigma$. This means that nearby integral curves off $\Sigma$ are not controlled in the way described above; however, any $\tilde{p}$-dependent terms in $\Hp$ are necessarily $\mathcal{O}(\tilde{p})$ and their contributions to the estimates are directly controlled by regularity assumptions on $Pu$. When proving estimates global in $R$, nontrivially $\tilde{p}$-dependent terms in $\Hp a$ arise only from localizing in $\tilde{p}$ and are actually supported away from $\Sigma$, so they are often treated using elliptic estimates. However, in our case localization along $R$ contributes additional $\tilde{p}$-dependent terms which may only vanish linearly at $\Sigma$. For simplicity and to emphasize the more general situation, we combine these terms into one.
	
	Beyond the symbolic construction just outlined, the quantization and estimates (as well as further technical modifications necessary to regularize the argument and to treat operators with nonzero skew-adjoint part and/or a sign-definite imaginary part) are standard. While we formulate the result in the de,sc-setting, we emphasize that the argument will apply in any pseudodifferential calculus which is symbolic at every boundary face containing the radial point of interest.

\subsubsection{Proof of Theorem~\ref{thm:localized-rp-main}}
We first reduce the problem to a model form by choosing coordinates adapted to the flow.

\begin{lemma}
\label{thm:coordinates}
Assume the hypotheses of Theorem~\ref{thm:localized-rp-main}. Let $\tilde{p}_0$ be any smooth function extending $\tilde{p}|_{\Gamma}$. Then there exist smooth functions $q_1,\ldots,q_l$, $y_1,\ldots,y_m$ such that $x=(\tilde{p}_0,\rho_1,\ldots,\rho_k,q_1,\ldots,q_l,y_1,\ldots,y_m)$ is a local coordinate chart on $\PP$ near $\alpha$, $x(\alpha)=0$, and
	\begin{itemize}
	\item the partial derivatives with respect to the $q$ and $y$ coordinates at $\alpha$ are eigenvectors of $L(\Hp|_{\Gamma})(\alpha)$;
	\item there exist extensions $\tilde{q}_1,\ldots,\tilde{q}_l\in \soe(\PP)$ of $q_1|_{\Gamma},\ldots,q_l|_{\Gamma}$ such that in a neighborhood of $\alpha$ in $\PP$, we have $R=\{\tilde{p}=0,\rho_1=\ldots=\rho_{k_{\parallel}}=0,\tilde{q}_1=\ldots=\tilde{q}_l=0\}$.
	\end{itemize}
      \end{lemma}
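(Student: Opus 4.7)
The plan is to reduce the problem to linear algebra on $T_\alpha(\Sigma\cap\Gamma)$ and then lift the result to $\PP$ using $\rho_1,\dots,\rho_k$ together with $\tilde{p}_0$. Since $\Hp|_\Gamma$ is smooth on $\Gamma^\circ$ and $\Sigma\cap\Gamma$ is $\Hp$-invariant (because $\Hp\tilde{p}$ vanishes on $\tilde{\Sigma}$), the endomorphism $L(\Hp|_\Gamma)(\alpha)$ preserves $T_\alpha(\Sigma\cap\Gamma)$; by hypothesis its restriction there is diagonalizable over $\R$ with zero eigenspace $V_0 := T_\alpha(R\cap\Gamma)$ and complementary nonzero eigenspace $V_{\neq 0}$ of dimension $l$. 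Dualizing, the adjoint $L'$ on $T_\alpha^*(\Sigma\cap\Gamma)$ is diagonalizable with the same eigenvalues, zero eigenspace $\mathrm{Ann}(V_{\neq 0})$, and nonzero eigenspace $\mathrm{Ann}(V_0)$.

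The key observation for constructing the $q_i$ is that the restrictions $d\check{q}_j|_\Gamma(\alpha)|_{T_\alpha(\Sigma\cap\Gamma)}$ are linearly independent (otherwise some combination $\sum a_j d\check{q}_j|_\Gamma(\alpha)$ would vanish on $\ker(d(\tilde{p}|_\Gamma)(\alpha)) = T_\alpha(\Sigma\cap\Gamma)$, forcing it to be a scalar multiple of $d(\tilde{p}|_\Gamma)(\alpha)$, which contradicts the assumed linear independence of $d(\tilde{p}|_\Gamma)(\alpha), d(\check{q}_1|_\Gamma)(\alpha),\dots,d(\check{q}_l|_\Gamma)(\alpha)$) and their common kernel in $T_\alpha(\Sigma\cap\Gamma)$ is precisely $V_0$, so they span $\mathrm{Ann}(V_0)$. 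I would then fix a constant invertible matrix $(c_{ij})\in\mathrm{GL}_l(\R)$ such that the covectors $\beta_i := \sum_j c_{ij}\,d\check{q}_j|_\Gamma(\alpha)|_{T_\alpha(\Sigma\cap\Gamma)}$ form an eigenbasis of $L'|_{\mathrm{Ann}(V_0)}$, and set $\tilde{q}_i := \sum_j c_{ij}\check{q}_j \in \soe(\PP)$, $q_i := \tilde{q}_i|_\Gamma$. For the $y_j$, I would pick any basis $\gamma_1,\dots,\gamma_m$ of $\mathrm{Ann}(V_{\neq 0}) \subset T_\alpha^*(\Sigma\cap\Gamma)$ (automatically an eigenbasis since $L'$ vanishes on this subspace), lift each $\gamma_j$ arbitrarily to an element of $T_\alpha^*\Gamma$, realize these lifts as $dy_j(\alpha)$ for smooth functions $y_j$ on $\Gamma$ vanishing at $\alpha$, and extend $y_j$ smoothly to a neighborhood of $\alpha$ in $\PP$ using any local product decomposition.

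Verification of the three claims is then direct. The chart property follows by successive restriction: the $d\rho_i(\alpha)$ are independent and annihilate $T_\alpha\Gamma$; the remaining differentials restrict to $T_\alpha\Gamma$ as $d(\tilde{p}|_\Gamma)(\alpha), dq_i(\alpha), dy_j(\alpha)$, and a further restriction to $T_\alpha(\Sigma\cap\Gamma)$ kills the first while the others are the basis $\{\beta_i,\gamma_j\}$ by construction, with $d(\tilde{p}|_\Gamma)(\alpha)\neq 0$ completing the span of $T_\alpha^*\Gamma$. The eigenvector property is automatic from duality: in this chart $\partial/\partial q_i|_\alpha$ and $\partial/\partial y_j|_\alpha$ annihilate $d\rho_k$ and $d\tilde{p}_0$, hence lie in $T_\alpha(\Sigma\cap\Gamma)$, and they are dual to $\{\beta_i,\gamma_j\}$ there, so they form an $L$-eigenbasis with the same eigenvalues as the $L'$-eigenbasis $\{\beta_i,\gamma_j\}$. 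The radial set description is preserved because $(c_{ij})$ is a constant invertible matrix, so $\{\tilde{q}_1=\dots=\tilde{q}_l=0\}=\{\check{q}_1=\dots=\check{q}_l=0\}$ identically in $\PP$, not merely on $\Gamma$.

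The only point requiring care — not a serious obstacle — is preventing the non-smoothness of $\tilde{p}$ and the $\check{q}_j$ away from $\Gamma$ from intruding into the argument. This is sidestepped by the fact that the construction only uses the smooth restrictions $\tilde{p}|_\Gamma$ and $\check{q}_j|_\Gamma$ together with constant linear recombinations, which automatically stay in $\soe(\PP)$; the arbitrarily chosen smooth extension $\tilde{p}_0$ of $\tilde{p}|_\Gamma$ is needed only as a coordinate function whose differential at $\alpha$ happens to agree with $d(\tilde{p}|_\Gamma)(\alpha)$ on $T_\alpha\Gamma$ and may be freely chosen in the transverse directions.
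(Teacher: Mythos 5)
Your argument is essentially correct and reaches the same endpoint as the paper's proof: in both cases the $\tilde{q}_i$ turn out to be constant invertible linear recombinations of the $\check{q}_j$ (the paper's $\tilde{q}=q+B^{-1}q'$ simplifies to $\tilde{q}=B^{-1}\check{q}$), so that the radial-set identity is automatic. The difference is one of bookkeeping: the paper fixes auxiliary smooth extensions $\hat{q}_i$ first, builds an eigenbasis $\{v_i\}$ of the linearization on $T_\alpha(\Sigma\cap\Gamma)$ in the tangent picture, and then recovers the coordinate change as a block-triangular matrix; you dualize from the outset, observe that the $d\check{q}_j|_\Gamma(\alpha)$ restricted to $T_\alpha(\Sigma\cap\Gamma)$ span $\mathrm{Ann}(V_0)$, and pick the constant matrix $(c_{ij})$ to diagonalize $L'$ there, with the $y_j$ chosen dually in $\mathrm{Ann}(V_{\neq 0})$. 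This is a perfectly serviceable reorganization; the dual route slightly shortens the verification that the new coordinate frame is an eigenframe, since it becomes the dual-basis/adjoint-diagonalization fact that the paper records as a remark after defining the linearization.

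There is one genuine, if minor, omission: you define $q_i := \tilde{q}_i|_\Gamma$, which is a function on $\Gamma$ only, and then use $(\tilde{p}_0,\rho,q,y)$ as a coordinate chart on a neighborhood of $\alpha$ in $\PP$. For the chart to make sense the $q_i$ must be \emph{smooth functions on $\PP$}; you should say explicitly that each $q_i$ is any smooth extension of $\tilde{q}_i|_\Gamma$ to a full neighborhood, just as you do for the $y_j$. Once that is said, the rest of your verification is airtight: the chart property depends only on the restrictions of the differentials to $T_\alpha\Gamma$, which are extension-independent; the coordinate vectors $\partial/\partial q_i|_\alpha,\partial/\partial y_j|_\alpha$ lie in $T_\alpha(\Sigma\cap\Gamma)$ and are dual to $\{\beta_i,\gamma_j\}$ there, so they are an $L$-eigenbasis; and $\tilde{q}_i=\sum_j c_{ij}\check{q}_j\in\soe(\PP)$ is then an honest extension of $q_i|_\Gamma$ whose joint zero set with $\tilde p$ and $\rho_1,\dots,\rho_{k_\parallel}$ reproduces $R$ near $\alpha$.
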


      \begin{remark} If the $\check{q}_i$ are all smooth, as for instance in the asymptotically Minkowski setting, then the $\tilde{q}_i$ are such as well, as follows from the proof below. However, $\tilde p$ need not be smooth even in this case. \end{remark}
	
	\begin{proof}
	As one of the hypotheses of Theorem~\ref{thm:localized-rp-main}, there exist $\check{q}_1,\ldots,\check{q}_l\in \soe(\PP)$ with $d(\tilde{p}|_{\Gamma})(\alpha)$, $d(\check{q}_1|_{\Gamma})(\alpha)$,..., $d(\check{q}_l|_{\Gamma})(\alpha)$ linearly independent such that in some neighborhood of $\alpha$ in $\Sigma$, the critical set of $\Hp$ is 
	$$R=\{\tilde{p}=0,\ \rho_1=\ldots=\rho_{k_{\parallel}}=0,\ \check{q}_1=\ldots=\check{q}_l=0\}.$$
	We start by choosing any smooth extensions $\hat{q}_1,\ldots,\hat{q}_l$ of $\check{q}_1|_{\Gamma},\ldots,\check{q}_l|_{\Gamma}$ and any smooth $\hat{y}_1,\ldots,\hat{y}_l$ extending a local coordinate chart on $R\cap\Gamma$ (which is a smooth submanifold of $\Sigma\cap\Gamma$) such that $\hat{y}(\alpha)=0$. Then $(\tilde{p}_0,\rho,\hat{q},\hat{y})$ is a local coordinate chart near $\alpha$ on $\PP$.
	
	$T_{\alpha}(R\cap\Gamma)$ is spanned by the $\frac{\partial}{\partial\hat{y}_i}$, which are automatically eigenvectors with eigenvalue zero. Let us extend this to an eigenbasis $\left(v_1,\ldots,v_l,\frac{\partial}{\partial \hat{y}_1},\ldots, \frac{\partial}{\partial \hat{y}_m}\right)$ of $T_{\alpha}(\Sigma\cap\Gamma)$, which is spanned by $\left(\frac{\partial}{\partial \hat{q}_1},\ldots,\frac{\partial}{\partial\hat{q}_l},\frac{\partial}{\partial\hat{y}_1},\ldots,\frac{\partial}{\partial\hat{y}_m}\right)$. The linear transformation $A$ on $T_{\alpha}(\Sigma\cap \Gamma)$ defined by $A\left(\frac{\partial}{\partial \hat{q}_i}\right)=v_i$ and $A\left(\frac{\partial}{\partial \hat{y}_i}\right)=\frac{\partial}{\partial \hat{y}_i}$ is invertible. In the coordinate basis, its matrix $[A]$ and its inverse have the form
	$$
	[A]
	=
	\begin{pmatrix}
	B & 0_{l\times m} \\
	C & I_{m\times m}
	\end{pmatrix},
	\hspace{30pt}
	[A]^{-1}
	=
	\begin{pmatrix}
	B^{-1} & 0_{l\times m} \\
	-CB^{-1} & I_{m\times m}
	\end{pmatrix}.
	$$
Then we define $(q,y)=[A]^{-1}(\hat{q},\hat{y})$, so
$$
\hat{q} = Bq,
\hspace{30pt}
\hat{y} = Cq + y.
$$
 Then in the new coordinates $(\tilde{p}_0,\rho,q,y)$, we have $\frac{\partial}{\partial q_i}=v_i$ and $\frac{\partial}{\partial y_i}=\frac{\partial}{\partial\hat{y}_i}$, so the first requirement of the lemma is satisfied.

To check the second requirement, note that since $\check{q}_i-\hat{q}_i$ vanishes on $\Gamma$, there exist $q_1',\ldots,q_l'\in \mathcal{I}_{\Gamma}^{\epsilon}(\PP)$ such that
$$
\check{q}
 = 
 \hat{q} + q'
  =
   Bq + q'
   =B(q+B^{-1}q'),$$
so $\check{q}=0$ if and only if $ q+B^{-1}q'=0$. Therefore we can take $\tilde{q}=q+B^{-1}q'$, which is in $\soe(\PP)$, to be the symbol defining $R$.
	\end{proof}

One would like to use $(\tilde{p},\rho,\tilde{q},y)$ as coordinates to analyze the Hamilton flow because the characteristic set and radial set are simply described in terms of these symbols in a full neighborhood of $\alpha$ in $\PP$, not just in $\Gamma$. Unfortunately, unlike $(\tilde{p}_0,\rho,q,y)$, these symbols need not extend smoothly to the boundary, so it is not immediately clear that this approach is viable. The following proposition shows that it is, using the smooth coordinates $(\tilde{p}_0,\rho,q,y)$ as an auxiliary tool.

\begin{prop}
\label{thm:Hp-coord}
Assume the hypotheses of Theorem~\ref{thm:localized-rp-main}, and consider coordinates $(\tilde{p}_0,\rho,q,y)$ and symbols $\tilde{q}$ as in Lemma~\ref{thm:coordinates}. Let $\lambda_i = (\rho_i^{-1}\ d\rho_i)(\Hp(\alpha))$ as before, and let $\mu_i$ denote the eigenvalue of $\frac{\partial}{\partial q_i}$ in the context of Lemma~\ref{thm:coordinates}. Let $\tilde{\rho}_i=\sqrt[N]{\rho_i}$ for some natural number $N>\frac{2}{\epsilon}$. Let $z=(\tilde{\rho}_1,\ldots,\tilde{\rho}_{k_{\parallel}},\tilde{q}_1,\ldots,\tilde{q}_l)$ and $w=(\tilde{\rho}_{k_{\parallel}+1},\ldots,\tilde{\rho}_k,y_1,\ldots,y_m)$. 

There exists a neighborhood $U$ of $\alpha$ in $\PP$ such that $(\tilde{p},\rho,\tilde{q},y)$ is a smooth coordinate chart on $U\backslash \partial\PP$. The rescaled Hamilton vector field on $U\backslash \partial\PP$ has the form
\begin{equation}
\begin{split}
\Hp
=
\tilde{p}V
+\sum_{i=1}^{k_{\parallel}} \left(
h_{\rho_i}^{(R)}
+ \sum_{j=1}^{k_{\parallel}+l} h_{\rho_i}^{(z_j)}z_j
\right)\rho_i\frac{\partial}{\partial\rho_i}
+
\sum_{i=k_{\parallel}+1}^k \left( 
\sum_{j=1}^l h_{\rho_i}^{(\tilde{q}_j)}\tilde{q}_j
+ \sum_{j,r=1}^{k_{\parallel}+l} h_{\rho_i}^{(z_j,z_r)}z_jz_r
 \right) \rho_i\frac{\partial}{\partial\rho_i}
+
\\
+ \sum_{i=1}^l \left( 
\sum_{j=1}^l h_{\tilde{q}_i}^{(\tilde{q}_j)}\tilde{q}_j
+ \sum_{j,r=1}^{k_{\parallel}+l} h_{\tilde{q}_i}^{(z_j,z_r)}z_jz_r
 \right) \frac{\partial}{\partial \tilde{q}_i}
+
\sum_{i=1}^m \left( 
\sum_{j=1}^l h_{y_i}^{(\tilde{q}_j)}\tilde{q}_j
+ \sum_{j,r=1}^{k_{\parallel}+l} h_{y_i}^{(z_j,z_r)}z_jz_r
 \right) \frac{\partial}{\partial y_i},
\end{split}
\label{eq:Hp-coord}
\end{equation}
where $V\in S^{\mathsf{0}}\Vb(\PP)$ and
\begin{itemize}
\item  the leading terms $h_{\rho_i}^{(R)}$, $h_{\rho_i}^{(\tilde{q}_j)}$, $h_{\tilde{q}_i}^{(\tilde{q}_j)}$, $h_{y_i}^{(\tilde{q}_j)}$ are functions only of $w$, i.e. they are independent of $\tilde{p}$ and $z$, and belong to $\soe(\PP)$; 
\item the error terms $h_{\rho_i}^{(z_j)}$, $h_{\rho_i}^{(z_j,z_r)}$, $h_{\tilde{q}_i}^{(z_j,z_r)}$, $h_{y_i}^{(z_j,z_r)}$ are functions of $z$ and $w$ only, i.e. they are independent of $\tilde{p}$, and belong to $\sym{0}$.
\end{itemize}
In addition,
\begin{equation}
h_{\rho_i}^{(R)}(0)=\lambda_i,
\hspace{45pt}
h_{\tilde{q}_i}^{(\tilde{q}_j)}(0)=\mu_i\delta_{ij},
\hspace{45pt}
h_{y_i}^{(\tilde{q}_j)}(0)=0.
\label{eq:Hp-leading-terms}
\end{equation}

\end{prop}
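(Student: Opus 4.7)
The plan has two components: verify that $(\tilde{p}, \rho, \tilde{q}, y)$ is a smooth chart on $U \setminus \partial\PP$, and then compute $\Hp$ in these coordinates by separating a smooth/classical contribution from a non-classical error and exploiting both the critical-set structure and the eigenvalue hypothesis fixed by Lemma~\ref{thm:coordinates}.

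For the chart, I would first note that on $\PP^{\circ}$ every element of $\soe(\PP)$ is smooth, so $\tilde{p}$ and the $\tilde{q}_i$ are smooth on $U \setminus \partial\PP$; linear independence of the differentials at $\alpha$ follows from the $d\rho_i$ being independent and spanning a complement to $T_\alpha^*\Gamma$ in $T_\alpha^*\PP$, together with the facts that $d(\tilde{p}|_\Gamma)(\alpha) = d(\tilde{p}_0|_\Gamma)(\alpha)$ and $d(\tilde{q}_i|_\Gamma)(\alpha) = d(\check{q}_i|_\Gamma)(\alpha)$ (since $\tilde{p} - \tilde{p}_0, \tilde{q}_i - \check{q}_i \in \mathcal{I}_\Gamma^\epsilon(\PP)$ vanish on $\Gamma$), and these together with the $dy_j|_\Gamma$ being linearly independent in $T_\alpha^*\Gamma$ by Lemma~\ref{thm:coordinates}. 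Shrinking $U$ delivers the chart.

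To compute $\Hp$, I would split $\tilde{p} = \tilde{p}_0 + r$ and $\tilde{q}_i = q_i + s_i$ with $r, s_i \in \mathcal{I}_\Gamma^\epsilon(\PP)$, and write $\Hp = \Hp^{(sm)} + \Hp^{(err)}$, where $\Hp^{(sm)} := \rho^{\mathsf{m-1}} H_{p_0}$ is a smooth b-vector field on $\PP$ near $\alpha$ while $\Hp^{(err)} := \rho^{\mathsf{m-1}} H_{p - p_0}$ has coefficients in $\mathcal{I}_\Gamma^\epsilon(\PP)$ (obtained by applying the coordinate formulae (\ref{eq:Hp-coordinates-general-sc})--(\ref{eq:Hp-coordinates-general}) to $p - p_0 = \rho^{\mathsf{-m}} r$). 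In the smooth coordinates $(\tilde{p}_0, \rho, q, y)$, write
\[
\Hp^{(sm)} = A\, \partial_{\tilde{p}_0} + \sum_{i=1}^k B_i\, \rho_i \partial_{\rho_i} + \sum_{i=1}^l C_i\, \partial_{q_i} + \sum_{i=1}^m D_i\, \partial_{y_i}
\]
with $A, B_i, C_i, D_i$ smooth. Then $H_{p_0} p_0 = 0$ forces $A = \tilde{p}_0 \tilde{A}$ for smooth $\tilde{A}$; evaluation at $\alpha$ gives $B_i(\alpha) = \lambda_i$ and $C_i(\alpha) = D_i(\alpha) = 0$; the vanishing of $\Hp|_\Gamma$ on the smooth submanifold $R \cap \Gamma = \{\tilde{p}_0 = q_1 = \cdots = q_l = 0\}\cap\Gamma$ forces $C_i|_\Gamma$, $D_i|_\Gamma$, and $B_i|_\Gamma$ (for $i > k_{\parallel}$) to vanish there; and the eigenvalue hypothesis of Lemma~\ref{thm:coordinates} pins down the linearizations $\partial_{q_j}C_i(\alpha) = \mu_i\delta_{ij}$ and $\partial_{q_j}D_i(\alpha) = 0$. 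A smooth Taylor expansion in $(\tilde{p}_0, q, \rho_{i\leq k_{\parallel}})$ around $R \cap \Gamma$ then splits each coefficient into a leading piece, depending only on $w = (\tilde{\rho}_{k_{\parallel}+1},\ldots,\tilde{\rho}_k, y_1,\ldots,y_m)$ and at most linear in the $q_j$'s, plus a remainder vanishing quadratically in $(\tilde{p}_0, q, \rho_{i\leq k_{\parallel}})$; since $\rho_j = \tilde{\rho}_j^N$ with $N \geq 2$, each $O(\rho_j)$ contribution for $j \leq k_{\parallel}$ automatically fits the form $\tilde{\rho}_j^2 \cdot \sym{0}$, i.e.\ into the $\sum h^{(z_j,z_r)} z_j z_r$ structure.

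For the error piece, decomposing each $\mathcal{I}_\Gamma^\epsilon(\PP)$ coefficient as $\sum_{\varnothing \neq G'} \rho_{G'}^\epsilon a_{G'}$ with $a_{G'} \in S^{\mathsf{0}}(\PP; G')$, any term with $G' \subset \{k_{\parallel}+1,\ldots,k\}$ depends only on $w$ and can be absorbed into the leading $w$-dependent coefficient (or identically vanishes on $R$ as needed), while any factor $\rho_j^\epsilon$ with $j \leq k_{\parallel}$ rewrites as $\tilde{\rho}_j^{N\epsilon} = \tilde{\rho}_j^2 \cdot \tilde{\rho}_j^{N\epsilon - 2}$ with $\tilde{\rho}_j^{N\epsilon - 2} \in \sym{0}$ thanks to the choice $N > 2/\epsilon$, yielding a $z_j^2 \cdot \sym{0}$ contribution. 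The coordinate change $(\tilde{p}_0, \rho, q, y) \mapsto (\tilde{p}, \rho, \tilde{q}, y)$ has Jacobian differing from the identity by $\mathcal{I}_\Gamma^\epsilon(\PP)$ entries (which at $\alpha$ reduces to the identity), and chain-rule application splits these again into $w$-dependent leading and $\sum z_j z_r$ error parts by the same mechanism; collecting all $\tilde{p}$-proportional contributions into the single $\tilde{p} V$ term yields (\ref{eq:Hp-coord}), while the values at $\alpha$ in (\ref{eq:Hp-leading-terms}) follow from the data $\lambda_i = (\rho_i^{-1}\, d\rho_i)(\Hp(\alpha))$ and the eigenvalue normalization from Lemma~\ref{thm:coordinates}. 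The main technical obstacle is the symbolic bookkeeping required to confirm that each piece truly fits the claimed ``leading depends only on $w$, error is quadratic in $z$ with $\sym{0}$ coefficients'' form rather than the weaker ``vanishes on $R$'' statement; here the choice $N > 2/\epsilon$ is the decisive ingredient that converts $\mathcal{I}_\Gamma^\epsilon$-decay into genuine quadratic vanishing in the fine smooth structure adapted to $\Gamma$.
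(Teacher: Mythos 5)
Your overall architecture — verify the chart via the nonsingularity of the Jacobian at the boundary, split off a smooth part, exploit $H_{p_0}p_0=0$ and the vanishing of the flow on $R\cap\Gamma$, Taylor expand, and invoke $N>2/\epsilon$ to convert $\rho_j^\epsilon$-decay into genuine quadratic vanishing in the $\tilde\rho_j$ variables — captures the right ingredients. But the route is genuinely different from the paper's, and as written it has a concrete gap in the treatment of the error piece.

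The paper does not split $\Hp$ into $\tilde H_{p_0}+\tilde H_{p-p_0}$ and expand the two pieces separately. Instead it first establishes that the frame $(\partial_{\tilde p},\rho_i\partial_{\rho_i},\partial_{\tilde q_j},\partial_{y_r})$ is a $\soe$-regular basis of ${}^{\mathrm{b}}T\PP$ matching the smooth frame on $\Gamma$, then writes $\Hp$ in this frame with coefficients $h_\bullet\in\soe(\PP)$, and only \emph{then} Taylor-expands $h_\bullet$ — not in the original smooth structure but in the modified structure $\PPP$ obtained by adjoining $\sqrt[N]{\rho_i}$, where $h_\bullet\in S^{\mathsf 0}_{N\epsilon}(\PPP)\subset C^2(\PPP)$. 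This one move handles the smooth and non-smooth contributions uniformly: the integral-remainder Taylor expansion produces exactly the $h^{(R)}(w)$, $h^{(\tilde q_j)}(w)$, and $h^{(z_j,z_r)}(z,w)$ terms, and the required membership in $\soe$ resp.\ $\sym{0}$ follows from the explicit derivative computations in the paper's Eq.~(\ref{eq:rho-derivatives}). The remark after Lemma~\ref{thm:coordinates} acknowledges the smooth-extension approach you take as an alternative, so the idea is legitimate — but it still leaves the $\mathcal I_\Gamma^\epsilon$ error to be expanded, and the paper's route does that in one shot.

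The gap in your proposal is the assertion that, in the decomposition $\sum_{G'}\rho_{G'}^\epsilon a_{G'}$ of an $\mathcal I_\Gamma^\epsilon$ coefficient, ``any term with $G'\subset\{\Gamma_{k_\parallel+1},\ldots,\Gamma_k\}$ depends only on $w$.'' That is false: only the prefactor $\rho_{G'}^\epsilon$ depends only on $w$-variables, while $a_{G'}\in S^{\mathsf 0}(\PP;G')$ is a general symbol that depends on $\tilde p_0$, $q$, and $z_{\leq k_\parallel}$ as well. Such a term therefore contributes nontrivially to the $\tilde q$-linear and $z$-quadratic pieces and must itself be Taylor-expanded in $(\tilde p_0,q,\rho_{\leq k_\parallel})$ before it can be sorted into the structure of Eq.~(\ref{eq:Hp-coord}); you never perform this expansion, and you never verify that the resulting pieces are $\tilde p$-independent (as the statement requires for the $h^{(z_j,z_r)}$ coefficients). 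The same issue recurs when you push through the change of basis at the end: the Jacobian entries $\partial R_p/\partial(\cdot)$, $\partial R_j/\partial(\cdot)$ lie in $\mathcal I_\Gamma^\epsilon$ and multiply your already-expanded coefficients, producing products that again need to be re-expanded and re-sorted — ``the same mechanism'' is asserted but not executed. By contrast, doing the basis change \emph{first} and then Taylor-expanding once in $\PPP$ (where everything in play is $C^2$) sidesteps all of this iterated bookkeeping, which is why the paper organizes the proof that way.

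Two smaller points worth flagging. First, in the chart verification you appeal to linear independence of the differentials $d(\tilde p|_\Gamma)(\alpha)$, etc.; the paper instead shows invertibility of the Jacobian of $(\tilde p,\tilde q)$ at fixed $(\rho,y)$ throughout a neighborhood of $\alpha$ using continuity of $\soe$ functions, and then applies the implicit function theorem in the interior — the conclusion is the same but your argument, as phrased, only gives independence at a single boundary point. Second, your claim that ``$C_i|_\Gamma$, $D_i|_\Gamma$, and $B_i|_\Gamma$ (for $i>k_\parallel$) vanish'' on $R\cap\Gamma$ is correct but you should note it only controls $\Hp^{(sm)}$ on $\Gamma$, not on the part of $R$ with $\rho_{k_\parallel+1},\ldots,\rho_k>0$; the cancellation making the full $\partial_{\tilde q_i}$- and $\partial_{y_i}$-coefficients vanish on all of $R$ happens only in the sum $\Hp^{(sm)}+\Hp^{(err)}$, and your sketch does not make that cancellation visible.
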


\begin{proof}
We have
$$
\tilde{p} = \tilde{p}_0 + R_p,
\hspace{30pt}
\tilde{q}_i = q_i + R_i,
\hspace{30pt}
R_p,R_i\in \mathcal{I}_{\Gamma}^{\epsilon}(\PP).
$$
Symbols in $\mathcal{I}_{\Gamma}^{\epsilon}(\PP)$ remain in $\mathcal{I}_{\Gamma}^{\epsilon}(\PP)$ under the action of b-derivatives. Therefore, in $(\tilde{p}_0,\rho,q,y)$ coordinates,
$$
\frac{\partial \tilde{p}}{\partial \tilde{p}_0} = 1 \mod \mathcal{I}_{\Gamma}^{\epsilon}(\PP),
\hspace{30pt}
\frac{\partial \tilde{q}_i}{\partial q_j} = \delta_{ij} \mod \mathcal{I}_{\Gamma}^{\epsilon}(\PP),
\hspace{30pt} 
\frac{\partial\tilde{p}}{\partial q_j},\frac{\partial\tilde{q}_i}{\partial \tilde{p}_0} \in \mathcal{I}_{\Gamma}^{\epsilon}(\PP).
$$
Thus, there exists a neighborhood $U$ of $\alpha$ in $\PP$ on which the matrix of partial derivatives of $\tilde{p},\tilde{q}$ at fixed $\rho,y$ is invertible. Then since $\tilde{p}$ and $\tilde{q}$ are smooth away from the boundary, by the implicit function theorem, $(\tilde{p},\rho,\tilde{q},y)$ is a smooth coordinate chart on $U\backslash\partial\PP$.

Considering the full coordinate transformation, we find
$$
\frac{\partial'}{\partial'\tilde{p}_0}
 = 
 \left(1+\frac{\partial' R_p}{\partial' \tilde{p}_0}\right)\frac{\partial}{\partial\tilde{p}}
 +\sum_{j=1}^l \frac{\partial' R_j}{\partial'\tilde{p}_0}\frac{\partial}{\partial\tilde{q}_j},
\hspace{50pt}
\rho_i\frac{\partial'}{\partial' \rho_i}
=
\rho_i\frac{\partial' R_p}{\partial'\rho_i}\frac{\partial}{\partial\tilde{p}}
+ \rho_i\frac{\partial}{\partial\rho_i}
+ \sum_{j=1}^l \rho_i\frac{\partial' R_j}{\partial'\rho_i}\frac{\partial}{\partial\tilde{q}_j},
$$
$$
\frac{\partial'}{\partial' q_i}
=
\frac{\partial' R_p}{\partial' q_i}\frac{\partial}{\partial\tilde{p}}
+\sum_{j=1}^l \left(\delta_{ij} + \frac{\partial' R_j}{\partial' q_i}\right) \frac{\partial}{\partial \tilde{q}_j},
\hspace{50pt}
\frac{\partial'}{\partial' y_i} = \frac{\partial' R_p}{\partial' y_i}\frac{\partial}{\partial\tilde{p}} + \sum_{j=1}^l \frac{\partial' R_j}{\partial' y_i}\frac{\partial}{\partial\tilde{q}_j}
+\frac{\partial}{\partial y_i},
$$
where here and below in the proof the primed derivatives are with respect to $(\tilde{p}_0,\rho,q,y)$ coordinates and the unprimed ones with respect to $(\tilde{p},\rho,\tilde{q},y)$. We have $\frac{\partial' R_{\bullet}}{\partial'\tilde{p}_0},\frac{\partial 'R_{\bullet}}{\partial' q_i},\frac{\partial' R_{\bullet}}{\partial' y_i} \in \mathcal{I}_{\Gamma}^{\epsilon}(\PP)$ and $\rho_i\frac{\partial' R_{\bullet}}{\partial'\rho_i} \in \mathcal{I}_{\Gamma_i}^{\epsilon}(\PP)$ (due to the fact that zeroth-order classical terms at $\Gamma_i$ stay zeroth-order under the action of $\frac{\partial'}{\partial'\rho_i}$). Then, inverting the transformation, we get
$$
\frac{\partial}{\partial\tilde{p}} = \frac{\partial'}{\partial'\tilde{p}_0}
 \mod \mathcal{I}_{\Gamma}^{\epsilon}\Vb(\PP),
\hspace{50pt}
\rho_i \frac{\partial}{\partial\rho_i} = \rho_i \frac{\partial'}{\partial'\rho_i} \mod \mathcal{I}_{\Gamma_i}^{\epsilon}\Vb(\PP),
$$
$$
\frac{\partial}{\partial\tilde{q}_i} = \frac{\partial'}{\partial' q_i}
 \mod \mathcal{I}_{\Gamma}^{\epsilon}\Vb(\PP), 
\hspace{50pt}
\frac{\partial}{\partial y_i} = \frac{\partial'}{\partial' y_i}
 \mod \mathcal{I}_{\Gamma}^{\epsilon}\Vb(\PP).
$$
These computations tell us that, even though $\tilde{p}$ and $\tilde{q}_i$ do not extend smoothly to the boundary, the smoothly varying basis $(\frac{\partial}{\partial\tilde{p}},\rho_i\frac{\partial}{\partial\rho_i},\frac{\partial}{\partial\tilde{q}_j},\frac{\partial}{\partial y_r})$ of $T(U\backslash\partial\PP)$ extends to a $\soe$-regular basis of ${}^{\mathrm{b}}T\PP|_U$, and moreover at $\Gamma$ it matches the original coordinate basis. Then we can write
\begin{equation}
\Hp
=
h_{\tilde{p}}\frac{\partial}{\partial\tilde{p}}
+\sum_{i=1}^k h_{\rho_i}\rho_i\frac{\partial}{\partial\rho_i}
+\sum_{i=1}^l h_{\tilde{q}_i}\frac{\partial}{\partial \tilde{q}_i}
+\sum_{i=1}^m h_{y_i}\frac{\partial}{\partial y_i},
\label{eq:Hp-apriori}
\end{equation}
where $h_{\tilde{p}},h_{\rho_i},h_{\tilde{q}_i},h_{y_i}\in \soe(\PP)$, the expression being valid on all of $U$ including the boundary.

Now let $\PPP$ denote the manifold $\PP$ equipped with a modified smooth structure obtained by adjoining $\tilde{\rho}_i=\sqrt[N]{\rho_i}$ for $i=1,\ldots,k$ to the original smooth structure, and let $\tilde{U}$ denote the neighborhood $U$ as a subspace of $\PPP$. Then $(\tilde{p},z,w)$ is a $C^2$ coordinate chart on $\tilde{U}$ (including the boundary), and $h_{\bullet} \in S^{\mathsf{0}}_{N\epsilon}(\PPP) \subset C^2(\PPP)$. Then by Taylor's theorem with integral remainder, expanding around $\tilde{\Sigma}$,
$$
h_{\bullet}(\tilde{p},z,w)
=
h_{\bullet}(0,z,w)
+
\tilde{p}\cdot \int_0^1 \frac{\partial h_{\bullet}}{\partial\tilde{p}}(t\tilde{p},z,w)\ dt.
$$
Since $\Hp$ is tangent to $\tilde{\Sigma}$, we must have $h_{\tilde{p}}(0,z,w)=0$. For the rest of the coefficients, we further expand the first term around $R$ to second order:
$$
h_{\bullet}(0,z,w)
=
h_{\bullet}(0,0,w)
+
\sum_{i=1}^{k_{\parallel}+l} z_i \cdot \frac{\partial h_{\bullet}}{\partial z_i}(0,0,w)
+
\sum_{i,j=1}^{k_{\parallel}+l} z_i z_j \cdot
\int_0^1 (1-t)\frac{\partial^2 h_{\bullet}}{\partial z_i\partial z_j}(0,tz,w)\ dt.
$$
The fact that $R$ is a critical set means that $h_{\tilde{p}}(0,0,w)=h_{\tilde{q}_i}(0,0,w)=h_{y_i}(0,0,w)=0$, and $h_{\rho_i}(0,0,w)=0$ as well for $i>k_{\parallel}$.

Since the transformation from $(\tilde{p},\rho,\tilde{q},y)$ to $(\tilde{p},\tilde{\rho},\tilde{q},y)$ only affects the $\rho_i$ and $\tilde{\rho}_i$ coordinates, the calculations already done above expressing the derivatives with respect to $\tilde{p}$ and $\tilde{q}_i$ in terms of the original coordinate basis show that $\frac{\partial}{\partial\tilde{p}},\frac{\partial}{\partial \tilde{q}_i}\in \soe\Vb(\PP)$, so $\frac{\partial h_{\bullet}}{\partial\tilde{p}},\frac{\partial h_{\bullet}}{\partial \tilde{q}_i},\frac{\partial^2 h_{\bullet}}{\partial\tilde{q}_i\partial\tilde{q}_j}\in \soe(\PP)$. Meanwhile, away from the boundary we have 
$$\tilde{\rho}_i \frac{\partial}{\partial\tilde{\rho}_i} = N\rho_i \frac{\partial}{\partial\rho_i} = N\rho_i \frac{\partial'}{\partial'\rho_i} \mod \mathcal{I}_{\Gamma_i}^{\epsilon}\Vb(\PP).$$
Then we calculate that
\begin{equation}
\frac{\partial h_{\bullet}}{\partial\tilde{\rho}_i},\frac{\partial^2 h_{\bullet}}{\partial \tilde{\rho}_i\partial\tilde{q}_j} \in \rho_i^{-\frac{1}{N}} \mathcal{I}_{\Gamma_i}^{\epsilon}(\PP),
\hspace{45pt}
 \frac{\partial^2 h_{\bullet}}{\partial\tilde{\rho}_i\partial\tilde{\rho}_j} \in \rho_i^{-\frac{1}{N}}\rho_j^{-\frac{1}{N}} \Big(\mathcal{I}_{\Gamma_i}^{\epsilon}(\PP) \cap \mathcal{I}_{\Gamma_j}^{\epsilon}(\PP)\Big) ,
\label{eq:rho-derivatives}
\end{equation}
which again uses the fact that zeroth-order classical terms at $\Gamma_i$ stay zeroth-order under the action of $\frac{\partial'}{\partial'\rho_i}$. Since $\frac{2}{N}<\epsilon$ and hence symbols in these classes extend continuously to $\partial\PP$, these conclusions apply to the derivatives even when they are taken at boundary points, despite the fact that the coordinate transformation is degenerate at the boundary and $\frac{\partial}{\partial\tilde{\rho}_i}$ only makes sense as a tangent vector on $\PP$ over the interior. In particular, since $\frac{\partial h_{\bullet}}{\partial\tilde{\rho}_i}$ vanishes at $\Gamma_i$ (since $\epsilon>\frac{1}{N}$), we see that $\frac{\partial h_{\bullet}}{\partial\tilde{\rho}_i}(0,0,w)$ for $i\leqslant k_{\parallel}$ vanishes identically.

Thus, combining these results, we can rewrite Eq.~(\ref{eq:Hp-apriori}) in the form Eq.~(\ref{eq:Hp-coord}), where
\begin{equation}
V
=
\int_0^1 \left(\frac{\partial h_{\tilde{p}}}{\partial\tilde{p}}(t\tilde{p},z,w) \frac{\partial}{\partial\tilde{p}}
+\sum_{i=1}^k \frac{\partial h_{\rho_i}}{\partial\tilde{p}}(t\tilde{p},z,w) \rho_i \frac{\partial}{\partial\rho_i} 
+\sum_{i=1}^l \frac{\partial h_{\tilde{q}_i}}{\partial\tilde{p}}(t\tilde{p},z,w) \frac{\partial}{\partial \tilde{q}_i} 
+\sum_{i=1}^m \frac{\partial h_{y_i}}{\partial\tilde{p}}(t\tilde{p},z,w)\frac{\partial}{\partial y_i}\right) dt,
\label{eq:Hp-coeffs-1}
\end{equation}
\begin{equation}
h_{\rho_i}^{(R)}(w) = h_{\rho_i}(0,0,w),
\hspace{45pt}
h_{\bullet}^{(\tilde{q}_j)}(w) = \frac{\partial h_{\bullet}}{\partial \tilde{q}_j}(0,0,w),
\label{eq:Hp-coeffs-2}
\end{equation}
\begin{equation}
h_{\rho_i}^{(z_j)} (z,w) = \int_0^1 \frac{\partial h_{\rho_i}}{\partial z_j}(0,tz,w)\ dt,
\hspace{45pt}
h_{\bullet}^{(z_j,z_r)}(z,w) = \int_0^1 (1-t)\frac{\partial^2 h_{\bullet}}{\partial z_j \partial z_r}(0,tz,w)\ dt.
\label{eq:Hp-coeffs-3}
\end{equation}
The values of the leading-order coefficients at $\alpha$ given in Eq.~(\ref{eq:Hp-leading-terms}) follow from the assumptions on the linearization of $\Hp$ at $\alpha$ and the way we adapted the coordinates to the eigenvectors. 

We have already established that the functions in Eq.~(\ref{eq:Hp-coeffs-2}) are in $\soe(\PP)$ and the derivatives inside the integrals in Eqs.~(\ref{eq:Hp-coeffs-1}) and (\ref{eq:Hp-coeffs-3}) are in $S^{\mathsf{0}}_{\epsilon-\frac{2}{N}}(\PP)$. Since the integrands are all continuous, we can differentiate under the integral sign and use the fact that the derivatives in the integrands are symbols to verify that the integrals stay bounded under repeated application of b-derivatives, so the integrals all belong to $\sym{0}$. This concludes the proof.
\end{proof}

We now prove the main part of the theorem, describing the propagation of singularities in terms of the coordinates just introduced. Assume the hypotheses of Theorem~\ref{thm:localized-rp-main} on $P$ and $\alpha$, and consider coordinates $(\tilde{p}_0,\rho,q,y)$ and symbols $\tilde{q}$ as in Lemma~\ref{thm:coordinates} and notation $\tilde{\rho}$, $z$, $w$ as in Proposition~\ref{thm:Hp-coord}. Let $\lambda_i$, $\mu_i$ be as before. Denote 
$$ 
z_{\pm}^2 = \sum_{\substack{1\leqslant i\leqslant k,\\ \pm\lambda_i>0}} \tilde{\rho}_i^2  + \sum_{\substack{1\leqslant i \leqslant l,\\ \pm\mu_i>0}} \tilde{q}_i^2.
$$
For $\varepsilon>0$, let
$$
\tilde{U}_{\varepsilon} = \{|\tilde{p}|< 2\varepsilon,\ z_+^2<2\varepsilon,\ z_-^2<2\varepsilon,\ |w|^2<2\varepsilon\},
\hspace{30pt}
\tilde{V}_{\varepsilon}^{\pm} = \{ |\tilde{p}|<2\varepsilon,\ \frac{1}{4}\varepsilon < z_{\pm}^2 < 2\varepsilon,\ z_{\mp}^2 < 2\varepsilon,\ |w|^2 < 2\varepsilon \},
$$
and $U_{\varepsilon}=\tilde{U}_{\varepsilon}\cap\partial\PP$, $V_{\varepsilon}=\tilde{V}_{\varepsilon}\cap\partial\PP$.
	
\begin{prop}
\label{thm:localized-rp-estimates}
Consider two sets of orders $\mathsf{s'}\leqslant \mathsf{s}$. Assume $\sum_{i=1}^k (-2s_i+m_i-1)\lambda_i + \hat{p}_1 < 0$ and $ \sum_{i=1}^k (-2s'_i+m_i-1)\lambda_i + \hat{p}_1 < 0$. Then there exists $\varepsilon_0>0$ such that for any $\varepsilon\in (0,\varepsilon_0)$, there exists $\varepsilon'\in (0,\varepsilon)$ such that for any $Q,Q',Q''\in\ps[0]$ with $Q'$ elliptic on $U_{\varepsilon}$, $Q''$ elliptic on $V^-_{\varepsilon}$, and $\wfs(Q)\subset U_{\varepsilon'}$ and any set of orders $\mathsf{N}$, there exists $C>0$ such that all $\tau\geqslant 0$ and $u\in\sob{-N}$ with $Q'(P-i\tau)u\in\sob{s-m+1}$, $Q''u\in\sob{s}$, and $Q'u\in\sob{s'}$ satisfy
\begin{equation}
\label{eq:rp-estimate}
\|Qu\|_{\mathsf{s}} \leqslant C\Big( \|Q'(P - i\tau)u\|_{\mathsf{s-m+1}} + \|Q''u\|_{\mathsf{s}} + \|Q'u\|_{\mathsf{s'}} + (1+|\tau|)\|u\|_{\mathsf{-N}} \Big).	
\end{equation}
The analogous statement holds for $\tau\leqslant 0$ if instead $\sum_{i=1}^k (-2s_i+m_i-1)\lambda_i + \check{p}_1 > 0$ and $ \sum_{i=1}^k (-2s'_i+m_i-1)\lambda_i + \check{p}_1 > 0$ and $V^-_{\varepsilon}$ is replaced by $V^+_{\varepsilon}$.
\end{prop}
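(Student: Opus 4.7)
I would follow the positive-commutator scheme used in the proof of Theorem~\ref{thm:pos}, but with a commutant adapted to the radial-point geometry given by Proposition~\ref{thm:Hp-coord}. Treating only the $\tau \geqslant 0$ case (the other is obtained by reversing signs throughout), the commutant will have the form $a = \check a^2$ with
\[
\check a = \rho^{\mathsf{-s + \frac{m-1}{2}}} \chi(\tilde p_0)\, \phi_-(z_-^2)\, \phi_+(z_+^2)\, \psi(\hat w),
\]
where $\chi$ is a bump localizing near $\tilde p_0 = 0$ (i.e., near $\Sigma$), $\phi_\pm$ are non-increasing cutoffs of their arguments supported in $[0, 2\varepsilon)$, and $\psi(\hat w)$ is a bump function in a modified neutral-direction coordinate $\hat w$ described below. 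The weight $\rho^{\mathsf{-2s+m-1}}$ is forced by the target regularity order $\mathsf{s}$, and a direct computation using $\Hp(\rho_i)/\rho_i\big|_\alpha = \lambda_i$ shows that the net contribution of the weight and skew-adjoint part to $\Hp a / a$ at $\alpha$ is bounded from above by $\sum_{i=1}^k (-2s_i + m_i - 1)\lambda_i + \hat p_1$, which by the below-threshold hypothesis is strictly negative. This provides the main $-b^2$ contribution to $\Hp a + p_1 a$ after squaring.

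The crux is the construction of $\hat w$. A plain bump function in $w$ would fail because, although $R$ is critical for $\Hp$, bicharacteristics near $\alpha$ can acquire $y_i$- and $\tilde\rho_i$-velocity ($i > k_\parallel$) of order $|z|$ by Proposition~\ref{thm:Hp-coord} (since the coefficients $h_{y_i}^{(\tilde q_j)}$, while vanishing at $\alpha$, need not vanish for $w \neq 0$), and curve at rates $\mathcal{O}(|z|^2)$. Following \cite{Haber-Vasy}, I would set $\hat w_i = w_i + \sum_j c_{ij}(w) z_j + \sum_{j,r} d_{ijr}(w) z_j z_r$ with $c_{ij}, d_{ijr}$ determined so that $\Hp \hat w_i$ vanishes on $R$ and to first order in $z$ off $R$, and so that $\Hp(|\hat w|^2)$ has a definite sign on the support of the localizers; solvability of the linear equations for $c_{ij}, d_{ijr}$ uses the nondegeneracy $\mu_i \neq 0$ and $\lambda_i \neq 0$ for $i \leqslant k_\parallel$ to divide by eigenvalues in the standard way. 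With $\hat w$ so chosen, all nearby bicharacteristics cross the level sets of $\psi(\hat w)$ in the same direction, so that the $\psi$-derivative term contributes to the favorable $-f^2$ side for $\varepsilon$ sufficiently small.

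I would then compute $\Hp a + p_1 a + \delta \rho^{\mathsf{-m+1}} a = -b^2 + e - f^2$ in analogy with the proof of Theorem~\ref{thm:pos}: the $\phi_+$ derivative contributes to $-b^2$ (using $\Hp z_+^2 > 0$ in a neighborhood of $\alpha$); the $\phi_-$ derivative contributes the remainder term $e$, supported in $V_\varepsilon^-$ and controlled by the $Q''$ hypothesis (using $\Hp z_-^2 < 0$ nearby); the $\chi$ derivative is supported off $\Sigma$ and controlled via ellipticity together with the $(P-i\tau)u$ hypothesis; and the $\psi$ derivative enters $-f^2$ by construction of $\hat w$. To handle the merely symbolic ($\soe$) regularity of $p$ and hence of $\Hp$, whose non-smooth error terms decay only as $\mathcal{O}(\rho^{\mathsf{\epsilon}})$ with possible oscillations, I would localize using small fractional powers $\tilde\rho_i = \rho_i^{1/N}$ with $N > 2/\epsilon$, exactly as in the proof of Theorem~\ref{thm:pos}, so that the leading-order contributions dominate over the uncontrolled lower-order terms.

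The remaining steps — quantization, regularization via the family Eq.~(\ref{eq:pos-regularizer}), application of Lemma~\ref{thm:comm-lemma} (whose $2\tau\|\check A u\|^2$ term appears with the correct sign for $\tau \geqslant 0$ and so introduces no obstruction, beyond confirming $\check A u \in \LL$ for the complex-order regularization), a uniform-in-$t$ $L^2$ bound on $B_t u$, and passage to the limit using the Banach--Alaoglu theorem — parallel the proof of Theorem~\ref{thm:pos} and yield the estimate Eq.~(\ref{eq:rp-estimate}) with an additional background term of the form $\|Q'''_{\mathsf{r}} u\|_{\mathsf{r - \frac{1}{2}}}^2$ for some $\mathsf{r} \leqslant \mathsf{s}$. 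I would then iterate through a nested sequence of neighborhoods $U_{\varepsilon_0} \supset \cdots \supset U_{\varepsilon_n} = U_{\varepsilon'}$ and orders $\mathsf{s'} \to \mathsf{s'} + \frac{\mathsf{1}}{2} \to \cdots \to \mathsf{s}$, each step using the just-established estimate to absorb the background term into a $Q_{\mathsf r}$ elliptic on the next-smaller neighborhood, thereby bootstrapping from the $\mathsf{s'}$ background assumption to the target $\mathsf{s}$. The main obstacle is the explicit construction and symbolic verification of $\hat w$; once that is settled, the rest is a careful bookkeeping exercise patterned on Theorem~\ref{thm:pos}.
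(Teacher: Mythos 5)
Your overall positive-commutator strategy is the paper's: commutant $a=\check a^2$ with weight $\rho^{\mathsf{-r+\frac{m-1}{2}}}$, cutoffs $\phi(\tilde p)\phi(z_+^2)\phi(z_-^2)$ times a localizer along $R$, fractional powers $\tilde\rho_i=\rho_i^{1/N}$ to tame the merely $\soe$-regular error terms in $\Hp$, Lemma~\ref{thm:comm-lemma} and Corollary~\ref{thm:complex} for the $\tau$-terms, the regularizer family, and an iteration from $\mathsf{s'}$ up to $\mathsf{s}$ over nested neighborhoods. So the scheme and almost all the bookkeeping are right. (A minor slip: you localize in $\tilde p_0$ rather than $\tilde p$; the paper uses $\tilde p$, so that the term the $\phi(\tilde p)$-derivative produces is genuinely $\tilde a p$ and can be paired against $(P-i\tau)u$ everywhere, not only modulo $\mathcal{I}_\Gamma^\epsilon$ errors.)

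The genuine gap is in the construction of $\hat w$. You propose a \emph{vector-valued} modification $\hat w_i=w_i+\sum_j c_{ij}(w)z_j+\sum_{j,r}d_{ijr}(w)z_jz_r$, require $\Hp\hat w_i$ to vanish on $R$ and to first order in $z$, and then want $\Hp(|\hat w|^2)$ sign-definite. But $\Hp(|\hat w|^2)=2\sum_i\hat w_i\,\Hp\hat w_i$, and $\hat w_i\approx w_i$ on the cutoff region $|w|^2\sim\varepsilon$, $|z|^2\leqslant2\varepsilon$; since $w_i$ has no controlled sign, the product $\hat w_i\,\Hp\hat w_i$ is $w_i\cdot\mathcal O(|z|^2)$ with the sign of $w_i$, and summing does not produce a definite sign. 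Choosing $d_{ijr}$ to cancel the remaining quadratic part of $\Hp\hat w_i$ runs into resonances: $\Hp(d_{ijr}z_jz_r)\approx(\mu_j'+\mu_r')d_{ijr}z_jz_r$, so when $\mu_j'+\mu_r'=0$ (which happens precisely in the saddle case that the theorem is designed to treat) the $z_jz_r$ coefficient cannot be adjusted. The paper avoids both problems by working with a \emph{scalar} localizer variable
\[
\hat w=|w|^2+\sum_{i=1}^l C_i(w)\tilde q_i+\sum_{i=1}^{k_\parallel+l}D_i z_i^2,
\]
with $C_i(w)$ (a linear-in-$\tilde q$ correction, $w$-dependent coefficients) chosen to cancel the linear-in-$\tilde q$ part of $\Hp\hat w$, solvable near $w=0$ by the nondegeneracy $h_{\tilde q_i}^{(\tilde q_j)}(0)=\mu_i\delta_{ij}$, and \emph{constant} $D_i$ chosen so that $\mu_i'D_i<0$ for every $i$. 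Then the leading quadratic-in-$z$ part of $\Hp\hat w$ is $2\sum_i\mu_i'D_iz_i^2<0$ \emph{independently of any resonance structure} and with a sign that is fixed by the sign of $D_i$ alone, so the $\phi'(\hat w)$-derivative term goes directly into $-\sum f_j^2$. The key difference is that $\hat w$ is a single scalar whose $D_iz_i^2$ corrections are of the \emph{same} size $\mathcal O(\varepsilon)$ as $|w|^2$ on the support, whereas in your proposal the control is indirect through $|\hat w|^2$ and the $w_i$-factors spoil the sign.
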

	
	\begin{proof} We focus on the case $\tau\geqslant 0$, $\sum_{i=1}^k (-2s_i+m_i-1)\lambda_i + \hat{p}_1 <0$, and $\sum_{i=1}^k (-2s_i'+m_i-1)\lambda_i + \hat{p}_1 <0$; the other case is analogous, requiring only some sign changes on which we comment later. Fix $\varepsilon_0>0$ small enough that the conclusions of Proposition~\ref{thm:Hp-coord} apply on $\tilde{U}_{\varepsilon_0}$ and any $\varepsilon\in (0,\varepsilon_0)$. We aim to show the existence of $\varepsilon'$ as required if $\varepsilon_0$ is small enough.
	
	\begin{itemize}
	
\item \textbf{Defining the commutant.} Let $\phi:\R\to\R$ be a smooth function such that $\phi(t)=0$ for $|t|>\varepsilon$, $\phi(t)=1$ for $|t|<\frac{1}{2}\varepsilon$, monotone increasing on $[-\varepsilon,-\frac{1}{2}\varepsilon]$ and monotone decreasing on $[\frac{1}{2}\varepsilon,\varepsilon]$, and such that $\sqrt{|\phi'\phi|}$ is smooth. On $\tilde{U}_{\varepsilon}$, define
\begin{equation}
\hat{w}
=
|w|^2+\sum_{i=1}^l C_i(w)\tilde{q}_i + \sum_{i=1}^{k_{\parallel}+l} D_i z_i^2,
\label{eq:scaling-factor}
\end{equation}
with functions $C_i$ and constants $D_i$ to be fixed later. (The role of the $C_i$ is to correct for the fact that $\frac{\partial}{\partial q_i}$ are only eigenvectors of the linearization at $\alpha$ itself, not necessarily at other nearby points of $R$; see also the discussion and figures in Section~\ref{sec:rp-statement}). We will choose them so that $|C_i|<\frac{\sqrt{\varepsilon}}{2l}$ and $|D_i|<\frac{1}{2(k_{\parallel}+l)}$; then $|\hat{w}-|w|^2|<\varepsilon$ on $\{z_-^2,z_+^2\leqslant\varepsilon\}\subset \{\forall i\ z_i^2\leqslant\varepsilon\}$, so
\begin{equation}
\mathrm{supp}(\phi(\hat{w}))\cap \mathrm{supp}(\phi(z_+^2)) \cap \mathrm{supp}(\phi(z_-^2)) \subset \left\{ |w|^2\leqslant 2\varepsilon\right\}.
\label{eq:w-supp}
\end{equation}
We will also ensure $C_i\in\sym{0}$. Then for any set of orders $\mathsf{r}$, we can define a symbol $\check{a}\in \sym{r-\frac{m-1}{2}}$ by
\begin{equation}
\check{a}
=
\rho^{\mathsf{-r+\frac{m-1}{2}}}
\phi(\tilde{p})
\phi(z_+^2)
\phi(z_-^2)
\phi(\hat{w})
\label{eq:commutant}
\end{equation}
in $\tilde{U}_{\varepsilon}$ and $\check{a}=0$ outside $\tilde{U}_{\varepsilon}$. We consider $\mathsf{r}$ which also satisfies the inequality $\sum_{i=1}^k (-2r_i+m_i-1)\lambda_i + \hat{p}_1 <0$ and set $a=\check{a}^2\in \sym{2r-m+1}$. We denote
$$
\tilde{W}_{\varepsilon} = \left\{|\tilde{p}|<\varepsilon,\ z_-^2<\varepsilon,\ z_+^2<\varepsilon,\ \hat{w}<\varepsilon \right\} \subset \tilde{U}_{\varepsilon},
\hspace{30pt}
W_{\varepsilon} = \tilde{W}_{\varepsilon}\cap\partial\PP.
$$
We have $\mathrm{supp}(a)=\overline{\tilde{W}_{\varepsilon}}$.

\item \textbf{Action of Hamilton vector field on commutant.}
On $\tilde{U}_{\varepsilon}$, let us write
$$
\Hp = \tilde{p}V + \sum_{i=1}^k h_{\rho_i}\rho_i\frac{\partial}{\partial\rho_i} + \sum_{i=1}^l h_{\tilde{q}_i}\frac{\partial}{\partial \tilde{q}_i} + \sum_{i=1}^m h_{y_i}\frac{\partial}{\partial y_i},
$$
where $V\in \soe\Vb(\PP)$ and $h_{\rho_i},h_{\tilde{q}_i},h_{y_i}\in \soe(\PP)$ as found in Proposition~\ref{thm:Hp-coord}. Fix a constant $\delta\in \left( 0, |\sum_{i=1}^k(-2r_i+m_i-1)\lambda_i+\hat{p}|\right)$. Then we calculate
$$
H_pa + p_1a + \delta \rho^{\mathsf{-m+1}} a
=
\rho^{\mathsf{-m+1}}\Big(\Hp a + \tilde{p}_1a + \delta a\Big)
=
$$
$$
=
p\rho^{\mathsf{1}}(Va)
+
\rho^{\mathsf{-2r}}
\Bigg(
\left(
\sum_{i=1}^{k} (-2r_i+m_i-1)
h_{\rho_i}
+ \tilde{p}_1
+ \delta
\right)
 \cdot
\phi(\tilde{p})^2\phi(z_+^2)^2\phi(z_-^2)^2
\phi(\hat{w})^2
+
$$
$$
+
2
\left(
\frac{2}{N}\sum_{\substack{1\leqslant i\leqslant k,\\ \lambda_i>0}}  h_{\rho_i} \tilde{\rho}_i^2
+
2\sum_{\substack{1\leqslant i\leqslant l,\\ \mu_i>0}}  h_{\tilde{q}_i}\tilde{q}_i
\right)
\cdot
\phi(\tilde{p})^2\phi'(z_+^2)\phi(z_+^2)\phi(z_-^2)^2
\phi(\hat{w})^2
+
$$
$$
+
2
\left(
\frac{2}{N} \sum_{\substack{1\leqslant i\leqslant k,\\ \lambda_i<0}}  h_{\rho_i} \tilde{\rho}_i^2
+
2\sum_{\substack{1\leqslant i\leqslant l,\\ \mu_i<0}}  h_{\tilde{q}_i}\tilde{q}_i
\right)
\cdot
\phi(\tilde{p})^2\phi(z_+^2)^2\phi'(z_-^2)\phi(z_-^2)
\phi(\hat{w})^2
+
$$
$$
+
2\Bigg[
\sum_{i=k_{\parallel}+1}^k h_{\rho_i} \left( \frac{2}{N} \tilde{\rho}_i^2 + \sum_{j=1}^l \rho_i \frac{\partial C_j}{\partial \rho_i}(w) \tilde{q}_j\right)
+
\sum_{i=1}^m h_{y_i} \left( 2y_i + \sum_{j=1}^l \frac{\partial C_j}{\partial y_i}(w) \tilde{q}_j \right)
+
\sum_{i=1}^l h_{\tilde{q}_i} C_i(w)
+
$$
$$
+
\frac{2}{N} \sum_{i=1}^{k_{\parallel}} h_{\rho_i}  D_i \tilde{\rho}_i^2 
+
2 \sum_{i=1}^l h_{\tilde{q}_i}  D_{k_{\parallel}+i}\tilde{q}_i 
\Bigg]
\cdot
\phi(\tilde{p})^2\phi(z_+^2)^2\phi(z_-^2)^2
\phi'(\hat{w})
\phi(\hat{w})
\Bigg).
$$
Now we consider each of the resulting terms. We write $\tilde{x}=(\tilde{p},z,w)$.

\begin{itemize}

\item Since $V$ is a b-vector field and $a\in\sym{2r-m+1}$, we have $\rho^{\mathsf{1}} Va \in\sym{2r-m}$. We write $\rho^{\mathsf{1}} Va=\tilde{a}$; the first term is then $\tilde{a}p$, and its contribution will be controlled using the regularity assumption on $(P- i\tau)u$.

\item Assuming $\sum_{i=1}^{k_{\parallel}} (-2r_i+m_i-1)\lambda_i + \hat{p}_1 < 0$ and $\delta\in \left( 0, |\sum_{i=1}^k(-2r_i+m_i-1)\lambda_i+\hat{p}_1|\right)$, since $h_{\rho_i}(\alpha)=\lambda_i$, for $\varepsilon$ small enough the expression in parentheses in the second term is negative on all of $\tilde{U}_{\varepsilon}$. Then
$$b=
 \rho^{\mathsf{-r}}
 \sqrt{\left|
\sum_{i=1}^k (-2r_i+m_i-1)h_{\rho_i}
+ \tilde{p}_1
+ \delta
 \right|} \phi(\tilde{p})\phi(z_+^2)\phi(z_-^2)\phi(\hat{w})$$
(extended to be zero outside of $\tilde{U}_{\varepsilon}$) is a symbol of order $\mathsf{r}$ elliptic on $W_{\varepsilon}$, and the second term equals $- b^2$.

\item Since $h_{\rho_i}=\lambda_i+o(1)$ and $h_{\tilde{q}_i}=\mu_i\tilde{q}_i+o(|\tilde{x}|)$ as $\tilde{x}\to 0$, the expression in parentheses in the third term is 
$$
\frac{2}{N}\sum_{i=1}^{k_{\parallel}} \lambda_i\tilde{\rho}_i^2 + 2\sum_{i=1}^l \mu_i \tilde{q}_i^2 + o(|\tilde{x}|^2)
\geqslant
2\min_{\substack{1\leqslant i\leqslant k_{\parallel},\\1\leqslant j\leqslant l}}(\lambda_i/N,\mu_j)\cdot z_+^2 + o(|\tilde{x}|^2).
$$
On the support of the term, $z_+^2\geqslant \frac{1}{2}\varepsilon$. On the other hand, $\sup_{U_{\varepsilon}}|\tilde{x}|^2=\mathcal{O}(\varepsilon)$ as $\varepsilon\to 0^+$. Therefore, for small enough $\varepsilon$ the expression above is strictly positive on the support. By the same argument, in the fourth term the expression in parentheses is strictly negative on the support for small enough $\varepsilon$. Therefore, we can define
$$
e_{\pm}
=
\rho^{\mathsf{-r}}
\sqrt{\mp 2\phi'(z_{\pm}^2)\phi(z_{\pm}^2)
\left(
\frac{2}{N}\sum_{\substack{1\leqslant i\leqslant k,\\ \pm\lambda_i>0}}  h_{\rho_i} \tilde{\rho}_i^2
+
2\sum_{\substack{1\leqslant i\leqslant l,\\ \pm\mu_i>0}}  h_{\tilde{q}_i}\tilde{q}_i
\right)}
\phi(\tilde{p})\phi(z_{\mp}^2)\phi(\hat{w}),
$$
which are symbols of order $\mathsf{r}$. $e_{\pm}^2$ is supported in $V^{\pm}_{\varepsilon}$. Therefore, the contribution of the $e_-^2$ term will be controlled by the regularity assumption on $u$ in $V^-_{\varepsilon}$. The $e_+^2$ term will come with the right sign to be ignored for the estimate. If we were propagating in the other direction, the roles would be reversed.

\item To deal with the last term, which is responsible for localization in the neutral directions, we use the full form of the coefficients found in Proposition~\ref{thm:Hp-coord}. The expression in brackets becomes

\begin{equation}
\begin{split}
  \sum_{j=1}^l 
  \tilde{q}_j \cdot
  \left( 
  \sum_{i=1}^l C_i h_{\tilde{q}_i}^{(\tilde{q}_j)}
  +
  \frac{2}{N} \sum_{i=k_{\parallel}+1}^k  h_{\rho_i}^{(\tilde{q}_j)} \tilde{\rho}_i^2
  + 2 \sum_{i=1}^m h_{y_i}^{(\tilde{q}_j)} y_i
  \right) 
  +
  \\
+
\frac{2}{N}\sum_{i=1}^{k_{\parallel}}  D_i \tilde{\rho}_i^2 
\cdot 
\left(
h_{\rho_i}^{(R)}
+ \sum_{j=1}^{k_{\parallel}+l} h_{\rho_i}^{(z_j)} z_j
\right)
+
\\
+
\sum_{j,r=1}^l \tilde{q}_j\tilde{q}_r \cdot 
\left(
2 D_{k_{\parallel}+r} h_{\tilde{q}_r}^{(\tilde{q}_j)}
+ \sum_{i=k_{\parallel}+1}^k h_{\rho_i}^{(\tilde{q}_j)} \rho_i \frac{\partial C_r}{\partial\rho_i}
+ \sum_{i=1}^m h_{y_i}^{(\tilde{q}_j)} \frac{\partial C_r}{\partial y_i}
\right)
+
\\
+
\sum_{j,r=1}^{k_{\parallel}+l} 
z_j z_r \cdot
\Bigg[
\sum_{i=1}^l  \Big(C_i + 2D_{k_{\parallel}+i} \tilde{q}_i \Big)  h_{\tilde{q}_i}^{(z_j,z_r)} 
+
\sum_{i=k_{\parallel}+1}^k 
\left(\frac{2}{N}\tilde{\rho}_i^2
+\sum_{s=1}^l \rho_i\frac{\partial C_s}{\partial\rho_i} \tilde{q}_s \right) h_{\rho_i}^{(z_j,z_r)}
+
\\
+
\sum_{i=1}^m 
\left( 2y_i + \sum_{s=1}^l \tilde{q}_s
 \frac{\partial C_s}{\partial y_i}
 \right) h_{y_i}^{(z_j,z_r)}
\Bigg]
  .
\end{split}
\label{eq:localization-along-R}
\end{equation}

First, we choose the functions $C_i(w)$ so that the first line vanishes identically on $\{ |w|^2< 2\varepsilon\}$; this condition is equivalent to the system of linear equations
$$
\sum_{i=1}^l h_{\tilde{q}_i}^{(\tilde{q}_j)}(w) \cdot C_i(w)
=
-\left(
\frac{2}{N} \sum_{i=k_{\parallel}+1}^k  h_{\rho_i}^{(\tilde{q}_j)}(w) \tilde{\rho_i}^2
  + 2 \sum_{i=1}^m h_{y_i}^{(\tilde{q}_j)}(w) y_i
  \right)
$$
for $j=1,\ldots,l$. The right-hand side and the coefficients on the left-hand side are in $\soe(\PP)$ and depend only on $w$, and by Eq.~(\ref{eq:Hp-leading-terms}) the matrix of coefficients approaches a diagonal matrix with nonzero diagonal entries as $w\to 0$. Therefore, for $\varepsilon$ small enough the solution exists and is a $\soe(\PP)$-regular function of $w$ on $\{|w|^2<2\varepsilon\}$. Eq.~(\ref{eq:Hp-leading-terms}) implies that the right-hand side is $o(|w|)$ as $w\to 0$, so we have $C_i(w)=o(|w|)$ as $w\to 0$ and therefore for $\varepsilon$ small enough we will have $|C_i|<\frac{\sqrt{\varepsilon}}{2l}$ on $\{|w|^2<2\varepsilon\}$ as needed.

Next, we note that the expression on the second through last lines of Eq.~(\ref{eq:localization-along-R}) is of the form $\sum_{j,r=1}^{k_{\parallel}+l} Q_{jr}(\tilde{x})z_j z_r$ with $Q_{jr}\in \sym{0}$ and $\lim_{\tilde{x}\to 0}Q_{jr}(\tilde{x})= 2 \mu_j' D_j \delta_{jr}$, where $\mu_j'=\lambda_j/N$ for $j=1,\ldots, k_{\parallel}$ and $\mu_j'=\mu_{j-k_{\parallel}}$ for $j=k_{\parallel}+1,\ldots,k_{\parallel}+l$. (Besides the results in Eq.~(\ref{eq:Hp-leading-terms}), this uses the fact that, since $C_i\in \soe(\PP)$ and $C_i(w)=o(|w|)$, the values of the $C_i$ all vanish at zero, and so do the b-derivatives $\rho_i\frac{\partial C_r}{\partial\rho_i}$). Then we choose the $D_j\in \left(-\frac{1}{2(k_{\parallel}+l)},\frac{1}{2(k_{\parallel}+l)}\right)$ so that the numbers $\mu_j' D_j$ are all negative (the same sign as the $-b^2$ term; if we were propagating in the backward direction, we would make the opposite choice). Then in a neighborhood of zero the symmetric matrix $-Q$ is positive definite, so it has a symmetric square root $\check{Q}\in \sym{0}$ and the expression can be written as $-\sum_{j=1}^{k_{\parallel}+l} \check{z}_j^2$, where $\check{z}_j=\sum_{r=1}^{k_{\parallel}+l}\check{Q}_{jr}z_r\in \sym{0}$. We define
$$
f_j(x)
=
\rho^{\mathsf{-r}}\phi(\tilde{p})\phi(z_+^2)\phi(z_-^2)\check{z}_j
\sqrt{2 |\phi'(\hat{w}) \phi(\hat{w})|},
$$
which are symbols of order $\mathsf{r}$.
\end{itemize}

With these definitions, we have
\begin{equation}
(H_p + p_1 + \delta\rho^{\mathsf{m-1}})a = \tilde{a}p - b^2 - e_+^2 + e_-^2 - \sum_{j=1}^{k_{\parallel}+l}f_j^2.
\label{eq:commutator-symbolic}
\end{equation}

\item \textbf{Regularization.}
We define $\chi(t)$ as in Eq.~(\ref{eq:pos-regularizer}) and again set $\check{a}_t=\chi_t\check{a}$ and $a_t=\check{a}_t^2$. Then
\begin{equation}
(H_p + p_1 + \delta\rho^{\mathsf{m-1}})a_t
=
\tilde{a}_tp - b_t^2 - e_{+,t}^2 + e_{-,t}^2 - \sum_{j=1}^{k_{\parallel}+l} f_{j,t}^2,
\label{eq:commutator-symbolic-reg}
\end{equation}
where $\tilde{a}_t=\chi_t^2\tilde{a}\in \sym{2r-m-2K}$, $e_{\pm,t}=\chi_t e_{\pm}\in \sym{r-K}$, $f_{j,t}=\chi_t f_j\in \sym{r-K}$, and
$$
b_t=
 \chi_t\rho^{-\mathsf{r}}
 \sqrt{\left|
 \sum_{i=1}^k \left(-2r_i+m_i-1+\frac{2K_it}{t+\rho_i}\right)h_{\rho_i}
+ \tilde{p}_1
+ \delta 
 \right|}
 \cdot
 \phi(\tilde{p})\phi(z_+^2)\phi(z_-^2)\phi(\hat{w}) \in \sym{r-K},
$$
as long as inclusion of the $\frac{2K_it}{t+\rho_i}$ terms does not change the sign of the expression under the absolute value on $\tilde{U}_{\varepsilon}$. Since $\frac{2K_it}{t+\rho_i}(\alpha)=2K_i$, this will be true for small enough $\delta$ and $\varepsilon$ as long as $ \sum_{i=1}^{k_{\parallel}} (-2(r_i-K_i)+m_i-1)\lambda_i + \tilde{p}_1(\alpha)< 0$, i.e. $\mathsf{r-K}$ satisfies the same inequality as that required of $\mathsf{s}$ in the hypotheses of the theorem. We fix $\mathsf{K}$ such that $\mathsf{r-K}\leqslant\mathsf{s'}$ while still satisfying the inequality and choose $\delta$ and $\varepsilon$ small enough correspondingly.

\item \textbf{Quantization and estimates.} We define operators $\check{A}_t=\mathrm{Op}(\check{a}_t)$, $A_t=\check{A}_t^*\check{A}_t$, $\tilde{A}_t=\mathrm{Op}(\tilde{a}_t)$, $B_t=\mathrm{Op}(b_t)$, $E_{\pm,t}=\mathrm{Op}(e_{\pm,t})$, $F_{j,t}=\mathrm{Op}(f_{j,t})$, $\Lambda=\mathrm{Op}(\rho^{\mathsf{-\frac{m-1}{2}}})$. The symbolic relation Eq.~(\ref{eq:commutator-symbolic-reg}) means that
\begin{equation}
i[P,A_t] 
-i(P-P^*)A_t
+ \delta (\Lambda\check{A}_t)^*(\Lambda\check{A}_t)
= \tilde{A}_tP 
 - B_t^*B_t - E_{+,t}^*E_{+,t} + E_{-,t}^*E_{-,t} - \sum_{j=1}^{k_{\parallel}+l} F_{j,t}^*F_{j,t} + R_t
\label{eq:commutator-quantized}
\end{equation}
for some $R_t\in \ps[2r-2K-1]$, where the family $R_t$ is bounded in $\ps[2r-1]$ and $\wfs(\{R_t\})\subset\overline{W_{\varepsilon}}$. 

Fixing $\mathsf{r}\leqslant\mathsf{s}$, for any $u\in\sch'$ such that 
\begin{equation}
\label{eq:wf-assumption-iteration}
\Big( \wf{s'}(u)\cup
\wf{s-m+1}((P- i\tau)u)
\Big)
\cap U_{\varepsilon}=\varnothing
\end{equation}
for some $\tau\in\R$, thanks to the choice of large enough $\mathsf{K}$ above we can calculate
$$
\|B_tu\|^2
=
- \left\langle \Big(i[P,A_t]-i(P-P^*)\Big)u, u\right\rangle
+ \langle (P-i\tau)u, \tilde{A}_t^*u\rangle
- i\tau\langle u,\tilde{A}_t^*u\rangle
-\delta \|\Lambda\check{A}_t u\|^2
-
$$
$$
- \|E_{+,t}u\|^2
+ \|E_{-,t}u\|^2
-\sum_{j=1}^{k_{\parallel}+l}\|F_{j,t} u\|^2
+\langle R_tu,u\rangle
\leqslant
$$
$$
\leqslant
 2|\langle \check{A}_tu, \check{A}_t (P+i\tau)u\rangle|
+ |\langle (P-i\tau)u, \tilde{A}_t^*u\rangle|
+ |\tau| |\langle u,\tilde{A}_t^*u\rangle|
- \delta \|\Lambda\check{A}_t u\|^2
+ \|E_{-,t}u\|^2
+ |\langle R_tu,u\rangle|,
$$
where the $i\tau\langle u,\tilde{A}_t^*u\rangle$ term should be understood to be absent in the $\tau=0$ case, where the pairing may not be well-defined; for $\tau\neq 0$, on the other hand, it is well-defined since $\wf{r-\frac{m-1}{2}-K}(u)\cap U_{\varepsilon}=\varnothing$ by Corollary~\ref{thm:complex} while $\tilde{A}_t^*\in \ps[2r-m-2K]$ with $\wfs(\tilde{A}_t^*)\subset\overline{W_{\varepsilon}}\subset U_{\varepsilon}$. Proceeding as in the proof of Theorem~\ref{thm:pos}, we conclude
$$
\|B_t u\|^2
\leqslant
 C \Big(
 \|\check{A}_t(P-i\tau)u\|_{\mathsf{-\frac{m-1}{2}}}^2
+\|u\|_{\mathsf{-N}}^2
\Big)
 +|\langle (P-i\tau)u, \tilde{A}_t^*u\rangle|
 +|\tau| |\langle u,\tilde{A}_t^*u\rangle|
+ \|E_{-,t}u\|^2
+|\langle R_tu,u\rangle|
.
$$
Fix any $Q',Q'',Q'''_{\mathsf{r}}\in\ps[0]$ with $Q'$ elliptic on $U_{\varepsilon}$, $Q''$ elliptic on $V_{\varepsilon}^-$, and $Q'''_{\mathsf{r}}$ elliptic on $\overline{W_{\varepsilon}}$. The first, fourth, and fifth terms on the right-hand side are bounded uniformly in $t$ in a manner analogous to the proof of Theorem~\ref{thm:pos}. Turning to the two terms involving $\tilde{A}_t^*$:
 
\begin{itemize}

\item The family $\tilde{A}_t$ is bounded in $\ps[2r-m]$ and $\wfs(\{\tilde{A}_t\})\subset\overline{W_{\varepsilon}} \subset U_{\varepsilon}$, so
$$
|\langle(P- i\tau)u,\tilde{A}_t^*u\rangle | \leqslant
C \Big( \|Q'(P- i\tau)u\|_{\mathsf{s-m+1}}^2 + \|Q'''_{\mathsf{r}} u\|_{\mathsf{r-1}}^2 + \|u\|_{\mathsf{-N}}^2 \Big).
$$

\item Introducing an auxiliary $\tilde{Q}\in\ps[0]$ elliptic on $\overline{W_{\varepsilon}}$ but such that $Q_{\mathsf{r}}'''$ is elliptic on $\wfs(\tilde{Q})$, and combining Lemma~\ref{thm:int-by-parts} and Corollary~\ref{thm:complex},
$$
|\tau| |\langle u,\tilde{A}_t^*u\rangle |
\leqslant
C|\tau| \Big( \|\tilde{Q} u\|_{\mathsf{r-\frac{m}{2}}}^2 + \|u\|_{\mathsf{-N}}^2 \Big)
\leqslant
C \Big( \|Q'(P-i\tau)u\|_{\mathsf{s-m+1}}^2 + \|Q'''_{\mathsf{r}} u\|_{\mathsf{r-\frac{1}{2}}}^2 + (1+|\tau|)\|u\|_{\mathsf{-N}}^2 \Big)
.
$$
\end{itemize}

\item \textbf{Regularity conclusions and iteration.}
As in the proof of Theorem~\ref{thm:pos}, we conclude that $Bu\in \LL$ with the estimate
$$
\|Bu\|^2 \leqslant C\Big( \|Q'(P - i\tau)u\|^2_{\mathsf{s-m+1}} + \|Q''u\|^2_{\mathsf{s}} + \|Q'''_{\mathsf{r}} u\|^2_{\mathsf{r-\frac{1}{2}}} + (1+|\tau|^2)\|u\|^2_{\mathsf{-N}} \Big),	
$$
where $C>0$ is independent of $\tau\geqslant 0$ and $u$.

Compared to the propagation estimate, the preceding argument required that $\mathsf{r}$ satisfied the same inequality as $\mathsf{s}$ in the hypotheses of the theorem. Therefore, while we can still iterate the argument to improve the $\|Q'''_{\mathsf{r}}u\|_{\mathsf{r-\frac{1}{2}}}^2$ term, we can only do so while staying at every step within the set of orders satisfying the inequality. This set is a half-space in $\R^k$, so for instance the straight-line path between $\mathsf{s'}$ and $\mathsf{s}$ stays within the set; thus, we can start with $\mathsf{r}=\mathsf{s'}$ and iterate along this path until we reach $\mathsf{r}=\mathsf{s}$ in a fixed number of steps, arriving at the estimate Eq.~(\ref{eq:rp-estimate}) with any $\varepsilon' \in(\frac{1}{4}\varepsilon,\frac{1}{2}\varepsilon)$ (the upper bound due to $U_{\varepsilon'}\subset W_{2\varepsilon'}$, and the lower bound ensuring the non-sign-definite term stays supported in $\tilde{V}_{\varepsilon}^-$ throughout the iteration).
\end{itemize}
	\end{proof}

See Figure~\ref{fig:rad-pts-prop} for an illustration of the various regions defined and used in the proof.
	
\begin{figure}
\begin{center}
\includegraphics[width=0.47\textwidth]{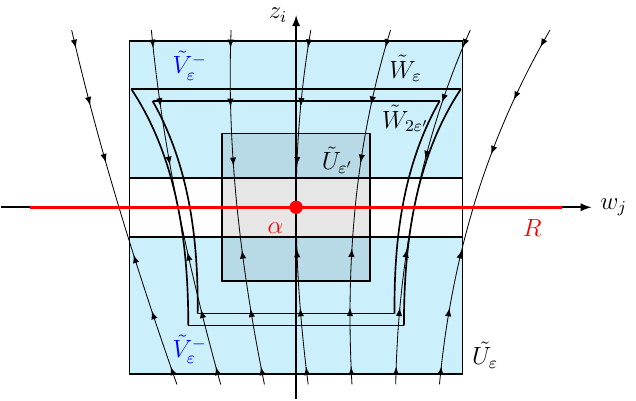}
\hspace{10pt}
\includegraphics[width=0.47\textwidth]{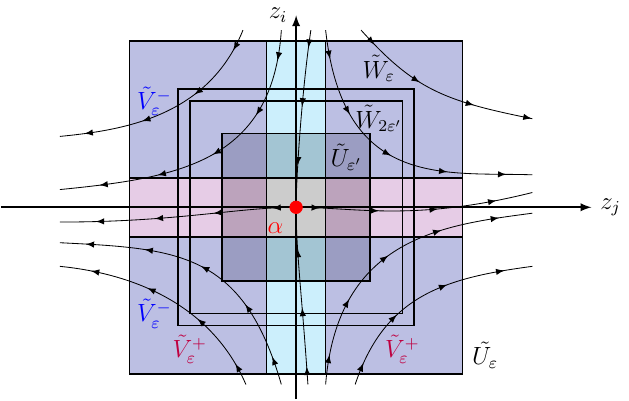}
\end{center}
\caption{Schematic illustration of propagation setup in proof of Proposition~\ref{thm:localized-rp-estimates}, for propagation into $\alpha$ in the forward direction (from $V_{\varepsilon}^-$, shaded blue). In these two two-dimensional slices of the same higher-dimensional region, $z_i$ represents a defining function of $R$ corresponding to a negative eigenvalue, $z_j$ a defining function corresponding to a positive eigenvalue, and $w_j$ a coordinate along $R$. The figure on the left illustrates the localization along $R$; the figure on the right illustrates a cross-section transverse to $R$ in which the flow is of saddle type. The regions $\tilde{W}_{\varepsilon}$ are constructed so that, for a range of values of $\varepsilon$, the flow only enters them through the top and bottom faces (cf. figure on the left in Figure~\ref{fig:localization}). This choice ensures that bicharacteristics cross level sets of the cutoffs along $R$ in the same direction as they cross those of the cutoff on the unstable side of $R$ (left/right faces of $\tilde{W}_{\varepsilon}$ in the figure on the right) and opposite to the direction that they cross the cutoff on the stable side (top/bottom faces). This means that if a priori regularity on the unstable side is not required for the estimate (i.e. if one is propagating forward), then the cutoff along $R$ will also not contribute any new terms which require a priori assumptions. To propagate in the backward direction, one instead constructs the cutoff along $R$ as in the figure on the right in Figure~\ref{fig:localization}, corresponding to opposite signs of the parameters $D_j$.
\label{fig:rad-pts-prop}}
\end{figure}
	
We can now finish the proof of Theorem~\ref{thm:localized-rp-main} by analyzing the dynamics of the Hamilton flow near $\alpha$. Concretely, we show that if for given $(u,\tau)$ with $\pm\tau\geqslant 0$ and $\alpha\notin \wf{s-m+1}((P-i\tau)u)$ there \textit{does not} exist arbitrarily small $\varepsilon>0$ such that $\wf{s}(u)\cap V^{\mp}_{\varepsilon}=\varnothing$, then there necessarily exists a bicharacteristic on which $u$ is singular limiting to $\alpha$ in the right direction, i.e. the hypotheses of Theorem~\ref{thm:localized-rp-main} must be violated. Thus the hypotheses of Theorem~\ref{thm:localized-rp-main} imply that if in addition $\alpha\notin\wf{s'}(u)$, then we can apply Proposition~\ref{thm:localized-rp-estimates} with some $Q',Q''$ such that the right-hand side of Eq.~(\ref{eq:rp-estimate}) is finite and $Q$ elliptic at $\alpha$, concluding $\alpha\notin\wf{s}(u)$.

\begin{figure}
\begin{center}
\includegraphics[width=0.47\textwidth]{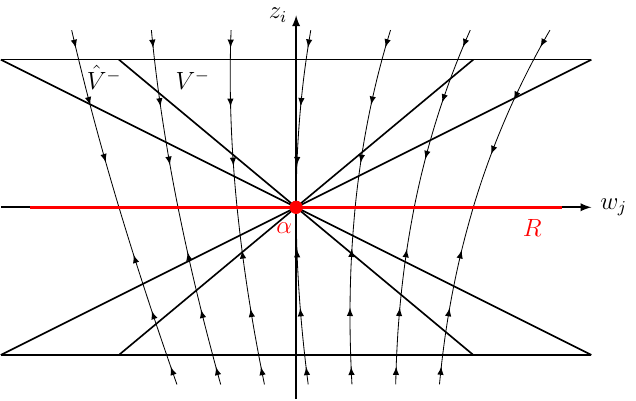}
\hspace{10pt}
\includegraphics[width=0.47\textwidth]{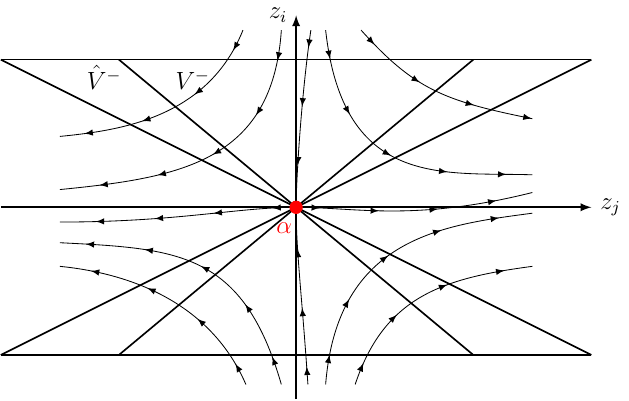}
\end{center}
\caption{Schematic illustration of the regions $V^-$, $\hat{V}^-$ defined in the proof of Theorem~\ref{thm:localized-rp-main}. As in Figure~\ref{fig:rad-pts-prop}, $z_i$ represents a defining function of $R$ corresponding to a negative eigenvalue, $z_j$ a defining function corresponding to a positive eigenvalue, and $w_j$ a coordinate along $R$. Note the flow only enters both regions through the top and bottom faces, and every integral curve in the regions either limits to $\alpha$ or exits the regions in finite time through the sides.
\label{fig:limiting-bichar}}
\end{figure}
	
\begin{proof}[Proof of Theorem~\ref{thm:localized-rp-main}]
Fix $\varepsilon_0$ as in Proposition~\ref{thm:localized-rp-estimates} and small enough that $\wf{s-m+1}((P-i\tau)u)\cap U_{\varepsilon_0}=\varnothing$, so we can appeal to elliptic regularity and propagation of singularities in this region. Below we always work within $\Sigma$, which we keep implicit when specifying regions in coordinates. We can do this because we know the Hamilton flow stays within $\Sigma$.

Define
$$
V^{\pm} = 
\left(\bigcup_{\varepsilon\in (0,\varepsilon_0)} \overline{V^{\pm}_{\varepsilon}}\right) 
\cap
\{z_{\pm}^2 \leqslant \frac{1}{4}\varepsilon_0\} = \{0<z_{\pm}^2 \leqslant \frac{1}{4}\varepsilon_0,\ z_{\mp}^2 \leqslant 8z_{\pm}^2,\ |w|^2 \leqslant 8z_{\pm}^2\},
$$
as well as the larger set $\hat{V}^{\pm} =  \{0<z_{\pm}^2\leqslant \frac{1}{4}\varepsilon_0,\ z_{\mp}^2 \leqslant  9 z_{\pm}^2,\ |w|^2 \leqslant 9 z_{\pm}^2\}$. For any $C>0$ (the values of interest for us being $C=\frac{1}{8}$ or $\frac{1}{9}$), we have
$$
\Hp z_{\pm}^2,\ \Hp (z_{\pm}^2-C|w|^2) = \pm 2\sum_{\substack{1\leqslant i\leqslant k_{\parallel}+l,\\ \pm\mu'_i>0 }} |\mu'_i|z_i^2 + o(|\tilde{x}|^2),
$$
$$
\Hp(z_{\pm}^2-C z_{\mp}^2) = 
\pm 2\sum_{\substack{1\leqslant i\leqslant k_{\parallel}+l,\\ \pm\mu'_i>0 }} |\mu'_i|z_i^2
\pm 2C \sum_{\substack{1\leqslant i\leqslant k_{\parallel}+l,\\ \mp\mu'_i>0 }} |\mu'_i|z_i^2 + o(|\tilde{x}|^2),
$$
where we again use the notation $\tilde{x}=(\tilde{p},z,w)$. On $\hat{V}^{\pm}$, we have $|\tilde{x}|^2 = z_+^2+z_-^2+|w|^2\leqslant 19 z_{\pm}^2$, so for $\varepsilon_0$ small enough the $o(|\tilde{x}|^2)$ terms become negligible and the expressions above have the $\pm$ sign on all of $\hat{V}^{\pm}$. Moreover, they are bounded from below by a positive constant on any set of the form $\{\varepsilon''\leqslant z_{\pm}^2 \leqslant \frac{1}{4}\varepsilon_0\}\cap \hat{V}^{\pm}$ for $\varepsilon''\in (0,\frac{1}{4}\varepsilon_0)$. See Figure~\ref{fig:limiting-bichar}.

Focusing on $V^-$ (the analysis for $V^+$ is analogous), this means that for small $\varepsilon_0$ bicharacteristics do not enter $V^-$ or $\hat{V}^-$ except through the $z_-^2=\frac{1}{4}\varepsilon_0$ face, and in addition $z_-^2$ decreases along the flow in $\tilde{V}^-$ at a rate (with respect to the flow parameter) that is bounded from below by a positive constant on any set of the form $\{\varepsilon''\leqslant z_-^2 \leqslant \frac{1}{4}\varepsilon_0\}\cap \hat{V}^-$. Then in the forward direction, the bicharacteristic through any point of $V^-$ or $\hat{V}^-$ must either limit to $\alpha$ or exit $V^-$ or $\tilde{V}^-$ respectively in finite time at a lower value of $z_-^2$, and in the backward direction it must reach $\{z_-^2=\frac{1}{4}\varepsilon_0\}$ in finite time while staying within $V^-$ or $\hat{V}^-$ respectively, with $z_-^2$ monotone along the flow.

Let us assume that for every $\varepsilon\in (0,\frac{1}{4}\varepsilon_0)$ there exists a point $\alpha_{\varepsilon}\in\wf{s}(u)\cap V_{\varepsilon}^-$. These points must lie in $\Sigma$ by elliptic regularity. Let $\alpha_{\varepsilon}'$ be the intersection of the bicharacteristic through $\alpha_{\varepsilon}$ with the set $\{z_-^2=\frac{1}{4}\varepsilon_0\}\cap V^-$. The family $\alpha_{\varepsilon}'$ necessarily contains a sequence converging to some $\alpha'\in \{z_-^2=\frac{1}{4}\varepsilon_0\} \cap V^-$. By propagation of singularities, $\alpha_{\varepsilon}'\in\wf{s}(u)\cap\Sigma$ for all $\varepsilon$, so $\alpha'\in \wf{s}(u)\cap\Sigma$ since both sets are closed.

We claim that the bicharacteristic $\gamma$ starting at $\alpha'$ cannot exit $V^-$ at any point and therefore limits to $\alpha$. To see this, assume that it exits at some parameter time $t$ after starting at $\alpha'$. Take small $\delta>0$ such that at time $t+\delta$, the bicharacteristic has not yet exited $\hat{V}^-$. Then there exists a neighborhood $U$ of $\gamma(t+\delta)$ contained in the interior of $\hat{V}^-\backslash V^-$ on which $z_-^2$ is bounded from below by a positive constant. Then the preimage of $U$ by the time-$(t+\delta)$ flow is a neighborhood of $\alpha'$, and on images under the flow at all intermediate times $z_-^2$ is no smaller than in $U$; but any neighborhood of $\alpha'$ contains points among the $\alpha_{\varepsilon}$ which are known to flow to arbitrarily small values of $z_-^2$ while staying within $V^-$, so they cannot flow into $U$ without going through lower values of $z_-^2$ on the way.

Thus, the bicharacteristic through $\alpha'$ limits to $\alpha$ while staying within $V^-$, so by propagation of singularities the whole bicharacteristic starting from $\alpha'$ is in $\wf{s}(u)$.
\end{proof}

\subsection{Dynamics near nondegenerate radial sets}
\label{sec:dynamics}
In this section, we prove a few additional results about the flow near radial sets which will be helpful in rigorously establishing the global flow structure. We call a radial point $\alpha$ satisfying the hypotheses of Theorem~\ref{thm:localized-rp-main} a \textit{nondegenerate radial point} for short.

\begin{prop}
Consider a boundary hypersurface $\Gamma_i\in\mathcal{G}(\PP)$ with defining function $\rho_i$ and a nondegenerate radial point $\alpha\in \Gamma_i$. Let $\lambda_i$ be $\rho_i^{-1}\ d\rho_i(\Hp(\alpha))$ as before. If $\pm \lambda_i>0$, then any bicharacteristic limiting to $\alpha$ in the forward($+$)/backward($-$) direction is contained in $\Gamma_i$.
\label{thm:bdry-eval}
\end{prop}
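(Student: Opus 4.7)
The core idea is that the sign of $\lambda_i$ controls whether $\rho_i$ grows or decays along the rescaled Hamilton flow near $\alpha$, and hence whether bicharacteristics off $\Gamma_i$ can approach $\alpha$ in each direction. The proof will be a contradiction argument combining this exponential growth/decay with the assumption that the bicharacteristic has $\alpha$ as a forward/backward limit point.

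First I would use that $\tilde{H}_p\in S^{\mathsf{0}}\V_{\mathrm{e,b}}(\PP)\subset S^{\mathsf{0}}\Vb(\PP)$, so in particular tangent to $\Gamma_i$. Thus there exists a continuous function $h_i$ defined near $\alpha$ with
\begin{equation*}
\tilde{H}_p \rho_i = h_i\,\rho_i, \qquad h_i(\alpha)=(\rho_i^{-1}\,d\rho_i)(\tilde{H}_p(\alpha))=\lambda_i.
\end{equation*}
Focusing on the case $\lambda_i>0$ (the case $\lambda_i<0$ is analogous, up to a time reversal), by continuity there is a neighborhood $U$ of $\alpha$ in $\PP$ on which $h_i\geqslant \lambda_i/2$. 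Because $\tilde{H}_p$ is a b-vector field, any integral curve $\gamma$ either lies entirely in $\Gamma_i$ or never meets $\Gamma_i$; so a bicharacteristic not contained in $\Gamma_i$ satisfies $\rho_i(\gamma(t))>0$ for all $t$ in its parameter interval.

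Now suppose for contradiction that $\gamma$ is a bicharacteristic not contained in $\Gamma_i$ but limiting to $\alpha$ as $t\to+\infty$. Then there exists $T$ such that $\gamma(t)\in U$ for all $t\geqslant T$. Along $\gamma$ on this range,
\begin{equation*}
\frac{d}{dt}\log\rho_i(\gamma(t)) = (\tilde{H}_p\rho_i)(\gamma(t))/\rho_i(\gamma(t)) = h_i(\gamma(t)) \geqslant \lambda_i/2,
\end{equation*}
so $\rho_i(\gamma(t))\geqslant \rho_i(\gamma(T))e^{(\lambda_i/2)(t-T)}$ for all $t\geqslant T$. Since $\rho_i(\gamma(T))>0$ and $\lambda_i>0$, this forces $\rho_i(\gamma(t))\to+\infty$ (or at least to stay bounded away from $0$, which already suffices), contradicting $\rho_i(\gamma(t))\to\rho_i(\alpha)=0$. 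Hence every bicharacteristic limiting forward to $\alpha$ must lie in $\Gamma_i$. The backward case with $\lambda_i<0$ follows by the same argument after replacing $t$ by $-t$, which flips the sign of $h_i(\alpha)$.

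The argument is essentially a one-line Grönwall-type estimate, so there is no real obstacle; the only point requiring mild care is recording that $\tilde{H}_p$ being a b-vector field tangent to $\Gamma_i$ lets one factor out $\rho_i$ from $\tilde{H}_p\rho_i$ in a neighborhood of $\alpha$ with a continuous (indeed symbolic) quotient, which is what makes the rescaled eigenvalue $\lambda_i$ well defined and continuous on nearby integral curves.
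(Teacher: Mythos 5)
Your proof is correct and takes essentially the same approach as the paper: restrict to a neighborhood of $\alpha$ on which $\tilde{H}_p\rho_i \geqslant \tfrac{1}{2}\lambda_i\rho_i$, note that a bicharacteristic off $\Gamma_i$ has $\rho_i>0$ everywhere, and derive a contradiction with $\rho_i\to 0$. The only cosmetic difference is that you make the growth of $\rho_i$ explicit via a Gr\"onwall estimate, whereas the paper simply observes that $\rho_i$ is monotone increasing along the forward flow, which already contradicts convergence to a point where $\rho_i=0$.
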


\begin{proof}
If $\lambda_i>0$, then there exists a neighborhood $U$ of $\alpha$ on which $\Hp\rho_i \geqslant \frac{1}{2}\lambda_i \rho_i$. Assuming a bicharacteristic limits to $\alpha$ in the forward direction, it must at some point enter $U$ and stay in it for all later times. Thus, let $\alpha'\in U$ be a point such that the bicharacteristic starting at $\alpha'$ stays in $U$ for all time in the forward direction; if $\alpha'\notin \Gamma_i = \{\rho_i=0\}$, then $\rho_i$ is monotone increasing everywhere along the part of this bicharacteristic downstream from $\alpha'$, so it cannot limit to $\alpha\in \Gamma_i$ in the forward direction. The proof in the $\lambda_i<0$ case is analogous.
\end{proof}

For the proofs of the next two propositions, recall that for any nondegenerate radial point $\alpha$ and $\varepsilon>0$ small enough (depending on $\alpha$), using the coordinates introduced in the previous section, we define special neighborhoods of $\alpha$ by $\tilde{W}_{\varepsilon}^{\pm} = \{|\tilde{p}|<\varepsilon,\ z_-^2<\varepsilon,\ z_+^2<\varepsilon,\ \hat{w}_{\pm}<\varepsilon\}$, where we now introduce a $\pm$ index to distinguish between the constructions used in the proof of Proposition~\ref{thm:localized-rp-estimates} to propagate regularity forward ($-$) or backward ($+$) along the flow, which require different signs for the $D_i$ parameters in Eq.~(\ref{eq:scaling-factor}) defining $\hat{w}$. The whole $\tilde{p}=z_+^2=z_-^2=0$ region in any of these neighborhoods consists of radial points. By construction, bicharacteristics only enter $\tilde{W}_{\varepsilon}^-$ through the $z_-^2=\varepsilon$ face and only leave it through the $z_+^2=\varepsilon$ and $\hat{w}=\varepsilon$ faces; they only enter $\tilde{W}_{\varepsilon}^+$ through the $z_-^2=\varepsilon$ and $\hat{w}=\varepsilon$ faces and only leave it through the $z_+^2=\varepsilon$ face (since $\Hp\hat{w}$ was made to be sign-definite away from the radial set; see Figure~\ref{fig:localization} and the discussion of the localization along $R$ in the proof of Proposition~\ref{thm:localized-rp-estimates}). We will write $W_{\varepsilon}^{\pm}(\alpha)$ to make explicit the dependence on $\alpha$ below.

\begin{prop}
Let $R$ be a compact set of nondegenerate radial points. If a bicharacteristic limits to $R$ in either the forward or backward direction, it must in fact limit to some point $\alpha\in R$ in that direction.
\label{thm:limit-to-pt}
\end{prop}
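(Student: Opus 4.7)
I treat the forward case; the backward case is entirely analogous. By standard arguments the forward $\omega$-limit set $\omega(\gamma)\subset R$ is nonempty, compact, connected, and flow-invariant; since $R$ consists of fixed points, so does $\omega(\gamma)$. Fix any $\alpha^*\in\omega(\gamma)$; it suffices to show $\gamma(t)\to\alpha^*$, which will force $\omega(\gamma)=\{\alpha^*\}$. Work in the coordinates $(\tilde p, z_+, z_-, w)$ of Proposition~\ref{thm:Hp-coord} near $\alpha^*$, noting $\tilde p(\gamma)\equiv 0$ since $\gamma$ lies in the characteristic set, and reuse the nested regions $V^-(\alpha^*)\subset \hat V^-(\alpha^*)$ from the final part of the proof of Theorem~\ref{thm:localized-rp-main}.

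The key dynamical facts established there are: (i) on $\hat V^-(\alpha^*)$, $z_-^2$ strictly decreases along $\Hp$ at rate bounded below by $c z_-^2$; (ii) bicharacteristics can enter $\hat V^-(\alpha^*)$ only through the face $\{z_-^2=\varepsilon_0/4\}$; and (iii) any bicharacteristic in $V^-(\alpha^*)$ either limits to $\alpha^*$ or exits $V^-$ in finite forward time at a strictly smaller value of $z_-^2$. Because $\gamma$ limits to $R$, one has $z_-^2(\gamma(t))<\varepsilon_0/4$ for all $t\geq T_0$, so by (ii) $\gamma$ can enter $\hat V^-(\alpha^*)$ at most once after $T_0$. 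Once $\gamma$ is shown to lie in $V^-(\alpha^*)$ for all sufficiently large $t$, combining (i), (iii), and the shape of $V^-$ (which gives $z_+^2,|w|^2\leq 8 z_-^2$) yields $|z(\gamma(t))|\to 0$ exponentially; the explicit form of $\Hp$ in Proposition~\ref{thm:Hp-coord} gives the bound $|\dot w|=O(|z|^2+|w||z|)$, which together with exponential decay of $|z|$ makes $|\dot w|$ integrable, so $w(\gamma(t))$ converges to some $w_\infty$ and hence $\gamma(t)\to(0,0,w_\infty)\in R$; the hypothesis $\alpha^*\in\omega(\gamma)$ then forces $w_\infty=0$.

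To locate an entry of $\gamma$ into $V^-(\alpha^*)$, I mimic the convergent-subsequence argument at the end of the proof of Theorem~\ref{thm:localized-rp-main}: fix $t_n\to\infty$ with $\gamma(t_n)\to\alpha^*$ and let $t_n'$ be the last time before $t_n$ at which $\gamma$ crosses $\{z_-^2=\varepsilon_0/4\}$. By compactness, a subsequence $\gamma(t_n')$ converges to some $\alpha'\in\{z_-^2=\varepsilon_0/4\}\cap\overline{V^-(\alpha^*)}$, and the same argument as in that earlier proof shows the bicharacteristic through $\alpha'$ stays in $V^-(\alpha^*)$ for all forward time and limits to $\alpha^*$. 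Continuous dependence of the flow on initial conditions, applied on a finite interval $[t_n',t_n'+T]$ with $T$ large enough that the limit bicharacteristic is deep inside $V^-(\alpha^*)$ at time $T$, then places $\gamma(t_n'+T)\in V^-(\alpha^*)$ for $n$ large; after which (i)--(iii) apply directly. The corner case where $\gamma$ never crosses $\{z_-^2=\varepsilon_0/4\}$ (so $t_n'$ is undefined) is handled by a direct analysis using the linearized dynamics near $\alpha^*$ to rule out $\gamma(t_n)\to\alpha^*$ with persistently bad ratios $z_+^2/z_-^2$ or $|w|^2/z_-^2$. The main technical obstacle is precisely this convergent-subsequence/continuous-dependence step: transferring good behavior of the limit bicharacteristic back to $\gamma$ itself requires care because the time intervals $[t_n',t_n'+T]$ shift to infinity as $n\to\infty$.
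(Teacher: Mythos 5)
The paper's proof is fundamentally different from yours and considerably shorter: it covers the compact set $R$ by finitely many of the box-shaped neighborhoods $\tilde{W}_{\varepsilon}^{+}(\alpha)$ from Proposition~\ref{thm:localized-rp-estimates}, which have the crucial property that bicharacteristics can exit \emph{only} through the $\{z_+^2=\varepsilon\}$ face; choosing a smaller neighborhood $U$ of $R$ disjoint from all these exit faces, any $\gamma$ that eventually stays in $U$ becomes permanently trapped in a single $\tilde{W}_{\varepsilon}^{+}(\alpha_i)$, and letting $\varepsilon\to 0$ shrinks the $\omega$-limit set to a point. No particular $\alpha^*\in\omega(\gamma)$ needs to be singled out, no convergence rates are needed, and the cone-shaped regions $V^-,\hat{V}^-$ play no role.

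Your route---fix $\alpha^*\in\omega(\gamma)$ and try to show $\gamma$ gets trapped in the cone $V^-(\alpha^*)$---has a genuine gap precisely at the step you flag. The argument at the end of the proof of Theorem~\ref{thm:localized-rp-main} that you try to mimic crucially uses that the points $\alpha_\varepsilon$ are \emph{known to lie inside $V^-_\varepsilon\subset V^-$}, so that backward flow to $\{z_-^2=\varepsilon_0/4\}$ stays inside the cone $V^-$ and the limit $\alpha'$ automatically lands on the entry face $\{z_-^2=\varepsilon_0/4\}\cap\overline{V^-}$. Here you only know $\gamma(t_n)\to\alpha^*$ in the ambient topology, which does \emph{not} force $\gamma(t_n)$ into the cone $V^-$ (a sequence can approach the apex of a cone from outside it, with ratios $z_+^2/z_-^2$ or $|w|^2/z_-^2$ staying large). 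Consequently your $\alpha'=\lim\gamma(t_{n_k}')$ need not lie in $\overline{V^-}$ at all, and even if it did, the claim that the bicharacteristic through $\alpha'$ remains in $V^-$ and limits to $\alpha^*$ cannot be transplanted: that claim in the paper rests on knowing nearby bicharacteristics flow to arbitrarily small $z_-^2$ \emph{within} $V^-$, which is exactly what is unavailable here. The corner case where $\gamma$ never crosses $\{z_-^2=\varepsilon_0/4\}$ is also only gestured at. You would do better to discard the cones and use the box neighborhoods $\tilde{W}_\varepsilon^{+}$ with their one-sided exit faces, which sidesteps all of these issues.
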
 

Here $R$ is allowed to be part of a larger connected radial set, so it does not have to coincide, for any of its points, with the set we called $R$ in the assumptions of Theorem~\ref{thm:localized-rp-main}.

\begin{proof}
We consider the case of a bicharacteristic $\gamma$ limiting to $R$ in the forward direction; the opposite case is similar. Since $R$ is compact, we can cover $R$ by a finite number of neighborhoods $W_{\varepsilon}^+(\alpha)$ for points $\alpha\in R$. Let $U$ be a small neighborhood of $R$ which is also covered by this finite collection of neighborhoods and which does not intersect any of their $z_+^2=\varepsilon$ faces. If $\gamma$ limits to $R$, it must at some point enter $U$ and stay in it for all later times; then since by construction it cannot exit one of the neighborhoods after this point, there must be at least one of the neighborhoods in which it remains for all later times. Thus, $\gamma$ must limit to the part of $R$ in the closure of that neighborhood. Since $\varepsilon$ can be taken arbitrarily small, this proves that $\gamma$ must limit to a single point of $R$.
\end{proof}

\begin{prop}
Let $\Gamma = \Gamma_1\cap\ldots\cap\Gamma_k$ be a corner of $\partial\PP$, and assume that $R\subset\Gamma$ is a compact set of nondegenerate radial points which has a neighborhood in $\Gamma$ in which there are no other radial points. Assume that there exists $k_0\leqslant k$ such that for every point $\alpha\in R$, we have $\pm \lambda_i>0$ for all $i\leqslant k_0$ and additionally all eigenvalues $\mu_j$ satisfy $\pm \mu_j \geqslant 0$, with $\lambda_i$ and $\mu_j$ defined as before. Then there exists a neighborhood $U$ of $R$ in $\Sigma\cap\Gamma_{k_0+1}\cap\ldots\cap\Gamma_k$ from which every bicharacteristic limits to $R$ in the backward($+$)/forward($-$) direction.
\label{thm:source-sink}
\end{prop}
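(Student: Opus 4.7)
My plan is to focus on the ``$+$'' case (backward limit to $R$); the ``$-$'' case follows by reversing the flow parameter. Set $\mathcal{M}_0 := \Sigma \cap \Gamma_{k_0+1}\cap\cdots\cap\Gamma_k$. This is Hamilton-flow invariant: $\Hp$ is tangent to $\Sigma$ and, by Proposition~\ref{thm:Hp-coord}, $\Hp\rho_i\in \rho_i\soe(\PP)$ so each face $\{\rho_i=0\}$ is invariant. For each $\alpha\in R$, Lemma~\ref{thm:coordinates} and Proposition~\ref{thm:Hp-coord} furnish adapted coordinates $(\tilde p, z, w)$ near $\alpha$, and I construct the neighborhood $\tilde W^+_\varepsilon(\alpha)$ exactly as in the proof of Proposition~\ref{thm:localized-rp-estimates} with the sign convention $\mu_j' D_j>0$ appropriate for backward propagation (the opposite choice to the one written out in that proof, as the proof itself notes in parentheses). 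Under the hypotheses $\lambda_i>0$ for $i\leq k_0$ and $\mu_j>0$ for all $j$ (strict since the zero eigenspace of the linearization is $T_\alpha(R\cap\Gamma)$), and since $\tilde\rho_{k_0+1}=\cdots=\tilde\rho_k=0$ on $\mathcal{M}_0$, the coordinate $z_-^2$ vanishes identically on $\mathcal{M}_0$ and $z_+^2 = |z|^2$ there. According to the face entry/exit summary recalled just before the proposition, the only face through which a bicharacteristic in $\tilde W^+_\varepsilon(\alpha)\cap\mathcal{M}_0$ can backward-exit is $\hat w=\varepsilon$, since the $z_-^2=\varepsilon$ backward-exit face is empty.

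The core step is a Lyapunov estimate ruling out this remaining exit for trajectories that start close enough to $\alpha$. By Proposition~\ref{thm:Hp-coord},
\[
\Hp z_+^2 = \sum_{i\leq k_0}\tfrac{2\lambda_i}{N}\tilde\rho_i^2 + \sum_{j=1}^l 2\mu_j\tilde q_j^2 + o(|\tilde x|^2),
\]
and the cancellation built into the choice of the functions $C_i(w)$ in \eqref{eq:scaling-factor}, combined with the chosen signs of the $D_j$, yields $|\Hp \hat w|\leq C|z|^2 = C z_+^2$ on $\mathcal{M}_0$. Choosing $\varepsilon(\alpha)>0$ small enough gives $\Hp z_+^2 \geq c\, z_+^2$ on $\tilde W^+_{\varepsilon(\alpha)}(\alpha)\cap\mathcal{M}_0$ for some $c>0$. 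Along any bicharacteristic $\gamma(t)$ with $\dot\gamma=\Hp$, Gr\"onwall then gives $z_+^2(\gamma(t))\leq z_+^2(\gamma(0))e^{ct}$ for $t\leq 0$, and since $\Hp\hat w\leq 0$ on $\tilde W^+_\varepsilon$ we compute
\[
\hat w(\gamma(t)) - \hat w(\gamma(0)) = \int_t^0 |\Hp\hat w|\,ds \leq C\int_t^0 z_+^2(\gamma(s))\,ds \leq \frac{C}{c}\, z_+^2(\gamma(0))
\]
for all $t\leq 0$. Pick $\varepsilon_1(\alpha)\in(0,\varepsilon(\alpha))$ with $\varepsilon_1 + (C/c)\varepsilon_1<\varepsilon(\alpha)$; then every bicharacteristic starting in the open set $U_\alpha := \{z_+^2<\varepsilon_1,\ \hat w<\varepsilon_1\}\cap\tilde W^+_{\varepsilon(\alpha)}(\alpha)\cap\mathcal{M}_0$ remains in $\tilde W^+_{\varepsilon(\alpha)}(\alpha)\cap\mathcal{M}_0$ for all backward time with $z_+^2\to 0$. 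Any accumulation point of this bounded trajectory then satisfies $\tilde p=z_+=z_-=0$, and hence lies in $R$ by the local description of the radial set in Lemma~\ref{thm:coordinates}; Proposition~\ref{thm:limit-to-pt} upgrades this to convergence to a single point of $R$.

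Finally, the compactness of $R$ and the fact that each $U_\alpha$ is an open subset of $\mathcal{M}_0$ containing $\alpha$ allow me to cover $R$ by finitely many $U_{\alpha_1},\dots,U_{\alpha_M}$; setting $U := \bigcup_i U_{\alpha_i}$ produces an open neighborhood of $R$ in $\mathcal{M}_0$ from which every bicharacteristic limits to a point of $R$ in the backward direction. The chief technical obstacle is the integrability estimate for $\hat w$ in the second paragraph: it hinges both on the quantitative Lyapunov bound $\Hp z_+^2\geq cz_+^2$ (available by continuity and the strict positivity of $\lambda_i,\mu_j$) and, more delicately, on the fact that $\Hp\hat w = O(|\tilde x|^2)$ rather than merely $O(|\tilde x|)$ — this second-order vanishing is precisely what the careful choice of the functions $C_i(w)$ in the commutant construction of Proposition~\ref{thm:localized-rp-estimates} was designed to produce, and its reuse here is the reason to construct $\tilde W^+_\varepsilon(\alpha)$ in full rather than to rely on a cruder neighborhood.
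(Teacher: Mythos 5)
Your proof is correct, but it takes a more laborious route than the paper's, and the extra work traces to a suboptimal choice of commutant neighborhood. The paper handles the $-$ (sink, forward) case with $\tilde W^+_\varepsilon$: per the face summary, $\tilde W^+_\varepsilon$ has forward exits \emph{only} through $z_+^2=\varepsilon$, and since $z_+^2\equiv 0$ on $\Sigma\cap\Gamma_{k_0+1}\cap\cdots\cap\Gamma_k$ there are simply no forward exits at all. By the stated $\pm$ symmetry, the $+$ (source, backward) case should be handled with $\tilde W^-_\varepsilon$, for which the only backward-exit face is $z_-^2=\varepsilon$; with $z_-^2\equiv 0$ on $\mathcal{M}_0$, again no backward exits, and the ``stays in the region'' step is a one-line observation. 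You instead chose $\tilde W^+_\varepsilon$ for the $+$ case, and for that neighborhood $\hat w=\varepsilon$ \emph{is} a potential backward-exit face; closing this gap forced the Gr\"onwall estimate $z_+^2(\gamma(t))\leq z_+^2(\gamma(0))e^{ct}$ combined with $|\Hp\hat w|\leq C z_+^2$ and the shrinking of $\varepsilon$ to $\varepsilon_1$ with $\varepsilon_1(1+C/c)<\varepsilon$. That argument is sound (the implicit continuation/bootstrap is routine, and $|\Hp\hat w|=O(|z|^2)$ is exactly what the $C_i$-cancellation delivers), but it is effort the matching choice of $\tilde W^-_\varepsilon$ makes unnecessary. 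The route you chose is motivated by an understandable but misleading association: $\tilde W^+$ is the construction for propagating \emph{regularity} backward along the flow in Theorem~\ref{thm:localized-rp-main}, whereas what is wanted here is a trapping region for the \emph{backward dynamics}, and for a source that is $\tilde W^-$. One further caveat: your assertion ``$\Hp\hat w\leq 0$ on $\tilde W^+_\varepsilon$'' is what the paper's face summary implies but sits uneasily with the parenthetical $\mu_j'D_j>0$ you quote from the proof of Proposition~\ref{thm:localized-rp-estimates} (the leading contribution $\sum 2\mu_j'D_jz_j^2$ would then be nonnegative); fortunately your integral bound controls $|\Hp\hat w|$ and thus works under either sign, so the conclusion is unaffected. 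Finally, note that the exponential-decay estimate for $z_+^2$ backward in time, which you derive for the confinement step, is needed in \emph{both} proofs to establish the actual limiting behavior (the paper's rate bound on $z_-^2$ plays the mirror role), so your total work is not as far from the paper's as the extra paragraph suggests.
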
 

\begin{proof}
We consider the case $\lambda_i<0$ for $i\leqslant k_0$, $\mu_j\leqslant 0$; the opposite case is similar. As in the proof of Proposition~\ref{thm:limit-to-pt}, consider a finite cover of $R$ by neighborhoods $\tilde{W}_{\varepsilon}^+(\alpha)$ for points $\alpha\in R$; taking these neighborhoods small enough, we can ensure that all radial points in them belong to $R$. Under the assumptions made, in the notation of the previous section, $z_+^2 = 0$ on $\Gamma_{k_0+1}\cap\ldots\cap\Gamma_k$, so the flow in $\Sigma\cap \Gamma_{k_0+1}\cap\ldots\cap\Gamma_k$ cannot exit any of the $\tilde{W}_{\varepsilon}^+$ neighborhoods. Thus, any bicharacteristic starting in the union of these neighborhoods remains in it for all later times, and in fact remains in the same neighborhood in which it started.

From Eq.~(\ref{eq:Hp-coord}), when $\tilde{p}=z_+^2=0$, we have $\Hp z_-^2 > C z_-^2$ with $C>0$ on $\tilde{W}_{\varepsilon}^+$ for $\varepsilon>0$ small enough. Thus, away from any neighborhood of zero $z_-^2$ is monotone decreasing at a rate bounded from below, so a bicharacteristic which stays in a $\tilde{W}_{\varepsilon}^+$ neighborhood for all time after some point must limit to the part of $R$ in that neighborhood.
\end{proof}

We say that a radial set $R$ as in Proposition~\ref{thm:source-sink} is a \textit{local source($+$)/sink($-$) for the flow in $\Gamma_{k_0+1}\cap\ldots\cap\Gamma_k$}.

\section{Microlocal analysis of the Klein-Gordon operator}
\label{sec:Hamiltonian}
	We now review the properties of $P=\Box_{\g}+m^2$ in the framework of the previous section. The results in this section are essentially due to Sussman \cite[Section 4]{Sussman}, extended to apply to our more general class of metrics and expressed in a more invariant manner. We use different phase-space coordinates but retain the notation of \cite{Sussman} for the names of the radial sets.
	
	Recall that we denote by $\tilde{\Sigma}$ the zero set of the rescaled principal symbol of $P$. This depends on a choice of representative of the principal symbol class (though all choices agree on $\partial\PP$ and therefore have the same zero set $\Sigma$ there). The most convenient (because it is homogeneous in the fibers) representative of the principal symbol of $\Box_{\g}$ is the dual metric (see Section~\ref{sec:wave-op}), which is a nondegenerate quadratic form on the fibers of $\ttm$ of Lorentzian signature. $\tilde{\Sigma}$ is then the closure in $\PP$ of the set of points in $\ttm$ where the dual metric equals $-m^2$ (the \textit{mass shell}).
	
	Basic facts about Lorentzian-signature quadratic forms imply that $\tilde{\Sigma}$ has two connected components in every fiber of $\ttm$, one inside each component of the dual light-cone over that point, and these components limit to the light-cones at fiber infinity. Since we assume the spacetime is time-orientable, this means that $\tilde{\Sigma}$ (and therefore also $\Sigma$) globally over $\M$ splits into two connected components per connected component of $\M$, one of which contains only future-directed momenta and the other only past-directed momenta (for any given choice of time orientation).
	
	As we will see, our non-trapping assumptions essentially determine the structure of the Hamilton flow in $\Sigma$ everywhere except over null infinity.	We now describe the characteristic set and Hamilton flow over null infinity in detail, working in local coordinates $(\rho_0,x_0,y_1,\ldots,y_{d-1})$ on $\U_0$ and $(\rho_T,x_T,y_1,\ldots,y_{d-1})$ on $\U_T$, where $(y_1,\ldots,y_{d-1})$ are local coordinates on $Y$. We drop the subscripts on $\rho_0,x_0,\rho_T,x_T$. We warn the reader that we will also soon introduce functions denoted $\varrho$, which will always be local defining functions of \textit{fiber} infinity.

	\subsection{Principal symbol over null infinity}
	
	 Let $(\zeta,\xi,\eta_1,\ldots,\eta_{d-1})$ be the de,sc-dual variables to $(\rho,x,y_1,\ldots,y_{d-1})$ in either $\U_0$ or $\U_T$. We use the notation $\|\eta\|_{\h}^2=\sum_{i,j=1}^{d-1} h^{ij}\eta_i\eta_j$, where $h^{ij}$ are the matrix elements of the dual metric to $\h$ in coordinates $(y_1,\ldots,y_{d-1})$.
	 
	Over $\U_0$, the de,sc-principal symbol of $P=\Box_{\g}+m^2$ is
	\begin{equation}
	\label{eq:symbol-U0}
		p
		=
		\xi\zeta-\frac{1}{2}\xi^2+\|\eta\|^2_{\h}+m^2+Q_0
		=
		-\frac{1}{2}(\xi-\zeta)^2+\frac{1}{2}\zeta^2+\|\eta\|^2_{\h}+m^2+Q_0,
	\end{equation}
		where $Q_0$ is a quadratic form in $\zeta,\xi,\eta$ with coefficients in $\mathcal{I}_{\scri^-}^{\epsilon}(\M)\cap\mathcal{I}_{\scri^+}^{\epsilon}(\M)$.
		 Over $\U_T$, the symbol is
		\begin{equation}
		\label{eq:symbol-UT}
		p=-\xi\zeta+\frac{1}{2}\xi^2+\|\eta\|_{\h}^2+m^2+Q_T
		=\frac{1}{2}(\xi-\zeta)^2-\frac{1}{2}\zeta^2+\|\eta\|_{\h}^2+m^2+Q_T,		
		\end{equation}
		where $Q_T$ is a quadratic form in $\zeta,\xi,\eta$ with coefficients in $\mathcal{I}_{\scri^-}^{\epsilon}(\M)\cap\mathcal{I}_{\scri^+}^{\epsilon}(\M)$.

	\subsection{Characteristic set over null infinity}

Recall that we write $\Gamma_f$ for the boundary hypersurface of $\PP$ at fiber infinity. We write $\pi$ for the bundle projection $\ttm\to\M$ extended to $\PP$.
	
		\subsubsection{Over $\U_0$}
		From Eq.~(\ref{eq:symbol-U0}), since $Q_0$ vanishes at null infinity, we see that there is $C>0$ such that $p>C(1+\xi^2+\zeta^2+\eta^2)$ on the set $\{x=0,\xi=\zeta\}\subset \pi^{-1}\U_0\backslash\Gamma_f$. Therefore, for any choice of defining functions for rescaling, $\tilde{p}$ does not vanish in a neighborhood of the set $\{x=0,\ \xi=\zeta\}\subset\pi^{-1}\U_0\backslash\Gamma_f$ or its closure in $\PP$. Then we define two regions of $\pi^{-1}\U_{\scri^+_-}$ and, similarly, two regions of $\pi^{-1}\U_{\scri^-_+}$ by
		$$
		\V_{\scri^+_-}^{\pm}
		=
		\overline{
		\{\pm(\xi-\zeta)>\varepsilon\}}\cap\pi^{-1}\U_{\scri^+_-}
		,
		\hspace{30pt}
		\V_{\scri^-_+}^{\pm}
		=
		\overline{
		\{\mp(\xi-\zeta)>\varepsilon\}}\cap \pi^{-1}\U_{\scri^-_+}
		$$
for small $\varepsilon>0$, the closures being taken in $\PP$. For $\varepsilon$ small enough, over a neighborhood of null infinity $\tilde{\Sigma}\cap\pi^{-1}\U_0$ is contained in the union of these four regions.

		To describe the characteristic set at fiber infinity as well, we introduce new fiber coordinates which extend smoothly to fiber infinity in these four phase-space regions:
\begin{equation}
		\varrho=\frac{1}{|\xi-\zeta|},
		\hspace{30pt}
		\omega=\frac{\zeta}{\xi-\zeta},
		\hspace{30pt}
		\theta=\frac{\eta}{\xi-\zeta}.
\label{eq:coord-phase-U0}
\end{equation}
		In all four regions, $\varrho\geqslant 0$ is a defining function of fiber infinity. In terms of these coordinates, over $\U_0$
		\begin{equation}
		p=\frac{1}{2\varrho^2}(\omega^2+2\|\theta\|^2_{\h}+2m^2\varrho^2-1 + \tilde{Q}_0),
		\label{eq:char-set}
		\end{equation}
		where $\tilde{Q}_0 \in \mathcal{I}_{\scri^-}^{\epsilon}(\PP) \cap \mathcal{I}_{\scri^+}^{\epsilon}(\PP)$.
		
		\subsubsection{Over $\U_T$}
		From Eq.~(\ref{eq:symbol-UT}), since $Q_T$ vanishes at null infinity, we see that there is $C>0$ such that $p>C(1+\xi^2+\zeta^2+\eta^2)$ on the set $\{x=0,\zeta=0\}\subset \pi^{-1}\U_T\backslash\Gamma_f$. Therefore, $\tilde{p}$ does not vanish in a neighborhood of the set $\{x=0,\ \zeta=0\}\subset\pi^{-1}\U_T\backslash\Gamma_f$ or its closure in $\PP$. Then we define two regions of $\pi^{-1}\U_{\scri^+_+}$ and, similarly, two regions of $\pi^{-1}\U_{\scri^-_-}$ by
		 $$
		 \V_{\scri^+_+}^{\pm}=\overline{\{\pm\zeta>\varepsilon\}}\cap\pi^{-1}\U_{\scri^+_+},
		 \hspace{30pt}
		 \V_{\scri^-_-}^{\pm}=\overline{\{\mp\zeta>\varepsilon\}}\cap\pi^{-1}\U_{\scri^-_-}.
		 $$
		For $\varepsilon>0$ small enough, over a neighborhood of null infinity $\tilde{\Sigma}\cap\pi^{-1}\U_T$ is contained in the union of these four regions.
		
		Again, we introduce new fiber coordinates which extend smoothly to fiber infinity in these four phase-space regions:
		\begin{equation}
		\varrho=\frac{1}{|\zeta|},
		\hspace{30pt}
		\omega=\frac{\xi-\zeta}{\zeta},
		\hspace{30pt}
		\theta=\frac{\eta}{\zeta}.
		\label{eq:coord-phase-UT}
		\end{equation}
		In all four regions, $\varrho\geqslant 0$ is again a defining function of fiber infinity. In terms of these coordinates, over null infinity in $\U_T$ the symbol again has the expression 		\begin{equation}
		p=\frac{1}{2\varrho^2}(\omega^2+2\|\theta\|^2_{\h}+2m^2\varrho^2-1 + \tilde{Q}_T),
		\end{equation}
		where $\tilde{Q}_T \in \mathcal{I}_{\scri^-}^{\epsilon}(\PP)\cap \mathcal{I}_{\scri_+}^{\epsilon}(\PP)$.
		
	\subsection{Hamilton vector field over null infinity}
	
			Starting from Eqs.~(\ref{eq:Hp-coordinates-general}), (\ref{eq:symbol-U0}) and calculating the derivatives, we find that over $\U_0$ at finite frequencies,
			\begin{equation}
			\begin{split}
			H_p
			=
			\rho x\Bigg(
			\Big(
			(\xi-\zeta)\xi
			-4\|\eta\|_{\h}^2
			+Q_1^0
			\Big)\frac{\partial}{\partial\xi}
			+
			\Big(
			(\zeta-\xi)^2
			-2\|\eta\|_{\h}^2
			+Q_2^0
			\Big)\frac{\partial}{\partial\zeta}
			+\\
			+
			\Big(
			\zeta-\xi
			+ L_1^0
			\Big)x\frac{\partial}{\partial x}
			+
			\Big( \xi + L_2^0 \Big)\rho\frac{\partial}{\partial\rho}
			+
			\sum_{i,j=1}^{d-1}
			\Big( 2h^{ij}\eta_j +L_{3,i}^0 \Big) 			
			x\frac{\partial}{\partial y_i}
			+
			\sum_{i=1}^{d-1}
			\Big((2\zeta-\xi)\eta_i + Q_{3,i}^0 \Big) \frac{\partial}{\partial\eta_i}
			\Bigg),
			\end{split}		
			\label{eq:Hp-U0}
			\end{equation}
where $Q_1^0, Q_2^0, Q_{3,i}^0$ are quadratic forms and $L_1^0, L_2^0, L_{3,i}^0$ linear forms in $(\zeta,\xi,\eta)$ with coefficients in $\mathcal{I}_{\scri^-}^{\epsilon}(\M) \cap \mathcal{I}_{\scri^+}^{\epsilon}(\M)$.

			Similarly, over $\U_T$ at finite frequencies,		
			\begin{equation}
			\begin{split}
			H_p
			=
			\rho x\Bigg(
			-\Big(
			(\xi-\zeta)\xi
			+4\|\eta\|_{\h}^2
			+Q_1^T
			\Big)
			\frac{\partial}{\partial\xi}
			-
			\Big(
			(\xi-\zeta)^2
			+2\|\eta\|_{\h}^2
			+Q_2^T
			\Big)
			\frac{\partial}{\partial\zeta}
			+\\
			+
			\Big(\xi-\zeta + L_1^T\Big)
			x\frac{\partial}{\partial x}
			-
			\Big(\xi + L_2^T\Big) \rho\frac{\partial}{\partial\rho}
			+
			\sum_{i,j=1}^{d-1}
			\Big( 2h^{ij}\eta_j + L_{3,i}^T \Big)
			x\frac{\partial}{\partial y_i}
			+
			\sum_{i=1}^{d-1}
			\Big(
			(\xi-2\zeta)\eta_i
			+ Q_{3,i}^T
			\Big)
			\frac{\partial}{\partial\eta_i}
			\Bigg),
			\end{split}
			\label{eq:Hp-UT}
			\end{equation}			
			where $Q_1^T, Q_2^T, Q_{3,i}^T$ are quadratic forms and $L_1^T, L_2^T, L_{3,i}^T$ linear forms in $(\zeta,\xi,\eta)$ with coefficients in $\mathcal{I}_{\scri^-}^{\epsilon}(\M) \cap \mathcal{I}_{\scri^+}^{\epsilon}(\M)$.
			
			 Now we rewrite this in terms of the coordinates in Eqs.~(\ref{eq:coord-phase-U0}), (\ref{eq:coord-phase-UT}) valid on the compactification up to fiber infinity. On $\V_{\scri_+^-}^{\mp}$ and on $\V_{\scri_-^+}^{\pm}$, the result is
			\begin{equation}
			\begin{split}
			H_p
			=
			\pm\varrho^{-1}\rho x\Bigg(
			-x\frac{\partial}{\partial x}
			+
			\Big(
			\omega+1
			\Big)\rho\frac{\partial}{\partial\rho}
			-
			\Big(
			\omega-2\|\theta\|_{\h}^2
			\Big)
			\varrho\frac{\partial}{\partial\varrho}
			-\\
			-\Big(\omega-1\Big)\Big(\omega+1-2\|\theta\|_{\h}^2			\Big)
			\frac{\partial}{\partial\omega}
			+
			\Big(
			2\|\theta\|_{\h}^2-1 \Big)\sum_{i=1}^{d-1} \theta_i
			\frac{\partial}{\partial\theta_i}
			\Bigg)
			\mod \varrho^{-1}\rho x(\mathcal{I}_{\scri^-}^{\epsilon}\cap\mathcal{I}_{\scri^+}^{\epsilon})\Vb(\PP).
			\end{split}
			\label{eq:Hp-spacelike-corner}
			\end{equation}
			 On $\V_{\scri^-_-}^{\mp}$ and on $\V_{\scri^+_+}^{\pm}$, the result is
			\begin{equation}
			\begin{split}
			H_p
			=\pm
			\varrho^{-1}\rho x\Bigg(
			\omega x\frac{\partial}{\partial x}
			-\Big(\omega+1\Big)\rho\frac{\partial}{\partial\rho}
			+\Big(\omega^2+2\|\theta\|_{\h}^2\Big)\varrho\frac{\partial}{\partial\varrho}
			+
			\Big(\omega-1\Big)\Big(\omega(\omega+1)+2\|\theta\|_{\h}^2\Big)\frac{\partial}{\partial\omega}
			+\\
			+\Big(\omega^2+\omega-1+2\|\theta\|_{\h}^2\Big) \sum_{i=1}^{d-1} \theta_i\frac{\partial}{\partial\theta_i}\Bigg)
			\mod \varrho^{-1}\rho x(\mathcal{I}_{\scri^-}^{\epsilon}\cap\mathcal{I}_{\scri^+}^{\epsilon})\Vb(\PP).
			\end{split}
			\label{eq:Hp-timelike-corner}
			\end{equation}
			
Since in $\Sigma\cap \scri$ we have $|\omega|<1$ in any of these coordinate charts, by checking that the signs of the $\rho\frac{\partial}{\partial\rho}$ terms agree on the overlap of charts we can see that the overlapping pairs are $(\V_{\scri^{\pm}_{\pm}}^+,\V_{\scri^{\pm}_{\mp}}^+)$ and $(\V_{\scri^{\pm}_{\pm}}^-,\V_{\scri^{\pm}_{\mp}}^-)$. Moreover, from the signs of the $x\frac{\partial}{\partial x}$ terms over spacelike infinity near the corners with null infinity, we see that the non-trapping assumption at spacelike infinity requires that bicharacteristics in the component of $\Sigma$ in $\V_{\scri^-_+}^{\pm}$ flow to $\V_{\scri^+_-}^{\pm}$. Thus, the four charts with the $+$ superscript cover (near null infinity) one of the components of $\Sigma$, which we call $\Sigma^+$, and those with the $-$ superscript cover the other component, which we call $\Sigma^-$. We identify these as the positive- and negative-frequency components respectively (which could instead be fixed by a choice of time orientation and a convention for whether future- or past-directed covectors represent positive frequencies; see footnote~\ref{note:signs}).
			
We see that
	\begin{itemize}
	\item Near null infinity in $\V_{\scri^-_-}^{\pm} \cup \V_{\scri^-_+}^{\pm}$, the flow has the same structure as in $\V_{\scri^+_-}^{\pm} \cup \V_{\scri^+_+}^{\pm}$, but with reversed direction (relative to the corresponding timelike infinity face) and consequently reversed stability properties of any radial points.
	
	\item Near null infinity in $\V_{\scri^{\pm}_-}^+ \cup \V_{\scri^{\pm}_+}^+$, the flow has the same structure as in $\V_{\scri^{\pm}_-}^- \cup \V_{\scri^{\pm}_+}^-$, but with reversed direction (relative to the corresponding timelike infinity face) and consequently reversed stability properties of any radial points.
	
	\end{itemize}
	
	Therefore, for simplicity and clarity we can restrict attention to $\V_{\scri^+_+}^+$ and $\V_{\scri^+_-}^+$, i.e. the flow in the positive-frequency component of the characteristic set near future null infinity.
			
	\subsection{Radial sets}
We now find the radial sets of $\Hp$ in the characteristic set over null infinity. When defining $\Hp$, in this section we rescale $H_p$ using $\varrho^{-1}\rho_0 x_0$ over $\U_0$ and using $\varrho^{-1}\rho_T x_T$ over $\U_T$.
	
	\subsubsection{Over the corner of null and spacelike infinity}
	In $\V^+_{\scri^+_-}$, restricted to $\{x_0=0\}$ and the characteristic set $\{2\|\theta\|_{\h}^2=1-\omega^2-2m^2\varrho^2\}$, we have
	\begin{equation}
	\begin{split}
	\Hp=
			\Big(\omega+1\Big)\rho_0\frac{\partial}{\partial\rho_0}
			-
			\Big(\omega^2+\omega+2m^2\varrho^2-1\Big)\varrho\frac{\partial}{\partial\varrho}
			-
			\Big(\omega-1\Big)\Big(\omega^2+\omega+2m^2\varrho^2\Big)\frac{\partial}{\partial\omega}
			-
			\\
			-
			\Big(\omega^2+2m^2\varrho^2\Big)\sum_{i=1}^{d-1} \theta_i\frac{\partial}{\partial\theta_i}.
	\end{split}	
	\end{equation}
	\begin{itemize}
	\item Away from fiber infinity ($\{\varrho=0\}$), for the last term to vanish we need $\theta=0$, i.e. $2m^2\varrho^2=1-\omega^2$. This yields
	$$\Hp=
			\Big(\omega+1\Big)\rho_0\frac{\partial}{\partial\rho_0}
			-
			\omega\varrho\frac{\partial}{\partial\varrho}
			-
			\Big(\omega^2-1\Big)\frac{\partial}{\partial\omega},$$
			which never vanishes for $\varrho>0$. So all of the radial points are at fiber infinity, where we have
			$$\Hp=
			\Big(\omega+1\Big)\rho_0\frac{\partial}{\partial\rho_0}
			-\omega\Big(\omega-1\Big)\Big(\omega+1\Big)\frac{\partial}{\partial\omega}
			-\omega^2 \sum_{i=1}^{d-1} \theta_i\frac{\partial}{\partial\theta_i}.$$
	\item Away from spacelike infinity ($\{\rho_0=0\}$), for radial points we need $\omega=-1$. This yields $\Hp=-\sum_{i=1}^{d-1} \theta_i\frac{\partial}{\partial\theta_i}$, so we see that the radial points are $\{\varrho=0,\ \omega=-1,\ \theta=0,\ \rho_0,y\ \text{arbitrary}\}$.

	\item At the corner of spacelike infinity and fiber infinity, we have
	$$\Hp=
			-\omega\Big(\omega-1\Big)\Big(\omega+1\Big)\frac{\partial}{\partial\omega}
			-\omega^2 \sum_{i=1}^{d-1}\theta_i\frac{\partial}{\partial\theta_i}.$$
		We can read off the radial points:
		\begin{itemize}
		\item $\{\varrho=0,\ \omega=-1,\ \theta=0,\ \rho_0=0,\ y=\text{any}\}$, which is the limiting set at fiber infinity of the first radial set found above;
		\item $\{\varrho=0,\ \omega=1,\ \theta=0,\ \rho_0=0,\ y=\text{any}\}$;
		\item $\{\varrho=0,\ \omega=0,\ \|\theta\|_{\h}^2=\frac{1}{2},\ \rho_0=0,\ y=\text{any}\}$.
		\end{itemize}
	\end{itemize}

\subsubsection{Over the corner of null and timelike infinity}
	In $\V^+_{\scri^+_+}$, restricted to $\{x_T=0\}$ and the characteristic set $\{2\|\theta\|_{\h}^2=1-\omega^2-2m^2\varrho^2\}$, we have
	\begin{equation}
	\Hp
			=
			-\Big(\omega+1\Big)\rho_T\frac{\partial}{\partial\rho_T}
			+\Big(1-2m^2\varrho^2\Big)\varrho\frac{\partial}{\partial\varrho}
			+
			\Big(\omega-1\Big)\Big(\omega+1-2m^2\varrho^2\Big)\frac{\partial}{\partial\omega}
			+\Big(\omega-2m^2\varrho^2\Big)  \sum_{i=1}^{d-1}\theta_i\frac{\partial}{\partial\theta_i}.
	\end{equation}
	\begin{itemize}
	\item Away from fiber infinity, for the second term to vanish we need $\varrho=\frac{1}{\sqrt{2}m}$, which on the characteristic set also implies $\omega=0$ and $\theta=0$. This yields $\Hp=-\rho_T\frac{\partial}{\partial\rho_T}$, so we see that the radial points are $\{\varrho=\frac{1}{\sqrt{2}m},\ \omega=0,\ \theta=0,\ \rho_T=0,\ y=\text{any}\}$. Since this is a radial set in $I^T\backslash\Gamma_f$, it must be part of the radial set $\mathcal{R}$ over timelike infinity whose existence is assumed in the causal structure/non-trapping assumptions on $\g$.
	
	\item Now we restrict to fiber infinity, where
	$$\Hp
			=
			-\Big(\omega+1\Big)\rho_T\frac{\partial}{\partial\rho_T}
			+
			\Big(\omega^2-1\Big)\frac{\partial}{\partial\omega}
			+\omega \sum_{i=1}^{d-1} \theta_i\frac{\partial}{\partial\theta_i}.$$
			We can read off the radial points:
			\begin{itemize}
			\item $\{\varrho=0,\ \omega=1,\ \theta=0,\ \rho_T=0,\ y=\text{any}\}$;
			\item $\{\varrho=0,\ \omega=-1,\ \theta=0,\ \rho_T=\text{any},\ y=\text{any}\}$.
			\end{itemize}
	\end{itemize}
	
	\subsubsection{Complete list of radial points}
	In $\V_{\scri^+_-}^+$, the radial sets are (using Sussman's notation)
	\begin{itemize}
		\item $\mathcal{K}_+^+=\{x_0=0,\ \rho_0=0,\ y=\text{any},\ \varrho=0,\ \omega=1,\ \theta=0\}$.
		\item $\mathcal{A}_+^+=\{x_0=0,\ \rho_0=0,\ y=\text{any},\ \varrho=0,\ \omega=0,\ \|\theta\|_{\h}^2=\frac{1}{2}\}$.
		\item $\mathcal{N}_+^+\cap \V_{\scri^+_-}^+=\{x_0=0,\ \rho_0=\text{any},\ y=\text{any},\ \varrho=0,\ \omega=-1,\ \theta=0\}$.
	\end{itemize}
	In $\V_{\scri^+_+}^+$, the radial sets are
	\begin{itemize}
		\item $\mathcal{R}_+^+\cap\scri^+_+=\{x_T=0,\ \rho_T=0,\ y=\text{any},\ \varrho=\frac{1}{\sqrt{2}m},\ \omega=0,\ \theta=0\}$.
		\item $\mathcal{C}_+^+=\{x_T=0,\ \rho_T=0,\ y=\text{any},\ \varrho=0,\ \omega=1,\ \theta=0\}$.
		\item $\mathcal{N}_+^+\cap \V_{\scri^+_+}^+=\{x_T=0,\ \rho_T=\text{any},\ y=\text{any},\ \varrho=0,\ \omega=-1,\ \theta=0\}$.
	\end{itemize}
The extended radial set $\mathcal{N}^+_+$ can also be described as the set of limit points in $\Sigma^+\cap\scri^+$ of the set of points of $\ttm|_{\scri^+}$ where $\xi=\eta=0$, i.e. of the span of $\frac{d\rho}{\rho^2 x}$, in any of the coordinates we used (``momenta strictly along the fibers of null infinity"), or indeed as the image, under the metric, of multiples of $x^2\rho\partial_x$, i.e.\ it corresponds to waves propagating normally into null infinity (under the Hamilton flow). Over the interior of $\scri^+$, this span can be defined without reference to local coordinates as the image of the 0-cotangent bundle (one-forms dual to smooth vector fields vanishing on $\partial\M$) under its natural inclusion into the de,sc-cotangent bundle (since $\frac{d x}{\rho x^2}$ and $\frac{dy_i}{\rho x^2}$ are too singular at $\scri^+$ to define 0-one-forms).
	
	Due to the symmetries noted above, there are also radial sets analogous to the first three in each of $\V_{\scri^-_+}^-$, $\V_{\scri^-_+}^+$, and $\V_{\scri^+_-}^-$ and radial sets analogous to the last three in each of $\V_{\scri^-_-}^-$, $\V_{\scri^-_-}^+$, and $\V_{\scri^+_+}^-$. They are denoted using the same letters but different subscripts and superscripts $\pm$: the superscript indicates the component of the characteristic set, whereas the subscript indicates past or future null infinity. By our non-trapping assumptions, there are no radial points elsewhere in $\partial\PP$ except a radial set $\mathcal{R}$ contained in $I^T\backslash \Gamma_f$, whose components in $\Sigma^{\alpha}\cap I^{\beta}$ for $\alpha,\beta\in \{+,-\}$ we also denote by $R^{\alpha}_{\beta}$.

	\subsection{Linearization of the flow at radial points}
	We now compute, starting from Eqs.~(\ref{eq:Hp-spacelike-corner}), (\ref{eq:Hp-timelike-corner}), the linearizations of $\Hp$ restricted to the relevant corners at all of the radial points found above and find their eigenvectors and eigenvalues, along with the ``eigenvalues" in the directions transverse to the corner (i.e. the coefficients of the transverse components of $\Hp$ at the radial point as a b-vector, which are true eigenvalues if the symbol is classical). We denote $\lambda_{\Gamma_i}$ the coefficient of the component transverse to hypersurface $\Gamma_i$. In expressions for the linearization, we abuse notation by identifying $L(\Hp|_{\Gamma})(\alpha):T_{\alpha}\Gamma\to T_{\alpha}\Gamma$ with a linear vector field on $\R^{\dim\Gamma}$. However, we still consider the eigenvectors as vectors in $T_{\alpha}\Gamma\subset T_{\alpha}\PP$.
	
	We note that the eigenvalues do depend on the choice of boundary-defining functions for rescaling $H_p$, and if $p_1\neq 0$ we need to take care to use the same rescaling for both in order to get the correct inequalities in Theorem~\ref{thm:localized-rp-main}. However, since our operator $P$ is symmetric (hence $p_1=0$) and a choice of rescaling only affects the eigenvalues by multiplying them all by the same factor, the inequalities obtained do not depend on this choice. We will continue to use the same choices of rescaling used in the preceding calculations.
	
		\subsubsection{Near the set $\mathcal{K}^+_+$}
	Let $\alpha=(x_0=0,\rho_0=0,y=y^{\alpha},\varrho=0,\omega=1,\theta=0)\in\mathcal{K}_+^+ \subset \scri^+_- \cap\Gamma_f$. Then as a b-vector,
	$$
\Hp(\alpha) = -x_0\frac{\partial}{\partial x_0} + 2\rho_0\frac{\partial}{\partial\rho_0} - \varrho\frac{\partial}{\partial\varrho},
	$$
	so the relevant coefficients are $ \lambda_{I^0}=2$, $\lambda_{\scri^+}=-1$, $\lambda_{\Gamma_f}=-1$. Meanwhile,
	$$
	L(\Hp|_{\scri^+_+\cap\Gamma_f})(\alpha)
	=
	-2(\omega-1)\frac{\partial}{\partial\omega}
	- \sum_{i=1}^{d-1} \theta_i\frac{\partial}{\partial\theta_i}.
	$$
	The eigenvector-eigenvalue pairs are $ \left( \frac{\partial}{\partial \omega},-2\right)$, $ \left(\frac{\partial}{\partial\theta_i},-1\right)$, $\left(\frac{\partial}{\partial y_i},0\right)$. The vector $\frac{\partial}{\partial\omega}$ is transverse to $\Sigma\cap \scri^+_-\cap\Gamma_f$.
	
	We see that $\mathcal{K}^+_+$ is a local sink for the flow in $I^0$, any bicharacteristic limiting to it in the forward direction is contained in $I^0$, and any bicharacteristic limiting to it in the backward direction is contained in $\scri^+\cap\Gamma_f$.
	
	\subsubsection{Near the set $\mathcal{A}^+_+$}
	Let $\alpha=(x_0=0,\rho_0=0,y=y^{\alpha},\varrho=0,\omega=0,\theta=\theta^{\alpha}\in \frac{1}{\sqrt{2}}S^{d-1}_h)\in\mathcal{A}_+^+ \subset \scri^+_-\cap\Gamma_f$. Then as a b-vector,
	$$
\Hp(\alpha) = -x_0\frac{\partial}{\partial x_0} + \rho_0\frac{\partial}{\partial\rho_0} + \varrho\frac{\partial}{\partial\varrho},	
	$$
	so the relevant coefficients are $ \lambda_{\scri^+} = -1$, $ \lambda_{I^0}=1,\ \lambda_{\Gamma_f}=1$. 
	
	Because $\mathcal{A}$ is not located at $\theta=0$, it is convenient to change coordinates before considering the linearization. Without loss of generality, let us assume that $\theta^{\alpha}_{d-1}\neq 0$; then we can replace the coordinates $(\theta_1,\ldots,\theta_{d-1})$ with $(\frac{\theta_1}{\theta_{d-1}},\ldots,\frac{\theta_{d-2}}{\theta_{d-1}},\|\theta\|_{\h}^2)$, so $\sum_{i=1}^{d-1}\theta_i \frac{\partial}{\partial\theta_i} = 2\|\theta\|_{\h}^2\frac{\partial}{\partial\|\theta\|_{\h}^2}$. Then
	
	$$
	L(\Hp|_{\scri^+_-\cap\Gamma_f})(\alpha)
	=
	\left(\omega-2\left(\|\theta\|_{\h}^2-\frac{1}{2}\right)\right)\frac{\partial}{\partial\omega}
	+
	2\left(\|\theta\|_{\h}^2-\frac{1}{2}\right)\frac{\partial}{\partial \|\theta\|_{\h}^2}.
	$$
	The eigenvector-eigenvalue pairs are $ \left(\frac{\partial}{\partial\omega}, 1\right)$, $\left(\frac{\partial}{\partial\|\theta\|_{\h}^2}-2\frac{\partial}{\partial\omega},2\right)$, $\left(\frac{\partial}{\partial y_i},0\right)$, $\left(\frac{\partial}{\partial(\theta_i/\theta_{d-1})},0\right)$. The vector $\frac{\partial}{\partial\|\theta\|_{\h}^2}-2\frac{\partial}{\partial\omega}$ is transverse to $\Sigma\cap\scri^+_-\cap\Gamma_f$.
	
	We see that $\mathcal{A}^+_+$ is a local source for the flow in $\scri^+$, any bicharacteristic limiting to it in the forward direction is contained in $I^0\cap\Gamma_f$, and any bicharacteristic limiting to it in the backward direction is contained in $\scri^+$.
	
		\subsubsection{Near the set $\mathcal{N}_+^+$}
	Let $\alpha=(x_T=0,\rho_T=\rho_T^{\alpha},y=y^{\alpha},\varrho=0,\omega=-1,\theta=0)\in\mathcal{N}_+^+\cap\V^+_{\scri^+_+} \subset \scri^+\cap\Gamma_f$. Then as a b-vector,
	$$
\Hp(\alpha) = -x_T\frac{\partial}{\partial x_T} + \varrho\frac{\partial}{\partial \varrho},
	$$
	so the relevant coefficients are $\lambda_{\scri^+}=-1$, $\lambda_{\Gamma_f}=1$, and if $\alpha$ is at the corner with $I^0$, i.e. $\rho_T^{\alpha}=0$, then also $\lambda_{I^+}=0$. Meanwhile,
	$$
	L(\Hp|_{\scri^+\cap\Gamma_f})(\alpha)
	=
	-\rho_T^{\alpha}(\omega+1)\frac{\partial}{\partial\rho_T}
	+2(\omega+1)\frac{\partial}{\partial\omega}
	-\sum_{i=1}^d \theta_i\frac{\partial}{\partial\theta_i}.
	$$
	The eigenvector-eigenvalue pairs are $ \left( \frac{\partial}{\partial\omega}-\frac{1}{2}\rho_T^{\alpha}\frac{\partial}{\partial\rho_T}, 2\right)$, $ \left(\frac{\partial}{\partial\theta_i}, -1\right)$, $\left(\frac{\partial}{\partial\rho_T},0\right)$, $ \left(\frac{\partial}{\partial y_i},0\right)$. The vector $\frac{\partial}{\partial\omega}-\frac{1}{2}\rho_T^{\alpha}\frac{\partial}{\partial\rho_T}$ is transverse to $\Sigma\cap\scri^+\cap\Gamma_f$.

	The analysis near a point $\alpha=(x_0=0,\rho_0=\rho_0^{\alpha},y=y^{\alpha},\varrho=0,\omega=-1,\theta=0)\in\mathcal{N}^+_+\cap\V_{\scri^+_-}^+$ goes analogously and gives the same results, with $I^+$ and $\rho_T$ replaced by $I^0$ and $-\rho_0$ respectively.
	
	We see that $\mathcal{N}^+_+$ is a local sink for the flow in $\Gamma_f$, any bicharacteristic limiting to it in the forward direction is contained in $\Gamma_f$, and any bicharacteristic limiting to it in the backward direction is contained in $\scri^+$.

	\subsubsection{Near the set $\mathcal{C}^+_+$}
	Let $\alpha=(x_T=0,\rho_T=0,y=y^{\alpha},\varrho=0,\omega=1,\theta=0)\in\mathcal{C}_+^+\subset\scri^+_+\cap\Gamma_f$. Then as a b-vector,
	$$\Hp(\alpha) = x_T\frac{\partial}{\partial x_T} - 2\rho_T\frac{\partial}{\partial\rho_T} + \varrho\frac{\partial}{\partial\varrho},$$
	so the relevant coefficients are $\lambda_{\scri^+}=1$, $\lambda_{\Gamma_f}=1$, and $\lambda_{I^+}=-2$. Meanwhile,
	$$
	L(\Hp|_{\scri^+_+\cap\Gamma_f})(\alpha)
	=
	2(\omega-1)\frac{\partial}{\partial\omega}
	+ \sum_{i=1}^{d-1} \theta_i\frac{\partial}{\partial\theta_i}.
	$$
	The eigenvector-eigenvalue pairs are $\left(\frac{\partial}{\partial\omega},2\right)$, $\left(\frac{\partial}{\partial\theta_i},1\right)$, $\left(\frac{\partial}{\partial y_i},0\right)$. The vector $\frac{\partial}{\partial\omega}$ is transverse to $\Sigma\cap\scri^+_+\cap\Gamma_f$.
	
	We see that $\mathcal{C}^+_+$ is a local source for the flow in $I^+$, any bicharacteristic limiting to it in the forward direction is contained in $\scri^+\cap\Gamma_f$, and any bicharacteristic limiting to it in the backward direction is contained in $I^+$.

	\subsubsection{Near the set $\mathcal{R}_+^+$}
	Let $\alpha=(x_T=0,\rho_T=0,y=y^{\alpha},\varrho=\frac{1}{\sqrt{2}m},\omega=0,\theta=0)\in\mathcal{R}_+^+\cap\scri^+ \subset \scri^+_+$. Then as a b-vector, $\Hp(\alpha) = -\rho_T\frac{\partial}{\partial\rho_T}$, so the relevant coefficients are $\lambda_{I^+}=-1$ and $\lambda_{\scri^+}=0$. Meanwhile,
	$$
	L(\Hp|_{\scri^+_+})(\alpha)
	=
	-\omega\frac{\partial}{\partial\omega}
	-\sum_{i=1}^{d-1} \theta_i\frac{\partial}{\partial\theta_i}.
	$$
	The eigenvector-eigenvalue pairs are $\left(\frac{\partial}{\partial\theta_i},-1\right)$, $\left(\frac{\partial}{\partial\omega},-1\right)$, $\left(\frac{\partial}{\partial\varrho},0\right)$, $\left(\frac{\partial}{\partial y_i},0\right)$. The vector $\frac{\partial}{\partial\varrho}$ is transverse to $\Sigma\cap\scri^+_+$, so despite its zero eigenvalue the radial point is nondegenerate.
	
	The postulated radial set $\mathcal{R}^+_+$ is entirely located in $I^+\backslash \Gamma_f$ (at finite frequency). Since every point in $\mathcal{R}^+_+\backslash \scri^+$ is contained in $I^+$ and no other boundary faces, the sum $\sum_{i=1}^k (-2s_i+m_i-1)\lambda_i$ in the radial point estimate at any such point only contains one term, with $\lambda_{I^+}$ as an overall factor. Consequently, once we know that Theorem~\ref{thm:localized-rp-main} is applicable, further details of the linearization are not important because the applicable estimate is determined just by the sign of $\lambda_{I^+}$, that is whether we are considering a source or a sink of the flow; in the case of $\mathcal{R}^+_+$, the non-trapping assumptions imply that it is a sink. The result is the typical below- and above-threshold estimates with threshold value $s_{I^+}=-\frac{1}{2}$ (since $m_{I^+}=0$ for $P$).

	\subsection{Global structure of the flow}
	At null infinity, the flow is known exactly and is the same as computed by Sussman \cite{Sussman}. It is pictured in Figure~\ref{fig:Hamilton-flow}.

\begin{figure}
\begin{center}
\includegraphics{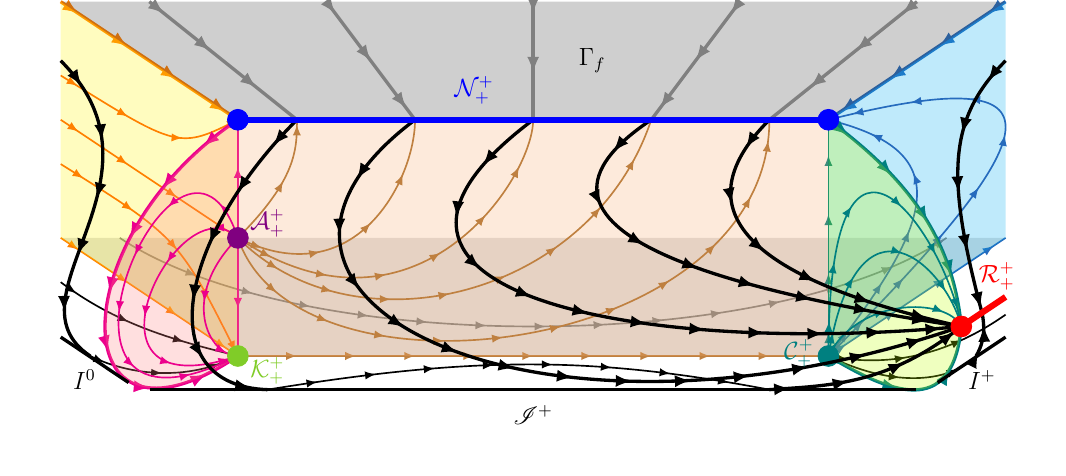}
\end{center}
\caption{Illustration of the flow of $\Hp$ in $\Sigma^+$ over a neighborhood of $\scri^+$. Corners of $\PP$ are color-coded: pink is $\scri^+_-$, green is $\scri^+_+$, brown is $\scri^+\cap\Gamma_f$, yellow is $I^0\cap\Gamma_f$, light blue is $I^+\cap\Gamma_f$. The four three-dimensional regions separated by the corners are the parts of $\Sigma^+$ over $\scri^+$, $I^0$, $I^+$, and at fiber infinity $\Gamma_f$. Only the characteristic set is shown; the $y$ and $\theta$ variables have been suppressed, which does not lose information on $\Sigma^+\cap\scri^+$, where the flow is neutral in $y$ and radial in $\theta$ and $\|\theta\|_{\h}^2$ is a function of the other coordinates by Eq.~(\ref{eq:char-set}). Off $\scri^+$, where this is no longer the case for general metrics of our class, the figure is only an indication of the flow structure (faithful at $I^0$ and/or $I^+$ if the metric is asymptotically Minkowski there). The curved surface in front (not shaded) is the set $\{\theta=0\}$ in $\Sigma^+\cap\scri^+$. To avoid overloading the figure, bicharacteristics are only shown on the corners and the front, top, and bottom surfaces, which is enough to reconstruct the flow structure at and near $\scri^+$. For more figures illustrating various parts of the flow, see \cite{Sussman,HV-eb,JMS}. Color scheme for radial sets is borrowed from \cite{Sussman}.}
\label{fig:Hamilton-flow}
\end{figure}
	
	The information provided by the flow over null infinity, the non-trapping assumptions, and the nonzero eigenvalues at the radial sets over timelike infinity would be enough for our purposes. Nevertheless, we go into a bit more detail to show that every bicharacteristic over $\M^{\circ}$, $(I^0)^{\circ}$, or $(I^{\pm})^{\circ}$ (in addition to $\scri^{\pm}$, where one can check this directly) actually limits to radial points in both directions and provide a more complete description of the global flow structure (in particular justifying the depiction of the flow in Figure~\ref{fig:Hamilton-flow} off $\scri$).

	Consider a bicharacteristic $\gamma$ in $\Sigma^+$ over $(I^0)^{\circ}$ which limits to $\scri^+$ in the forward direction, as stipulated in the non-trapping assumptions. Since $\Sigma^+ \cap \scri^+_-$ is a compact set, $\gamma$ must have a limit point in it for the forward flow. We know that every point of $\Sigma^+ \cap \scri^+_-$ except those in $\mathcal{A}^+_+$, $\mathcal{N}^+_+ \cap I^0$, or the bicharacteristics going from the former to the latter belongs to a bicharacteristic $\gamma'$ which limits in the forward direction to $\mathcal{K}^+_+$. For any neighborhood of $\mathcal{K}^+_+$, any point on such $\gamma'$ has a neighborhood which is taken by the flow to the chosen neighborhood of $\mathcal{K}^+_+$. Since $\mathcal{K}^+_+$ is a local sink for the flow over $I^0$, this means that points of $\gamma'$ cannot be limit points of bicharacteristics over $I^0$ because any bicharacteristic which gets close enough to the point must limit to $\mathcal{K}^+_+$. Thus we conclude that $\gamma$ must limit either to $\mathcal{K}^+_+$ or to the union of $\mathcal{A}^+_+$, $\mathcal{N}^+_+\cap I^0$, and the bicharacteristics connecting them (since it is impossible to have limit points in both of these sets without also having limit points elsewhere).
	
	If $\gamma$ is at finite frequency, the latter option is not possible because the coefficient of the $\varrho \frac{\partial}{\partial\varrho}$ component of $\Hp$ as a b-vector field is positive on that set, which prevents such limiting behavior (cf. Proposition~\ref{thm:bdry-eval}). Thus, any finite-frequency bicharacteristics in $\Sigma^+\cap (I^0)^{\circ}$ limit to $\mathcal{K}^+_+$ in the forward direction. At infinite frequency, on the other hand, $\mathcal{N}^+_+ \cap I^0$ is a local sink for the flow in $I^0\cap\Gamma_f$; therefore, by the same argument any infinite-frequency $\gamma$ which does not limit to $\mathcal{K}^+_+$ must in fact limit to either $\mathcal{A}^+_+$ or $\mathcal{N}^+_+$. Thus we conclude that $\gamma$ must limit to a radial point.

An analogous argument (simpler because there are only two radial sets, one of which is a local sink and the other a local source for the flow in $I^+\cap\Gamma_f$) shows that any infinite-frequency bicharacteristic in $\Sigma^+$ over $I^+$ limits to $\mathcal{C}^+_+$ in the backward direction and $\mathcal{N}^+_+ \cap I^+$ in the forward direction. Similarly, using the fact that $\mathcal{C}^+_+$ is a sink in all directions in $I^+$ while $\Hp\varrho >0$ at finite frequencies near $\mathcal{N}^+_+\cap I^+$, we deduce that any finite-frequency bicharacteristic in $\Sigma^+$ over $(I^{\pm})^{\circ}$ limits to $\mathcal{C}^+_+$ in the backward direction (and, as stipulated by the non-trapping assumptions, to $\mathcal{R}^+_+$ in the forward direction).

We turn to bicharacteristics over the spacetime interior. Since $\mathcal{N}^+_+$ is a local sink for the flow in $\Gamma_f$, reasoning as before using the flow over $\scri^+$ we conclude that any bicharacteristic $\gamma$ in $\Sigma^+$ over $\M^{\circ}$ must limit in the forward direction either to $\mathcal{N}^+_+$ or to the union of $\mathcal{A}^+_+$, $\mathcal{K}^+_+$, $\mathcal{C}^+_+$, the bicharacteristics going from $\mathcal{A}^+_+$ to $\mathcal{K}^+_+$, and the bicharacteristics going from $\mathcal{K}^+_+$ to $\mathcal{C}^+_+$. In the latter case, since $\Hp \rho_0>0$ away from $I^0$ near $\mathcal{A}^+_+$, $\mathcal{K}^+_+$, and the bicharacteristics connecting them, we conclude that $\gamma$ cannot have limit points in that part of the set. Further, since $\Hp \rho_0>0$ on the part of a neighborhood of the bicharacteristics from $\mathcal{K}^+_+$ to $\mathcal{C}^+_+$ in $\V_{\scri^+_-}^+$, and $\Hp \rho_T<0$ on the part in $\V_{\scri^+_+}^+$, we conclude that $\gamma$ also cannot have limit points on those bicharacteristics. Since $\gamma$ cannot limit to $\mathcal{C}^+_+$ due to the $\scri^+$ eigenvalue, we see that all bicharacteristics over the spacetime interior must limit to $\mathcal{N}$.
	
	We conclude that the Hamilton flow connects the radial sets over null and timelike infinity in the order shown in Figure~\ref{fig:connectivity-schematic}. 	At least at the level of detail of our analysis, we cannot rule out the existence of bicharacteristics over $\M^{\circ}$ which limit to $\mathcal{N}\cap(I^0 \sqcup I^T)$ or bicharacteristics over $(I^T)^{\circ}$ which limit to $\mathcal{R}\cap \scri$, since this would require analyzing details of the dynamics in directions corresponding to zero eigenvalues of the linearized flow. The connectivity of radial sets over $\scri^+_-$ with those over $\scri^-_+$ via the flow over $I^0$ is not directly specified by the non-trapping assumptions, but it is possible that it is uniquely determined by topological considerations; in any case, it is not important for our purposes. The main point which allows microlocal propagation estimates to be combined into a global result is the fact that the flow defines a partial order on the set of points of $\Sigma$, with any point being downstream from the global source $\mathcal{R}^+_-\sqcup\mathcal{R}^-_+$ and upstream from the global sink $\mathcal{R}^+_+\sqcup\mathcal{R}^-_-$ through a finite chain of bicharacteristics alternating with radial points.

\begin{figure}
\centering
\includegraphics{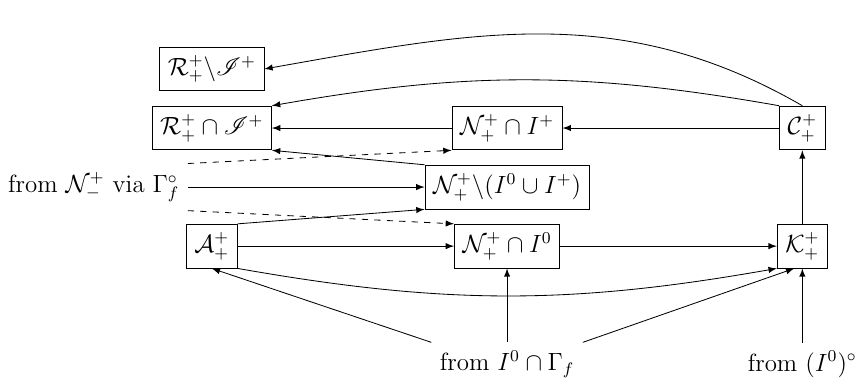}
\caption{Connectivity of the radial sets over $\scri^+$ and $I^+$ via the Hamilton flow in $\Sigma^+$. The dashed lines indicate that bicharacteristics limiting to $\mathcal{N}^+_+\cap (I^0 \cup I^+)$ from the interior may or may not exist.}
\label{fig:connectivity-schematic}
\end{figure}

The diagram illustrates the necessity of a localized radial point result: since there is a path $(\mathcal{N}\cap I^0) \to \mathcal{K} \to \mathcal{C} \to (\mathcal{N}\cap I^+)$ along bicharacteristics, one needs to already know regularity at one part of $\mathcal{N}$ before that knowledge can be propagated into another part of $\mathcal{N}$. Sussman deals with this using radial point estimates localized to extended subsets of $\mathcal{N}$ adjacent to either the spacelike or timelike corner \cite[Propositions 5.8 and 5.9]{Sussman}. Theorem~\ref{thm:localized-rp-main} is a pointwise-localized general-purpose version of such a result.

\section{Global regularity result and distinguished propagators}
\label{sec:global}
We now combine the propagation results into a statement about Fredholm mapping properties of $P$ between weighted Sobolev spaces. We define the advanced, retarded, Feynman, and anti-Feynman realizations of the Klein-Gordon operator as Fredholm operators, thereby extending the construction for Lorentzian scattering spaces to include radiative spacetimes.
The line of reasoning originates in \cite{Vasy-AH-KdS} (see \cite{Grenoble-notes} for an expository account) and is now a standard tool for establishing a Fredholm theory for non-elliptic operators.

The starting point is a result on the global Sobolev regularity of solutions to $Pu=f$.

\begin{prop}
\label{thm:regularity}
Let $\mathsf{s}=(s_f,\mathsf{s}_{base})$ be variable orders which, in a neighborhood of any connected component of the characteristic set, are constant and satisfy simultaneously either the inequalities
\begin{equation}
\begin{cases}
s_{I^+} < -\frac{1}{2} < s_{I^-}, \\
-s_{\scri^+} +2s_{I^+} < s_f-1 < -s_{\scri^-} +2s_{I^-}, \\
s_{\scri^+} < s_f-1 < s_{\scri^-}, \\
s_{\scri^+} < 2s_{I^0} - s_f + 1 < s_{\scri^-}, \\
s_{\scri^+} < s_{I^0} + s_f -\frac{1}{2} < s_{\scri^-}
\end{cases}
\label{eq:thresholds}
\end{equation}
or the opposite inequalities (choice of inequalities is independent for different components of $\Sigma$). Consider another variable order $\mathsf{r}$ such that near any connected component of $\Sigma$ where $s_{I^{\pm}}>-\frac{1}{2}$ we have constant $r_{I^{\pm}}>-\frac{1}{2}$ as well, with no restrictions on the other orders in $\mathsf{r}$. Then any $u\in \sob{r}$ such that $Pu\in H_{\mathrm{de,sc}}^{s_f-1,\mathsf{s}_{base}+1}(M)$ is in fact in $\sob{s}$.
\end{prop}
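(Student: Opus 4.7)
I combine three microlocal ingredients along the partial order on $\Sigma$ induced by the Hamilton flow (as summarized in Figure~\ref{fig:connectivity-schematic} and the discussion of the global flow structure in Section~\ref{sec:Hamiltonian}): microlocal elliptic regularity outside $\Sigma$, propagation of singularities (Theorem~\ref{thm:pos}) along bicharacteristics, and the localized radial point estimate (Theorem~\ref{thm:localized-rp-main}) at each radial point. Because $\mathsf{s}$ is constant on a neighborhood of each connected component of $\Sigma$, the final statement about $\mathrm{WF}^{\mathsf{s}}_{\mathrm{de,sc}}(u)$ may be tested componentwise using constant-order spaces, and the variable-order conclusion follows from the fact that wavefront sets depend only on the orders in a neighborhood of the point being tested.

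\textbf{Execution on $\Sigma^+$.} Treat $\Sigma^+$ with the first set of inequalities in (\ref{eq:thresholds}); the other component and the opposite-inequality case are analogous by the symmetries noted after Eq.~(\ref{eq:Hp-timelike-corner}). The flow partial order has $\mathcal{R}^+_-$ (at past timelike infinity) as its global source and $\mathcal{R}^+_+$ (at future timelike infinity) as its global sink, with $\mathcal{A}, \mathcal{K}, \mathcal{N}, \mathcal{C}$ arranged in between. I start at $\mathcal{R}^+_-$: the only relevant eigenvalue is $\lambda_{I^-}=+1$, and the inequality $s_{I^-}>-\tfrac{1}{2}$ is the above-threshold condition, so Theorem~\ref{thm:localized-rp-main} applies without any a priori assumption on bicharacteristics, using only $u\in \sob{r}$ with $r_{I^-}>-\tfrac{1}{2}$ and the assumed regularity of $Pu$. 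I then propagate $\sob{s}$-regularity forward along the infinite-frequency bicharacteristics over $(I^-)^\circ$ (Theorem~\ref{thm:pos}) into the part of $\mathcal{N}^+_-\cap I^-$ reachable from $\mathcal{R}^+_-$; the localized radial point estimate at each such point now applies below-threshold since a priori regularity is available on all limiting bicharacteristics, and the inequality on $s_{\scri^-}$ matches the corresponding linearization data from Section~\ref{sec:Hamiltonian}. Continuing in this way, regularity propagates over $\scri^-\cap \Gamma_f$ and $I^0\cap\Gamma_f$ into $\mathcal{C}^+_-$, along $\Sigma^+\cap\scri^-$ (where the flow is explicit from Eq.~(\ref{eq:char-set})) to $\mathcal{A}^+_-$, then through the finite-frequency flow over $(I^0)^\circ$ into $\mathcal{K}^+_+$, out over $\scri^+_-\cap\Gamma_f$, and symmetrically through $\mathcal{A}^+_+$, the remaining part of $\mathcal{N}^+_+$, and $\mathcal{C}^+_+$ into the sink $\mathcal{R}^+_+$. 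The propagation estimate also handles bicharacteristics over $\M^\circ$ that limit to $\mathcal{N}$ (possibly at either corner). At $\mathcal{R}^+_+$ the below-threshold estimate applies under $s_{I^+}<-\tfrac{1}{2}$, using as a priori input the regularity just established along bicharacteristics limiting into $\mathcal{R}^+_+$.

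\textbf{Role of the localized radial point estimate.} The reason Theorem~\ref{thm:localized-rp-main} (rather than a global-in-$R$ version) is essential is that $\mathcal{N}$ is a single radial submanifold with pieces lying in distinct corners (over $I^-$, $I^0$, and $I^+$) that are chained to one another through $\mathcal{A}, \mathcal{K}, \mathcal{C}$ by bicharacteristics. I cannot conclude regularity at all of $\mathcal{N}$ simultaneously; instead, the pointwise-localized version lets me propagate regularity into $\mathcal{N}\cap I^-$, use that as input elsewhere, and finally settle $\mathcal{N}\cap I^+$ at the end. At each such point, the nondegeneracy hypotheses are verified using the explicit linearizations of Section~\ref{sec:Hamiltonian}; the eigenbasis of $L(\tilde H_p|_{\Gamma})(\alpha)$ restricted to $T_\alpha(\Sigma\cap\Gamma)$ has real eigenvalues, and for $\mathcal{N}$ the defining symbol transverse to $\Gamma$ within $\Sigma$ can be taken to be $\omega+1$ (with $\|\theta\|^2_{\h}$ expressed through the remaining coordinates via the characteristic equation).

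\textbf{Main obstacle.} The genuine work, and where care is required, is the bookkeeping that the threshold inequalities (\ref{eq:thresholds}) correspond, one for one, to the radial point conditions $\sum_i(-2s_i+m_i-1)\lambda_i + \hat{p}_1<0$ (with $\hat{p}_1=0$ since $P=P^*$) at each of $\mathcal{R}^\pm_\pm, \mathcal{K}^\pm_\pm, \mathcal{A}^\pm_\pm, \mathcal{N}^\pm_\pm, \mathcal{C}^\pm_\pm$, using the orders $m_f=2$, $m_{\scri^\pm}=m_{I^\pm}=m_{I^0}=0$ of $P$ and the $\lambda_i$ listed in Section~\ref{sec:Hamiltonian}; this is a direct but tedious verification. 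A secondary point is that the propagation estimate is applied over a region where the orders $\mathsf{s}$ are constant, which is guaranteed by the hypothesis of constancy near each connected component of $\Sigma$ and the flow-invariance of components.
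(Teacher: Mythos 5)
Your proposal follows the paper's proof exactly in strategy: microlocal elliptic regularity to confine $\wf{s}(u)$ to $\Sigma$, the above-threshold case of Theorem~\ref{thm:localized-rp-main} at the component of $\mathcal{R}$ where the threshold is exceeded (with $u\in\sob{r}$ supplying the required a priori regularity there and the forward-bicharacteristic hypothesis vacuous), and alternating applications of Theorem~\ref{thm:pos} and Theorem~\ref{thm:localized-rp-main} along the flow order of Figure~\ref{fig:connectivity-schematic}, ending at the other component of $\mathcal{R}$ with the below-threshold estimate. One bookkeeping item in your stated chain over $\Sigma^+$ is imprecise: $\mathcal{R}^+_-$ lies at \emph{finite} de,sc-frequency, so there is no infinite-frequency bicharacteristic over $(I^-)^\circ$ emanating from it; the finite-frequency bicharacteristics over $(I^-)^\circ$ run from $\mathcal{R}^+_-$ to $\mathcal{C}^+_-$, the infinite-frequency ones run from $\mathcal{N}^+_-\cap I^-$ into $\mathcal{C}^+_-$, and $\mathcal{K}^+_-$ (over $\scri^-\cap I^0$) belongs in the chain as well. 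Since all intermediate radial sets are saddles in the sense of the remark after Theorem~\ref{thm:localized-rp-main}, so that the a priori $\mathsf{s'}$ hypothesis is automatic, and since, as you note, the possibly-nonexistent bicharacteristics over $\M^\circ$ hitting the corners of $\mathcal{N}$ are covered by the propagation estimate, these are corrections of detail rather than of structure, and the argument otherwise matches the paper's.
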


\begin{remark}
Solutions to the system Eq.~(\ref{eq:thresholds}) do exist: one can choose any $s_{I_0}$ and $s_f$; then any $s_{\scri^+}<\min(s_f-1,2s_{I^0}-s_f+1,s_{I_0}+s_f-\frac{1}{2})$ and any $s_{\scri^-}>\max(s_f-1,2s_{I^0}-s_f+1,s_{I_0}+s_f-\frac{1}{2})$; then any $s_{I^+}<\min(-\frac{1}{2},-\frac{1}{2}+\frac{1}{2}s_f+\frac{1}{2}s_{\scri^+})$  and any $s_{I^-}>\max(-\frac{1}{2},-\frac{1}{2}+\frac{1}{2}s_f+\frac{1}{2}s_{\scri^-})$. One similarly gets solutions to the system with all inequalities reversed in any set of connected components of $\Sigma$. 
\label{rmk:existence}
\end{remark}

\begin{proof}
By elliptic regularity, $Pu\in H_{\mathrm{de,sc}}^{s_f-1,\mathsf{s}_{base}+1}(M)$ implies that $\wf{s}(u)\subset\wf{s+1}(u)\subset\Sigma$.

The first inequality in Eq.~(\ref{eq:thresholds}) along with the existence of $\mathsf{r}$ as assumed allows us to apply Theorem~\ref{thm:localized-rp-main} to the component of $\mathcal{R}$ in each component of $\Sigma$ at which the relevant order of $\mathsf{s}$ is above $-\frac{1}{2}$. This is the above-threshold source/sink estimate, where the hypothesis on the bicharacteristics limiting to/from the radial point is vacuously satisfied, which lets us conclude that the component of $\mathcal{R}$ in question is disjoint from $\wf{s}(u)$.

The remaining inequalities, obtained from the eigenvalue calculations in the previous section and the fact that $P^*=P$, allow us to propagate this knowledge of regularity throughout the characteristic set, in each component either in the order indicated by Figure~\ref{fig:connectivity-schematic} (i.e. first using Theorem~\ref{thm:pos} to propagate regularity along all bicharacteristics starting from a radial set where $\sob{s}$-regularity has just been established, then using Theorem~\ref{thm:localized-rp-main} at a radial set when microlocal $\sob{s}$-regularity has been established at all bicharacteristics ending at that radial set, then repeating) or in the opposite order. The conclusion is that $\wf{s}(u)\cap\Sigma=\varnothing$, so, combined with elliptic regularity, $u\in\sob{s}$.
\end{proof}

For $\mathsf{s}$ as in Proposition~\ref{thm:regularity}, define the Hilbert spaces
$$
\mathcal{X}^{\mathsf{s}}=\{u\in \sob{s}\ |\ Pu\in  H_{\mathrm{de,sc}}^{s_f-1,\mathsf{s}_{base}+\mathsf{1}}(M)\},
\hspace{30pt}
\mathcal{Y}^{\mathsf{s}}=H_{\mathrm{de,sc}}^{s_f-1,\mathsf{s}_{base}+\mathsf{1}}(M),
$$
where we write $\|f\|_{\mathcal{Y}^{\mathsf{s}}}=\|f\|_{s_f-1,\mathsf{s}_{base}+\mathsf{1}}$ and $\mathcal{X}^{\mathsf{s}}$ is endowed with the squared norm $\|u\|_{\mathcal{X}^{\mathsf{s}}}^2 = \|u\|^2_{\mathsf{s}}+\|Pu\|^2_{\mathcal{Y}^{\mathsf{s}}}$.
\begin{prop}
$P:\mathcal{X}^{\mathsf{s}}\to\mathcal{Y}^{\mathsf{s}}$ is a Fredholm operator.
\label{thm:Fredholm}
\end{prop}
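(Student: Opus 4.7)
The plan is to derive the standard closed-range-plus-finite-dimensional-kernel-and-cokernel conclusion from two quantitative estimates: one for $P$ acting $\mathcal{X}^{\mathsf{s}}\to\mathcal{Y}^{\mathsf{s}}$, and one for $P^*=P$ acting on the duals. Boundedness of $P$ is immediate from the definition of $\mathcal{X}^{\mathsf{s}}$, so the content lies in producing the estimates.

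First I would upgrade the qualitative regularity statement of Proposition~\ref{thm:regularity} to the quantitative estimate
\begin{equation*}
\|u\|_{\X^{\mathsf{s}}} \leqslant C\bigl(\|Pu\|_{\Y^{\mathsf{s}}} + \|u\|_{\mathsf{-N}}\bigr), \qquad u\in\X^{\mathsf{s}},
\end{equation*}
for any sufficiently negative constant order $\mathsf{-N}$, by running through exactly the same chain of microlocal estimates but tracking norms rather than just wavefront sets. This combines: (i) the above-threshold radial point estimate of Theorem~\ref{thm:localized-rp-main} at each component of $\mathcal{R}\subset I^T$, which under the first inequality of Eq.~(\ref{eq:thresholds}) requires no a priori regularity beyond $\mathsf{-N}$ since the bicharacteristic hypothesis is vacuous; (ii) the quantitative propagation estimate Eq.~(\ref{eq:pos-estimate}) from Theorem~\ref{thm:pos} applied iteratively along the chain of bicharacteristics displayed in Figure~\ref{fig:connectivity-schematic}; (iii) the radial point estimate Eq.~(\ref{eq:rp-estimate}) at each of the $\mathcal{N},\mathcal{K},\mathcal{A},\mathcal{C}$ radial sets in $\scri$, applicable once the $Q''u$ terms have been controlled by prior propagation steps and the inequalities Eq.~(\ref{eq:thresholds}) on $s_{\scri^\pm},s_{I^0},s_f$ are met; and (iv) microlocal elliptic regularity off $\Sigma$. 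Since the variable order $\mathsf{s}$ is constant near each connected component of $\Sigma$ by hypothesis, the constant-order forms of these estimates apply verbatim there, and the remaining variable-order contribution is controlled elliptically.

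Next I would prove the analogous estimate for the formal adjoint. Because $P^*=P$ with respect to the $L^2(\M,\g)$ pairing and because the dual of $\Y^{\mathsf{s}}$ is the Sobolev space with orders $(-s_f+1,-\mathsf{s}_{base}-\mathsf{1})$, the dual problem is again a propagation problem for $P$ between weighted de,sc-Sobolev spaces, now with the \emph{opposite} orientation of inequalities in Eq.~(\ref{eq:thresholds}). This is precisely the second option allowed in Proposition~\ref{thm:regularity}, so the same scheme (with the direction of propagation along $\Hp$ reversed, i.e.\ starting the radial-point/propagation chain from the opposite radial sets in each component of $\Sigma$) yields a dual estimate for the transpose operator acting between the dual Hilbert spaces.

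Finally, the Fredholm conclusion follows from a standard abstract argument. The estimate of the first paragraph, together with compactness of the inclusion $\sob{s}\hookrightarrow\sob{-N}$ (strict in every entry of the multi-order), implies by the usual Peetre-type argument that $\ker(P\colon\X^{\mathsf{s}}\to\Y^{\mathsf{s}})$ is finite-dimensional and that the range of $P$ is closed; concretely, modding out by the kernel one obtains an estimate $\|u\|_{\X^{\mathsf{s}}}\leqslant C\|Pu\|_{\Y^{\mathsf{s}}}$ on the orthogonal complement, which prevents Weyl sequences from accumulating. The dual estimate then identifies $(\mathrm{range}\,P)^\perp\subset(\Y^{\mathsf{s}})^*$ with the kernel of $P$ on a Hilbert space of the same type (with reversed inequalities), which by the same Rellich compactness argument is finite-dimensional. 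I expect the only delicate point to be verifying that each ingredient, in particular the localized radial point estimate in the above-threshold regime at $\mathcal R$, goes through quantitatively under the variable-order hypothesis; this is handled by the remark at the end of Section~\ref{sec:de-sc} that wavefront sets and the associated estimates reduce to the constant-order versions in any open set where $\mathsf{s}$ is locally constant, which by assumption covers a neighborhood of each component of $\Sigma$.
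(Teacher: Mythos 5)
Your proposal is correct and follows essentially the same strategy as the paper: derive the global estimate $\|u\|_{\X^{\mathsf{s}}}\leqslant C(\|Pu\|_{\Y^{\mathsf{s}}}+\|u\|_{\mathsf{-N}})$ with an error term in a lower-order Sobolev space compactly containing $\sob{s}$, get a dual estimate from $P^*=P$ and the fact that the reversed inequalities on orders are also covered by Proposition~\ref{thm:regularity}, and then combine Rellich compactness with the duality to conclude semi-Fredholm plus finite-codimensional range. The one technical difference is that you obtain the main estimate by directly chaining the quantitative elliptic/propagation/radial-point estimates, whereas the paper derives it more efficiently from the qualitative Proposition~\ref{thm:regularity} via the closed graph theorem (the inclusion $\{u\in\sob{r}:Pu\in\Y^{\mathsf{s}}\}\hookrightarrow\sob{s}$, with $\mathsf{r}<\mathsf{s}$ still satisfying the open condition Eq.~(\ref{eq:thresholds}), is automatically continuous by closedness of its graph); the paper explicitly notes that the direct route you take is an equally valid alternative.
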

\begin{proof}
Since Eq.~(\ref{eq:thresholds}) is an open condition, there always exists $\mathsf{r}<\mathsf{s}$ as required in Proposition~\ref{thm:regularity}. Then by Proposition~\ref{thm:regularity} there is an inclusion map
 $$
\iota:\{u\in\sob{r}\ |\ Pu\in H_{\mathrm{de,sc}}^{s_f-1,\mathsf{s}_{base}+1}(M)\} \to \sob{s},
 $$
where the first space is complete in the squared norm $\|u\|_{\mathsf{r}}^2 + \|Pu\|^2_{\mathcal{Y}^{\mathsf{s}}}$. For any sequence which converges in both of these spaces, the limits must coincide since the identity map is continuous in the weaker $\sch'$ topology. Thus, the graph of $\iota$ is closed, so by the closed graph theorem $\iota$ is continuous. This implies the inequality
 $$
\|u\|_{\mathsf{s}} \leqslant C \Big(\|u\|_{\mathsf{r}} + \|Pu\|_{\mathcal{Y}^{\mathsf{s}}}\Big)
\Rightarrow
\|u\|_{\mathcal{X}^{\mathsf{s}}} \leqslant C' \Big(\|u\|_{\mathsf{r}} + \|Pu\|_{\mathcal{Y}^{\mathsf{s}}}\Big)
 $$
for any $u\in\mathcal{X}^{\mathsf{s}}$ (which could also be obtained by directly combining elliptic estimates, propagation estimates, and radial point estimates in different regions of $\partial\PP$ using a microlocal partition-of-unity argument which parallels the proof of Proposition~\ref{thm:regularity}). Since $\sob{s}$ (and therefore $\mathcal{X}^{\mathsf{s}}$) is compactly embedded in $\sob{r}$, this means $P:\mathcal{X}^{\mathsf{s}}\to\mathcal{Y}^{\mathsf{s}}$ is semi-Fredholm, i.e. $\ker P|_{\mathcal{X}_{\mathsf{s}}}$ is finite-dimensional and $\operatorname{Ran} P|_{\mathcal{X}^{\mathsf{s}}}\subset \mathcal{Y}^{\mathsf{s}}$ is closed.
 
Note that thanks to the fact that $P=P^*$ and therefore the subprincipal symbol does not enter the inequalities, if $\mathsf{s}=(s_f,\mathsf{s}_{base})$ satisfies Eq.~(\ref{eq:thresholds}), then $\mathsf{s'}=-(s_f-1,\mathsf{s}_{base}+1)$ satisfies the opposite inequalities (and vice versa), so Proposition~\ref{thm:regularity} applies to it as well. Thus, by the same argument, any $u\in\sob{s'}$ such that $Pu\in\sob{-s}$ satisfies $\|u\|_{\mathsf{s'}}\leqslant C\Big(\|u\|_{\mathsf{r'}}+ \|Pu\|_{\mathsf{-s}}\Big)$ for some fixed $\mathsf{r'}<\mathsf{s'}$, and $P|_{\sob{s'}}$ has finite-dimensional kernel.
 
Now consider $v$ in the annihilator of $\operatorname{Ran} P|_{\mathcal{X}^{\mathsf{s}}}$, i.e. $v\in \sob{s'} \simeq (\mathcal{Y}^{\mathsf{s}})^*$ such that $\langle Pu,v\rangle_{L^2(\M,\g)} =0$ for all $u\in\mathcal{X}^{\mathsf{s}}$. Then in particular for any $u\in \sob{s+1}$ we have $
\langle u,Pv\rangle = \langle Pu,v\rangle =0$, so $Pv=0$ in $\sob{-s-1}$. Thus, $\mathrm{Ann}(\operatorname{Ran} P|_{\mathcal{X}^{\mathsf{s}}}) \subset \ker P|_{\sob{s'}}$, which is finite-dimensional. Therefore, $\operatorname{Ran} P|_{\mathcal{X}^{\mathsf{s}}}$ has finite-dimensional annihilator in $(\mathcal{Y}^{\mathsf{s}})^*$, so $P:\mathcal{X}^{\mathsf{s}}\to\mathcal{Y}^{\mathsf{s}}$ is Fredholm.
\end{proof}

As a result, if $\M$ has $n$ connected components, we obtain $4^n$ Fredholm realizations of the Klein-Gordon operator $P$ distinguished by the direction of propagation of singularities of solutions to $Pu=f$ in each of the $2n$ components of the characteristic set, as determined by which of $s_{I^{\pm}}$ is lower. If we fix a time orientation and demand that singularities be always propagated towards the future/towards the past, the corresponding operator is the \textit{retarded/advanced} Klein-Gordon operator. On the other hand, if we demand that singularities are always propagated forward/backward along the Hamilton flow, the corresponding operator is the \textit{Feynman/anti-Feynman} Klein-Gordon operator. This means ``positive-frequency" singularities, i.e. those corresponding to momenta in $\Sigma^+$ in the wavefront set, are propagated forward in time and ``negative-frequency" ones backward in time.

Finally, we show that under the additional assumption that the spacetime admits a time function which is in a certain sense nondegenerate with respect to the compactification, these realizations of $P$ are not only Fredholm but in fact invertible. The inverses can thus be identified as the retarded, advanced, Feynman, and anti-Feynman \textit{propagators}. In the retarded/advanced cases, invertibility is a consequence of energy estimates (see Theorem~\ref{thm:invertibility}). The Feynman/anti-Feynman cases reduce to the retarded/advanced ones via the following analogue of the result \cite[Proposition 7]{Vasy-SA} in the sc-setting, with essentially the same proof based on a construction of Isozaki \cite[Lemma 4.5]{Isozaki}.

\begin{prop}
If $Pu=0$ for $u\in\sob{s'}$, where $\mathsf{s'}$ satisfies Eq.~(\ref{eq:thresholds}) on $\Sigma^+$ and the opposite inequalities on $\Sigma^-$, or vice versa, then $u\in\sch$. (In other words, the kernel of the Feynman or anti-Feynman realization of $P$ is a subspace of $\sch$.)
\label{thm:Feynman-kernel}
\end{prop}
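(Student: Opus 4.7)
My plan is to proceed in two stages: first use the propagation, elliptic, and radial-point tools already set up (following the proof of Proposition~\ref{thm:regularity}) to reduce the problem to crossing the radial-point threshold at two specific ``below-threshold sink'' radial sets, and then deploy a non-standard self-adjoint positive-commutator argument, originally due to Isozaki~\cite{Isozaki} and used in the Lorentzian setting by Vasy~\cite{Vasy-SA}, to perform that crossing.

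\emph{Stage 1 (reduction).} I focus on the Feynman case; the anti-Feynman case is symmetric. In $\Sigma^{+}$ the below-threshold sink is $\mathcal{R}^{+}_{+}\subset I^{+}$ and in $\Sigma^{-}$ it is $\mathcal{R}^{-}_{-}\subset I^{-}$; at every other radial set, $\mathsf{s'}$ is a priori above threshold. Since $Pu=0\in\sch$, the above-threshold estimate of Theorem~\ref{thm:localized-rp-main}(2) immediately gives Schwartz regularity at the sources $\mathcal{R}^{+}_{-}$ and $\mathcal{R}^{-}_{+}$ (the backward-bicharacteristic hypothesis being vacuous at a global source). Propagation of singularities and the above-threshold estimates at the remaining saddle/sink radial sets, exactly as in the proof of Proposition~\ref{thm:regularity} but iterated without bound, then make $u$ microlocally Schwartz everywhere except possibly at $\mathcal{R}^{+}_{+}\cup\mathcal{R}^{-}_{-}$. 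Applying the below-threshold sink estimate (Theorem~\ref{thm:localized-rp-main}(1)) with punctured-neighborhood Schwartz data now in hand upgrades $u$ at each of these two sinks to any order satisfying the below-threshold inequality, i.e.\ with $s_{I^{+}}$ (resp.\ $s_{I^{-}}$) arbitrarily close to, but strictly less than, $-1/2$.

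\emph{Stage 2 (crossing the threshold).} Fix a target constant order $\mathsf{s}$ with $s_{I^{+}},s_{I^{-}}>-1/2$. I would construct a self-adjoint operator $A=A^{*}\in\ps[2s-m+1]$ with principal symbol $a$ microlocalized near $\mathcal{R}^{+}_{+}\cup\mathcal{R}^{-}_{-}$, built exactly as the commutant $\check{A}^{*}\check{A}$ in the proof of Proposition~\ref{thm:localized-rp-estimates} but with the weight $\rho^{\mathsf{-2r+m-1}}$ chosen in the \emph{above-threshold} regime, so that the radial-point symbol calculation at the sinks gives $\sigma(i[P,A])=c_{+}^{2}+c_{-}^{2}-e$ with $c_{\pm}\in\sym{s}$ elliptic at $\mathcal{R}^{\pm}_{\pm}$ and $e\in\sym{2s}$ having essential support disjoint from the sinks. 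Since $P=P^{*}$, $Pu=0$, and $A$ is self-adjoint, the commutator identity of Lemma~\ref{thm:comm-lemma} with $\tau=0$, combined with the regularization from its proof (needed since $u$ has only below-threshold a priori regularity at the sinks), yields
\[
0 \;=\; 2\operatorname{Im}\langle Au, Pu\rangle \;=\; \langle i[P,A]u,u\rangle \;=\; \|C_{+}u\|^{2}+\|C_{-}u\|^{2}-\langle Eu,u\rangle+\langle Ru,u\rangle,
\]
where $R\in\ps[2s-1]$ collects the subprincipal quantization error. The term $\langle Eu,u\rangle$ is bounded by the Schwartz regularity of $u$ away from the sinks, and $\langle Ru,u\rangle$ by the $\mathsf{s-\frac{1}{2}}$ regularity of $u$ at the sinks, which fits within the near-threshold regularity from Stage~1 when $s_{I^{\pm}}$ is only slightly above threshold. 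This bounds $\|C_{\pm}u\|$ and hence places $u$ into $\sob{s}$ above threshold at both radial sets; iterating in half-order steps raises $\mathsf{s}$ without bound, whence $u\in\sch$.

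\emph{The main obstacle} is that a standard invocation of the above-threshold sink estimate (Theorem~\ref{thm:localized-rp-main}(2)) would require an a priori above-threshold regularity assumption on $u$, which Stage~1 cannot supply. The Isozaki trick replaces this requirement with the exact vanishing of $2\operatorname{Im}\langle Au,Pu\rangle$ coming from $Pu=0$ and the self-adjointness of $P$ and $A$, so that the bound on $\|C_{\pm}u\|$ falls directly out of the commutator identity rather than from a Sobolev-scale bootstrap requiring better-than-available data at $\mathcal{R}^{\pm}_{\pm}$. The technical work is to check that the subprincipal error $\langle Ru,u\rangle$, which \emph{is} microlocally supported at the radial sets, is controlled by the slightly lower-order regularity of $u$ established in Stage~1 (and, at later iteration steps, by the outcome of the previous step), exactly in the spirit of the positive-commutator constructions in the proofs of Theorems~\ref{thm:pos} and~\ref{thm:localized-rp-main}.
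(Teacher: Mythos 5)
Your Stage~1 reduction is correct, and you have correctly identified both the final obstacle (crossing the threshold at the below-threshold sinks $\mathcal{R}^{+}_{+}\cup\mathcal{R}^{-}_{-}$) and the rough slogan of the Isozaki trick (use $Pu=0$ and self-adjointness to make $2\operatorname{Im}\langle Au,Pu\rangle$ vanish exactly). But your Stage~2 construction has a genuine gap that the paper explicitly warns against.

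You propose to build the commutant ``exactly as $\check A^{*}\check A$ in the proof of Proposition~\ref{thm:localized-rp-estimates},'' i.e.\ as $\rho^{\mathsf{-2r+m-1}}$ times bump functions \emph{localizing near $\mathcal{R}^{+}_{+}\cup\mathcal{R}^{-}_{-}$}, with the weight chosen in the above-threshold regime. This is precisely the construction that fails. To justify the equalities $0=2\operatorname{Im}\langle Au,Pu\rangle=\langle i[P,A]u,u\rangle$ via Lemma~\ref{thm:comm-lemma}, one needs $\wfs(\check A)\cap\wf{s}(u)=\varnothing$; but $\check A$ is elliptic at the sinks, where $u$ is only known to have below-threshold regularity, so the lemma does not apply. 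The standard way out --- regularizing by multiplying the commutant by a factor supported away from $\rho_T=0$ --- is also ruled out: cutting off near $I^{T}$ changes the sign of $H_p a$ in the cutoff region and destroys the positive-commutator structure, while regularizing by lowering the weight (as in the usual radial-point proof) can only produce below-threshold regularity again. Your assertion that the error $\langle Ru,u\rangle$ is ``controlled by the $\mathsf{s-\frac12}$ regularity of $u$ at the sinks'' is also problematic: once $\mathsf{s}$ is strictly above threshold, $\mathsf{s-\frac12}$ may still be above threshold and thus not controlled by Stage~1.

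The paper escapes this by using a fundamentally different commutant, not the localized bump-function one. In the notation of the paper, one fixes $s\in(-\tfrac12,0)$ and defines a \emph{global} multiplication operator $a_t(x)=\int_0^{\rho_T(x)}\phi(r/t)^2 r^{-2s-2}\,dr$. This $a_t$ vanishes identically for $\rho_T\le t$ (so every pairing and integration by parts is justified for below-threshold $u$, since $a_t u\in\sch$ for each fixed $t$), while its derivative $H_p a_t=h_{\rho_T}\rho^{\mathsf{-2s}}\phi(\rho_T/t)^2$ retains the definite sign coming from the sink structure. The commutant $a_t$ itself is \emph{not uniformly bounded} in any symbol class (it levels off near $\mathcal{R}$ to a constant $C_t\to\infty$), but since $Pu=0$ and $P=P^{*}$ the identity $\langle i[P,a_t]u,u\rangle=0$ holds exactly, and only $i[P,a_t]$, whose symbol $H_p a_t$ \emph{is} uniformly bounded, enters the estimate. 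Two further consequences that conflict with your proposal: one cannot localize by bump functions at all, since their derivatives would produce unbounded errors after the constant shift; and one cannot treat $\mathcal{R}^{+}_{+}$ and $\mathcal{R}^{-}_{-}$ separately, since $a_t$ must equal the \emph{same} constant near both (otherwise the shift cannot be applied). This is why the paper takes $a_t$ to depend only on $\rho_T$ globally, and why the proof applies only to the Feynman/anti-Feynman cases (where both problematic radial sets are sinks, or both are sources, so a single sign works).
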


\begin{proof}
We focus on the Feynman case; the anti-Feynman case is completely analogous.

Propagating regularity as in the proof of Proposition~\ref{thm:regularity}, $Pu\in\sch$ for $u\in\sob{s'}$ implies $u\in \sob{s''}$ for any $\mathsf{s''}$ with $s_{I^{\pm}}''<-\frac{1}{2}$ and $\wf{}(u)\subset\mathcal{R}^+_+\sqcup\mathcal{R}^-_-$, but we cannot propagate Schwartz regularity into the global sink $\mathcal{R}^+_+\sqcup\mathcal{R}^-_-$ using Theorem~\ref{thm:localized-rp-main}: the a priori regularity assumption ($u\in\sob{s'}$) is below-threshold at the sink, and the below-threshold estimate has an absolute limit on the amount of regularity it can provide (in this case, $s_{I^{\pm}}''<-\frac{1}{2}$). However, when $Pu=0$ identically, one can prove an additional positive-commutator estimate yielding above-threshold regularity at the sink, and then the above-threshold version of Theorem~\ref{thm:localized-rp-main} yields $u\in\sch$.

Fix $s\in (-\frac{1}{2},0)$ and let $\mathsf{s}=(\frac{1}{2},s,-\frac{1}{2},-\frac{1}{2},-\frac{1}{2},s)$, so we know $u\in\sob{s-\frac{1}{2}}$. Fix $\phi\in C^{\infty}(\R)$ such that $\phi(r)=0$ for $r\leqslant 1$, $\phi(r)>0$ for $r>1$, and $\phi(r)=1$ for $r\geqslant 2$. Define $a_t \in C^{\infty}(\M)$  by 
\begin{equation}
a_t (x)
= 
\int_0^{\rho_T(x)}
\phi(r/ t)^2 r^{-2s-2}\ dr
=
\begin{cases}
0,\ \rho_T\leqslant t,\\
\text{monotone increasing with }\rho_T,\ t<\rho_T<2t,\\
C_t-\frac{1}{2s+1}\rho_T^{-2s-1},\ \rho_T\geqslant 2t.
\end{cases}
\label{eq:Isozaki-commutant}
\end{equation}
See Figure~\ref{fig:Isozaki-commutant}. We digress to clarify the idea behind this choice. 

\begin{figure}
\begin{center}
\includegraphics{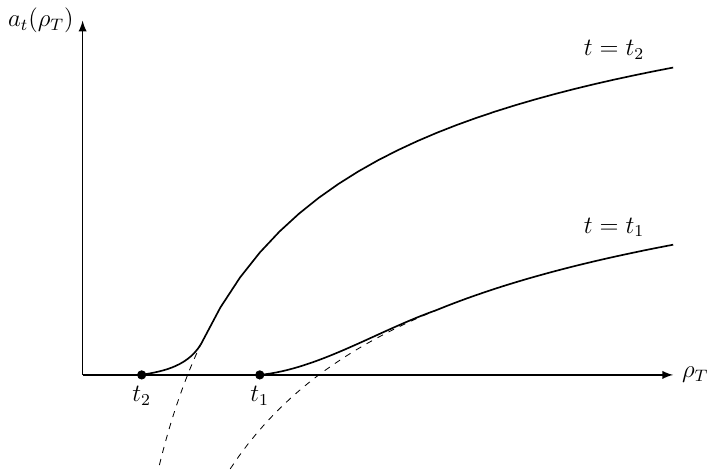}
\end{center}
\caption{Plot of representatives of the commutant family $a_t$ for two values of $t$. The dashed lines show the $C_t-\frac{1}{2s+1}\rho_T^{-2s-1}$ profiles. For any fixed positive value of $\rho_T$, the value $a_t(\rho_T)$ goes to infinity as $t\to 0^+$ but the derivative $a_t'(\rho_T)$ stays constant once $2t<\rho_T$.}
\label{fig:Isozaki-commutant}
\end{figure}

In the proof of usual radial point estimates, to conclude order-$\mathsf{s}$ regularity at a component $R$ of $\mathcal{R}$, one constructs a commutant $a$ from $\rho_T^{-2s-1}$ (since $P$ is zeroth-order at $I^{\pm}$) multiplied by bump functions localizing to a neighborhood of $R$. When $s$ crosses over the threshold, $a$ becomes growing rather than decaying at $I^{\pm}$; then the usual regularization procedure requires one to approximate it by symbols which grow at $I^{\pm}$ as well, and because of this the pairings and integrations by parts used in the proof cannot be justified unless $u$ is a priori known to have some above-threshold decay at $R$. For instance, if one attempts to regularize by simply multiplying $a$ by a factor supported away from $\{\rho_T=0\}$ so that the regularized version is supported away from $\mathcal{R}$, hence where $u$ is Schwartz, then this will change the monotonicity of $a$ along $H_p$ in the cutoff region and the positive-commutator proof will not work.

The key idea leading to Eq.~(\ref{eq:Isozaki-commutant}) is that in the estimate for the microlocal $\sob{s}$-norm of $u$ arising from the usual proof, all terms involving $A$ explicitly are paired with $Pu$, so if $Pu=0$ identically and $P=P^*$, then it is truly only the commutator $[P,A]$ (corresponding to the derivative $H_pa$), rather than $A$ itself (corresponding to the value of $a$), that matters. Then instead of cutting $a$ off to zero near $\mathcal{R}$ one can try leveling it off in a monotone fashion to equal any other constant $C_t$ there, where to approximate $a$ we need $C_t\to\infty$ as the regularization parameter $t$ goes to zero. Nevertheless, this does not quite work, since to obtain the estimate one first needs to rewrite the commutator term using Lemma~\ref{thm:comm-lemma}, and $a_t$ being constant, thus zeroth-order, at $I^{\pm}$ is just barely not enough for this as it would require $u$ to have a priori decay exactly at threshold.

Combining the two ideas works, however: since the nonzero value of $a$ is only an obstruction to rewriting the commutator term and elsewhere only the derivative $H_pa$ is relevant, we can level off $a$ as just described and then additionally shift it by $C_t$ to make it zero near $\mathcal{R}$. Then one can apply Lemma~\ref{thm:comm-lemma} since the support of $a_t$ is disjoint from the radial set, and using $Pu=0$ one gets rid of any terms involving $A$ explicitly. Since the shift at any given point goes to infinity as $t\to 0^+$, the family $a_t$ no longer converges to anything and is in fact not uniformly bounded in any symbol space, but only its derivative, which \textit{is} uniformly bounded, is relevant for the estimate.

The unboundedness of $a_t$, however, limits this construction's range of applicability. First of all, one cannot add arbitrary localizing factors to $a_t$ after the constant shift, since they will generally contribute terms to $H_pa_t$ which will be unbounded as $t\to 0^+$. Second, for Lemma~\ref{thm:comm-lemma} to be applicable one requires that $a_t=0$ simultaneously in all regions where $u$ is not a priori above-threshold. As a consequence, one cannot deal with one radial set at a time, since one must ensure that before the constant shift is applied, the commutant-to-be is in fact equal to \textit{the same constant} near all radial sets where regularity is a priori unknown. Moreover, this means that it is not particularly useful to add any localizing factors to $\rho_T^{-2s-1}$ even before the constant shift, since one will in any case need a priori above-threshold regularity of $u$ everywhere except near the radial sets of interest; therefore, $a_t$ in Eq.~(\ref{eq:Isozaki-commutant}) is constructed starting from $a=\rho_T^{-2s-1}$ globally. Finally, since for the positive-commutator argument to work $H_pa$ needs to have the same definite sign near all of the radial sets of interest, one can only use this to propagate into either all sources or all sinks, so the proof applies only to the Feynman/anti-Feynman cases.

Returning to the proof, we can consider $a_t$ as a smooth function on $\PP$ which is constant in the fibers. Then
$$
H_p a_t = \rho_f^{-1}\rho_{base}^{\mathsf{1}}\Hp a_t = h_{\rho_T}\rho^{\mathsf{-2s}}\cdot \phi(\rho_T/t)^2,
\hspace{30pt}
h_{\rho_T} = \rho_T^{-1} \Hp\rho_T \in \sym{0}.
$$
Since $\mathcal{R}^+_+\sqcup\mathcal{R}^-_-$ is a sink for the flow, $h_{\rho_T}<0$ in a neighborhood of it. Thus, we can decompose
$$
h_{\rho_T}\rho^{\mathsf{-2s}} = -b^2 + e, 
$$
where $b\in \sym{s}$ is elliptic at $\mathcal{R}^+_+\sqcup\mathcal{R}^-_-$ and $e\in\sym{2s}$ has $\mathrm{supp}(e)\cap (\mathcal{R}^+_+\sqcup\mathcal{R}^-_-)=\varnothing$. We fix $B=\mathrm{Op}(b)$, $E=\mathrm{Op}(e)$ and define regularized versions $B_t =  B \phi(\rho_T/t) $, $E_t = \phi(\rho_T/t) E \phi(\rho_T/t)$, where $\phi(\rho_T/t)$ is understood as a multiplication operator and is an approximation of the identity in the sense defined in Section~\ref{sec:lemmas}. Then, treating $a_t$ as a multiplication operator,
$$
i[P,a_t] = -B^*_t B_t + E_t + F_t,
$$
where $F_t$ for $t\in (0,1)$ is a bounded family in $\ps[2s-1]$\footnote{The simplest way to see this boundedness is to note that $[P,a_t]=[P,a_t-C_t]$ and $a_t-C_t$ is bounded in $\sym{2s-m+1}$ with $\mathsf{m}$ the order of $P$, so boundedness of the error term follows exactly as in the proofs of propagation theorems. This incidentally also shows that, if desired, one can formulate this proof without introducing any unbounded families of operators, namely taking the commutant to be $a_t-C_t$ (that is, the ``leveled-off" version of $a=\rho_T^{-2s-1}$ in the discussion at the beginning of the proof) from the start and using the fact that $[P,a_t-C_t]=[P,a_t]$ to justify the required manipulations.}.

Since $Pu=0$ and $a_t u\in \sch$ for any $t$, we have $
\langle i[P,a_t]u,u\rangle
=
\langle iP(a_t u),u\rangle
=
\langle i a_t u,Pu\rangle
=0$. Also, since $B_t u\in\sch$ for any $t$, we have $\langle B_t^* B_t u,u\rangle=\|B_t u\|^2$. Therefore,
$$
\|B_t u\|^2 = -\langle i[P,a_t]u,u\rangle + \langle E_t u,u\rangle + \langle F_t u,u\rangle = \langle E_t u,u\rangle + \langle F_t u,u\rangle.
$$
The first term is bounded uniformly in $t$ because $\wfs(\{E_t\})\cap \wf{}(u)=\varnothing$ while $E_t$ is bounded in $\ps[2s]$, and likewise for the second term because $u\in\sob{s-\frac{1}{2}}$ while $F_t$ is bounded in $\ps[2s-1]$. Therefore, the same reasoning as in the proof of Proposition~\ref{thm:localized-rp-estimates} leads to the conclusion $\wf{s}(u)\cap (\mathcal{R}^+_+\sqcup\mathcal{R}^-_-)=\varnothing$. Then since $s>-\frac{1}{2}$, the above-threshold estimate implies $\wf{}(u)\cap (\mathcal{R}^+_+\sqcup\mathcal{R}^-_-)=\varnothing$, so $u\in\sch$.
\end{proof}

In particular, this means that any element of the kernel of the Feynman or anti-Feynman realization of $P$ is also in the kernel of all the other realizations, since $\sch\subset\mathcal{X}^{\mathsf{s}}$ for any $\mathsf{s}$.

\begin{theorem}
\label{thm:invertibility}
Assume that there exists a function $t\in S^{1,1,0,1,1}_{\epsilon}(\M)$ which satisfies
$$
\Big(\rho_{I^{\pm}}\rho_{\scri^{\pm}} t\Big)|_{I^{\pm}\cup\scri^{\pm}} >0,
\hspace{50pt}
\Big(\rho_{I^0}^{-2}\g(\nabla t,\nabla t)\Big)|_{I^{\pm}\cup\scri^{\pm}} < 0.
$$
Then $(\M^{\circ},\g)$ is globally hyperbolic and $P:\mathcal{X}^{\mathsf{s}}\to\mathcal{Y}^{\mathsf{s}}$ is invertible for any $\mathsf{s}$ satisfying the retarded $(-)$, advanced $(+)$, or Feynman/anti-Feynman (either sign) conditions as in Proposition~\ref{thm:regularity}.
\end{theorem}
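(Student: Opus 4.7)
The plan is to reduce, via the Fredholm property already given by Proposition~\ref{thm:Fredholm}, to showing that both the kernel and cokernel of each realization vanish, and then to use the time function $t$ to execute an energy estimate. First I would check that the hypotheses on $t$ make $(\M^{\circ},\g)$ globally hyperbolic. The condition $\rho_{I^0}^{-2}\g(\nabla t,\nabla t)|_{I^{\pm}\cup\scri^{\pm}}<0$ says that $\nabla t$ is $\g$-timelike on the boundary in a rescaled sense (see Appendix~\ref{sec:time-fcn}), which by continuity extends to a neighborhood of the boundary in the interior; in the compact ``core'' of $\M^\circ$ one obtains timelikeness as in the standard Lorentzian case (or by perturbing the symbolic $t$ if necessary within the same asymptotic class). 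The positivity $(\rho_{I^{\pm}}\rho_{\scri^{\pm}}t)|_{I^{\pm}\cup\scri^{\pm}}>0$ guarantees that the level sets $\{t=c\}$ exhaust $\M^\circ$ as $c\to\pm\infty$ and are spacelike Cauchy hypersurfaces, giving global hyperbolicity.

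For the Feynman and anti-Feynman cases, Proposition~\ref{thm:Feynman-kernel} shows that $\ker P|_{\mathcal{X}^{\mathsf{s}}}\subset\sch$. For the retarded (resp.\ advanced) case, the flexibility in Eq.~(\ref{eq:thresholds}) lets one increase $s_{I^-},s_{\scri^-}$ (resp.\ $s_{I^+},s_{\scri^+}$) arbitrarily, and iterating Proposition~\ref{thm:regularity} upgrades any $u\in\ker P|_{\mathcal{X}^{\mathsf{s}}}$ to decay to infinite order at $I^-\cup\scri^-$ (resp.\ $I^+\cup\scri^+$), i.e.\ to be Schwartz in a neighborhood of past (resp.\ future) timelike, null, and spacelike infinity. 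In all four cases one then sets up an energy estimate: contract the stress-energy tensor of $P$ with $\nabla t$, integrate over a slab $\{t_0\leqslant t\leqslant t_1\}$, and apply Stokes' theorem. The bulk term gives a positive-definite energy on the $t$-level sets thanks to timelikeness of $\nabla t$, modulo a term bounded by the energy itself times a constant coming from $\db(\M)$ derivatives of the coefficients, and Gronwall yields an estimate of the energy on $\{t=t_1\}$ in terms of that on $\{t=t_0\}$. The Schwartz or almost-Schwartz decay of $u$ at the boundary makes the lateral boundary contributions at spatial infinity and the energy on $\{t=t_0\}$ vanish in the limit $t_0\to-\infty$ (for retarded/Feynman) or $t_0\to+\infty$ (for advanced/anti-Feynman), forcing $u\equiv 0$.

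Surjectivity then follows by duality. The $L^2$-pairing identifies the cokernel of $P:\mathcal{X}^{\mathsf{s}}\to\mathcal{Y}^{\mathsf{s}}$ with a subspace of the kernel of $P$ acting on $\sob{-\mathsf{s}'}$ with $\mathsf{s}'=(s_f-1,\mathsf{s}_{base}+\mathsf{1})$, as in the last paragraph of the proof of Proposition~\ref{thm:Fredholm}. A direct check at each radial point threshold shows that negation $\mathsf{s}\mapsto-\mathsf{s}'$ swaps the retarded and advanced threshold conditions with each other (since the threshold at $I^\pm$ is $-\tfrac12$ and $-(s_{I^\pm}+1)-(-\tfrac12)$ has the opposite sign to $s_{I^\pm}-(-\tfrac12)$), and swaps the Feynman and anti-Feynman conditions with each other (since the swap is per-component of $\Sigma$). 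Thus the dual realization is one of the four already treated in the previous paragraph, so its kernel is trivial, and surjectivity follows.

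The main obstacle is the energy estimate itself. Standard energy identities assume smoothness of the metric, but our $\g$ is only symbolic, so one has to verify that the commutators of $\nabla t$-differentiation with the coefficients of $P$ generate only terms that are controlled by $\db(\M)$-derivatives of $\soe$-symbols, hence bounded on the slab, so that Gronwall goes through. A secondary issue is passing from the Schwartz class (for Feynman/anti-Feynman) or from partial Schwartz decay on one end (for retarded/advanced) to an actual vanishing of the energy at $t=\pm\infty$; this is where the symbolic orders of $t$ at the various faces must be matched carefully against the Sobolev weights in $\mathcal{X}^{\mathsf{s}}$, but given global hyperbolicity and the asymptotic behavior of $t$, this is essentially a well-posedness statement for the Cauchy problem with vanishing initial data at infinity.
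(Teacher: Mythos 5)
The main structural steps you propose (reduce to vanishing kernel via Proposition~\ref{thm:Fredholm}, identify the dual realization by negating and shifting orders, use Proposition~\ref{thm:Feynman-kernel} for the Feynman/anti-Feynman kernel, propagate the retarded/advanced kernel to Schwartz near one temporal end, and then run an energy estimate contracting the stress-energy tensor of $u$ against $\nabla t$) all match the paper's proof. However, there is a genuine gap in how you handle the timelikeness of $\nabla t$ and, consequently, in both the global hyperbolicity claim and the energy estimate.

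The hypothesis of the theorem only asserts that $\rho_{I^0}^{-2}\g(\nabla t,\nabla t)<0$ \emph{on} $I^{\pm}\cup\scri^{\pm}$. By continuity this gives uniform timelikeness of $\nabla t$ on some open neighborhood $U'$ of $I^{\pm}\cup\scri^{\pm}$, but nothing is assumed on the rest of $\M$: in particular $\nabla t$ may become null or spacelike in the interior, or even in a neighborhood of $I^0$. Your slab-and-Gronwall argument integrates the energy current $T_u(\nabla t,\cdot)$ over $\{t_0\leqslant t\leqslant t_1\}$, and the positivity of the flux through the level sets $\{t=c\}$ requires $\nabla t$ to be timelike throughout the slab; this fails for general slabs. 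Invoking ``timelikeness as in the standard Lorentzian case'' in the compact core, or ``perturbing $t$'' to make its gradient globally timelike, is not supported by the hypotheses and is not obviously achievable — one would be proving the theorem under a strictly stronger assumption. The paper avoids this by choosing the multiplier $V=\chi(t)\nabla t$ with $\chi$ supported in $\{t>T\}$ for $T$ large enough that $\{t\geqslant T\}\subset U'$, so the positivity only has to hold where it is actually hypothesized, and the auxiliary weight $\chi$ grows fast enough (polynomially in $t$) to absorb the lower-order error terms rather than relying on Gronwall. This yields $u\equiv 0$ on $\{t>T\}$; one then uses global hyperbolicity, rather than the energy estimate, to conclude $u\equiv 0$ everywhere.

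A second, related soft spot is the global hyperbolicity claim itself. You assert that the level sets of $t$ are spacelike Cauchy hypersurfaces by appealing to global timelikeness of $\nabla t$, which as noted is not given. The paper instead verifies Geroch's criterion directly: it shows that every null geodesic crosses the level set $S_T=\{t=T\}$ exactly once, using (i) the non-trapping assumption to force null geodesics from $\scri^-$ to $\scri^+$, (ii) the behavior of $t$ near the two temporal ends to guarantee crossing, and (iii) monotonicity of $t$ along timelike curves \emph{within} $U'$ to preclude a second crossing. The last point is exactly where boundary timelikeness suffices and global timelikeness is unnecessary. You should replace your appeal to global spacelikeness of $\{t=c\}$ by this local monotonicity argument, and correspondingly localize your energy estimate to $\{t>T\}$ before propagating the vanishing inward via the Cauchy problem.
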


Note that the conditions make sense because $\rho_{I^{\pm}}\rho_{\scri^{\pm}} t$ and $\rho_{I^0}^{-2} \g(\nabla t,\nabla t)$ are in $\soe(\M)$ near $I^{\pm}\cup \scri^{\pm}$ (the latter because $\g(\nabla t,\nabla t) = \g^{-1}(dt,dt)$, where $dt\in S^{0,0,-1,0,0}_{\epsilon}(\M;\ttm)$ and $\g^{-1}\in\soe(\M;\mathrm{Sym}^2(\ttm))$), so they are continuous up to and including the boundary. See \cite[Theorem 5.3]{HV-semilinear} for a result of this type in the setting of Lorentzian scattering spaces; our proof involves a similar energy estimate.

Unfortunately, the hypotheses of the theorem are not satisfied by time functions similar to the global inertial time in Minkowski space, which is first-order at spacelike infinity and second-order at null infinity due to the square-root blowup in the definition of $\M$ but for which $\g(\nabla t,\nabla t)=-1$, which is one order smaller at $\scri$ than one would expect just from the symbolic order of $t$. However, in Appendix~\ref{sec:time-fcn} we show that an appropriate time function $t$ exists for any Minkowski-like spacetime as defined in Section~\ref{sec:examples} (which include small perturbations of Minkowski space), so the result applies to them.

\begin{proof}
We consider the case when the assumptions are imposed at $I^+ \cup \scri^+$; the other case is analogous. Without loss of generality, we can assume that $t$ is zeroth-order at $I^-$ and $\scri^-$, so $t\leqslant C\rho_{I^+}^{-1}\rho_{\scri^+}^{-1}$ globally for some $C>0$. The assumptions imply that there exists an open neighborhood $U'$ of $I^+\cup\scri^+$, which can be taken disjoint from a neighborhood of $I^-\cup\scri^-$ and connected in every connected component of $\M$, on which $t \geqslant c\rho_{I^+}^{-1} \rho_{\scri^+}^{-1}$ and $\g(\nabla t,\nabla t)\leqslant -c\rho_{I^0}^2$ for some $c>0$. In particular, we must have $t\to +\infty$ at $I^+\cup\scri^+$ but $t$ is finite and continuous at points of $(I^0)^{\circ}$. For some $T_0>0$ we have $\{t\geqslant T_0\} \subset  \{\rho_{I^+}\rho_{\scri^+} \leqslant \frac{C}{T_0}\} \subset U'$; we denote $U= \{t\geqslant T_0\}$. 

For any $T>T_0$ we write $S_T = \{t=T\}\subset U \subset U'$. Since $t$ is smooth in $\M^{\circ}$, each level set $S_T$ is closed in $\M^{\circ}$. Since $t<T_0$ near $I^-\cup\scri^-$ and $t\to +\infty$ at $I^+\cup\scri^+$, due to the non-trapping assumption every null geodesic must cross any $S_T$ and re-emerge from it. Since $\nabla t$ is nonzero and timelike in $U'$ and the spacetime is assumed to be time-orientable, for any timelike curve in $\M^{\circ}$ the function $t$ is either monotone increasing or monotone decreasing along the part of that curve in $U'$ (thanks to the fact that we choose $U'$ to be connected in any connected component of $\M^{\circ}$). Then no two points of $S_T$ can be joined by a timelike curve which does not exit $U$, whereas a timelike curve which exits $U$ cannot enter it again; thus, no two points of $S_T$ can be joined by any timelike curve in $\M^{\circ}$. Therefore by a characterization due to Geroch, each $S_T$ is a Cauchy surface for $(\M^{\circ},\g)$ \cite[Property 6]{Geroch} and the spacetime is globally hyperbolic \cite[Theorem 11]{Geroch}.
 
The proof of Proposition~\ref{thm:Fredholm} shows that the invertibility statement will follow if we show that $\ker P|_{\mathcal{X}^{\mathsf{s}}}=0$ for all $\mathsf{s}$ satisfying any of the conditions. For $\mathsf{s}$ satisfying the Feynman/anti-Feynman conditions, we saw in Proposition~\ref{thm:Feynman-kernel} that any elements of the kernel are in $\sch$; for the retarded/advanced versions, on the other hand, propagation of singularities starting from the a priori assumption $u\in\sob{s}$, which is above-threshold at both components of $\mathcal{R}$ over $I^-$ in the retarded case and $I^+$ in the advanced case, proves that $u$ is Schwartz away from $I^+$ or $I^-$ respectively. Therefore, it is enough to show that the only solution $u\in\sch'$ to $Pu=0$ which is Schwartz away from $I^{\pm}$ is $u=0$. We show this (for $u$ Schwartz away from $I^-$) using an energy estimate similar to those in \cite{Vasy-AdS, HV-semilinear}.

Direct calculation shows that for any real vector field $V$ on $\M^{\circ}$ we have
$$
(V^*P+PV)u
=
\partial_{\alpha}^*\left(\left[
g^{\alpha\mu}(\partial_{\mu}v^{\beta})
+g^{\beta\mu}(\partial_{\mu} v^{\alpha})
-\frac{1}{\sqrt{|\det\g|}}\partial_{\mu}
\Big(\sqrt{|\det\g|}v^{\mu}g^{\alpha\beta}\Big)
\right]
\partial_{\beta} u\right)
-m^2 u\operatorname{div} V
$$
in any local coordinates (summing over repeated indices), where $\partial_{\alpha}^*$ is the $L^2(\M,\g)$-adjoint of $\partial_{\alpha}$. Since $P$ is real, we can assume without loss of generality that the solution $u$ is real-valued. Then as long as $V$ and $u$ are regular enough to justify integration by parts, 
\begin{equation}
\label{eq:energy-integral}
\langle (V^*P+PV)u, u\rangle
=
\int_{\M} \Big( C_V(du,du) - m^2u^2\operatorname{div} V\Big)\ d\mathrm{vol}_{\g},
\end{equation}
where $C_V$ is the bilinear form on the fibers of $T^*\M^{\circ}$ whose matrix in any coordinates is given by the expression in brackets above. Taking $V=\chi(t)\cdot\nabla t$ for a real-valued function $\chi\in C^{\infty}(\R)$, the integrand becomes
$$
C_V(du,du) -m^2u^2\operatorname{div} V
=
2\chi'\cdot T_u(\nabla t,\nabla t)
+\chi
\cdot \Big( C_{\nabla t}
(du,du) + m^2u^2 \Box_{\g}t\Big),
$$
where $T_u=\g(\nabla u,\bullet)\g(\nabla u,\circ)-\frac{1}{2}\Big(\g(\nabla u,\nabla u)+m^2u^2\Big)\g(\bullet,\circ)$ is the Klein-Gordon stress-energy tensor of $u$ and $C_{\nabla{t}}$ is defined in the same way as $C_V$ with $\nabla t$ in place of $V$.

Working in local coordinates $(\rho_0,x_0,y_1,\ldots,y_{d-1})$ on $\U_0$, where $y$ extends a local coordinate chart on $Y$, let us define $\tilde{d}f = \left(\rho_0^2 x_0 \frac{\partial f}{\partial\rho_0},\rho_0 x_0^2\frac{\partial f}{\partial x_0}, \rho_0 x_0^2\frac{\partial f}{\partial y_1},\ldots,\rho_0 x_0^2\frac{\partial f}{\partial y_{d-1}}\right)$, and let $\tilde{\g}$ be the Lorentzian bilinear form defined by the matrix of $\g^{-1}$ with respect to the frame $\left(\frac{d\rho_0}{\rho_0^2x_0^2},\frac{dx_0}{\rho_0x_0^2}, \frac{dy_1}{\rho_0x_0^2},\ldots,\frac{dy_{d-1}}{\rho_0x_0^2}\right)$, so $\tilde{\g}$ is in $\soe(\U_0)$ and nondegenerate up to and including the boundary. Then we can write 
$$
2T_u(\nabla t,\nabla t)
=
2\tilde{\g}(\tilde{d}u,\tilde{d}t)^2 - \Big(\tilde{\g}(\tilde{d}u,\tilde{d}u)+m^2u^2\Big)\tilde{\g}(\tilde{d}t,\tilde{d}t).
$$
The fact that $\g(\nabla t,\nabla t)<-c\rho_{I^0}^2$ in $U$ translates to $\tilde{\g}(\rho_{I^0}^{-1}\tilde{d}t, \rho_{I^0}^{-1}\tilde{d}t)<-c$, so $\rho_{I^0}^{-1}\tilde{d}t\in\soe(\M;\ttm)$ is timelike for $\tilde{\g}$ at all points of $U$ up to and including the boundary. Then we have the lower bound (uniform version of the \textit{weak energy condition})
$$
2T_u(\nabla t,\nabla t) \geqslant c' \rho_{I^0}^2 (|\tilde{d}u|^2+m^2u^2),
$$
where $|\tilde{d}u|$ is just the Euclidean norm of $\tilde{d}u = \left(\rho_0^2 x_0 \frac{\partial u}{\partial\rho_0},\rho_0 x_0^2\frac{\partial u}{\partial x_0}, \rho_0 x_0^2\frac{\partial u}{\partial y_1},\ldots,\rho_0 x_0^2\frac{\partial u}{\partial y_{d-1}}\right)$. Here we can take $c'>0$ independent of the point in $\U_0$ because the positivity holds at all points up to and including the boundary and $\tilde{\g}$ (hence the coefficients of the bilinear form) is continuous.

On the other hand, using the fact that $\g$ is a de,sc-metric and the symbolic order of $t$, one can check that, working in the same coordinates, we can write $C_{\nabla t}(du,du) = \tilde{C}_{\nabla t}(\tilde{d}u,\tilde{d}u)$, where $\tilde{C}_{\nabla t}$ is a bilinear form with coefficients in $\rho_{I^0} S^{\mathsf{-1}}(\M)$. We also have $\Box_{\g}t\in \rho_{I^0} S^{\mathsf{-1}}(\M)$. Therefore, there exists some $C'>0$ such that in $U\cap \U_0$
$$
|C_{\nabla t}(du,du) - m^2u^2\Box_{\g}t |\leqslant C'\rho_{I^+}\rho_{\scri^+}\rho_{I^0}^2(|\tilde{d}u|^2+m^2u^2).$$

Then, combining the two estimates and assuming $\chi,\chi'\geqslant 0$, we get
$$
C_V(du,du)-m^2u^2\operatorname{div}V
\geqslant
(c'\chi'-C' \rho_{I^+}\rho_{\scri^+} \chi)\rho_{I^0}^2(|\tilde{d}u|^2+m^2u^2).$$
Similar estimates hold in coordinate neighborhoods in $\U_T$ (with $\rho_0,x_0$ replaced by $\rho_T,x_T$), in coordinate neighborhoods of the interiors of $I^+$ and $I^0$ (with $\tilde{d}u$ built out of sc-derivatives of $u$), and in compact sets of $\M^{\circ}$ (with $\tilde{d}u$ built out of the ordinary coordinate derivatives of $u$). Since $U$ can be covered by a finite number of such coordinate neighborhoods, the constants $c',C'>0$ can be taken to be independent of the point in $U$.

Recall that we are assuming that $u$ is Schwartz away from $I^-$, and in particular in $U$. Fix $T>T_0$ and $N>\frac{C'}{c'}\sup_U \Big(\rho_{I^+}\rho_{\scri^+}(t-T)\Big)$ and set $\chi(t)=0$ for $t\leqslant T$ while $\chi(t)=(t-T)^N e^{-\frac{1}{t-T}}$ for $t>T$. Then
$$
\chi'(t) = \left( \frac{N}{t-T} + \frac{1}{(t-T)^2}\right) \chi(t),
$$
and we get $c'\chi'-C'\rho_{I^+} \rho_{\scri^+} \chi>0$ for $t>T$. Thus, the integrand in Eq.~(\ref{eq:energy-integral}) is zero on $\{t\leqslant T\}$ and strictly positive on $\{t>T\}$ except at points where $u=0$ and $du=0$.

$\chi(t)$ and all its derivatives on $\M$ are polynomially bounded. This means that, since $u$ is Schwartz on the support of $\chi(t)$, the vector field $V$ is regular enough for all the preceding integrals to be well-defined and the integrations by parts to be justified. Then due to the positivity result just established, we conclude that $\langle (V^*P+PV)u,u\rangle >0$ unless $u=0$ identically on $\{t>T\}$; but on the other hand $Pu=0$ implies that $\langle (V^*P+PV)u,u\rangle =0$. Therefore, we indeed have $u=0$ identically on $\{t>T\}$, in particular in a neighborhood of $S_{T'}$ for any $T'>T$. The fact that $S_{T'}$ is a Cauchy surface then implies that $u=0$ everywhere on $\M$.

\end{proof}

To summarize, in this section we showed that $P$ is Fredholm, and under the assumption of existence of an appropriate time function actually invertible, as an operator between Hilbert spaces $\X^{\mathsf{s}}\to\Y^{\mathsf{s}}$ for any orders $\mathsf{s}$ which satisfy, independently in each connected component of $\Sigma$, either the inequalities of Proposition~\ref{thm:regularity} or the opposite inequalities. The inverses define distinguished retarded, advanced, Feynman, and anti-Feynman propagators for the Klein-Gordon operator which are canonically determined by the compactification of the spacetime.
\begin{itemize}
\item The retarded and advanced propagators are defined by choosing the conditions with $s_{I^+}<-\frac{1}{2}<s_{I^-}$ (retarded) or $s_{I^-}<-\frac{1}{2}<s_{I^+}$ (advanced) in every component of $\Sigma$. This corresponds to decay faster than $\rho_{I^{\mp}}^{d/2}$ near $I^-$ ($I^+$) for functions in the range of the retarded (advanced) propagator.
\item The Feynman and anti-Feynman propagators are defined by choosing the conditions with $s_{I^{\pm}}|_{\Sigma^{\pm}}<-\frac{1}{2}<s_{I^{\mp}}|_{\mp}$ (Feynman) or $s_{I^{\mp}}|_{\Sigma^{\pm}}<-\frac{1}{2}<s_{I^{\pm}}|_{\pm}$ (anti-Feynman).
\end{itemize}
We will denote the Feynman propagator $P_+^{-1}$ and the anti-Feynman propagator $P_-^{-1}$ below.

We note that if $P:\X^{\mathsf{s}}\to\Y^{\mathsf{s}}$ is invertible for all orders $\mathsf{s}$ satisfying one set of conditions, then the inverses agree for different choices of such $\mathsf{s}$, and in particular for any $f\in \sch$ there exists a unique solution $u\in\sch'$ with wavefront set contained only in the components of $\mathcal{R}$ where $s_{I^{\pm}}<-\frac{1}{2}$. Indeed, considering the Feynman case, by the propagation theorems any solution $u\in\sch'$ to $Pu=f$ for $f\in\sch$ with above-threshold regularity at $\mathcal{R}^+_-\cup\mathcal{R}^-_+$ (which is true for any $u\in\X^{\mathsf{s}}$ with $\mathsf{s}$ satisfying the Feynman conditions) in fact has wavefront set of any order only in $\mathcal{R}^+_+\cup\mathcal{R}^-_-$; then the fact that $\sch\subset \Y^{\mathsf{s}}$ for any $\mathsf{s}$ implies the existence of the desired solutions in $\X^{\mathsf{s}}\subset\sch'$ (since orders $\mathsf{s}$ satisfying any of the conditions do exist, as noted in Remark~\ref{rmk:existence}). On the other hand, such solutions automatically belong to \textit{every} $\X^{\mathsf{s}}$ with $\mathsf{s}$ satisfying the conditions, since they have arbitrarily high below-threshold regularity at $\mathcal{R}^+_+\cup\mathcal{R}^-_-$ and are Schwartz elsewhere; thus, since the operator restricted to any one of these spaces is invertible, the solution must be unique.

Specializing from our general assumptions to a more easily described class of spacetimes, we can therefore state the following precise version of the existence-and-uniqueness part of Theorem 1.1, whose applicability to small perturbations of Minkowski space follows from the discussion in Example~\ref{ex:perturbations}.

\begin{theorem}
Let $\g$ be an asymptotically Minkowski metric on $\R^{d+1}$ satisfying the non-trapping assumption on null geodesics, and let $P=\Box_{\g}+m^2$ for $m>0$. Then for any $f\in\sch$ there exists a unique solution $u\in\sch'$ to $Pu=f$ such that $\wf{}(u)\subset (I^+\cap \Sigma^+) \cup (I^-\cap\Sigma^-)$.
\label{thm:intro-precise}
\end{theorem}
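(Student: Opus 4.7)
The plan is to reduce the theorem to the Feynman case of the general invertibility result of Theorem~\ref{thm:invertibility} and then convert the microlocal wavefront-set condition into membership in the appropriate Feynman Hilbert space $\X^{\mathsf{s}}$.

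First I would verify that every hypothesis in Section~\ref{sec:metric} is satisfied. By definition an asymptotically Minkowski metric is Minkowski-like, so by Example~\ref{ex:perturbations} and Appendix~\ref{sec:Minkowski-like} the regularity and decay at $\scri$, the nondegeneracy of the radial set $\mathcal{R}$ at finite frequency over $I^T$, and the non-trapping conditions at spacelike and timelike infinity all hold automatically, the only remaining hypothesis being non-trapping of null geodesics, which is granted. By Appendix~\ref{sec:time-fcn} the spacetime also admits a time function fulfilling the hypotheses of Theorem~\ref{thm:invertibility}. Hence $P:\X^{\mathsf{s}}\to\Y^{\mathsf{s}}$ is invertible for any variable order $\mathsf{s}$ satisfying the Feynman inequalities, i.e.\ Eq.~(\ref{eq:thresholds}) on $\Sigma^+$ and the opposite inequalities on $\Sigma^-$. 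Fix one such $\mathsf{s}$ and define $u=P_+^{-1}f\in\X^{\mathsf{s}}\subset\sch'$. As explained in the paragraph following Theorem~\ref{thm:invertibility}, propagation of singularities together with the above-threshold form of Theorem~\ref{thm:localized-rp-main} applied at the sources $\mathcal{R}^+_-,\mathcal{R}^-_+$ (where $\mathsf{s}$ is above-threshold under the Feynman choice) and iterated along the partial order on $\Sigma$ depicted in Figure~\ref{fig:connectivity-schematic} improves the regularity of $u$ to Schwartz away from the sinks $\mathcal{R}^+_+\cup\mathcal{R}^-_-\subset(I^+\cap\Sigma^+)\cup(I^-\cap\Sigma^-)$. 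This gives existence together with the wavefront-set property.

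For uniqueness, suppose $v\in\sch'$ also satisfies $Pv=f$ with $\wf{}(v)\subset(I^+\cap\Sigma^+)\cup(I^-\cap\Sigma^-)$. Temperedness of $v$ yields $v\in\sob{r}$ for some constant order $\mathsf{r}$, while the wavefront-set condition (which applies at \emph{all} orders) says that outside any neighborhood of $(I^+\cap\Sigma^+)\cup(I^-\cap\Sigma^-)$ we have $v\in\sob{s}$ for arbitrary $\mathsf{s}$. Using Remark~\ref{rmk:existence}, I would choose a variable order $\mathsf{s}'$ satisfying the Feynman inequalities and such that its values on $I^\pm\cap\Sigma^\pm$ (where the Feynman choice is below threshold) lie below the corresponding components of $\mathsf{r}$, with other entries taken arbitrarily large. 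A standard microlocal-partition-of-unity argument combining the two pieces of regularity information then gives $v\in\sob{s'}$, and since $f\in\sch\subset\Y^{\mathsf{s}'}$ we obtain $v\in\X^{\mathsf{s}'}$; invertibility forces $v=P_+^{-1}f=u$.

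There is no real obstacle: once the general machinery of the preceding sections is in place the argument is essentially assembly. The only subtlety worth flagging is the uniqueness step, where one must exhibit a single Feynman order $\mathsf{s}'$ that simultaneously satisfies the coupled inequalities of Eq.~(\ref{eq:thresholds}) (with the appropriate sign reversals on $\Sigma^-$) and is low enough at $I^\pm\cap\Sigma^\pm$ to be compatible with the a priori temperedness of $v$; the flexibility in constructing solutions to Eq.~(\ref{eq:thresholds}) noted in Remark~\ref{rmk:existence} makes this immediate, as the $s'_{I^\pm}|_{\Sigma^\pm}$ entries can be pushed arbitrarily far below $-\tfrac12$ without disturbing the other inequalities.
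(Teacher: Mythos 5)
Your existence argument matches the paper's: verify the hypotheses, invoke Theorem~\ref{thm:invertibility}, and upgrade the regularity of $u=P_+^{-1}f$ by propagating forward from the sources $\mathcal{R}^+_-\cup\mathcal{R}^-_+$, landing in the stronger conclusion $\wf{}(u)\subset\mathcal{R}^+_+\cup\mathcal{R}^-_-$.

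Your uniqueness argument has a gap in the choice of Feynman order. The hypothesis $\wf{}(v)\subset(I^+\cap\Sigma^+)\cup(I^-\cap\Sigma^-)$ does \emph{not} confine $v$'s singularities to finite de,sc-frequency over $(I^\pm)^{\circ}$: the set $I^\pm\cap\Sigma^\pm$ reaches the corners $\scri^\pm_\pm\cap\Gamma_f$, which contain the radial sets $\mathcal{C}^\pm_\pm$ and the part of $\mathcal{N}^\pm_\pm$ over $I^\pm$, and the hypothesis permits $v$ to be singular there. Near such a corner the a priori regularity of $v$ is only $\sob{r}$, so for the partition-of-unity step to yield $v\in\sob{s'}$ you must have \emph{every} component of $\mathsf{s'}|_{\Sigma^\pm}$ corresponding to a face meeting $I^\pm\cap\Sigma^\pm$ — that is, $s'_{\scri^\pm}|_{\Sigma^\pm}$ and $s'_f|_{\Sigma^\pm}$ as well as $s'_{I^\pm}|_{\Sigma^\pm}$ — lie below the corresponding entry of $\mathsf{r}$. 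Taking ``other entries arbitrarily large,'' as you write, would require regularity that $v$ is not a priori known to possess at those corners, so the step asserting $v\in\sob{s'}$ fails as written. The gap is repairable: the construction in Remark~\ref{rmk:existence} shows that Eq.~(\ref{eq:thresholds}) places only \emph{upper} bounds on $s_{I^0}$, $s_f$, $s_{\scri^+}$, $s_{I^+}$ on $\Sigma^+$ (and symmetrically on $\Sigma^-$), so all of these may be pushed below the corresponding entries of $\mathsf{r}$ simultaneously while keeping $s_{\scri^-}$, $s_{I^-}$ appropriately large. The paper avoids the bookkeeping entirely by first applying the propagation and radial-point estimates (Theorems~\ref{thm:pos} and~\ref{thm:localized-rp-main}) to show that the hypothesis in fact forces $\wf{}(v)\subset\mathcal{R}^+_+\cup\mathcal{R}^-_-$ and that $v$ has arbitrarily high below-threshold regularity there; since $\mathcal{R}^\pm_\pm$ sits at finite de,sc-frequency with $\lambda_{\scri^\pm}=0$, only $s'_{I^\pm}|_{\Sigma^\pm}$ then remains constrained, which is the clean statement your final paragraph was reaching for.
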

\begin{proof}
The construction of a time function in Appendix~\ref{sec:time-fcn} shows that Theorem~\ref{thm:invertibility} applies to such spacetimes, so all four distinguished realizations of $P$ are invertible. Then as just discussed, restricting attention to the Feynman realization, for any $f\in\sch$ there exists a unique solution $u\in\sch'$ to $Pu=f$ such that $\wf{}(u)\subset \mathcal{R}^+_+ \cup \mathcal{R}^-_-$. By the propagation theorems, any solution with $\wf{}(u)\subset (I^+\cap \Sigma^+) \cup (I^-\cap\Sigma^-)$ must in fact have $\wf{}(u)\subset \mathcal{R}^+_+ \cup \mathcal{R}^-_-$, so this is equivalent to the theorem.
\end{proof}

\section{Limiting absorption principle}
\label{sec:LAP}
In this section, we show that the Feynman/anti-Feynman propagators defined in the previous section can be understood as limits
$$P_{\pm}^{-1}=\lim_{\varepsilon\to 0^+}(P\mp i\varepsilon)^{-1},$$
where the inverses on the right-hand side can be interpreted via the functional calculus for self-adjoint operators, since $P$ with domain $\sch$ is essentially self-adjoint on $L^2(\M,\g)$ \cite{JMS} (though we will consider them defined directly using the Fredholm framework). For Lorentzian scattering spaces, convergence in the weak operator topology was established in \cite{Vasy-SA} following \cite[Section 2.7]{Vasy-AH-KdS}; essentially the same proof applies in our setting, and we also show that the result can be upgraded to convergence in the strong operator topology by extension from a dense subspace.

Let $\mathsf{s_{\pm}}=(s_f,\mathsf{s}_{base})$ be variable orders which are constant near any connected component of the characteristic set and satisfy simultaneously the inequalities
\begin{equation}
\begin{cases}
-s_{\scri^+} +2s_{I^+} < s_f-1 < -s_{\scri^-} +2s_{I^-}, \\
s_{\scri^+} < s_f-1 < s_{\scri^-}, \\
s_{\scri^+} < 2s_{I^0} - s_f + 1 < s_{\scri^-}, \\
s_{\scri^+} < s_{I^0} + s_f -\frac{1}{2} < s_{\scri^-}
\end{cases}
\label{eq:thresholds-complex}
\end{equation}
on $\Sigma^{\pm}$ and the opposite inequalities on $\Sigma^{\mp}$. In other words, $\mathsf{s_+}$ satisfies the Feynman inequalities and $\mathsf{s_-}$ the anti-Feynman inequalities, except the threshold condition on regularity at $I^T$ (the first inequality in Eq.~(\ref{eq:thresholds})) is dropped. For such $\mathsf{s_{\pm}}$ and any $\varepsilon>0$, define
$$
\X^{\mathsf{s_{\pm}}}_{\varepsilon} = \{u\in\sob{s_{\pm}}\ |\ (P \mp i\varepsilon)u\in\Y^{\mathsf{s_{\pm}}}\},
\hspace{30pt}
\Y^{\mathsf{s_{\pm}}} = H^{s_f-1,\mathsf{s_{base}+1}}(\M)
$$
$\X^{\mathsf{s_{\pm}}}_{\varepsilon}$ is a Hilbert space with the squared norm $\|u\|_{\X^{\mathsf{s_{\pm}}}_{\varepsilon}}^2=\|u\|_{\mathsf{s_{\pm}}}^2+\|(P\mp i\varepsilon)u\|_{\Y^{\mathsf{s_{\pm}}}}^2$. We recall the analogue of Proposition~\ref{thm:Fredholm} for $P\mp i\varepsilon$.

\begin{theorem}[\cite{JMS}, Theorem C and Remark 3.6]
\label{thm:invertibility-complex}
$(P\mp i\varepsilon):\X^{\mathsf{s_{\pm}}}_{\varepsilon}\to\Y^{\mathsf{s_{\pm}}}$ is invertible, and the inverse maps $(P\mp i\varepsilon)^{-1}:\sch\to\sch$.
\end{theorem}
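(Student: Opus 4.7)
The plan is to mirror the chain of arguments in Section~\ref{sec:global} (Propositions~\ref{thm:regularity}--\ref{thm:Fredholm} and Theorem~\ref{thm:invertibility}) but adapted to the complex-shifted operator. The microlocal inputs are that Theorem~\ref{thm:pos} part~1 and Theorem~\ref{thm:localized-rp-main} part~1 apply to $P - i\varepsilon$ (with $\tau = \varepsilon > 0$) only in the forward direction along $\Hp$, and that Corollary~\ref{thm:complex} supplies an additional $|\varepsilon|^{-1}$-weighted estimate at each radial point which is precisely what allows the threshold condition at $I^T$ present in Eq.~(\ref{eq:thresholds}) to be dropped in Eq.~(\ref{eq:thresholds-complex}).

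First I would establish a global regularity result analogous to Proposition~\ref{thm:regularity}: for any $\mathsf{s_+}$ as in Eq.~(\ref{eq:thresholds-complex}) and some fixed $\mathsf{r}$, any $u \in \sob{r}$ with $(P - i\varepsilon)u \in \Y^{\mathsf{s_+}}$ lies in $\sob{s_+}$ (and symmetrically for $P + i\varepsilon$ and $\mathsf{s_-}$). The argument combines elliptic regularity off $\Sigma$, the above-threshold source estimate of Theorem~\ref{thm:localized-rp-main} at $\mathcal{R}^+_- \sqcup \mathcal{R}^-_+$, forward propagation of singularities by Theorem~\ref{thm:pos}, and, at the sinks $\mathcal{R}^+_+ \sqcup \mathcal{R}^-_-$, the estimate of Corollary~\ref{thm:complex} (which requires no order inequality, thanks to $\varepsilon>0$) to handle the orders $s_{I^\pm}$ without any threshold restriction. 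Once this regularity result is available, Fredholmness of $P - i\varepsilon: \X^{\mathsf{s_+}}_{\varepsilon} \to \Y^{\mathsf{s_+}}$ follows verbatim from the closed-graph and compactness argument of Proposition~\ref{thm:Fredholm}, using in addition the analogous statement for the adjoint $(P - i\varepsilon)^* = P + i\varepsilon$ on orders dual to $\mathsf{s_+}$, which satisfy the opposite-sign version of Eq.~(\ref{eq:thresholds-complex}) and so fall under the symmetric case of the theorem.

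Invertibility is then immediate. For injectivity, iterating the regularity result with arbitrarily high $\mathsf{s_+}$ places $\ker (P - i\varepsilon)|_{\X^{\mathsf{s_+}}_{\varepsilon}}$ inside $\sch$, and pairing a kernel element with itself gives
$$
0 = \operatorname{Im}\langle (P - i\varepsilon)u, u\rangle = -\varepsilon \|u\|^2
$$
using $P^* = P$, forcing $u = 0$. Surjectivity reduces by Fredholm duality to triviality of $\ker (P + i\varepsilon)$ on the dual orders, which the same injectivity argument handles. The Schwartz-to-Schwartz property then follows because Eq.~(\ref{eq:thresholds-complex}) admits solutions $\mathsf{s_+}$ with all entries arbitrarily large, so $\sch \subset \Y^{\mathsf{s_+}}$ for every admissible choice and, by uniqueness, $(P - i\varepsilon)^{-1} f \in \bigcap_{\mathsf{s_+}} \sob{s_+} = \sch$ for every $f \in \sch$.

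The main obstacle is establishing the global regularity result without the threshold inequality at $I^T$: one must carefully organize the use of the forward radial point estimate, propagation of singularities, and Corollary~\ref{thm:complex} so that arbitrary a priori regularity $u \in \sob{r}$ is improved to $\sob{s_+}$ at every point of $\Sigma$, while respecting the one-sided propagation direction imposed by the sign of $\varepsilon$ (in particular, unlike the $\tau = 0$ case one cannot freely mix source and sink estimates on the two components of $\Sigma$). Uniformity of the constants as $\varepsilon \to 0^+$ plays no role here since $\varepsilon > 0$ is fixed, but it will be the crux of the limiting absorption principle in Section~\ref{sec:LAP}, which is the main use of this theorem.
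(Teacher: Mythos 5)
Your plan to apply the above-threshold source estimate of Theorem~\ref{thm:localized-rp-main} at $\mathcal{R}^+_- \sqcup \mathcal{R}^-_+$ does not go through. That estimate requires the corresponding threshold inequality at the radial point (for $\mathcal{R}^+_-$, using $\lambda_{I^-}=1$ and $m_{I^-}=0$, this is $s_{I^-}|_{\Sigma^+}>-\tfrac12$; similarly $s_{I^+}|_{\Sigma^-}>-\tfrac12$ at $\mathcal{R}^-_+$) --- and these are exactly the inequalities Eq.~(\ref{eq:thresholds-complex}) drops relative to Eq.~(\ref{eq:thresholds}). For admissible $\mathsf{s_+}$ with $s_{I^-}|_{\Sigma^+}\leqslant -\tfrac12$ your scheme has nothing left to apply at the source: the above-threshold estimate is unavailable, and the below-threshold estimate would require a priori control on bicharacteristics flowing \emph{into} the source, which for $P-i\varepsilon$ one cannot supply since only forward propagation is available. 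You also mis-attribute the role of Corollary~\ref{thm:complex}: it is not what permits dropping the $I^T$ threshold at fixed $\varepsilon>0$; its purpose is the uniform-in-$\varepsilon$ control used later in Lemma~\ref{thm:bound-uniform}.

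The observation you are missing --- and one of the two differences the paper's proof explicitly calls out --- is that $P-i\varepsilon$ is \emph{elliptic} on $\Sigma\setminus\Gamma_f$. Since $P\in\Psi^{2,\mathsf{0}}_{\mathrm{de,sc}}(\M)$ and $I\in\Psi^{0,\mathsf{0}}_{\mathrm{de,sc}}(\M)$, at a finite-frequency boundary point the rescaled symbol of $P-i\varepsilon$ has imaginary part $-\varepsilon\rho_f^{m_f}$, which is bounded away from zero there, while the real part $\tilde p$ vanishes on $\Sigma$; hence the full symbol is elliptic. In particular $P-i\varepsilon$ is elliptic on all of $\mathcal{R}$, so no radial-point estimate of any kind is needed there --- microlocal elliptic regularity gives the required regularity at $\mathcal{R}$ with no order constraint, which is precisely why the $I^T$ threshold can be dropped and why the regularity argument closes. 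The only propagation required is within $\Sigma\cap\Gamma_f$ through $\mathcal{A},\mathcal{K},\mathcal{N},\mathcal{C}$, in the one direction permitted by the sign of $\varepsilon$, and their thresholds are the remaining inequalities of Eq.~(\ref{eq:thresholds-complex}). Your treatment of injectivity via $\operatorname{Im}\langle(P-i\varepsilon)u,u\rangle=-\varepsilon\|u\|^2$ after placing the kernel in $\sch$, of surjectivity via $(P+i\varepsilon)^*=P-i\varepsilon$, and of the Schwartz mapping property from the unboundedness of admissible orders is correct and matches the paper.
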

\begin{proof}
The argument of \cite{JMS} for asymptotically Minkowski metrics on $\R^{d+1}$ applies without modification in the more general setting.

The outline of the argument that $P\mp i\varepsilon$ is Fredholm is almost the same as for $P$, with two main differences. First, due to the imaginary part, propagation of singularities (including into radial points) is only allowed in one direction along the Hamilton flow for either of $P\mp i\varepsilon$; because of this, estimates for $P-i\varepsilon$ can only be combined into a global statement on the Feynman spaces and for $P+i\varepsilon$ on the anti-Feynman spaces. Second, the propagation requires fewer steps because $P\mp i\varepsilon$ is elliptic away from fiber infinity; in particular, the radial set $\mathcal{R}$ plays no role, so neither the threshold inequalities on $s_{I^{\pm}}$ nor the part of the non-trapping assumptions relating to finite frequencies over timelike or spacelike infinity are necessary for this result.

The absence of thresholds can be used to show that the kernel of $P\mp i\varepsilon$ is a subspace of $\sch$ without the special construction of Proposition~\ref{thm:Feynman-kernel}, and the fact that $P$ is $L^2(\M,\g)$-symmetric then implies that the kernel is $0$. Since $(P+i\varepsilon)^*=(P-i\varepsilon)$ with respect to the $L^2(\M,\g)$ pairing, a functional-analytic argument like in the proof of Proposition~\ref{thm:Fredholm} then yields the invertibility of $(P\mp i\varepsilon):\X^{\mathsf{s_{\pm}}}_{\varepsilon}\to\Y^{\mathsf{s_{\pm}}}$. The fact that $(P\mp i\varepsilon)^{-1}$ maps $\sch\to\sch$ follows from the fact that, with the threshold inequality removed, there are solutions to the system Eq.~(\ref{eq:thresholds-complex}) with all orders arbitrarily high.
\end{proof}

The limiting absorption principle then follows from invertibility of $P_{\pm}$ along with the fact that the estimates for propagation of singularities (including into radial points) for $P\mp i\varepsilon$, and hence the corresponding Fredholm estimates, are uniform in $\varepsilon\in (0,1)$.

\begin{lemma}
Let $T$ be a bounded set of strictly positive real numbers. Let $\mathsf{s_{\pm}}=(s_f,\mathsf{s_{base}})$ be a set of orders satisfying the Feynman ($\mathsf{s_+}$) or anti-Feynman ($\mathsf{s_-}$) conditions, including the threshold conditions at $I^T$ if $0\in \overline{T}$ but not necessarily otherwise. For $\tau\in T$, let $g_{\tau}$ be a bounded family in $\Y^{\mathsf{s_{\pm}}}$. If the family $v_{\tau} = (P \mp i\tau)^{-1} g_{\tau} \in \sob{s_{\pm}}$ is bounded in $\sob{s'}$ for some $\mathsf{s'}\geqslant\mathsf{s_{\pm}-1}$, then $v_{\tau}$ is also bounded in $\sob{s_{\pm}}$.
\label{thm:bound-uniform}
\end{lemma}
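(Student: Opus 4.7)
The plan is to combine the $\tau$-uniform elliptic, propagation of singularities, and radial point estimates into a global bound
\[
\|v_\tau\|_{\mathsf{s_{\pm}}} \leqslant C\bigl(\|g_\tau\|_{\Y^{\mathsf{s_{\pm}}}} + \|v_\tau\|_{\mathsf{s'}}\bigr),
\]
with $C$ independent of $\tau\in T$; this yields the conclusion directly from the two boundedness hypotheses. It is the $\tau$-uniform analogue of the estimate underlying Proposition~\ref{thm:Fredholm}. All three ingredients are already available in $\tau$-uniform form: elliptic regularity for $P\mp i\tau$ is $\tau$-independent since its principal symbol equals that of $P$; the propagation of singularities estimate Theorem~\ref{thm:pos}(2) is uniform for $\tau\geqslant 0$ (resp.\ $\tau\leqslant 0$) in any bounded set; and the localized radial point estimate Proposition~\ref{thm:localized-rp-estimates} is uniform in the same sense.

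Then I would bootstrap exactly as in the proof of Proposition~\ref{thm:regularity}, now tracking $\tau$-uniformity at each step. First, a microlocal elliptic estimate disposes of $\partial\PP\setminus\Sigma$ with constants independent of $\tau$. On each connected component of $\Sigma$ I proceed in the order dictated by the flow diagram (Figure~\ref{fig:connectivity-schematic}): start at the global source $(\mathcal{R}^+_-\cup\mathcal{R}^-_+)\cap\Sigma^{\pm}$ with the above-threshold version of Proposition~\ref{thm:localized-rp-estimates}, alternate uniform propagation along each bicharacteristic with radial point estimates at the intermediate radial sets $\mathcal{A},\mathcal{K},\mathcal{N},\mathcal{C}$, and end at the sinks $(\mathcal{R}^+_+\cup\mathcal{R}^-_-)\cap\Sigma^{\pm}$ with the below-threshold version. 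At each step the a priori input is provided either by the $\sob{s'}$-hypothesis (supplying the below-threshold assumption at the sink and the $\|u\|_{\mathsf{-N}}$ remainder) or by the previous step of the bootstrap; the Feynman inequalities together with $\mathsf{s'}\geqslant\mathsf{s_{\pm}-1}$ place both $\mathsf{s_{\pm}}$ and the intermediate orders in the appropriate half-spaces required by Proposition~\ref{thm:localized-rp-estimates}, if necessary after a short preliminary bootstrap in half-order steps. A microlocal partition of unity assembles the local bounds into the global estimate, with the Feynman ($+$) and anti-Feynman ($-$) cases handled symmetrically using $\tau\geqslant 0$ and $\tau\leqslant 0$ respectively.

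The main point of care is the dichotomy in the statement, which is forced by the behavior at the radial set $\mathcal{R}\subset I^T$. When $T$ is bounded away from $0$, the operator $P\mp i\tau$ is elliptic at $\mathcal{R}$ with ellipticity constants uniform in $\tau\in T$, so a uniform microlocal elliptic estimate takes the place of the radial point estimate there and no threshold condition on $s_{I^\pm}$ is needed; this is why the threshold at $I^T$ is dropped in the lemma when $0\notin\overline{T}$. When $0\in\overline{T}$ this uniform ellipticity fails and one must genuinely invoke Proposition~\ref{thm:localized-rp-estimates} at $\mathcal{R}$, whose hypotheses are exactly the threshold conditions at $I^T$ that the lemma imposes in this case. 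Throughout, the $(1+|\tau|)\|u\|_{\mathsf{-N}}$ remainder appearing in all these estimates is uniformly bounded because $T$ is bounded, completing the argument.
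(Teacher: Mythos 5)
Your overall architecture is right: combine $\tau$-uniform propagation and radial point estimates inside $\Sigma$ (via the ordering of Figure~\ref{fig:connectivity-schematic}) with an elliptic estimate on $\partial\PP\setminus\Sigma$, and distinguish the two cases according to whether or not $0\in\overline{T}$ since ellipticity at $\mathcal{R}$ degenerates as $\tau\to 0$. You also correctly cite the $\tau$-uniform statements of Theorem~\ref{thm:pos} and Proposition~\ref{thm:localized-rp-estimates} and note that $(1+|\tau|)\|u\|_{\mathsf{-N}}$ is controlled because $T$ is bounded.

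The gap is in the elliptic step. You assert that ``elliptic regularity for $P\mp i\tau$ is $\tau$-independent since its principal symbol equals that of $P$.'' This is false: in the de,sc-calculus the principal symbol of $P\mp i\tau$ in $\ps[2,0,0,0,0,0]/\ps[1,-1,-1,-1,-1,-1]$ is $p\mp i\tau$, not $p$, because the constant $\mp i\tau\in\sym{0}$ is not lower order at the base faces. You implicitly acknowledge this yourself when you observe that $P\mp i\tau$ becomes elliptic at $\mathcal{R}$ once $T$ is bounded away from zero --- which is only possible because the symbols differ. So the conclusion (a $\tau$-uniform elliptic estimate on $\partial\PP\setminus\Sigma$) is true but needs an argument you do not supply, and the false justification would leave the ``absorb and assemble'' step uncontrolled in $\tau$. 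The paper's proof carries this out by choosing $Q\in\ps[0]$ elliptic on $\partial\PP\setminus U$ with $\wfs(Q)\cap\Sigma=\varnothing$, fixing for each $\sigma\in\overline{T}$ an elliptic parametrix $S_\sigma$ for $P\mp i\sigma$ on $\wfs(Q)$, writing $Qv_\tau=Q\bigl(S_\sigma((P\mp i\tau)v_\tau\pm i(\tau-\sigma)v_\tau)+R_\sigma v_\tau\bigr)$, and absorbing the resulting $|\tau-\sigma|\,\|Qv_\tau\|_{\mathsf{s_{\pm}}}$ term when $|\tau-\sigma|\leqslant\frac{1}{2C_\sigma}$; compactness of $\overline{T}$ gives the cover by finitely many such intervals. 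You would need this (or a direct construction of a uniformly bounded family of parametrices $S_\tau$, which would itself require checking uniformity of the asymptotic expansion terms) to close the argument. The rest of your proposal --- the bootstrap in half-order increments starting from the $\sob{s'}$ hypothesis, the traversal of the radial sets from global source to global sink, and the handling of the $0\notin\overline{T}$ case by replacing propagation over $I^T$ with ellipticity --- matches the paper.
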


\begin{proof}
Assuming first that $\mathsf{s_{\pm}}$ satisfies the threshold conditions, by using the uniform statements of Theorem~\ref{thm:pos} and Theorem~\ref{thm:localized-rp-main} in the order dictated by Figure~\ref{fig:connectivity-schematic} we conclude that $\wf{s_{\pm}}(\{v_{\tau}\})\cap\Sigma = \varnothing$. Then there exists an open neighborhood $U$ of $\Sigma$ in $\partial\PP$ which is disjoint from $\wf{s_{\pm}}(\{v_{\tau}\})$. To complete the proof, we need to show that $\wf{s_{\pm}}(\{v_{\tau}\}) \cap (\partial \PP\backslash U)=\varnothing$, i.e. prove a uniform elliptic estimate.

Fix $Q\in\ps[0]$ elliptic on $\partial\PP\backslash U$ such that $\wfs(Q)\cap\Sigma=\varnothing$. For every $\sigma\in \overline{T}$, let $S_{\sigma}\in\Psi^{-2,\mathsf{0}}_{\mathrm{de,sc}}(\M)$ be a microlocal elliptic parametrix for $P\mp i\sigma$ on $\wfs(Q)$. Let $R_{\sigma}=I-S_{\sigma}(P \mp i\sigma)$, so $\wfs(R_{\sigma})\cap\wfs(Q)=\varnothing$. Then
$$
\|Q v_{\tau}\|_{\mathsf{s_{\pm}}}
=
\|Q (S_{\sigma} ((P \mp i\tau) \pm i(\tau-\sigma))+R_{\sigma})v_{\tau}\|_{\mathsf{s_{\pm}}}
\leqslant
\|Q S_{\sigma} g_{\tau}\|_{\mathsf{s_{\pm}}}
+|\tau-\sigma| \|Q S_{\sigma} v_{\tau}\|_{\mathsf{s_{\pm}}}
+\|QR_{\sigma} v_{\tau}\|_{\mathsf{s_{\pm}}}
\leqslant
$$
$$
\leqslant
\|Q S_{\sigma} g_{\tau}\|_{\mathsf{s_{\pm}}}
+|\tau-\sigma| \| S_{\sigma} Q v_{\tau}\|_{\mathsf{s_{\pm}}}
+|\tau-\sigma| \| [Q,S_{\sigma}] v_{\tau}\|_{\mathsf{s_{\pm}}}
+\|QR_{\sigma} v_{\tau}\|_{\mathsf{s_{\pm}}}
\leqslant
$$
$$
\leqslant
C_{\sigma}\Big(
\|g_{\tau}\|_{s_f-2,\mathsf{s_{base}}}
+|\tau-\sigma| \|Q v_{\tau}\|_{s_f-2,\mathsf{s_{base}}}
+ \| v_{\tau}\|_{s_f-3,\mathsf{s_{base}-1}}
\Big)
\leqslant
C_{\sigma}\Big(
\|g_{\tau}\|_{\Y^{\mathsf{s_{\pm}}}}
+|\tau-\sigma| \|Q v_{\tau}\|_{\mathsf{s_{\pm}}}
+ \|v_{\tau}\|_{\mathsf{s'}}
\Big).
$$
Then for fixed $\sigma$, for any $\tau\in [\min(0,\sigma-\frac{1}{2 C_{\sigma}}), \sigma+\frac{1}{2C_{\sigma}}]$ we have
$$
\|Q v_{\tau}\|_{\mathsf{s_{\pm}}}
\leqslant
2 C_{\sigma} \Big(
\|g_{\tau}\|_{\Y^{\mathsf{s_{\pm}}}} +\|v_{\tau}\|_{\mathsf{s'}} \Big).
$$
Then since $\tau$ takes values in a bounded set $T$ which we can cover by a finite number of neighborhoods of the form $[\min(0,\sigma-\frac{1}{2 C_{\sigma}}), \sigma+\frac{1}{2C_{\sigma}}]$ for $\sigma\in \overline{T}$, we conclude that the family $Qv_{\tau}$ for $\tau\in T$ is bounded in $\sob{s_{\pm}}$ and thus $\wf{s_{\pm}}(\{v_{\tau}\})\cap (\partial\PP\backslash U)=\varnothing$. Combined with the propagation results, we conclude that $v_{\tau}$ is bounded in $\sob{s_{\pm}}$.

If $0\notin\overline{T}$, then for the elliptic estimate one can instead work in the complement of a small neighborhood of $\Sigma\cap\Gamma_f$, which is the characteristic set of $P \mp i\sigma$ for any $\sigma>0$. The propagation estimates within $\Sigma\cap \Gamma_f$ do not require the threshold conditions to hold, so they are not necessary for the result in this case.
\end{proof}

As the proof shows, despite the fact that the $\varepsilon>0$ inverses exist even without the threshold conditions, which can be shown using propagation of singularities only in $\Sigma\cap\Gamma_f$, the thresholds are necessary for the uniform boundedness result for small $\varepsilon$ because the elliptic estimate away from a small neighborhood of $\Sigma\cap\Gamma_f$ is not uniform as $\varepsilon\to 0^+$. Thus it is necessary to use propagation estimates in the full characteristic set $\Sigma$ of $P$, away from which the elliptic estimate \textit{is} uniform.

\begin{prop}
Assume that the Feynman/anti-Feynman realization of $P$ is invertible (implied e.g. by the assumptions of Theorem~\ref{thm:invertibility}). For any set of orders $\mathsf{s_{\pm}}$ satisfying the Feynman ($\mathsf{s_+}$)/anti-Feynman ($\mathsf{s_-}$) conditions (including the threshold conditions at $I^T$), there exists $C>0$ such that all $\varepsilon\in (0,1)$ and $f\in \Y^{\mathsf{s_{\pm}}}$ satisfy
\begin{equation}
\|(P \mp i\varepsilon)^{-1}f\|_{\mathsf{s_{\pm}}} \leqslant C \|f\|_{\Y^{\mathsf{s_{\pm}}}}.
\end{equation}
\label{thm:invertibility-uniform}
\end{prop}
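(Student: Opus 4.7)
The plan is to argue by contradiction, using uniform-in-$\varepsilon$ versions of the microlocal estimates already established. First I would assemble the uniform propagation of singularities estimate (Theorem~\ref{thm:pos}, part 2) and the uniform radial point estimate (Proposition~\ref{thm:localized-rp-estimates}), chained together in the order dictated by Figure~\ref{fig:connectivity-schematic} exactly as in the proof of Proposition~\ref{thm:regularity}, together with an elliptic estimate uniform in $\varepsilon \in [0,1]$ away from a neighborhood of $\Sigma \cap \Gamma_f$, as at the end of the proof of Lemma~\ref{thm:bound-uniform}. This should produce a constant $C > 0$ and a set of orders $\mathsf{r}$, strictly smaller than $\mathsf{s_\pm}$ pointwise on $\PP$ but still satisfying $r_{I^\pm} > -\tfrac{1}{2}$ at the components of $\Sigma$ where the Feynman/anti-Feynman hypothesis on $\mathsf{s_\pm}$ requires it, such that for all $\varepsilon \in [0,1]$ and $u \in \sob{s_\pm}$ with $(P \mp i\varepsilon)u \in \Y^{\mathsf{s_\pm}}$,
\begin{equation}
\|u\|_{\mathsf{s_\pm}} \leqslant C \Bigl( \|(P \mp i\varepsilon) u\|_{\Y^{\mathsf{s_\pm}}} + \|u\|_{\mathsf{r}} \Bigr).
\label{eq:lap-uniform-est}
\end{equation}

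Assume for contradiction that the conclusion fails. Then I can extract sequences $\varepsilon_n \in (0,1)$ and $f_n \in \Y^{\mathsf{s_\pm}}$ with $u_n = (P \mp i\varepsilon_n)^{-1} f_n$ normalized so that $\|u_n\|_{\mathsf{s_\pm}} = 1$ and $\|f_n\|_{\Y^{\mathsf{s_\pm}}} \to 0$; passing to a subsequence, $\varepsilon_n \to \varepsilon_0 \in [0,1]$. Applying \eqref{eq:lap-uniform-est} to $u_n$ yields $\liminf_n \|u_n\|_{\mathsf{r}} \geqslant 1/C > 0$. Weak sequential compactness of the unit ball in $\sob{s_\pm}$ combined with the compact embedding $\sob{s_\pm} \hookrightarrow \sob{r}$ (valid thanks to $\mathsf{r} < \mathsf{s_\pm}$ pointwise) then allows me to pass to a further subsequence with $u_n \rightharpoonup u_*$ weakly in $\sob{s_\pm}$ and $u_n \to u_*$ strongly in $\sob{r}$, so in particular $\|u_*\|_{\mathsf{r}} \geqslant 1/C > 0$.

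Passing to the distributional limit in $(P \mp i\varepsilon_n)u_n = f_n$ (justified by weak convergence in $\sob{s_\pm}$) gives $(P \mp i\varepsilon_0)u_* = 0$. If $\varepsilon_0 > 0$, then $u_* \in \sob{s_\pm}$ lies in the (less restrictive) space on which Theorem~\ref{thm:invertibility-complex} asserts invertibility of $P \mp i\varepsilon_0$, forcing $u_* = 0$; if $\varepsilon_0 = 0$, then $u_* \in \X^{\mathsf{s_\pm}}$ (since $\mathsf{s_\pm}$ does satisfy the threshold condition here) and the assumed invertibility of the Feynman/anti-Feynman realization of $P$ again gives $u_* = 0$. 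Either case contradicts $\|u_*\|_{\mathsf{r}} > 0$, and the proposition follows.

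The main obstacle is verifying that \eqref{eq:lap-uniform-est} really is uniform down to $\varepsilon = 0$. The propagation and radial-point estimates already come with the required explicit uniform-in-$\tau$ constants, so the non-trivial point is that we cannot use elliptic regularity uniformly on the full characteristic set of $P$: although $P \mp i\varepsilon$ is elliptic off $\Sigma \cap \Gamma_f$ for each fixed $\varepsilon > 0$, this ellipticity is not uniform as $\varepsilon \to 0^+$ at finite-frequency points of $\Sigma$, and there the propagation/radial-point estimates must be used instead. This is precisely why the above-threshold hypothesis $s_{I^\pm} > -\tfrac{1}{2}$ on $\Sigma^\pm$ (for the Feynman case, and the opposite for anti-Feynman) is imposed in the statement: it is exactly what is needed to make the above-threshold radial point estimate at $\mathcal{R}$ available with a constant uniform in $\tau \geqslant 0$ (resp.\ $\tau \leqslant 0$), in contrast to the situation of Lemma~\ref{thm:bound-uniform} where this threshold only needed to be assumed when $0 \in \overline{T}$.
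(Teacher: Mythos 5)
Your proof is correct, but it takes a genuinely different path from the paper's. You argue by contradiction directly against the operator-norm bound, using a quantitative uniform Fredholm estimate $\|u\|_{\mathsf{s_\pm}} \leqslant C(\|(P \mp i\varepsilon)u\|_{\Y^{\mathsf{s_\pm}}} + \|u\|_{\mathsf{r}})$ for $\varepsilon\in[0,1]$, extracted by chaining the $\tau$-uniform propagation, radial point, and elliptic estimates. The paper instead first applies the uniform boundedness principle to reduce to pointwise boundedness of $(P\mp i\varepsilon)^{-1}f$ for each fixed $f$, then runs a compactness argument using only the \emph{qualitative} Lemma~\ref{thm:bound-uniform}. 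The paper's crucial trick, which your route sidesteps, is to normalize the sequence $v_n = u_n/\|u_n\|_{\mathsf{s'}}$ by the \emph{weaker} norm $\|\cdot\|_{\mathsf{s'}}$ with $\mathsf{s_\pm-1}<\mathsf{s'}<\mathsf{s_\pm}$; then Lemma~\ref{thm:bound-uniform} (applied to $v_n$, which is a priori bounded in $\sob{s'}$) directly upgrades to a $\sob{s_\pm}$-bound, and compact embedding into $\sob{s'}$ produces a limit that is automatically nonzero because $\|v_n\|_{\mathsf{s'}}=1$. Your route instead needs the quantitative estimate to guarantee $\liminf\|u_n\|_{\mathsf{r}}>0$ and preclude the weak limit vanishing; this is not explicitly stated anywhere in the paper (Lemma~\ref{thm:bound-uniform} is qualitative, about families), so your assertion that chaining ``should produce a constant $C>0$'' is a real additional step, though a correct one (the paper remarks on its availability in the proof of Proposition~\ref{thm:Fredholm}). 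What your version buys is dispensing with UBP; what the paper's buys is that it only ever invokes the qualitative lemma as stated. Your handling of the limit case $\varepsilon_0>0$ via Theorem~\ref{thm:invertibility-complex} and $\varepsilon_0=0$ via the invertibility hypothesis is correct and matches the paper's use of $\ker(P\mp i\varepsilon)|_{\sob{s_\pm}}=0$.
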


\begin{proof}
Since $(P\mp i\varepsilon):\X^{\mathsf{s_{\pm}}}_{\varepsilon}\to\Y^{\mathsf{s_{\pm}}}$ is bounded and invertible, the inverses are also bounded, which implies boundedness as maps $(P \mp i\varepsilon)^{-1}:\Y^{\mathsf{s_{\pm}}}\to\sob{s_{\pm}}$. Then by the uniform boundedness principle, it is enough to show that for every $f\in\Y^{\mathsf{s_{\pm}}}$, the family $(P \mp i\varepsilon)^{-1}f$ is bounded in $\sob{s_{\pm}}$. Let us assume this is not the case, so there exists $f\in\Y^{\mathsf{s_{\pm}}}$ and a sequence $\varepsilon_n \in (0,1)$ such that $u_n=(P \mp i\varepsilon_n)^{-1} f$ has $\|u_n\|_{\mathsf{s_{\pm}}}\to +\infty$.

Take any $\mathsf{s'}$ such that $\mathsf{s_{\pm}-1}<\mathsf{s'}<\mathsf{s_{\pm}}$. By Lemma~\ref{thm:bound-uniform}, $u_n$ must be unbounded in $\sob{s'}$ as well. Then define $v_n = \frac{u_n}{\|u_n\|_{\mathsf{s'}}}$ and $g_n=\frac{f}{\|u_n\|_{\mathsf{s'}}}$, so we have $v_n=(P \mp i\varepsilon_n)^{-1} g_n$ while $\|v_n\|_\mathsf{s'}=1$.

There exists a sequence $n_k$ such that $\lim_{k\to\infty}\|u_{n_k}\|_{\mathsf{s'}}=+\infty$ and therefore $\lim_{k\to\infty} g_{n_k}=0$ in $\Y^{\mathsf{s_{\pm}}}$. In particular, $g_{n_k}$ is bounded in $\Y^{\mathsf{s_{\pm}}}$, so by Lemma~\ref{thm:bound-uniform} $v_{n_k}$ is bounded in $\sob{s_{\pm}}$. Since $\sob{s_{\pm}}$ is compactly embedded in $\sob{s'}$, we can choose the sequence $n_k$ so that $\lim_{k\to\infty} v_{n_k}=v\in\sob{s_{\pm}}$ in the topology of $\sob{s'}$. Additionally, since we are considering $\varepsilon_n\in (0,1)$, we can choose the sequence so that $\lim_{k\to\infty}\varepsilon_{n_k}=\varepsilon\geqslant 0$. Then we calculate, taking limits in $\sch'$,
$$
Pv = \lim_{k\to\infty} Pv_{n_k} = \lim_{k\to\infty} (g_{n_k} \pm  i\varepsilon_{n_k} v_{n_k}) = \pm i\varepsilon v.
$$
Since $\ker (P \mp i\varepsilon)|_{\sob{s_{\pm}}} =0$ (for $\varepsilon=0$, this uses the invertibility assumption), this means $v=0$; but this contradicts the fact that $\|v_n\|_{\mathsf{s'}}=1$. This completes the proof.
\end{proof}

\begin{theorem}[Limiting absorption principle]
\label{thm:lim-abs}
Assume that the Feynman/anti-Feynman realization of $P$ is invertible (implied e.g. by the assumptions of Theorem~\ref{thm:invertibility}). Then for any set of orders $\mathsf{s_{\pm}}$ satisfying the Feynman ($\mathsf{s_+}$)/anti-Feynman ($\mathsf{s_-}$) conditions (including the threshold conditions at $I^T$), for any $f\in\Y^{\mathsf{s_{\pm}}}$ we have
$\lim_{\varepsilon\to 0^+} (P\mp i\varepsilon)^{-1}f = P_{\pm}^{-1}f$ in $\sob{s_{\pm}}$. In other words, $\lim_{\varepsilon\to 0^+}(P\mp i\varepsilon)^{-1}=P_{\pm}^{-1}$ in the strong operator topology on $\mathcal{L}(\Y^{\mathsf{s_{\pm}}},\sob{s_{\pm}})$.
\end{theorem}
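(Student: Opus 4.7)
The plan is to combine the uniform resolvent bound of Proposition~\ref{thm:invertibility-uniform} with a compactness argument on a dense subspace, and then extend to all of $\Y^{\mathsf{s_{\pm}}}$ by a standard density argument. First, I would exploit the fact that the system Eq.~(\ref{eq:thresholds}) cuts out an open set of constant orders in $\R^6$ (independently on each connected component of $\Sigma$) to construct auxiliary variable orders $\mathsf{s''}$ which (i) strictly dominate $\mathsf{s_{\pm}}$ pointwise on $\PP$, and (ii) are constant and still satisfy the Feynman/anti-Feynman conditions, including the thresholds at $I^T$, near each component of $\Sigma$. Then $\Y^{\mathsf{s''}}$ embeds continuously and densely in $\Y^{\mathsf{s_{\pm}}}$, and the inclusion $\sob{s''}\hookrightarrow\sob{s_{\pm}}$ is compact.

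For $f\in\Y^{\mathsf{s''}}$ set $u_{\varepsilon}=(P\mp i\varepsilon)^{-1}f$ and $u=P_{\pm}^{-1}f$. By Proposition~\ref{thm:invertibility-uniform} applied with orders $\mathsf{s''}$, the family $\{u_{\varepsilon}\}_{\varepsilon\in(0,1)}$ is uniformly bounded in $\sob{s''}$. Given a sequence $\varepsilon_n\to 0^+$, the compact embedding produces a subsequence $u_{\varepsilon_{n_k}}\to v$ in $\sob{s_{\pm}}$; passing to $\sch'$-limits in $Pu_{\varepsilon_{n_k}}=f\pm i\varepsilon_{n_k}u_{\varepsilon_{n_k}}$ and using $\varepsilon_{n_k}u_{\varepsilon_{n_k}}\to 0$ in $\sch'$ yields $Pv=f$. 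Since $v\in\sob{s_{\pm}}$ and $Pv=f\in\Y^{\mathsf{s_{\pm}}}$, we have $v\in\X^{\mathsf{s_{\pm}}}$, so the invertibility of $P_{\pm}:\X^{\mathsf{s_{\pm}}}\to\Y^{\mathsf{s_{\pm}}}$ from Theorem~\ref{thm:invertibility} forces $v=u$. Since every subsequence of $\{\varepsilon_n\}$ admits a sub-subsequence converging to the same limit $u$, the whole family satisfies $u_\varepsilon\to u$ in $\sob{s_{\pm}}$ whenever $f\in\Y^{\mathsf{s''}}$.

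To handle general $f\in\Y^{\mathsf{s_{\pm}}}$, I would use the standard three-term inequality: choose $f_j\in\Y^{\mathsf{s''}}$ with $f_j\to f$ in $\Y^{\mathsf{s_{\pm}}}$, and write
\begin{equation*}
\|(P\mp i\varepsilon)^{-1}f-P_{\pm}^{-1}f\|_{\mathsf{s_{\pm}}}
\leqslant
\|(P\mp i\varepsilon)^{-1}(f-f_j)\|_{\mathsf{s_{\pm}}}
+\|(P\mp i\varepsilon)^{-1}f_j-P_{\pm}^{-1}f_j\|_{\mathsf{s_{\pm}}}
+\|P_{\pm}^{-1}(f-f_j)\|_{\mathsf{s_{\pm}}}.
\end{equation*}
The first and third terms are controlled by $C\|f-f_j\|_{\Y^{\mathsf{s_{\pm}}}}$ uniformly in $\varepsilon\in(0,1)$, using Proposition~\ref{thm:invertibility-uniform} and the boundedness of $P_{\pm}^{-1}:\Y^{\mathsf{s_{\pm}}}\to\sob{s_{\pm}}$ from Theorem~\ref{thm:invertibility}; the middle term tends to zero as $\varepsilon\to 0^+$ for each fixed $j$ by the previous step, and the usual $\varepsilon/3$ argument closes. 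The only real technical point I foresee is the perturbation construction producing $\mathsf{s''}$, which amounts to a small componentwise-positive displacement inside the open polyhedron cut out by Eq.~(\ref{eq:thresholds}); aside from that, the argument is purely functional-analytic, with the microlocal content entering solely through the uniform estimate.
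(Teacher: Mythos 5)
Your proposal is correct and follows essentially the same route as the paper: pick strictly larger orders $\mathsf{s''}$ still satisfying the Feynman/anti-Feynman conditions, use the uniform bound of Proposition~\ref{thm:invertibility-uniform} together with the compact embedding $\sob{s''}\hookrightarrow\sob{s_{\pm}}$ to extract subsequential limits, identify the limit with $P_{\pm}^{-1}g$ by passing to the limit in $\sch'$ and invoking invertibility, and then extend from the dense subspace $\Y^{\mathsf{s''}}$ to $\Y^{\mathsf{s_{\pm}}}$ by the $\varepsilon/3$ argument. The only cosmetic difference is that the paper states the subsequence step more tersely; otherwise the arguments coincide.
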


\begin{proof}
Since the Feynman/anti-Feynman inequalities are an open condition, there exists some $\mathsf{s''}>\mathsf{s_{\pm}}$ satisfying them. For any $g\in\Y^{\mathsf{s''}}$, the family $w_{\varepsilon}=(P \mp i\varepsilon)^{-1}g$ is bounded in $\sob{s''}$, which is compactly embedded in $\sob{s_{\pm}}$, so any sequence $\varepsilon_n\in (0,1)$ has a subsequence $\varepsilon_{n_k}$ such that $w_{\varepsilon_{n_k}}$ converges in $\sob{s_{\pm}}$ to some $w$. If $\varepsilon_n\to 0$, then for any such subsequence we can write $Pw_{\varepsilon_{n_k}} = g \pm i\varepsilon_{n_k} w_{\varepsilon_{n_k}} \to g $ in $\sch'$ as $k\to\infty$, so we must have $w=P_{\pm}^{-1}g$. This implies that $\lim_{\varepsilon\to 0^+} w_{\varepsilon} = P_+^{-1}g$ in $\sob{s^{\pm}}$. Then for any $f\in\Y^{\mathsf{s_{\pm}}}$, $g\in\Y^{\mathsf{s''}}$ we can write
$$
\|(P \mp i\varepsilon)^{-1} f - P_{\pm}^{-1}f \|_{\mathsf{s_{\pm}}}
\leqslant
\|(P \mp i\varepsilon)^{-1} (f-g) \|_{\mathsf{s_{\pm}}}
+\|(P \mp i\varepsilon)^{-1} g - P_{\pm}^{-1}g \|_{\mathsf{s_{\pm}}}
+\|P_{\pm}^{-1}(f-g)\|_{\mathsf{s_{\pm}}}
\leqslant
$$
$$
\leqslant
\|(P \mp i\varepsilon)^{-1} g - P_{\pm}^{-1}g \|_{\mathsf{s_{\pm}}} + 2C\|f-g\|_{\Y^{\mathsf{s_{\pm}}}},
$$
where we used Proposition~\ref{thm:bound-uniform} in the last line. Since $g$ can be taken arbitrarily close to $f$ in $\Y^{\mathsf{s_{\pm}}}$ while the first term converges to zero for any fixed $g$, we finally conclude that $(P \mp i\varepsilon)^{-1}f \to P_{\pm}^{-1}f$ in $\sob{s_{\pm}}$.
\end{proof}

We note that this characterization of the Feynman propagator means that it is invariant, for example as a map $P_+^{-1}:C_c^{\infty}(\M^{\circ}) \to \mathcal{D}'(\M^{\circ})$, under any change of compactification which preserves the Schwartz space $\sch$ as a set (as well as all our conditions on the metric). To see this, let $(\M,\g)$ be as elsewhere in the paper and such that the Feynman and anti-Feynman realizations of $P$ are invertible, and let $\mathcal{N}$ be another compactification of $\M^{\circ}$ of the form described in Section~\ref{sec:topology} such that $(\mathcal{N},\g)$ also satisfies all of the assumptions in Section~\ref{sec:metric} and such that the spaces $\sch$ for $\M$ and $\mathcal{N}$ coincide as sets of functions on their identified interiors. First of all, by Proposition~\ref{thm:Feynman-kernel} the kernel of the Feynman/anti-Feynman realizations of $P$ on $\mathcal{N}$ are subspaces of $\sch$, while the invertibility of the corresponding realizations on $\mathcal{M}$ implies the absence of Schwartz solutions $u$ to $Pu=0$, so the Feynman/anti-Feynman realizations on $\mathcal{N}$ must be invertible as well. Next, $(P-i\varepsilon)^{-1}$ for any $\varepsilon>0$ is a bijection $\sch\to\sch$, for any $f\in\sch$ giving the unique solution $u_{\varepsilon}\in\sch$ to $(P-i\varepsilon)u_{\varepsilon} = f$, which due to this uniqueness does not depend on the choice of compactification. Finally, by Theorem~\ref{thm:lim-abs} $u_{\varepsilon}\to P_+^{-1}f$ as $\varepsilon\to 0^+$ in the topology of $\sch'$ on either $\M$ or $\mathcal{N}$, both of which imply convergence in the compactification-independent topology of $\mathcal{D}'(\M^{\circ})$; since the families $u_{\varepsilon}$ do not depend on the compactification, the limits must coincide. Thus, $P_+^{-1}f$ does not depend on the compactification used to define the propagator.

In other words, for asymptotically flat spacetimes which admit compactifications of the form we consider, while there are potentially many different ways to compactify the spacetime consistently with the asymptotically flat structure, our prescription yields a single Feynman propagator canonically defined by the metric $\g$ as long as there are no competing compactifications which differ drastically enough that their Schwartz spaces do not agree.

\appendix

\section{Timelike and spacelike infinity for Minkowski-like metrics}
\label{sec:Minkowski-like}
Our general assumptions on the metric do not prescribe its form at spacelike or timelike infinity, so we cannot precisely describe the Hamilton flow there and need to rely on our causal structure/non-trapping assumptions for the qualitative picture instead. In this appendix, we describe the analysis at spacelike and timelike infinity for the special case of Minkowski-like metrics, since they provide the motivation for our general assumptions. Recall that we define the class of Minkowski-like metrics by
	$$
	\g = -v\frac{d\rho^2}{\rho^4} + \frac{d\rho\os dv}{\rho^3} + \frac{\h}{\rho^2} \mod S^{-\epsilon}(\M;\mathrm{Sym}^2(\ttm)),
	$$
where $\h\in C^{\infty}(\tilde{\M};T^*\tilde{\M})$ induces a smooth Riemannian metric on $Y$.

In a full neighborhood $U_0$ of $I^0$ a choice of defining functions of $I^0$, $\scri$ respectively is $\rho_0 = \frac{2\rho}{u_2\rho-v}$, $x_0=\sqrt{\frac{u_2\rho-v}{2}}$. Similarly, in a full neighborhood $U_T$ of $I^T$ a choice of defining functions of $I^T$, $\scri$ respectively is $\rho_T= \frac{2\rho}{v-u_1\rho}$, $x_T=\sqrt{\frac{v-u_1\rho}{2}}$. The metric then satisfies
\begin{equation}
\g|_{\partial\M \cap U_0} = 2\frac{d\rho_0^2}{\rho_0^4 x_0^2} + 4\frac{d\rho_0 \os dx_0}{\rho_0^3 x_0^3} + \frac{\h}{\rho_0^2 x_0^4},
\hspace{30pt}
\g|_{\partial\M \cap U_T} = -2\frac{d\rho_T^2}{\rho_T^4 x_T^2} - 4\frac{d\rho_T \os dx_T}{\rho_T^3 x_T^3} + \frac{\h}{\rho_T^2 x_T^4} ,
\label{eq:Minkowski-like-desc}
\end{equation}
i.e. compared to our general assumptions, the metric decays to this form not only at null infinity but also nearby at timelike and spacelike infinity.

\subsection{Near the interior of timelike infinity}
	Near timelike infinity away from null infinity, where $v>0$, we can define a new boundary-defining function $\hat{\rho}=\frac{\rho}{\sqrt{v}}$. Then a calculation shows that as a (de,)sc-metric, $\g$ is asymptotically of product form:
	\begin{equation}
\g|_{(I^{\pm})^{\circ}}
=
-\frac{d\hat{\rho}^2}{\hat{\rho}^4} + \frac{\hat{\h}}{\hat{\rho}^2},
	\end{equation}
where $\hat{\h}$ is the smooth Riemannian metric induced on $(I^{\pm})^{\circ}$ by $\frac{dv^2+4v\h}{4v^2}$. (We know $\hat{\h}$ must be Riemannian because we assume $\g$ is Lorentzian as a de,sc-metric). The local analysis is then similar in many respects to that of the Helmholtz operator on asymptotically conic manifolds, or Riemannian scattering spaces, as studied by Melrose \cite{Melrose-AES}.

Consider local coordinates $(y_1,\ldots,y_d)$ on a region $U_{\pm}\subset (I^{\pm})^{\circ}$. Consider a product neighborhood $\U_{\pm} \simeq [0,\varepsilon)_{\hat{\rho}} \times (U_{\pm})_y$ with $\U_{\pm}\cap\partial\M = U_{\pm}$. Let $(\hat{\xi},\hat{\eta}_1,\ldots,\hat{\eta}_d)$ be the (de,)sc-dual variables to $(\hat{\rho},y_1,\ldots,y_d)$. The principal symbol of $P$ over $\U_{\pm}$ is then 
\begin{equation}
p=-\hat{\xi}^2+\|\hat{\eta}\|^2_{\hat{\h}}+m^2 +Q,
\label{eq:symbol-sc}
\end{equation}
where $Q$ is a quadratic form in $\hat{\xi},\hat{\eta}$ with coefficients in $S^{-\epsilon}(\M)$. We see that there is $C>0$ such that $p>C(1+\hat{\xi}^2+\hat{\eta}^2)$ on the set $\{\hat{\rho}=0,\ \hat{\xi}=0\} \subset \pi^{-1}\U_{\pm}\backslash \Gamma_f$. Therefore, for any choice of defining functions for rescaling, $\tilde{p}$ does not vanish in a neighborhood of this set or its closure in $\PP$. Then we define two regions of $\pi^{-1}\U_{\pm}$ by
$$
\V_{\pm}^+ = \overline{\{\pm\hat{\xi}>\varepsilon\}}\cap \pi^{-1}\U_{\pm},
\hspace{30pt}
\V_{\pm}^- 
= \overline{\{\mp\hat{\xi}<\varepsilon\}}\cap \pi^{-1}\U_{\pm}.
$$
For $\varepsilon>0$ small enough, over a neighborhood of $U_{\pm}$, $\tilde{\Sigma}\cap\pi^{-1}\U_{\pm}$ is contained in the union of these two regions. In each of these regions, we introduce fiber coordinates well-defined on the compactification:
\begin{equation}
\hat{\varrho} = \frac{1}{|\hat{\xi}|},
\hspace{30pt}
\hat{\theta} = \frac{\hat{\eta}}{\hat{\xi}}.
\end{equation}
$\hat{\varrho}$ is a defining function of fiber infinity. In terms of these coordinates, the symbol restricted to $I^{\pm}$ is
$$
p = \frac{1}{\hat{\varrho}^2}(\|\hat{\theta}\|_{\tilde{\h}}^2 + m^2\hat{\varrho}^2-1).
$$

Starting from Eqs.~(\ref{eq:Hp-coordinates-general-sc}), (\ref{eq:symbol-sc}) and calculating the derivatives, we find that over $\U_{\pm}$ at finite frequencies,
\begin{equation}
H_p = \hat{\rho} \Bigg[
-(2\hat{\xi}+L_1)\hat{\rho} \frac{\partial}{\partial\hat{\rho}}
+ \sum_{i,j=1}^d (2\hat{h}^{ij}\hat{\eta}_j + L_2^i) \frac{\partial}{\partial y_i}
- (2\|\hat{\eta}\|_{\hat{\h}}^2 + Q_1) \frac{\partial}{\partial \hat{\xi}}
- \sum_{i=1}^d \left(2\hat{\xi}\hat{\eta}_i + \sum_{j,k=1}^d\frac{\partial \hat{h}^{jk}}{\partial y_i}\hat{\eta}_j\hat{\eta}_k + Q_2^i\right) \frac{\partial}{\partial \hat{\eta}_i}
\Bigg],
\end{equation}
where $L_1,L_2^i$ are linear forms and $Q_1,Q_2^i$ quadratic forms in $(\hat{\xi},\hat{\eta})$ with coefficients in $S^{-\epsilon}(\M)$. In terms of the coordinates valid on the compactification,
\begin{equation}
\begin{split}
H_p|_{\V^{\pm}_{\pm}}
=
\hat{\varrho}^{-1}\hat{\rho} \Bigg[
-2\hat{\rho} \frac{\partial}{\partial\hat{\rho}}
+ 2\sum_{i,j=1}^d \hat{h}^{ij}\hat{\theta}_j \frac{\partial}{\partial y_i}
+ 2\|\hat{\theta}\|_{\hat{\h}}^2 \hat{\varrho}\frac{\partial}{\partial\hat{\varrho}}
+
 \sum_{i=1}^d \left(2(\|\hat{\theta}\|_{\hat{\h}}^2-1)\hat{\theta}_i - \sum_{j,k=1}^d\frac{\partial \hat{h}^{jk}}{\partial y_i}\hat{\theta}_j \hat{\theta}_k \right) \frac{\partial}{\partial \hat{\theta}_i}
\Bigg]
\\
\mod \hat{\varrho}^{-1} \hat{\rho} S^{-\varepsilon}\Vb(\PP).
\end{split}
\label{eq:Hp-timelike}
\end{equation}
On $\V^{\pm}_{\mp}$, it is given by the same expression with the overall sign changed.

Since $\hat{\h}$ is nondegenerate, for the $\frac{\partial}{\partial y_i}$ terms to all vanish we must have $\hat{\theta}=0$, so we see that the vanishing set of $\Hp$ in $\Sigma$ in any of these regions is $\{\hat{\rho}=0,\ \hat{\theta}=0,\ \hat{\varrho}=\frac{1}{m}\}$. Thus we find that there are no radial points at fiber infinity, while at finite frequency there is one radial point in each sheet of the characteristic set over each point of the interior of $I^{\pm}$, which we identify as the set $\mathcal{R}_{\pm}\backslash\scri^{\pm}$.

We now check that $\mathcal{R}$ is nondegenerate in the sense that all points of it satisfy the assumptions of Theorem~\ref{thm:localized-rp-main}. The coordinate description we just found shows that, away from the corner with $\scri$, $\mathcal{R}$ is defined by the vanishing of several symbols as required. To check this also at the corner, we consider local coordinates $(\rho_T,x_T,y_1,\ldots,y_{d-1})$ on $\U_T$ like we did in Section~\ref{sec:Hamiltonian}, with defining functions $\rho_T,x_T$ as introduced above Eq.~(\ref{eq:Minkowski-like-desc}). For any $U_{\pm}$ which overlaps $\U_T\cap I^{\pm}$ we take coordinates $y$ on the $U_{\pm}$ factors of $\U_{\pm}$ such that $y_1,\ldots,y_{d-1}$ agree with local coordinates on the $Y$ factors of $\U_T$ while $y_d=\sqrt{v}$. Then a calculation shows that over the dual variables we used above over $\U_{\pm}$ and in Section~\ref{sec:Hamiltonian} over $\U_T$ are related over $(I^{\pm})^{\circ}$ by
$$
\hat{\xi}=\frac{\zeta_T}{\sqrt{2}},
\hspace{30pt}
\hat{\eta}_d=2(\xi_T-\zeta_T),
\hspace{30pt}
\hat{\eta}_i = \frac{\eta_i}{\sqrt{2}x_T}
\text{ for }i=1,\ldots,d-1
$$
The variables valid on the compactification are therefore related by $\hat{\varrho}= \sqrt{2}\varrho$, $\hat{\theta}_d = 2\sqrt{2}\omega$, $\hat{\theta}_i = \frac{\theta_i}{x_T}$ for $i=1,\ldots,d-1$. Then near the corner, $\mathcal{R}$ is defined by $\{\rho_T=0,\ \theta=0,\ \omega=0,\ \varrho=\frac{1}{\sqrt{2}m}\}$, which agrees with our calculation at the corner in Section~\ref{sec:Hamiltonian} and shows that $\mathcal{R}$ is indeed defined by the vanishing of several symbols as required. The signs of the $\hat{\rho}\frac{\partial}{\partial\hat{\rho}}$ components identify the component of $\Sigma$ in $\V^{\pm}_+$ and $\V^{\pm}_-$ as $\Sigma^{\pm}$.

The required properties of the linearization at the corner follow from the previous eigenvalue calculations. Away from the corner, consider $\alpha\in\mathcal{R}^+_+\backslash\scri^+\subset I^+$. Using $\hat{\varrho}^{-1}\hat{\rho}$ to rescale the vector field, we get $\Hp(\alpha) = - 2 \hat{\rho}\frac{\partial}{\partial\hat{\rho}}$ as a b-vector, so $\lambda_{I^+}=- 2$. Meanwhile
$$
L(\Hp|_{I^+})(\alpha)
=
2\sum_{i,j=1}^d \hat{h}^{ij}\hat{\theta}_j \frac{\partial}{\partial y_i}
-2 \sum_{i=1}^d \hat{\theta}_i \frac{\partial}{\partial \hat{\theta}_i}
$$
The eigenvector-eigenvalue pairs are $\left(\frac{\partial}{\partial\hat{\theta}_i}-\sum_{j=1}^d \hat{h}^{ij}(\alpha)\frac{\partial}{\partial y_j},-2\right)$, $\left(\frac{\partial}{\partial\hat{\varrho}},0\right)$, $\left(\frac{\partial}{\partial y_i},0\right)$. The vector $\frac{\partial}{\partial\hat{\varrho}}$ is transverse to $\Sigma\cap I^+$, so $\alpha$ is indeed nondegenerate. The analysis at the other components of $\mathcal{R}$ is similar.

This establishes the existence of $\mathcal{R}$ as postulated in our general assumptions for any Minkowski-like metric. As we now show, in fact the assumption on the dynamics at finite frequency over $I^{\pm}$ is necessarily satisfied as well.

The fiber coordinate $\hat{\xi}$ has a meaning as the value of a (de,)sc-one-form on the (de,)sc-vector $\hat{\rho}^2\frac{\partial}{\partial\hat{\rho}}$ independently of the choice of local coordinates on $I^{\pm}$. Therefore, in either $\V^+_{\pm}$ or $\V^-_{\pm}$ the coordinate $\hat{\varrho}$ is also defined independently of coordinates on the boundary and consistently between any overlapping coordinate charts $U_{\pm}$. Meanwhile, from Eq.~(\ref{eq:Hp-timelike}) we can read off that $\hat{\varrho}$ is monotone increasing along $\Hp$ in $\V^{\pm}_{\pm}$ and monotone decreasing in $\V^{\pm}_{\mp}$ at a rate which is bounded from below away from any fixed neighborhoods of $\Gamma_f=\{\hat{\varrho}=0\}$ and $\mathcal{R}=\{\hat{\varrho}=\frac{1}{m}\}$ within $\Sigma$. This, combined with our knowledge of the flow over $I^{\pm}\cap\scri^{\pm}$, means that all finite-frequency bicharacteristics over $I^{\pm}$ in $\Sigma^{\pm}$ must limit to $\mathcal{R}_{\pm}^{\pm}$ in the forward direction and to $\Gamma_f$ in the backward direction, whereas in $\Sigma^{\mp}$ they must limit to $\Gamma_f$ in the forward direction and to $\mathcal{R}^{\mp}_{\pm}$ in the backward direction.

\subsection{Near the interior of spacelike infinity}
Similarly, near spacelike infinity away from null infinity, where $v<0$, we can define a new boundary-defining function $\check{\rho}=\frac{\rho}{\sqrt{-v}}$. Then a calculation shows
	\begin{equation}
\g|_{(I^0)^{\circ}}
=
\frac{d\check{\rho}^2}{\check{\rho}^4}
+ \frac{\check{\h}}{\check{\rho}^2},
	\end{equation}
where $\check{\h}$ is the smooth metric, this time nondegenerate Lorentzian, induced on $(I^0)^{\circ}$ by $-\frac{dv^2+4v\h}{4v^2}$.

Consider again local coordinates $(y_1,\ldots,y_d)$ on a region $U\subset (I^0)^{\circ}$. Consider a product neighborhood $\U\simeq [0,\varepsilon)_{\check{\rho}} \times U_y$ with $\U\cap\partial\M=U$. Let $(\check{\xi},\check{\eta_1},\ldots,\check{\eta_d})$ be the (de,)sc-dual variables to $(\check{\rho},y_1,\ldots,y_d)$. The principal symbol of $P$ over $\U$ is
\begin{equation}
p = \check{\xi}^2 + \sum_{i,j=1}^d \check{h}^{ij}\check{\eta}_i\check{\eta}_j +m^2 +Q,
\end{equation}
where $Q$ is a quadratic form in $\check{\xi},\check{\eta}$ with coefficients in $S^{-\epsilon}(\M)$. The Hamilton vector field is
\begin{equation}
\begin{split}
H_p = \check{\rho} \Bigg[
(2\check{\xi}+L_1)\check{\rho} \frac{\partial}{\partial\check{\rho}}
+ \sum_{i,j=1}^d (2\check{h}^{ij}\check{\eta}_j + L_2^i) \frac{\partial}{\partial y_i}
- \left(2\sum_{i,j=1}^d \check{h}^{ij}\check{\eta}_i \check{\eta}_j  + Q_1\right) \frac{\partial}{\partial \check{\xi}}
+
\\
+ \sum_{i=1}^d \left(2\check{\xi}\check{\eta}_i - \sum_{j,k=1}^d\frac{\partial \check{h}^{jk}}{\partial y_i}\check{\eta}_j\check{\eta}_k + Q_2^i\right) \frac{\partial}{\partial \check{\eta}_i}
\Bigg],
\end{split}
\end{equation}
where $L_1,L_2^i$ are linear forms and $Q_1,Q_2^i$ quadratic forms in $(\check{\xi},\check{\eta})$ with coefficients in $S^{-\epsilon}(\M)$. We see that there is $C>0$ such that $p>C(1+\check{\xi}^2+\check{\eta}^2)$ on the set $\{\check{\eta}=0\}$, so any point of the characteristic set has a neighborhood on whose part in $\PP^{\circ}$ at least one of the $\check{\eta}_i$ must be nonzero. Then, since $\check{\h}$ is nondegenerate, the $\frac{\partial}{\partial y_i}$ components of $H_p$ cannot all vanish simultaneously on $\Sigma$.

Thus we see that there are no radial points over the interior of spacelike infinity. Unlike the situation at timelike infinity, we do not expect to also automatically get the non-trapping property over spacelike infinity for Minkowski-like metrics: the bicharacteristic flow at finite frequency is related to timelike geodesics, which are, roughly speaking, asymptotically transverse to the boundary at $I^{\pm}$ but tangent at $I^0$, so the behavior at $I^0$ can be expected to be less universal.

\subsection{Asymptotically Minkowski metrics}
In this section, we record the calculations showing that the non-trapping conditions at spacelike infinity and at infinite frequency over timelike infinity, which are not automatic for general Minkowski-like metrics, are satisfied for asymptotically Minkowski metrics. Since terms in the rescaled symbol which vanish at the boundary give rise to terms in $\Hp$ which vanish likewise and which moreover do not affect relevant properties of the linearization at any radial sets over spacetime infinity, it suffices to do the calculations for the exact Minkowski metric. It also suffices to work in the scattering phase space over the radial compactification since the non-trapping conditions just specify that the bicharacteristics of interest limit to null infinity.

Let $\tilde{\M}$ be the radial compactification of $\R^{d+1}$. Let $(t,x_1,\ldots,x_d)$ be the global inertial coordinates. Let $(\omega,k_1,\ldots,k_d)$ be the canonical dual variables, which turn out to be valid fiber-linear coordinates on ${}^{\mathrm{sc}}T^*\tilde{\M}$ up to and including the boundary. Then the Minkowski Klein-Gordon operator $P=\partial_t^2 - \sum_{i=1}^d \partial_{x_i}^2 +m^2$ has principal symbol $p=-\omega^2 + |k|^2+m^2$ and Hamilton vector field
$$H_p = -2\omega\frac{\partial}{\partial t} + 2\sum_{i=1}^d k_i\frac{\partial }{\partial x_i}.$$
In a neighborhood of either component of $\Sigma$, a defining function of fiber infinity is $\varrho=\frac{1}{|\omega|}$ and valid coordinates on fiber infinity are $\theta_i = \frac{k_i}{\omega}$. The symbol is $p=\frac{1}{\varrho^2}(|\theta|^2+m^2\varrho^2-1)$.

At timelike infinity, a valid boundary-defining function is $\rho=\frac{1}{|t|}$ and valid coordinates on the boundary are $y_i = \frac{x_i}{t}$. Then in terms of these, considering the region $t,\omega>0$ (the others are similar),
$$
H_p = 2\varrho^{-1}\rho \left( \rho\frac{\partial}{\partial\rho}
+\sum_{i=1}^d (y_i+\theta_i)\frac{\partial}{\partial y_i}
\right).
$$
Thus the radial set is defined by $\theta_i=-y_i$ and $\varrho=\frac{1}{m}\sqrt{1-|y|^2}$, where $|y|<1$ since we are at timelike infinity. At infinite frequency, in $\Sigma$ we have $|\theta|=1$. Since the value of $\theta$ is constant along $H_p$, from the $\frac{\partial}{\partial y_i}$ terms we can see that the bicharacteristics are straight lines in the $y$ variables radiating from $-\theta$ and, if one starts from a point with $|y|<1$ and goes in either direction, reaching any given neighborhood of $|y|=1$ in finite parameter time, which means that any infinite-frequency bicharacteristic limits to $\scri$ in both directions.

Spacelike infinity is covered by neighborhoods in each of which a valid boundary-defining function is $\rho=\frac{1}{|x_i|}$ for some $i$ and valid coordinates on the boundary are $s=\frac{t}{x_i}$, $y_j=\frac{x_j}{x_i}$ for $j\neq i$. Consider the region with $i=d$, $x_d>0$, $\omega>0$ (the others are similar). Then
$$
H_p = 
2\varrho^{-1}\rho
\left(
-\theta_d \rho \frac{\partial}{\partial\rho}
- (1+s\theta_d) \frac{\partial}{\partial s} 
+  \sum_{i=1}^{d-1} (\theta_i-y_i\theta_d) \frac{\partial}{\partial y_i}
\right).
$$
Since $|\theta|\leqslant 1$ in $\Sigma$ and $|s|<1$ in the interior of spacelike infinity, we see that $s$ is monotone along the flow and, if one starts at a point with $|s|<1$ and goes in either direction, reaches any neighborhood of $s=1$ in one direction or $s=-1$ in the other in finite parameter time, which means that any bicharacteristic limits to $\scri^+$ in one direction and $\scri^-$ in the other.

\section{Time function for Minkowski-like metrics}
\label{sec:time-fcn}
In this appendix, we construct a symbolic time function on any Minkowski-like spacetime which is first-order and has uniformly timelike differential near timelike and null infinity, showing that Theorem~\ref{thm:invertibility} applies to such spacetimes provided they also satisfy the non-trapping assumptions.

Consider a constant $u>0$ and $\chi\in C^{\infty}(\R)$ such that $\chi(s)=0$ for $s\leqslant \frac{1}{3}u$, $\chi$ is monotone increasing on $[\frac{1}{3}u,\frac{2}{3}u]$, and $\chi(s)=1$ for $s\geqslant \frac{2}{3}u$. Define
\begin{equation}
t = \chi\left(\frac{v}{\rho}\right) \frac{\sqrt{v}}{\rho} + \left( 1-\chi\left(\frac{v}{\rho}\right)\right) \frac{1}{\sqrt{u\rho -v}}.
\label{eq:time-function}
\end{equation}

\begin{itemize}

\item In the region where $\frac{v}{\rho}>0$, we can define $\rho_T = \frac{\rho}{v}$ and $x_T = \sqrt{v}$, which are local defining functions of $I^T$ and $\scri$ respectively. In terms of these variables,
$$
t = \Big(
\chi(\rho_T^{-1})\rho_T^{-1} + \Big(1-\chi(\rho_T^{-1})\Big) \Big(u\rho_T-1\Big)^{-\frac{1}{2}}
\Big) x_T^{-1},
$$
which shows that $t$ is a classical symbol of order 1 at all of timelike infinity and at the part of null infinity where $\frac{v}{\rho}>0$.

\item In the region where $\frac{v}{\rho}<u$, we can define $\rho_0=\frac{\rho}{u\rho-v}$ and $x_0=\sqrt{u\rho-v}$, which are local defining functions of $I^0$ and $\scri$ respectively. In terms of these variables,
$$
t = \Big(
\chi(u-\rho_0^{-1}) (u\rho_0-1)^{\frac{1}{2}} \rho_0^{-1}
+ \Big( 1- \chi(u-\rho_0^{-1})\Big)
\Big) x_0^{-1},
$$
which, since the first term is supported away from $I^0$, shows that $t$ is a classical symbol of order 0 at all of spacelike infinity and of order 1 at the part of null infinity where $\frac{v}{\rho}<u$.
\end{itemize}

Thus we conclude that $t\in S^{1,1,0,1,1}_{\mathrm{cl}}(\M)$ globally, and directly from the expressions we see that $\rho_{I^{\pm}}\rho_{\scri^{\pm}}t>0$ at $I^{\pm}\cup\scri^{\pm}$. Next, we check that its differential is timelike at $I^{\pm}\cup \scri^{\pm}$ with the uniform bound required by Theorem~\ref{thm:invertibility}, i.e. we check the values of $\rho_{I^0}^{-2}\g^{-1}(d t,d t)$ on $I^{\pm}\cup\scri^{\pm}$.
\begin{itemize}
\item In the neighborhood of timelike infinity where $\frac{v}{\rho}>\frac{2}{3}u$, we have $\chi(v/\rho)=1$ and $t=\frac{\sqrt{v}}{\rho}=\hat{\rho}^{-1}$, where $\hat{\rho}$ appeared in Appendix~\ref{sec:Minkowski-like}. Then $dt = -\frac{d\hat{\rho}}{\hat{\rho^2}}$, and using the product form of $\g|_{(I^{\pm})^{\circ}}$ found in Appendix~\ref{sec:Minkowski-like} we see that $\g^{-1}(dt,dt)=-1$ at any point of $I^{\pm}$.

\item In the region where $\frac{v}{\rho}>0$,
$$
dt = 
\left(
\left( \frac{1}{\rho_T} 
- \frac{1}{\sqrt{u\rho_T-1}} 
\right)  \chi'
-\chi
-\frac{u \rho_T^2}{2(u\rho_T-1)^{\frac{3}{2}}} (1-\chi) 
\right) 
\frac{d\rho_T}{\rho_T^2 x_T}
- 
\left(\chi + 
\frac{\rho_T}{\sqrt{u\rho_T-1}} (1-\chi)\right) \frac{dx_T}{\rho_T x_T^2},
$$
where $\chi$, $\chi'$ are always evaluated at $\rho_T^{-1}$. At points of $\scri$ in this region, the dual metric has the matrix $\begin{pmatrix}
0 & -1 \\
-1 & 1
\end{pmatrix}$ with respect to the frame $(\frac{d\rho_T}{\rho_T^2 x_T}, \frac{d x_T}{\rho_T x_T^2})$ (suppressing the other coordinates). Then we calculate
$$
\g^{-1}(dt,dt) = \left(
2 \left( \frac{1}{\rho_T} 
- \frac{1}{\sqrt{u\rho_T-1}} 
\right)  \chi'
-\chi
-\frac{\rho_T}{(u\rho_T-1)^{\frac{3}{2}}} (1-\chi) 
\right) 
\left(\chi + 
\frac{\rho_T}{\sqrt{u\rho_T-1}} (1-\chi)\right).
$$
On the support of $\chi'$, we have $\frac{3}{2u}\leqslant \rho_T \leqslant \frac{3}{u}$, so $\frac{1}{\rho_T}-\frac{1}{\sqrt{u\rho_T-1}} \leqslant \frac{2}{3}u - \frac{1}{\sqrt{2}}$. Then we fix  $u\leqslant \frac{3}{2\sqrt{2}}$ so the $\chi'$ term above is non-positive. Then all terms in the first factor are non-positive and all in the second are non-negative, and we can estimate
$$
\g^{-1}(dt,dt)
\leqslant 
-\chi^2-\frac{1}{(u-\rho_T^{-1})^2}(1-\chi)^2
\leqslant
-\chi^2 -\frac{1}{u}(1-\chi)^2
\leqslant -\min\left(\frac{1}{4},\frac{1}{4u}\right).
$$
Thus, $\g^{-1}(dt,dt)$ has a negative upper bound on the part of $\scri$ in the region where $\frac{v}{\rho}>0$.

\item It remains to check the part of $\scri$ in the region where $\frac{v}{\rho}\leqslant 0$. In this region, $\chi(\frac{v}{\rho})=0$ identically, so $t=x_0^{-1}$ and $dt =-\rho_0 \cdot \frac{dx_0}{\rho_0 x_0^2}$. At points of $\scri$ in this region, the dual metric has the matrix $\begin{pmatrix}
0 & 1 \\
1 & -1
\end{pmatrix}$ with respect to the frame $(\frac{d\rho_0}{\rho_0^2 x_0}, \frac{d x_0}{\rho_0 x_0^2})$, so $\g^{-1}(dt,dt) = -\rho_0^2$.
\end{itemize}

Combining the results, we conclude that $t \in S_{\mathrm{cl}}^{1,1,0,1,1}(\M)$ with 
$$\Big(\rho_{I^{\pm}}\rho_{\scri^{\pm}} t\Big)|_{I^{\pm}\cup \scri^{\pm}}>0,
\hspace{45pt}
\Big(\rho_{I^0}^{-2}\g(\nabla t,\nabla t)\Big)|_{I^{\pm}\cup \scri^{\pm}}<0.$$
Thus, by Theorem~\ref{thm:invertibility}, the distinguished Fredholm realizations of $P$ are invertible on any Minkowski-like spacetime which satisfies the non-trapping assumptions.

\printbibliography
\end{document}